\DeclareMathOperator{\Var}{Var}
\DeclareMathOperator{\Cov}{Cov}
\DeclareMathOperator{\E}{\mathbb{E}}
\theoremstyle{plain}
\newtheorem{theorem}{Theorem}[section]
\newtheorem{proposition}[theorem]{Proposition}
\newtheorem{lemma}[theorem]{Lemma}
\theoremstyle{definition}
\newtheorem{definition}[theorem]{Definition}
\newtheorem{assumption}[theorem]{Assumption}
\theoremstyle{remark}
\icmltitlerunning{Quantifying Treatment Effects: Estimating Risk Ratios via Observational Studies}
\begin{document}

\twocolumn[
\icmltitle{Quantifying Treatment Effects: Estimating Risk Ratios via Observational Studies}



\icmlsetsymbol{equal}{*}

\begin{icmlauthorlist}
\icmlauthor{Ahmed Boughdiri}{INRIA}
\icmlauthor{Julie Josse}{INRIA}
\icmlauthor{Erwan Scornet}{Sorbonne}

\end{icmlauthorlist}

\icmlaffiliation{INRIA}{INRIA Sophia-Antipolis}
\icmlaffiliation{Sorbonne}{Sorbonne Université and Université Paris Cité}

\icmlcorrespondingauthor{Ahmed Boughdiri}{ahmed.boughdiri@inria.fr}

\icmlkeywords{Causal inference, Observational data, Risk Ratio}

\vskip 0.3in
]



\printAffiliationsAndNotice{\icmlEqualContribution} 

\begin{abstract}
The Risk Difference (RD), an absolute measure of effect, is widely used and well-studied in both randomized controlled trials (RCTs) and observational studies. Complementary to the RD, the Risk Ratio (RR), as a relative measure, is critical for a comprehensive understanding of intervention effects: RD can downplay small absolute changes, while RR can highlight them. Despite its significance, the theoretical study of RR has received less attention, particularly in observational settings. This paper addresses this gap by tackling the estimation of RR in observational data. We propose several RR estimators and establish their theoretical properties, including asymptotic normality and confidence intervals. Through analyses on simulated and real-world datasets, we evaluate the performance of these estimators in terms of bias, efficiency, and robustness to generative data models. We also examine the coverage and length of the associated confidence intervals. Due to the non-linear nature of RR, influence function theory yields two distinct efficient estimators with different convergence assumptions. Based on theoretical and empirical insights, we recommend, among all estimators, one of the two doubly-robust estimators, which, intriguingly, challenges conventional expectations.
\end{abstract}

\section{Introduction}
\label{Introduction}

\paragraph{Treatment effect estimation in trials.} 
Modern evidence-based medicine prioritizes Randomized Controlled Trials (RCTs) as the cornerstone of clinical evidence. Randomization in RCTs allows for the quantification of the average treatment effect (ATE) by removing confounding influences from extraneous or undesirable factors. 
The medical guideline CONSORT \citep{CONSORT2010} recommends reporting the treatment effect with relative measures like the Risk Ratio (RR) along with absolute measures like the Risk Difference (RD) to provide a more comprehensive understanding of the effect and its implications, as neither measure alone offers a complete picture. 
Indeed, selecting one measure over another carries several implication. For instance, with 3\% baseline mortality reduced to 1\% by treatment, RD shows a 2\% drop, while RR shows controls have three times the risk: RD suggests a small effect; RR highlights a larger one. In addition, \cite{Naylor1992} and \cite{Forrow1992HowResultsAreSummarized} demonstrated that physicians' inclination to treat patients, based on their perception of therapeutic impact, is influenced by the scale utilized to present clinical effects. Finally, the treatment effect may be heterogeneous in one scale, i.e. the treatment effect varies according to patient characteristics, but homogeneous in another scale \citep{Rothman2011bookEpidemiologyIntrod}, which significantly disrupts interpretation. \cite{colnet2023risk} discusses causal measure properties with a focus on generalization of the treatment effect from a trial to a target population. 

Consequently, both RD and RR measures are widely used in the analysis of clinical trial data as explained in \citet{malenka1993framing}, \citet{sinclair1994clinically} and \citet{NAKAYAMA19981179}. The Risk Ratio is particularly relevant in scenarios where outcomes are always either positive or negative \citep{malenka1993framing} and in cases where both expected potential outcomes being compared are small, as it is more stable and interpretable than the RD. Furthermore, in cases of rare treatment occurrences, the Risk Ratio closely approximates the Odds Ratio (OR), further enhancing its relevance and utility in clinical analyses \cite{schechtman2002odds}. \citet{barratt2004tips} recommend using the RR from clinical trials with an estimation of the individual patient baseline to provide the right treatment. 

\paragraph{Treatment effect estimation in observational data.} 
Despite being the gold standard to assess treatment effects, RCTs may face limitations due to stringent eligibility criteria, unrealistic real-world compliance, short study durations, and limited sample sizes. Medical journals such as JAMA \citep{JAMA} and others \citep{Hernan} have advocated the use of real-world data, often referred to as observational data, to provide additional sources of evidence. These data sets are typically less expensive to collect, more representative of the target population, and usually encompass large sample sizes. 

In the context of observational studies, various estimators exist to measure the treatment effect, mainly on an absolute scale. Different methods such as re-weighting using Inverse Probability Weighting (IPW) \citep{Hirano}, outcome modeling with the G-formula, and doubly-robust approaches like Augmented Inverse Probability Weighting (AIPW) \citep{Robins1986} aim to estimate the RD while handling confounding effects. 

However, to the best of our knowledge, there exist no work proposing or using estimators of the ATE measured with  the Risk Ratio in observational studies based on (non-)parametric estimation (G-formula or AIPW), nor derivations of their theoretical properties. A clear gap exists, needing robust estimators and analyses to improve treatment effect assessments and support medical recommendations in observational studies. Noteworthily, compared to the Risk Difference, studying the Risk Ratio induces additional technical difficulties, due to its non-linear nature.

\paragraph{Contributions.} 
In this paper, we propose and analyze different estimators of the Risk Ratio in observational studies. Considering first the well studied RCT setting in \Cref{risk_ratio_RCTs}, we analyze the first RR estimator introduced by \citet{cornfield1956statistical}, establishing a Central Limit Theorem and asymptotic confidence intervals. We prove that in RCT adjusting for covariates or estimating the probability of being treated reduces the estimator variance. 
As the probability of being treated varies across individuals in observational studies, the above estimator is no longer valid. In \Cref{risk_ratio_OBS}, we detail different estimators that can be used for estimating the RR in observational studies:  Inverse Propensity Weighting (RR-IPW), G-formula (RR-G) and two doubly-robust estimators.  For the first two, we prove their asymptotic normality when the nuisance functions (surface responses and propensity score) are known or estimated via parametric models. Our analyses show that estimating nuisance function increases the variance of RR-G and decreases that of RR-IPW. Besides, using influence function theory  \citep[see, e.g.,][]{kennedy2022semiparametric}, we derive two doubly-robust estimators, RR-OS, for one-step correction and RR-AIPW based on estimating equations. Contrary to the RD, due to the non-linearity of the RR, both estimators differ. We prove that they are both asymptotically unbiased and have the minimal variance among all asymptotically unbiased estimators. Surprisingly, the RR-AIPW estimator turns out to be a plug-in version of AIPW estimators for both the numerators and denominators, which requires weaker assumptions than RR-OS to be asymptotically normal. For all estimators, our asymptotic analyses allow us to build asymptotic confidence intervals.  
In \Cref{sec:simulations}, we evaluate all estimators on observational data, and study the empirical properties of confidence intervals in terms of coverage and lengths. In \Cref{sec:real_world_exp}, we extend our analysis to a semi-synthetic and a real-world dataset. We recommend the RR-AIPW due to $(i)$ less stringent assumptions needed to ensure asymptotic normality, $(ii)$ empirical performances (consistent and with a small asymptotical variance), $(iii)$ the ease of implementation, as a ratio of two simple AIPW estimators.

\paragraph{Related work - Estimation of RR.} 
To the best of our knowledge, \citet{cornfield1956statistical} was the first to propose an estimate of the RR, together with exact and asymptotic confidence intervals, for binary responses, in a RCT scenario, followed by \citet{kupper1975hybrid, katz1978obtaining, bailey1987confidence, morris1988statistics, sato1992maximum}. Considering a logistic model, \citet{schouten1993risk} propose a RR estimator.  Later on, exact confidence intervals were derived by \citet{wang2015exact}.
Recently, Inverse Propensity Weighting schemes have been used in different study designs to estimate the Odds Ratios \citep{staus2022inverse} or directly the RR in some simple settings \cite{hernan2006estimating}.
Besides, pseudo-Poisson and pseudo normal distribution have been proposed with IPW strategies to estimate RD and RR in clinical trials \citep{noma2023risk}. Other methods include G-formula approaches \citep{dukes2018note} and semiparametric density estimation \citep{kennedy2023semiparametric}.
  
In observational studies, one can mention the work of \citet{richardson2017modeling},\citet{yadlowsky2021estimation} and \citet{shirvaikar2023targeting} who focus on estimators of the conditional  average treatment effect (CATE) for the RR. \citet{curth2020estimating} introduces an ``IF-learning'' approach with pseudo outcome regression and derive the influence function for the CATE of the RR. However, since the expectation of a ratio is not the ratio of the expectations,  CATE estimations do not directly yield ATE for the RR. This departs from the RD, for which the ATE is simply the expectation of the CATE. 

\section{A Well-Known Risk Ratio Estimator in RCT}
\label{risk_ratio_RCTs}

\paragraph{Problem setting} 
Following the potential outcome framework \citep[see][]{ RubinDonaldB1974Eceo,Neyman}, we consider the random variables $(X, T, Y^{(0)}, Y^{(1)})$, where $X \in  \mathbb{R}^p$ denotes covariates describing a patient, $T$ is the treatment assignment ($T =1$ when the treatment is given to an individual, $T=0$ otherwise) and $Y^{(0)}$ (resp.  $Y^{(1)}$) is the outcome of interest, describing the status of a patient without treatment (and with treatment respectively).
In practice, we do not have access simultaneously to  $Y^{(1)}$ and  $Y^{(0)}$, and we only observe $$Y = T Y^{(1)} + (1-T) Y^{(0)}.$$
Causal effect measures are functions of the joint distribution of potential outcomes \citep[see][]{Pearl_2009}. In particular, the Risk Difference (RD) and the Risk Ratio (RR) contrast the two states as followed
\begin{align}
 \tau_{{\textrm{\tiny RD}}} = \mathbb{E}[Y^{(1)}] - \mathbb{E}[Y^{(0)}]  \quad\text{and} \quad \tau_{{\textrm{\tiny RR}}} = \frac{\mathbb{E}[Y^{(1)}]}{\mathbb{E}[Y^{(0)}]}.   \label{def_tau_RR}
\end{align}
The aim of this paper is to propose and study estimators of \(\tau_{{\textrm{\tiny RR}}}\) for binary and continuous outcomes. Indeed, all our theoretical results (except \Cref{OLS_RR_OBS}) are valid for binary and continuous outcomes. To estimate \(\tau_{{\textrm{\tiny RR}}}\), we assume to be given an i.i.d dataset \((X_1, T_1, Y_1), \hdots, (X_n, T_n, Y_n)\).

The most simple estimator of the risk ratio consists in replacing expectations by empirical means in $\tau_{{\textrm{\tiny RR}}}$ (Equation~\ref{def_tau_RR}). Such an estimator has already been proposed outside the causal inference framework, with confidence intervals for the binary case \citep[where $Y \in \{0,1\}$, see ][]{katz1978obtaining, bailey1987confidence}. In the potential outcome framework, inspired by the Neyman estimator of the Risk Difference \citep{Neyman}, we call this estimator the Risk Ratio Neyman estimator. 


\begin{definition}[\textbf{Risk Ratio Neyman estimator}]
Let \(N_1 = \sum_{i=1}^{n} T_i\) and \(N_0 = n-n_1\). The Risk Ratio Neyman estimator, denoted \(\hat{\tau}_{\textrm{\tiny RR,N}}\), is defined as
    \begin{align}
      \hat{\tau}_{\textrm{\tiny RR,N}} = \frac{\frac{1}{N_1}\sum_{i=1}^n T_i Y_i}{\frac{1}{N_0}\sum_{i=1}^n (1 - T_i) Y_i},
      \end{align}
      if the denominator is nonzero and \(0\) otherwise.
\end{definition}

With our notation, the $95\%$ confidence interval for $\tau_{{\textrm{\tiny RR}}}$ for binary outcomes    \cite{katz1978obtaining} takes the form
\begin{align}
\left[ \hat{\tau}_{\textrm{\tiny RR,N}} \exp({- z_{1 - \alpha/2} \hat{\sigma}_n }),  \hat{\tau}_{\textrm{\tiny RR,N}} \exp({ z_{1 - \alpha/2} \hat{\sigma}_n }) \right], 
\label{eq_CI_existing_literature}
\end{align}
where $z_{1 - \alpha/2}$ is the $1-\alpha/2$ quantile of a standard Gaussian $\mathcal{N}(0,1)$ and 
\begin{align}
   \hat{\sigma}_n = \sqrt{\frac{1}{\sum_{i=1}^n T_i Y_i} - \frac{1}{N_1} + \frac{1}{\sum_{i=1}^n (1-T_i) Y_i} - \frac{1}{N_0}}.
\end{align}
In the sequel, we establish under which theoretical assumptions the RR-N is an accurate estimator of the Risk Ratio in Randomized Clinical Trials.

RCT randomly assign treatment to patients in order to evaluate treatment effects. We focus on the Bernoulli design, one of the most widely used RCT designs \citep[][]{RubinDonaldB1974Eceo, Imbens_Rubin_2015}, where each participant has the same probability \(e \in (0,1)\) of being treated, independently of the treatments of others. 

\begin{assumption}[Bernoulli Trial]\label{a:bernoulli_trial}
We assume:
\begin{enumerate}[leftmargin=*,label=(\roman*)]
    \item\label{a:ignorability} \textbf{Ignorability/Exchangeability:} \(T\perp\mkern-9.5mu\perp (Y^{(0)}, Y^{(1)})\).
    \item\label{a:SUTVA} \textbf{SUTVA:} \(Y = T Y^{(1)} + (1-T)Y^{(0)}\).
    \item\label{a:i.i.d.} \textbf{i.i.d.:} \((X_i, T_i, Y_i^{(0)}, Y_i^{(1)})_{i\in [n]} \stackrel{\textrm{i.i.d.}}{\thicksim} \mathcal{P}\).
    \item\label{a:Trial_positivity} \textbf{Trial Positivity:} Each participant has a fixed probability \(e \in (0,1)\) of assignment: \(\mathbb{P}[T_i = 1] = e\).
\end{enumerate}
\end{assumption}

To ensure our estimates are valid, we need to guarantee the existence of the ratio we aim to estimate.
  
 \begin{assumption}[\textbf{Outcome positivity}]\label{a:Outcome_positivity}
    We suppose that both \(Y^{(0)}\) and \(Y^{(1)}\) are squared integrable and that $ \mathbb{E}\left[Y^{(0)}\right] > 0.$
\end{assumption}

\begin{proposition}[\textbf{Asymptotic normality of \(\hat{\tau}_{\textrm{\tiny RR,N}}\)}]\label{prop:N}
Grant \Cref{a:bernoulli_trial} and \Cref{a:Outcome_positivity}, the Risk Ratio Neyman estimator is asymptotically unbiased and satisfies
\begin{align}\label{HT_VS_N}
 \sqrt{n}\left(\hat{\tau}_{\textrm{\tiny RR,N}} - \tau_{\textrm{\tiny RR}} \right) \stackrel{d}{\rightarrow} \mathcal{N}\left(0, V_{\textrm{\tiny RR,N}} \right)
 \end{align}
where 
\begin{align*}
V_{\textrm{\tiny RR,N}} &=\tau_{\textrm{\tiny RR}}^2\left(\frac{\Var\left(Y^{(1)}\right)}{e\mathbb{E}[Y^{(1)}]^2} + \frac{\Var\left(Y^{(0)}\right)}{(1-e)\mathbb{E}[Y^{(0)}]^2}\right).
 \end{align*}

\end{proposition}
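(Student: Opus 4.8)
The plan is to write $\hat{\tau}_{\textrm{\tiny RR,N}}$ as a smooth function of two empirical group means, establish their joint asymptotic normality, and conclude by the delta method. Define the treated and control means $\hat{\mu}_1 = \frac{1}{N_1}\sum_i T_i Y_i$ and $\hat{\mu}_0 = \frac{1}{N_0}\sum_i (1-T_i) Y_i$, so that $\hat{\tau}_{\textrm{\tiny RR,N}} = \hat{\mu}_1/\hat{\mu}_0$ on the event $\{\hat{\mu}_0 \neq 0\}$. By the SUTVA condition of \Cref{a:bernoulli_trial}, $T_i Y_i = T_i Y_i^{(1)}$ and $(1-T_i)Y_i = (1-T_i)Y_i^{(0)}$; since $\frac{1}{N_1}\sum_i T_i = 1$, I can center to get $\hat{\mu}_1 - \mathbb{E}[Y^{(1)}] = \frac{1}{N_1}\sum_i T_i (Y_i^{(1)} - \mathbb{E}[Y^{(1)}])$, and symmetrically for $\hat{\mu}_0$. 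Viewing $N_1/n$ and $N_0/n$ as sample means of $T_i$ and $1-T_i$, the law of large numbers gives $N_1/n \to e$ and $N_0/n \to 1-e$ almost surely under the trial-positivity condition, while outcome positivity (\Cref{a:Outcome_positivity}) ensures $\mathbb{E}[Y^{(0)}] > 0$; hence $\mathbb{P}[\hat{\mu}_0 \neq 0] \to 1$ and the ``$0$ otherwise'' branch is asymptotically negligible.

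Next I would prove joint asymptotic normality of $(\hat{\mu}_1, \hat{\mu}_0)$. Writing $\sqrt{n}(\hat{\mu}_1 - \mathbb{E}[Y^{(1)}]) = \frac{n}{N_1}\cdot \frac{1}{\sqrt{n}}\sum_i T_i(Y_i^{(1)} - \mathbb{E}[Y^{(1)}])$, the summands are i.i.d.\ and centered, because ignorability together with trial positivity gives $\mathbb{E}[T(Y^{(1)} - \mathbb{E}[Y^{(1)}])] = e\cdot 0 = 0$; their variance is $\mathbb{E}[T(Y^{(1)} - \mathbb{E}[Y^{(1)}])^2] = e\,\Var(Y^{(1)})$ (using $T^2=T$ and ignorability), finite by square-integrability (\Cref{a:Outcome_positivity}). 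Applying the multivariate CLT to the i.i.d.\ vectors $\big(T_i(Y_i^{(1)} - \mathbb{E}[Y^{(1)}]),\, (1-T_i)(Y_i^{(0)} - \mathbb{E}[Y^{(0)}])\big)$ yields joint normality, and the crucial observation is that the off-diagonal term vanishes: since $T_i(1-T_i) = 0$, the cross product $T_i(1-T_i)(Y_i^{(1)} - \mathbb{E}[Y^{(1)}])(Y_i^{(0)} - \mathbb{E}[Y^{(0)}])$ is identically zero, so the two coordinates are asymptotically uncorrelated, hence (being jointly Gaussian) asymptotically independent. Combining this with $N_1/n \to e$, $N_0/n \to 1-e$ through Slutsky's theorem gives
\begin{align*}
\sqrt{n}\begin{pmatrix} \hat{\mu}_1 - \mathbb{E}[Y^{(1)}] \\ \hat{\mu}_0 - \mathbb{E}[Y^{(0)}] \end{pmatrix} \stackrel{d}{\to} \mathcal{N}\!\left(0,\ \operatorname{diag}\!\Big(\tfrac{\Var(Y^{(1)})}{e},\ \tfrac{\Var(Y^{(0)})}{1-e}\Big)\right) =: \mathcal{N}(0,\Sigma).
\end{align*}

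Finally I would apply the bivariate delta method to $h(u,v) = u/v$, which is continuously differentiable at $(\mathbb{E}[Y^{(1)}], \mathbb{E}[Y^{(0)}])$ since $\mathbb{E}[Y^{(0)}] > 0$, with gradient $\nabla h = \big(1/\mathbb{E}[Y^{(0)}],\, -\mathbb{E}[Y^{(1)}]/\mathbb{E}[Y^{(0)}]^2\big)$. Because $h(\mathbb{E}[Y^{(1)}], \mathbb{E}[Y^{(0)}]) = \tau_{\textrm{\tiny RR}}$, this proves asymptotic normality with variance $\nabla h^\top \Sigma \nabla h = \frac{\Var(Y^{(1)})}{e\,\mathbb{E}[Y^{(0)}]^2} + \frac{\mathbb{E}[Y^{(1)}]^2 \Var(Y^{(0)})}{(1-e)\,\mathbb{E}[Y^{(0)}]^4}$, and factoring out $\tau_{\textrm{\tiny RR}}^2 = \mathbb{E}[Y^{(1)}]^2/\mathbb{E}[Y^{(0)}]^2$ recovers exactly $V_{\textrm{\tiny RR,N}}$. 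The main obstacle is not any single hard estimate but the careful bookkeeping: handling the two random normalizations $N_1,N_0$ so that the delta method is applied to genuine i.i.d.\ sample means rather than to ratios, justifying that the vanishing cross-covariance upgrades to asymptotic independence, and carrying the algebraic simplification through without error. I would also flag that ignorability is used essentially here — it is precisely what forces the centered summands to have mean zero — which foreshadows why this estimator becomes invalid once the treatment probability varies across individuals in the observational setting.
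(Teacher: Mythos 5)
Your proof is correct, and it takes a genuinely different route from the paper. The paper proves this proposition via M-estimation theory: it stacks the estimating equations $\psi_0=(1-T)(Y-\theta_0)$, $\psi_1=T(Y-\theta_1)$, $\psi_2=\theta_1-\theta_2\theta_0$, verifies the regularity conditions of Theorem~7.2 in \citet{Stefanski2002Mestimation} (smoothness and integrability of derivatives, nonsingularity of $A(\theta_\infty)$, finiteness of $B(\theta_\infty)$, plus consistency of $\hat{\boldsymbol{\theta}}_n$), and reads off $V_{\textrm{\tiny RR,N}}$ as the bottom-right entry of the sandwich matrix $A^{-1}B(A^{-1})^{\top}$. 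You instead argue directly: center the two group means, apply the multivariate CLT to the i.i.d.\ vectors $\bigl(T_i(Y_i^{(1)}-\mathbb{E}[Y^{(1)}]),\,(1-T_i)(Y_i^{(0)}-\mathbb{E}[Y^{(0)}])\bigr)$, exploit $T_i(1-T_i)=0$ to kill the cross-covariance, absorb the random normalizations $n/N_1\to 1/e$ and $n/N_0\to 1/(1-e)$ via Slutsky, and finish with the bivariate delta method for $h(u,v)=u/v$ — the algebra indeed factors into $\tau_{\textrm{\tiny RR}}^2\bigl(\Var(Y^{(1)})/(e\,\mathbb{E}[Y^{(1)}]^2)+\Var(Y^{(0)})/((1-e)\,\mathbb{E}[Y^{(0)}]^2)\bigr)$. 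What your approach buys is elementariness and self-containedness: every step is a textbook limit theorem, there is no appeal to an external M-estimation theorem whose hypotheses must be checked, and the treatment of the degenerate branch ($\hat{\mu}_0=0$ or $N_1=0$, events of vanishing probability) is explicit. What the paper's approach buys is uniformity and extra output: the same $\psi$-function machinery is reused verbatim for RR-HT, RR-IPW under MLE, and RR-OLS, and the sandwich formula delivers the full joint asymptotic covariance of $(\hat{\mu}_0,\hat{\mu}_1,\hat{\tau}_{\textrm{\tiny RR,N}})$, not only the marginal variance of the ratio. Your closing observation about where ignorability enters (centering the summands) is accurate and matches the role it plays in the paper's verification that $\mathbb{E}[\psi(T,Y,\theta_\infty)]=0$.
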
 

\Cref{prop:N} establishes the asymptotic normality of the RR-N estimator, a simple ratio of mean estimates, which leads to asymptotic confidence intervals (CI). Indeed, according to \Cref{prop:N}, for all $\alpha \in (0,1)$, a $(1-\alpha)$ asymptotic confidence interval for $\tau_{\textrm{\tiny RR}}$ is given by 
\begin{align}
\left[ \hat{\tau}_{\textrm{\tiny RR,N}} \pm z_{1-\alpha/2} \sqrt{\widehat{V_{\textrm{\tiny RR,N}}}/n} \right],
\end{align}
where $\widehat{V_{\textrm{\tiny RR,N}}}$ is any estimation of $V_{\textrm{\tiny RR,N}}$.
Throughout this paper, based on the Central Limit Theorems we establish, we will consider such CI. The properties of the different CI are studied in \Cref{subsec:CI}.

Contrary to \citet{katz1978obtaining}, \Cref{prop:N} is valid for both continuous and binary outcomes. However, considering binary outcomes in \Cref{prop:N} leads to an asymptotic confidence interval equivalent to that presented in \citet{katz1978obtaining} (see Appendix~\ref{app_link_existing_CI}). Deriving a Central Limit Theorem for $\log (\hat{\tau}_{\textrm{\tiny RR,N}})$ instead of $\hat{\tau}_{\textrm{\tiny RR,N}}$ would lead to the exact same CI (see Appendix~\ref{app_delta_method_log}). 

\paragraph{Probability of receiving treatment} As the probability of treatment $e$ is known in an RCT, one could be tempted to consider what we call the Risk Ratio Horvitz-Thomson estimator  \citep[in reference of the Risk Difference Horvitz-Thomson estimator, see][]{Horvitz_Thompson} defined as 
    \begin{align}
     \hat{\tau}_{\textrm{\tiny RR,HT}} = \frac{\frac{1}{n}\sum_{i=1}^n \frac{T_iY_i}{e}}{\frac{1}{n}\sum_{i=1}^n \frac{(1-T_i)Y_i}{1-e}}
     \end{align}
     if \(\sum_{i=1}^n T_i < n\) and \(0\) otherwise.
Indeed, the  frequency of treatments assignments in the sample may be different from the actual probability of receiving treatment  \(e\). Similarly to what \citet{Hirano, Hahn1998efficiencybound, Robins1992EstimatingEE} noticed for the RD, we prove in Appendix~\ref{proof:HT} that opting for \(\hat e\) over \(e\) in the Risk Ratio estimator (thereby employing the RR-N instead of the RR-HT) results in a reduced asymptotic variance, with a larger reduction when $e$ is close to zero or one. More precisely, letting $V_{\textrm{\tiny RR,HT}}$ the asymptotic variance of $\hat{\tau}_{\textrm{\tiny RR,HT}}$, we have 
\begin{align}
V_{\textrm{\tiny RR,N}} &= V_{\textrm{\tiny RR,HT}} - \tau_{\textrm{\tiny RR}}^2/e(1-e). 
\end{align}




\section{Risk Ratio Estimators in Observational Studies}\label{risk_ratio_OBS}


A key difference between RCTs and observational studies is the handling of confounding variables. If not properly addressed, these can distort the true causal association between exposure and outcome due to their correlation with both. Therefore, estimating the Risk Ratio in observational studies is more complex than in RCTs, as randomization assumptions do not apply (i.e. the propensity score now depends on the covariates $X$). 


\begin{assumption}[\textbf{Observational Study Identifiability Assumptions}]
\label{a:observational_trial}
We assume:
\begin{enumerate}[leftmargin=*,label=(\roman*)]
    \item\label{a:Unconfoundedness} \textbf{Unconfoundedness/Conditional Exchangeability:} \( (Y(0), Y(1))  \perp\mkern-9.5mu\perp T~|~X \).
    \item\label{a:Overlap} \textbf{Overlap/Positivity:} \(\exists \eta \in (0,1/2]\) such that \(\eta \le \mathbb{P}[T=1|X] \le 1-\eta\) almost surely.
    \item\label{a:SUTVA_OBS} \textbf{SUTVA:} \(Y = T Y^{(1)} + (1-T)Y^{(0)}\).
    \item\label{a:i.i.d._OBS} \textbf{i.i.d.:} The data is \((X_i, T_i, Y_i^{(0)}, Y_i^{(1)})_{i\in [n]} \stackrel{\textrm{i.i.d.}}{\thicksim} \mathcal{P}\).
\end{enumerate}
\end{assumption}

 \hyperref[a:Unconfoundedness]{Unconfoundedness} means that after accounting for known confounding variables, no hidden factors affect both treatment assignment and outcomes. 
 It is a relaxed form of \hyperref[a:ignorability]{exchangeability}. 

RR-N and RR-HT estimators cannot be used in the context of observational studies, since they are built on the assumption of a constant propensity score. But the RR-N estimator can be extended to observational studies as follows.


\subsection{Risk Ratio Inverse Propensity Weighting (RR-IPW)}

Treatment effect in observational studies can be estimated via reweighting individuals by the inverse of their propensity score, thus giving more weights to people who are very likely/unlikely to be treated. Such a method, called Inverse Propensity Weighting \citep[IPW, see][]{Hirano} for estimating the Risk Difference, can be straightforwardly extended to build Risk Ratio estimators. 

\begin{definition}[\textbf{RR-IPW}]\label{RIPW}
 Grant \Cref{a:Outcome_positivity} and \Cref{a:observational_trial}.
 Given an estimator \(0 < \hat{e}(\cdot) < 1\) of the propensity score \(e(x) = \mathbb{P}\left[T=1 | X = x\right]\), the Risk Ratio IPW, denoted by \(\hat{\tau}_{\textrm{\tiny RR,IPW,n}}\), is defined as
\[
  \hat{\tau}_{\textrm{\tiny RR,IPW}} = \frac{\frac{1}{n}\sum_{i=1}^n \frac{T_iY_i}{\hat{e}(X_i)}}{\frac{1}{n}\sum_{i=1}^n \frac{(1-T_i)Y_i}{1-\hat{e}(X_i)}}.
\]
\end{definition}
\Cref{prop:ipw_normality_obs}  demonstrates the asymptotic normality of the Oracle Ratio IPW estimator, defined as the RR-IPW but where $\hat{e}(\cdot)$ is replaced by the oracle propensity score $e(\cdot)$. 
\begin{proposition}[\textbf{RR-IPW asymptotic normality}]
\label{prop:ipw_normality_obs}
 Grant Assumptions \ref{a:Outcome_positivity} and \ref{a:observational_trial}. Then the Oracle Risk Ratio IPW is  asymptotically unbiased and satisfies 
 \begin{align}
 \sqrt{n}\left(\tau_{\textrm{\tiny RR,IPW}}^{\star} - \tau_{\textrm{\tiny RR}} \right) \stackrel{d}{\rightarrow} \mathcal{N}\left(0, V_{\textrm{\tiny RR,IPW}} \right)  
 \end{align}
  where \(\frac{V_{\textrm{\tiny RR,IPW}}}{\tau_{\textrm{\tiny RR}}^2} = \mathbb{E}\left[\frac{(Y^{(1)})^2}{e(X)\mathbb{E}\left[Y^{(1)}\right]^2}\right] +\mathbb{E}\left[\frac{(Y^{(0)})^2}{(1-e(X))\mathbb{E}\left[Y^{(0)}\right]^2}\right]\).
\end{proposition}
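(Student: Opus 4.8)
The plan is to recognize $\tau_{\textrm{\tiny RR,IPW}}^{\star}$ as a smooth function of two empirical means and to combine a bivariate central limit theorem with the delta method. Write $A_i = T_i Y_i / e(X_i)$ and $B_i = (1-T_i) Y_i / (1-e(X_i))$, so that $\tau_{\textrm{\tiny RR,IPW}}^{\star} = \bar{A}_n / \bar{B}_n$, where $\bar{A}_n = \frac{1}{n}\sum_{i=1}^n A_i$ and $\bar{B}_n = \frac{1}{n}\sum_{i=1}^n B_i$ are averages of i.i.d.\ terms.

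First I would identify the targets of these averages. By SUTVA, $T_i Y_i = T_i Y_i^{(1)}$; conditioning on $X_i$ and using unconfoundedness $(Y^{(0)}, Y^{(1)}) \indep T \mid X$ gives $\E[T Y^{(1)} \mid X] = e(X)\,\E[Y^{(1)} \mid X]$, so that $\E[A_i] = \E[Y^{(1)}]$, and symmetrically $\E[B_i] = \E[Y^{(0)}]$. Thus $\bar{A}_n$ and $\bar{B}_n$ are consistent for $\E[Y^{(1)}]$ and $\E[Y^{(0)}]$, and their ratio for $\tau_{\textrm{\tiny RR}}$.

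Second, I would compute the $2 \times 2$ covariance matrix $\Sigma$ of $(A_i, B_i)$. Since $T_i^2 = T_i$, the same conditioning argument yields $\E[A_i^2] = \E\big[(Y^{(1)})^2 / e(X)\big]$ and $\E[B_i^2] = \E\big[(Y^{(0)})^2 / (1-e(X))\big]$; these are finite because the potential outcomes are square-integrable (\Cref{a:Outcome_positivity}) and the overlap bounds $\eta \le e(X) \le 1-\eta$ of \Cref{a:observational_trial} keep $1/e(X)$ and $1/(1-e(X))$ bounded. The crucial simplification is off-diagonal: because $T_i(1-T_i) = 0$, we have $A_i B_i \equiv 0$, hence $\Cov(A_i, B_i) = -\E[Y^{(1)}]\,\E[Y^{(0)}]$. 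The multivariate CLT then delivers $\sqrt{n}\big((\bar{A}_n, \bar{B}_n) - (\E[Y^{(1)}], \E[Y^{(0)}])\big) \stackrel{d}{\rightarrow} \mathcal{N}(0, \Sigma)$.

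Finally I would apply the delta method with $g(a,b) = a/b$, which is differentiable at the limit point precisely because $\E[Y^{(0)}] > 0$ (\Cref{a:Outcome_positivity}), with gradient $\nabla g = \big(1/\E[Y^{(0)}],\, -\E[Y^{(1)}]/\E[Y^{(0)}]^2\big)$ there. The asymptotic variance is the quadratic form $\nabla g^\top \Sigma\, \nabla g$; substituting the entries above — in particular the vanishing cross-product, which makes $\Cov(A_i, B_i) = -\E[Y^{(1)}]\E[Y^{(0)}]$ — the terms proportional to $\E[Y^{(1)}]^2 / \E[Y^{(0)}]^2$ cancel, leaving exactly $V_{\textrm{\tiny RR,IPW}}$ after factoring out $\tau_{\textrm{\tiny RR}}^2$. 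I do not anticipate a deep obstacle: the only points requiring care are checking that the second moments are finite (where overlap is indispensable) and executing the variance algebra, with the identity $A_i B_i = 0$ being the observation that keeps it clean.
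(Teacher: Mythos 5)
Your proof is correct, and it rests on the same decomposition as the paper's: writing the oracle estimator as a ratio of the i.i.d.\ sample means of $g_1(Z)=TY/e(X)$ and $g_0(Z)=(1-T)Y/(1-e(X))$, identifying their expectations via SUTVA and unconfoundedness, and exploiting $T(1-T)=0$ to get $\Cov(g_1(Z),g_0(Z))=-\E[Y^{(1)}]\E[Y^{(0)}]$ — all of which appear verbatim in the paper's argument. The one genuine difference is the final step: you invoke the bivariate CLT plus the delta method for $g(a,b)=a/b$, whereas the paper routes this step through its Theorem~\ref{Th1}, a ratio-asymptotic-normality lemma proved via M-estimation theory (Stefanski--Boos sandwich formula). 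The two yield the identical limiting variance, so nothing is lost; your route is more elementary and self-contained, while the paper's lemma is designed for reuse (it handles RR-N, RR-G, and the oracle AIPW ratio with one statement, and the M-estimation framework extends naturally to settings where nuisance parameters are jointly estimated, as in the logistic-MLE result of \Cref{prop:IPW_MLE}, where a plain delta method on two sample means would no longer apply). A minor point in your favor: your integrability check (square-integrability of the potential outcomes combined with the overlap bounds $\eta \le e(X) \le 1-\eta$) is exactly what the stated assumptions support, whereas the paper's proof asserts pointwise bounds on $g_0,g_1$ that implicitly require a bounded outcome.
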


Note that when the propensity score is constant, one can retrieve the variance of the \hyperref[Var_HT]{RR-HT} as expected. Note also that the asymptotic variance may be large, due to strata on which the propensity score is close to zero or one. In other words, a correct estimation is difficult when some subpopulations are unlikely to be treated (or untreated).

If we assume a logistic model for the true propensity score and estimate it using maximum likelihood estimation (MLE), the variance of the RR-IPW can be derived.
\begin{assumption}\label{a:logistic}
We assume \( \mathbb{E}[X X^\top] \) is positive definite, \( X \) is Sub-Gaussian, and for all \( X \in \mathbb{R}^p \),
\[
\mathbb{P}(T = 1 | X) = \{1 + \exp(-X^\top \beta_\infty^1 - \beta_\infty^0)\}^{-1},
\]
where \( \boldsymbol{\beta}_\infty := (\beta_\infty^0, \beta_\infty^1) \in \mathbb{R}^{p+1} \).
\end{assumption}
For any positive semi-definite matrix $A$ and any vector $X$, let \(\|X\|_A = \sqrt{X^{\top} A X}\). 
\begin{proposition}[Asymptotics of \( \hat{\tau}_{\textrm{RR,IPW}} \) under a logistic model]\label{prop:IPW_MLE}
Under Assumptions \ref{a:observational_trial} and \ref{a:logistic}, the RR-IPW estimator, with the propensity score estimated via MLE, satisfies
\[
\sqrt{n}(\hat{\tau}_{\textrm{RR,IPW}} - \tau_{\textrm{RR}}) \stackrel{d}{\rightarrow} \mathcal{N}(0, V_{\textrm{RR-MLE}}),
\]
with
$
V_{\textrm{RR-MLE}} = V_{\textrm{RR-IPW}} - \tau_{\textrm{RR}}^2 
\bigg\| \frac{c_{10}}{\mathbb{E}[Y(0)]} + \frac{c_{01}}{\mathbb{E}[Y(1)]} \bigg\|_{Q^{-1}}^2,
$\\
where \(
c_{10} = \mathbb{E}[\Tilde{X} \, e(X) Y^{(0)}]\),
\(c_{01} = \mathbb{E}[\Tilde{X}(1 - e(X)) Y^{(1)}]
\) and \(Q = \mathbb{E}[e(X)(1 - e(X)) \Tilde{X}\Tilde{X}^\top]\) with \( \Tilde{X} := (1, X) \).
\end{proposition}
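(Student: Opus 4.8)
The plan is to treat $(\hat\beta,\hat{\tau}_{\textrm{RR,IPW}})$ as a stacked $Z$-estimator and linearize. Write $\hat\mu_1=\frac1n\sum_i\psi_1(O_i;\hat\beta)$ and $\hat\mu_0=\frac1n\sum_i\psi_0(O_i;\hat\beta)$ with $O=(X,T,Y)$, $e_\beta(x)=\{1+\exp(-\tilde x^\top\beta)\}^{-1}$, $\psi_1(O;\beta)=TY/e_\beta(X)$ and $\psi_0(O;\beta)=(1-T)Y/(1-e_\beta(X))$, so that $\hat{\tau}_{\textrm{RR,IPW}}=\hat\mu_1/\hat\mu_0$. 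First I would invoke classical logistic-MLE theory: under \Cref{a:logistic} the score equation $\frac1n\sum_i\tilde X_i(T_i-e_\beta(X_i))=0$ gives the expansion $\sqrt n(\hat\beta-\boldsymbol\beta_\infty)=\frac1{\sqrt n}\sum_i S(O_i)+o_P(1)$, where $S(O)=Q^{-1}\tilde X(T-e(X))$ is the MLE influence function and $Q=\mathbb{E}[e(X)(1-e(X))\tilde X\tilde X^\top]$ is the Fisher information, invertible because $\mathbb{E}[\tilde X\tilde X^\top]\succ0$ and overlap (\Cref{a:Overlap}) holds.

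Second, I would Taylor-expand each plug-in mean in $\hat\beta$ around $\boldsymbol\beta_\infty$, obtaining $\sqrt n(\hat\mu_k-\mathbb{E}[Y^{(k)}])=\frac1{\sqrt n}\sum_i(\psi_k(O_i;\boldsymbol\beta_\infty)-\mathbb{E}[Y^{(k)}])+D_k\sqrt n(\hat\beta-\boldsymbol\beta_\infty)+o_P(1)$ with $D_k=\mathbb{E}[\partial_\beta\psi_k(O;\boldsymbol\beta_\infty)]$. Using the logistic identity $\partial_\beta e_\beta=e_\beta(1-e_\beta)\tilde X$ together with unconfoundedness (which gives $\mathbb{E}[TY\mid X]=e(X)\mathbb{E}[Y^{(1)}\mid X]$ and $\mathbb{E}[(1-T)Y\mid X]=(1-e(X))\mathbb{E}[Y^{(0)}\mid X]$), a short conditional computation yields $D_1=-c_{01}$ and $D_0=c_{10}$. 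Substituting the expansion of $\hat\beta$ and applying the delta method to the ratio then produces the influence function $\phi=\phi^\star+b^\top S$, where $\phi^\star$ is the oracle RR-IPW influence function, whose variance equals $V_{\textrm{RR-IPW}}$ by \Cref{prop:ipw_normality_obs}, and $b=-\big(c_{01}+\tau_{\textrm{RR}}\,c_{10}\big)/\mathbb{E}[Y^{(0)}]$.

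Third, I would expand $\Var(\phi)=V_{\textrm{RR-IPW}}+2\Cov(\phi^\star,b^\top S)+\Var(b^\top S)$. Since $\mathbb{E}[(T-e(X))^2\mid X]=e(X)(1-e(X))$, one gets $\Var(b^\top S)=\|b\|_{Q^{-1}}^2$; and conditioning on $X$ in the cross terms $\mathbb{E}[\psi_1\,\tilde X(T-e(X))]$ and $\mathbb{E}[\psi_0\,\tilde X(T-e(X))]$ gives $c_{01}$ and $-c_{10}$ respectively, so that $\Cov(\phi^\star,b^\top S)=b^\top Q^{-1}(-b)=-\|b\|_{Q^{-1}}^2$. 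Hence $\Var(\phi)=V_{\textrm{RR-IPW}}-\|b\|_{Q^{-1}}^2$. A final algebraic rearrangement using $\tau_{\textrm{RR}}=\mathbb{E}[Y^{(1)}]/\mathbb{E}[Y^{(0)}]$ shows $\|b\|_{Q^{-1}}^2=\tau_{\textrm{RR}}^2\big\|c_{10}/\mathbb{E}[Y(0)]+c_{01}/\mathbb{E}[Y(1)]\big\|_{Q^{-1}}^2$, which is exactly the claimed $V_{\textrm{RR-MLE}}$.

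The main obstacle will be the rigorous justification of the stacked linearization rather than the algebra: one must show that the remainders in the Taylor expansions of the maps $1/e_\beta$ and $1/(1-e_\beta)$ are $o_P(n^{-1/2})$ uniformly over a shrinking neighborhood of $\boldsymbol\beta_\infty$, which demands controlling these maps near the boundary. To this end I would combine overlap (\Cref{a:Overlap}), the Sub-Gaussianity of $X$ and $\mathbb{E}[\tilde X\tilde X^\top]\succ0$ from \Cref{a:logistic} (yielding $\hat\beta\to\boldsymbol\beta_\infty$ and a bounded $Q^{-1}$), and the square-integrability of $Y^{(0)},Y^{(1)}$ from \Cref{a:Outcome_positivity}, so that all moment and integrability conditions required for the standard asymptotic normality of $Z$-estimators are met.
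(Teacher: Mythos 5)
Your proposal is correct: the computations $D_1=-c_{01}$, $D_0=c_{10}$, $b=-\bigl(c_{01}+\tau_{\textrm{RR}}\,c_{10}\bigr)/\mathbb{E}[Y^{(0)}]$, the covariance identity, and the final norm rearrangement all check out against the stated $V_{\textrm{RR-MLE}}$. The route differs from the paper's in execution, though both live inside stacked M-estimation. The paper stacks the logistic score, the two weighted means, \emph{and} the ratio itself into one four-block system $\psi(Z,\boldsymbol{\theta})$ with $\boldsymbol{\theta}=(\boldsymbol{\beta},\theta_0,\theta_1,\theta_2)$, verifies the regularity conditions of Theorem~7.2 of Stefanski and Boos (uniqueness of $\boldsymbol{\theta}_\infty$ via strict convexity of the population logistic loss, integrability of second derivatives via sub-Gaussianity), inverts $A(\boldsymbol{\theta}_\infty)$ by a Schur complement, and reads the claimed variance off the bottom-right entry of the sandwich $A^{-1}B(A^{-1})^{\top}$. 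You instead proceed sequentially: black-box the MLE expansion with influence function $S=Q^{-1}\tilde{X}(T-e(X))$, Taylor-expand the two plug-in means in $\boldsymbol{\beta}$, apply the delta method to the ratio, and compute $\Var(\phi^{\star}+b^{\top}S)$ directly. What your decomposition buys is transparency about the variance-reduction mechanism: the identity $\Cov(\phi^{\star},b^{\top}S)=-\Var(b^{\top}S)$ makes it plain why plugging in $\hat{e}$ beats the oracle $e$ (the correction term is negatively correlated with the oracle influence function), a fact that stays hidden inside the paper's matrix algebra and that the paper only remarks on informally after the proposition. What the paper's approach buys is that consistency of the stacked estimator and remainder control are delegated wholesale to one citable theorem, whereas your argument must justify by hand that the Taylor remainders of $1/e_{\beta}$ and $1/(1-e_{\beta})$ are $o_P(n^{-1/2})$ near $\boldsymbol{\beta}_\infty$; you correctly flag this as the main obstacle and name the right ingredients (overlap, sub-Gaussianity of $X$, positive-definiteness of $\mathbb{E}[\tilde{X}\tilde{X}^{\top}]$, square-integrability of the outcomes), so the remaining gap is one of labor, not of ideas.
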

The variance of \(\hat{\tau}_{\textrm{RR,IPW}}\) is notably smaller than that of the oracle estimator \(\tau_{\textrm{RR,IPW}}^\star\). While this might initially seem counterintuitive, similar observations have been made in RCT, as highlighted in studies by \citet{Hirano, Hahn1998efficiencybound, Robins1992EstimatingEE} and in observational studies by \citet{lunceford2004stratification}. Choosing \(\hat{e}\) over \(e\) in the Risk Ratio estimator (thus using \(\hat{\tau}_{\textrm{RR,IPW}}\) instead of \(\tau_{\textrm{RR,IPW}}^\star\)) leads to a reduction in asymptotic variance.
\subsection{Risk Ratio G-formula estimator (RR-G)}

For all \((x,t) \in \mathbb{R}^p\times\{0,1\}\), let \(\mu_{(t)}(x) = \mathbb{E}\left[Y^{(t)} | X = x \right]\) be the surface response of the potential outcome.
Assume that we have at our disposal two estimators \(\hat \mu_{(0)}(\cdot)\) and \(\hat \mu_{(1)}(\cdot)\) which respectively estimate \(\mu_{(0)}(\cdot)\) and \(\mu_{(1)}(\cdot)\). We then employ the ratio of these two potential outcome estimations to compute the Risk Ratio. This method, termed the plug-in G-formula or outcome-based modeling, was first introduced by \citet{Robins1986} for the Risk Difference. 

\begin{definition}[\textbf{RR G-formula}]\label{RG}
    Given two estimators \(\hat{\mu}_{(0)}(\cdot)\) and \(\hat{\mu}_{(1)}(\cdot)\), the Risk Ratio G-formula estimator, denoted \(\hat{\tau}_{\textrm{\tiny RR,G}}\), is defined as 
    \begin{align}
      \hat{\tau}_{\textrm{\tiny RR,G}} = \frac{\frac{1}{n}\sum_{i=1}^n \hat \mu_{(1)}(X_i)}{\frac{1}{n}\sum_{i=1}^n \hat \mu_{(0)}(X_i)},
      \end{align}
      if \(\frac{1}{n}\sum_{i=1}^n \hat \mu_{(0)}(X_i) \neq 0\) and zero otherwise.

\end{definition}

The properties of RR-G depend on the estimators $\hat{\mu}_{(0)}$ and $\hat{\mu}_{(1)}$. We analyze in the following the behavior of Oracle Risk Ratio G-formula estimator defined as $
    \tau_{\textrm{\tiny RR,G}}^{\star} = (\frac{1}{n}\sum_{i=1}^n \mu_{(1)}(X_i))/(\frac{1}{n}\sum_{i=1}^n \mu_{(0)}(X_i))$.

\begin{proposition}[\textbf{Asymptotic Normality of \(\tau_{\textrm{\tiny RR,G}}^{\star}\)}]
\label{prop:g_formula}
Grant Assumptions \ref{a:bernoulli_trial} and \ref{a:Outcome_positivity}.
Then, the Oracle Risk Ratio G-formula estimator, \(\tau_{\textrm{\tiny RR,G}}^{\star}\), is asymptotically unbiased and satisfies
\begin{align}
\sqrt{n}\left(\tau_{\textrm{\tiny RR,G}}^{\star} - \tau_{\textrm{\tiny RR}} \right) \stackrel{d}{\rightarrow} \mathcal{N}\left(0, V_{\textrm{\tiny RR,G}} \right),
\end{align}
where \(V_{\textrm{\tiny RR,G}} = \tau_{\textrm{\tiny RR}}^2 \Var\left(\frac{\mu_{(1)}(X)}{\mathbb{E}\left[Y^{(1)}\right]} - \frac{\mu_{(0)}(X)}{\mathbb{E}\left[Y^{(0)}\right]}\right).\)

\end{proposition}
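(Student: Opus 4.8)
The plan is to recognize $\tau_{\textrm{\tiny RR,G}}^{\star}$ as a smooth functional---a ratio---of a pair of empirical means, and then to conclude by combining the multivariate central limit theorem with the delta method. First I would set $\bar A_n = \frac{1}{n}\sum_{i=1}^n \mu_{(1)}(X_i)$ and $\bar B_n = \frac{1}{n}\sum_{i=1}^n \mu_{(0)}(X_i)$, so that $\tau_{\textrm{\tiny RR,G}}^{\star} = g(\bar A_n, \bar B_n)$ with $g(a,b) = a/b$. Under the i.i.d.\ assumption of \Cref{a:bernoulli_trial}, the summands $(\mu_{(1)}(X_i), \mu_{(0)}(X_i))$ are i.i.d.\ random vectors. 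The tower property gives $\mathbb{E}[\mu_{(t)}(X)] = \mathbb{E}[Y^{(t)}]$, so, writing $m_1 := \mathbb{E}[Y^{(1)}]$ and $m_0 := \mathbb{E}[Y^{(0)}]$, we have $\mathbb{E}[\bar A_n] = m_1$ and $\mathbb{E}[\bar B_n] = m_0$. Finite second moments follow from conditional Jensen's inequality, since $\mathbb{E}[\mu_{(t)}(X)^2] = \mathbb{E}[\mathbb{E}[Y^{(t)}\mid X]^2] \le \mathbb{E}[(Y^{(t)})^2] < \infty$ by the square-integrability part of \Cref{a:Outcome_positivity}; hence the vector $(\mu_{(1)}(X), \mu_{(0)}(X))$ admits a genuine covariance matrix $\Sigma$.

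Next I would establish both the consistency (\emph{asymptotic unbiasedness}) and the limiting law. By the strong law of large numbers $(\bar A_n, \bar B_n) \to (m_1, m_0)$ almost surely; since $m_0 = \mathbb{E}[Y^{(0)}] > 0$ by \Cref{a:Outcome_positivity}, the map $g$ is continuous near $(m_1, m_0)$ and the continuous mapping theorem yields $\tau_{\textrm{\tiny RR,G}}^{\star} \to m_1/m_0 = \tau_{\textrm{\tiny RR}}$. For the fluctuations, the multivariate CLT gives
\[
\sqrt{n}\big((\bar A_n, \bar B_n) - (m_1, m_0)\big) \stackrel{d}{\rightarrow} \mathcal{N}(0, \Sigma).
\]
Because $m_0 > 0$, $g$ is differentiable at $(m_1, m_0)$ with gradient $\nabla g(m_1, m_0) = (1/m_0,\, -m_1/m_0^2)$, so the delta method delivers the announced convergence with asymptotic variance $V_{\textrm{\tiny RR,G}} = \nabla g^{\top} \Sigma\, \nabla g$.

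It then remains to rewrite this quadratic form in the claimed shape. Reading $\nabla g^{\top} \Sigma\, \nabla g$ as the variance of the linear combination $\nabla g \cdot (\mu_{(1)}(X), \mu_{(0)}(X))$ gives
\[
V_{\textrm{\tiny RR,G}} = \Var\!\left( \frac{\mu_{(1)}(X)}{m_0} - \frac{m_1\,\mu_{(0)}(X)}{m_0^2} \right).
\]
Factoring out $m_1/m_0 = \tau_{\textrm{\tiny RR}}$ from inside the variance turns this into $\tau_{\textrm{\tiny RR}}^2\,\Var\!\left( \mu_{(1)}(X)/m_1 - \mu_{(0)}(X)/m_0 \right)$, which is exactly $V_{\textrm{\tiny RR,G}}$ as stated. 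I do not anticipate a genuine obstacle: the argument is a textbook ratio-of-means delta method, and the only points needing care are the verification of finite second moments for the oracle regression functions (handled by conditional Jensen above) and the bookkeeping that contracts the gradient-weighted covariance into the compact single-variance expression. The real subtlety---namely the extra estimation error injected when the oracle $\mu_{(t)}$ are replaced by estimates $\hat\mu_{(t)}$ in the feasible RR-G---lies beyond this oracle statement.
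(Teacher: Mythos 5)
Your proof is correct, but it follows a genuinely different route from the paper's. The paper deduces this proposition from its general ratio-of-means lemma (\Cref{Th1}): setting $g_1(Z)=\mu_{(1)}(X)$ and $g_0(Z)=\mu_{(0)}(X)$, it invokes that lemma, which is itself proved by M-estimation --- stacking the two means and the ratio into a three-dimensional estimating function $\psi$, verifying the regularity conditions of Theorem~7.2 in \citet{Stefanski2002Mestimation}, and reading the limiting variance off the sandwich matrix $A(\theta_\infty)^{-1}B(\theta_\infty)\left(A(\theta_\infty)^{-1}\right)^{\top}$. You instead apply the bivariate CLT to $(\bar A_n,\bar B_n)$ followed by the delta method for $g(a,b)=a/b$; the two computations produce the same quadratic form, and your contraction of $\nabla g^{\top}\Sigma\,\nabla g$ into $\tau_{\textrm{\tiny RR}}^2\Var\left(\mu_{(1)}(X)/\mathbb{E}[Y^{(1)}]-\mu_{(0)}(X)/\mathbb{E}[Y^{(0)}]\right)$ is exactly right. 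What your approach buys is economy: it is elementary and self-contained, needing no M-estimation machinery for what is a textbook ratio-of-means result. What the paper's approach buys is reuse: the same lemma is applied verbatim to RR-HT, RR-IPW and the oracle part of RR-AIPW, and the M-estimation formalism extends naturally to the harder propositions where nuisance parameters are estimated jointly with the target (the logistic-MLE IPW and the OLS G-formula), which a bare delta-method argument would not cover as uniformly. Two minor remarks: your justification of square-integrability of $\mu_{(t)}(X)$ by conditional Jensen under \Cref{a:Outcome_positivity} is actually tighter than the paper's own proof, which asserts boundedness of the $\mu_{(t)}$ even though boundedness is not among the proposition's hypotheses (and its appeal to \Cref{Th2} rather than \Cref{Th1} there appears to be a typo); and, like the paper, your final factorization implicitly requires $\mathbb{E}[Y^{(1)}]\neq 0$, a condition already presupposed by the form of the stated variance.
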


\Cref{prop:g_formula} establishes that the Oracle Risk Ratio G-formula estimator is asymptotically normal. Surprisingly, in the case where there is no effect (i.e. $\tau_{\textrm{RR}}=1$), the asymptotic variance is driven by the variance of the Risk Difference on each strata determined by $X$, namely $\Var(\mu_{(1)}(X) - \mu_{(0)}(X))$. By considering the Oracle RR-G instead of RR-G, we remove the additional randomness related to the estimation of the surface responses. It is thus likely that the true variance of RR-G is larger than that of Oracle RR-G, as shown below. 

Assuming a linear model for \( Y^{(t)} \) and estimating both response surfaces  \(\hat \mu_{(0)}\) and \(\hat \mu_{(1)}\) using ordinary least squares, the variance of the RR-G can be derived.
\begin{assumption}[Linear model]\label{a:linear_model}
For all \(t \in \{0,1\}\),
\begin{align*}
    Y^{(t)} = c_{(t)} +  X^{\top} \beta_{(t)}  + \varepsilon_{(t)} \quad \mathbb{E}[X] = \mu\\
 \mathbb{E}[\varepsilon_{(t)} | X] = 0 \quad \text{Var}[\varepsilon_{(t)} | X] = \sigma^2,
\end{align*}
where we assume that $ Y^{(t)} \geq c >0$ for some $c$. 
\end{assumption}
\begin{proposition}[Asymptotic normality of \(\hat{\tau}_{\textrm{RR,OLS}}\)]\label{OLS_RR_OBS}
Grant Assumptions \ref{a:observational_trial} and \ref{a:linear_model}. Then, the RR G-formula estimator $\hat{\tau}_{\textrm{RR,OLS}}$ that uses linear regression to estimate $\mu_{(t)}$ satisfies
\[\sqrt{n}(\hat{\tau}_{\textrm{RR,OLS}}-\tau_{\textrm{RR}}) \stackrel{d}{\rightarrow} \mathcal{N}\left(0, V_{\textrm{RR-OLS}}\right),\]
where, letting $\nu_{t} = \mathbb{E}[X|T=t]$ and $\Sigma_t = \Var(X|T=t)$,
\begin{align*}
& \frac{V_{\textrm{RR-OLS}}}{\tau_{\textrm{\tiny RR}}^2} =  \left\Vert\frac{\beta_{(1)}}{\mathbb{E}\left[Y^{(1)}\right]} - \frac{\beta_{(0)}}{\mathbb{E}\left[Y^{(0)}\right]}\right\Vert_{\Sigma}^2 + \sigma^2 \\
& \times \left(\frac{1+(1-e)^2\|\nu_1 - \nu_0\|^2_{\Sigma_1^{-1}}}{e\mathbb{E}\left[Y^{(1)}\right]^2} + \frac{1+e^2\|\nu_1 - \nu_0\|^2_{\Sigma_0^{-1}}}{(1-e)\mathbb{E}\left[Y^{(0)}\right]^2}\right).
\end{align*}
\end{proposition}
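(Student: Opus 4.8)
The plan is to view \(\hat{\tau}_{\textrm{RR,OLS}}\) as a smooth function of a small number of jointly asymptotically normal quantities and then invoke the multivariate delta method. Write \(\tilde{X} = (1, X^\top)^\top\) and let \(\hat{\theta}_{(t)} = (\hat{c}_{(t)}, \hat{\beta}_{(t)}^\top)^\top\) denote the OLS coefficients from regressing \(Y_i\) on \(\tilde{X}_i\) over the subsample \(\{i : T_i = t\}\). Under \Cref{a:linear_model} together with unconfoundedness (\Cref{a:observational_trial}), we have \(\mathbb{E}[Y \mid T = t, X] = c_{(t)} + X^\top \beta_{(t)}\), so each regression is correctly specified and \(\hat{\theta}_{(t)} \to \theta_{(t)} := (c_{(t)}, \beta_{(t)}^\top)^\top\). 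Since \(\hat{\tau}_{\textrm{RR,OLS}} = (\hat{c}_{(1)} + \bar{X}^\top \hat{\beta}_{(1)})/(\hat{c}_{(0)} + \bar{X}^\top \hat{\beta}_{(0)})\) with \(\bar{X} = n^{-1}\sum_i X_i\), I would write \(\hat{\tau}_{\textrm{RR,OLS}} = \phi(\hat{\theta}_{(1)}, \hat{\theta}_{(0)}, \bar{X})\) and \(\tau_{\textrm{RR}} = \phi(\theta_{(1)}, \theta_{(0)}, \mu)\) for the map \(\phi(\theta_1, \theta_0, m) = (c_1 + m^\top \beta_1)/(c_0 + m^\top \beta_0)\), using that \(A_t := \mathbb{E}[Y^{(t)}] = c_{(t)} + \mu^\top \beta_{(t)}\) and \(\tau_{\textrm{RR}} = A_1/A_0\).

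First I would establish the joint asymptotic normality of \(\sqrt{n}\,(\hat{\theta}_{(1)} - \theta_{(1)},\, \hat{\theta}_{(0)} - \theta_{(0)},\, \bar{X} - \mu)\). Standard OLS theory gives the asymptotically linear representation \(\hat{\theta}_{(t)} - \theta_{(t)} = M_t^{-1} N_t^{-1}\sum_{i : T_i = t} \tilde{X}_i \varepsilon_{(t),i} + o_P(n^{-1/2})\), where \(M_t = \mathbb{E}[\tilde{X}\tilde{X}^\top \mid T = t]\) and \(N_t = \sum_i \mathbf{1}\{T_i = t\}\). Since \(N_1/n \to e := \mathbb{P}(T=1)\) and \(N_0/n \to 1-e\) almost surely, and \(\Var(\varepsilon_{(t)} \mid X) = \sigma^2\) with \(\mathbb{E}[\varepsilon_{(t)} \mid X, T] = \mathbb{E}[\varepsilon_{(t)} \mid X] = 0\) by unconfoundedness, the marginal limiting covariances are \(\sigma^2 M_1^{-1}/e\), \(\sigma^2 M_0^{-1}/(1-e)\), and \(\Sigma = \Var(X)\). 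The crucial point is that the cross-covariances vanish: \(\hat{\theta}_{(1)}\) and \(\hat{\theta}_{(0)}\) are built from disjoint subsamples, hence are conditionally independent given the treatment vector; and each \(\hat{\theta}_{(t)}\) is uncorrelated with \(\bar{X}\) because \(\mathbb{E}[\tilde{X}_i \varepsilon_{(t),i} (X_i - \mu)^\top \mid T_i = t] = \mathbb{E}[\tilde{X}_i (X_i - \mu)^\top \mathbb{E}[\varepsilon_{(t),i} \mid X_i] \mid T_i = t] = 0\). Thus the joint limiting covariance matrix is block diagonal. This step is the main obstacle, since the subsample sizes \(N_t\) are random and the \(X_i\)'s are shared between \(\bar{X}\) and both regressions; I would make it rigorous either by conditioning on \((T_1, \dots, T_n)\) and applying a Lindeberg–Cramér–Wold argument, or by stacking all contributions into a single i.i.d.\ average and invoking a multivariate CLT, which needs only the mild moment and non-degeneracy conditions (finite fourth moments of \(X\), invertibility of each \(M_t\)) implicit in the setting.

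Next I would apply the multivariate delta method to \(\phi\) at \((\theta_{(1)}, \theta_{(0)}, \mu)\). Writing \(\tilde{\mu} = (1, \mu^\top)^\top\), the gradient blocks are \(\partial\phi/\partial\theta_1 = \tilde{\mu}/A_0\), \(\partial\phi/\partial\theta_0 = -(A_1/A_0^2)\tilde{\mu}\), and \(\partial\phi/\partial m = (\beta_{(1)} - \tau_{\textrm{RR}}\beta_{(0)})/A_0\). Combined with the block-diagonal covariance, this yields
\[
V_{\textrm{RR-OLS}} = \frac{\sigma^2}{e A_0^2}\,\tilde{\mu}^\top M_1^{-1}\tilde{\mu} + \frac{\sigma^2 A_1^2}{(1-e) A_0^4}\,\tilde{\mu}^\top M_0^{-1}\tilde{\mu} + \frac{1}{A_0^2}\,\big\|\beta_{(1)} - \tau_{\textrm{RR}}\beta_{(0)}\big\|_\Sigma^2.
\]

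The final step is to simplify the quadratic forms \(\tilde{\mu}^\top M_t^{-1}\tilde{\mu}\). Using the block structure
\[
M_t = \begin{pmatrix} 1 & \nu_t^\top \\ \nu_t & \Sigma_t + \nu_t \nu_t^\top \end{pmatrix},
\]
whose Schur complement of the top-left entry is \(\Sigma_t\), a direct block-inverse computation gives \(\tilde{\mu}^\top M_t^{-1}\tilde{\mu} = 1 + \|\nu_t - \mu\|_{\Sigma_t^{-1}}^2\). Substituting \(\mu = e\nu_1 + (1-e)\nu_0\) turns \(\nu_1 - \mu = (1-e)(\nu_1 - \nu_0)\) and \(\nu_0 - \mu = -e(\nu_1 - \nu_0)\), so that \(\tilde{\mu}^\top M_1^{-1}\tilde{\mu} = 1 + (1-e)^2 \|\nu_1 - \nu_0\|_{\Sigma_1^{-1}}^2\) and \(\tilde{\mu}^\top M_0^{-1}\tilde{\mu} = 1 + e^2 \|\nu_1 - \nu_0\|_{\Sigma_0^{-1}}^2\). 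Finally, using \(\beta_{(1)} - \tau_{\textrm{RR}}\beta_{(0)} = A_1(\beta_{(1)}/A_1 - \beta_{(0)}/A_0)\) and \(\tau_{\textrm{RR}}^2 = A_1^2/A_0^2\) to factor \(\tau_{\textrm{RR}}^2\) out of every term recovers the stated expression for \(V_{\textrm{RR-OLS}}/\tau_{\textrm{RR}}^2\).
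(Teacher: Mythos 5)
Your proposal is correct, and it reaches the stated variance by the same overall strategy as the paper (joint asymptotic normality of finitely many estimated quantities, followed by linearization of the ratio), but with a genuinely different and leaner parametrization. The paper runs M-estimation (Theorem~7.2 of Stefanski and Boos) on the five-block vector $(\bar{Z}_{(0)}, \bar{Z}_{(1)}, \hat{\gamma}_{(0)}, \hat{\gamma}_{(1)}, \hat{e})$, reconstructs the overall mean via $\bar{Z} = \hat{e}\,\bar{Z}_{(1)} + (1-\hat{e})\,\bar{Z}_{(0)}$, and must then invoke the law of total covariance, $\Var(Z) = (1-e)\Var(Z \mid T=0) + e\Var(Z \mid T=1) + e(1-e)\Delta\Delta^\top$, to recombine the within-arm covariance blocks into the single $\Vert \cdot \Vert_{\Sigma}^2$ term; this recombination occupies a substantial part of the paper's computation. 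You instead track only $(\hat{\theta}_{(1)}, \hat{\theta}_{(0)}, \bar{X})$, so $\Var(X) = \Sigma$ appears directly as a diagonal block and both $\hat{e}$ and the total-covariance step disappear; the price is that you must argue the cross-covariances vanish, which you do correctly via $\mathbb{E}[\varepsilon_{(t)} \mid X, T] = 0$ (using unconfoundedness to pass from $\mathbb{E}[\varepsilon_{(t)} \mid X] = 0$) — the same mechanism by which the paper's $B(\theta_\infty)$ matrix turns out block diagonal. The step you flag as the main obstacle (joint CLT with random subsample sizes $N_t$) is exactly what the paper's M-estimation framework packages away: your suggested fix of stacking everything into a single i.i.d.\ average of terms like $\mathds{1}\{T_i = t\}\tilde{X}_i\varepsilon_{(t),i}$ and applying Slutsky for the factor $(n^{-1}\sum_i \mathds{1}\{T_i=t\}\tilde{X}_i\tilde{X}_i^\top)^{-1} \to (e M_t)^{-1}$ is precisely the estimating-equation route, so it closes rigorously. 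Your final Schur-complement simplification $\tilde{\mu}^\top M_t^{-1}\tilde{\mu} = 1 + \|\nu_t - \mu\|_{\Sigma_t^{-1}}^2$ together with $\mu = e\nu_1 + (1-e)\nu_0$ is identical to the paper's treatment of $\mathbb{E}[Z]^\top Q_t^{-1}\mathbb{E}[Z]$, and your bookkeeping factoring out $\tau_{\textrm{RR}}^2$ checks out exactly.
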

The variance of RR-OLS can be decomposed in two terms: the first term corresponds to the oracle variance of RR-G, that is $V_{\textrm{\tiny RR,G}} /\tau_{\textrm{\tiny RR}}^2$; the second term appears due to the estimation of response surfaces via OLS. Contrary to RR-IPW, the variance of the oracle estimator for the G-formula is smaller than that of the OLS estimator. Note that if we use RR-OLS in a \hyperref[a:bernoulli_trial]{Bernoulli Trial}, then one can show that even in an RCT setting, adjusting for covariates is beneficial as the variance of the RR-OLS is smaller than the variance of RR-N. These results are provided in Appendix~\ref{app:sec_linear_comparison}.

\subsection{Risk Ratio One-step estimator (RR-OS)}

A popular estimator for the RD is the augmented inverse probability weighted estimator  \citep[AIPW, see][]{Robins1992EstimatingEE}. AIPW combines the properties of G-formula and IPW estimators and is \textit{doubly-robust} in the sense that it is consistent as soon as either the propensity or outcome models are correctly specified. By calculating the influence function of the statistical estimand \(\psi_{\textrm{\tiny RD}} = \mathbb{E}\left[\mathbb{E}\left[Y| T= 1, X\right] - \mathbb{E}\left[Y| T= 0, X\right] \right]\) we obtain an efficient estimator: it has no asymptotic bias and the minimal asymptotic variance \citep{kennedy2022semiparametric}. Therefore, to estimate the Risk Ratio (RR), a natural approach is to derive an efficient estimator using semi-parametric theory \citep{Tsiatis}, as presented below. 
\begin{definition}[\textbf{Crossfitted RR-OS}]\label{prop:one_step}
   For all $t \in \{0,1\}$ and all $x$, let $\mu_{(t)}(x) = \mathbb{E}\left[Y^{(0)} | X = x \right]$ and $e_{(t)}(x) = \mathbb{P}\left[T=t | X = x\right]$. We denote \(\mathcal{I} = \{1, \hdots, n\}\), let \(\mathcal{I}_1, \mathcal{I}_2,..., \mathcal{I}_K\) be a partition of \(\mathcal{I}\). Let \(\hat\mu_{(t)}^{\mathcal{I}_{-k}}(X)\) and \(\hat e_{(t)}^{\mathcal{I}_{-k}}(X)\) be  estimators of \(\mu_{(t)}\) and \(e_{(t)}\) built on the sample \(\mathcal{I}_{-k} = \mathcal{I} \backslash \mathcal{I}_{k}\).  
   For all $t \in \{0,1\}$, let 
\begin{align}
    & \hat\tau_{\textrm{\tiny AIPW,t}}  = \frac{1}{n}\sum_{k = 1}^K \sum_{i \in \mathcal{I}_k}\left(\hat\mu_{(t)}^{\mathcal{I}_{-k}}(X_i)+ \frac{Y_i - \hat\mu_{(t)}^{\mathcal{I}_{-k}}(X_i)}{\hat e_{(t)}^{\mathcal{I}_{-k}}(X_i)}\mathds{1}_{T_i=t}\right)\\
    & \textrm{and} \quad 
    \hat\tau_{\textrm{\tiny G,t}}   = \frac{1}{n}\sum_{k = 1}^K \sum_{i \in \mathcal{I}_k}\hat\mu_{(t)}^{\mathcal{I}_{-k}}(X_i),
 \end{align}
    The crossfitted Risk Ratio One-Step (RR-OS) estimator \(\hat{\tau}_{\textrm{\tiny RR-OS}}\) is defined as
    \begin{align*}
        \hat{\tau}_{\textrm{\tiny RR-OS}} &=  \frac{\hat\tau_{\textrm{\tiny G,1}}}{\hat\tau_{\textrm{\tiny G,0}}} \left(1- \frac{\hat\tau_{\textrm{\tiny AIPW,0}}}{\hat\tau_{\textrm{\tiny G,0}}}\right) + \frac{\hat\tau_{\textrm{\tiny AIPW,1}}}{\hat\tau_{\textrm{\tiny G,0}}}.
    \end{align*}
\end{definition}
Considering the statistical estimand \( \psi_{\textrm{\tiny RR}} = \frac{\mathbb{E}\left[\mathbb{E}[Y \mid T = 1, X]\right]}{\mathbb{E}\left[\mathbb{E}[Y \mid T = 0, X]\right]} \), we obtained the estimator RR-OS, which is efficient when nuisance components are estimated via cross-fitting \citep[][]{Chernozhukov} and with non-parametric methods.

\begin{proposition}[\textbf{Asymptotic normality of $\hat{\tau}_{\textrm{\tiny RR-OS}}$}]\label{prop:one_step2}
Grant \Cref{a:Outcome_positivity} and  \Cref{a:observational_trial}. Assume that for all \(1 \leq k \leq K \), and for all $t \in \{0,1\}$, 
\begin{align}\label{double_robust}
\scriptstyle
    \mathbb{E} \left[ \left( \hat{\mu}^{\mathcal{I}_{-k}}_{(t)}(X) - \mu_{(t)}(X) \right)^2 \right] \mathbb{E} \left[ \left( \hat{e}^{\mathcal{I}_{-k}}(X) - e(X) \right)^2 \right] = o \left(\frac{1}{n}\right)
\end{align} 
\vspace{-0.6cm}
\begin{align}
\scriptstyle
\mathbb{E}\left[\hat \mu^{\mathcal{I}_{-k}}_{(0)}(X)\right]  - \mathbb{E}\left[\mu_{(0)}(X))\right] = o\left(n^{-1/4}\right)
\end{align}
\vspace{-0.6cm}
\begin{align}
\scriptstyle
\mathbb{E}\left[(\hat \mu^{\mathcal{I}_{-k}}_{(0)}(X) - \mu_{(0)}(X))^2\right]\mathbb{E}\left[(\hat \mu^{\mathcal{I}_{-k}}_{(1)}(X) - \mu_{(1)}(X))^2\right] = o\left(\frac{1}{n}\right),
\end{align}
with $\eta \leq \hat{e}^{\mathcal{I}_{-k}} (\cdot)\leq 1 -\eta$ (see \Cref{a:observational_trial}). 
Then the One-Step estimator is asymptotically unbiased and satisfies  
\[\sqrt{n}\left(\hat{\tau}_{\textrm{\tiny RR-OS}} - \tau_{\textrm{\tiny RR}} \right) \stackrel{d}{\rightarrow} \mathcal{N}\left(0, V_{\textrm{\tiny RR,OS}} \right),\]
where
\begin{align*}
\scriptstyle \quad& \frac{V_{\textrm{\tiny RR,OS}}}{\tau_{\textrm{\tiny RR}}^2} = \Var\left(\frac{\mu_{(1)}(X)}{\mathbb{E}\left[Y^{(1)}\right]} - \frac{\mu_{(0)}(X)}{\mathbb{E}\left[Y^{(0)}\right]}\right) \\ & \quad +  \mathbb{E}\left[ \frac{\Var\left(Y^{(1)}|X\right)}{e(X)\mathbb{E}\left[Y^{(1)}\right]^2} \right] + \mathbb{E}\left[\frac{\Var\left(Y^{(0)}|X\right)}{(1-e(X))\mathbb{E}\left[Y^{(0)}\right]^2} \right].
\end{align*}
\end{proposition}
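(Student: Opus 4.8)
The plan is to recognize $\hat{\tau}_{\textrm{\tiny RR-OS}}$ as the one-step (von Mises) correction of the plug-in estimator $\hat\tau_{\textrm{\tiny G,1}}/\hat\tau_{\textrm{\tiny G,0}}$ along the efficient influence function of $\psi_{\textrm{\tiny RR}}=\E[Y^{(1)}]/\E[Y^{(0)}]$, and then to run the standard one-step analysis while carefully tracking the extra second-order terms created by the ratio. Writing $m_t=\E[Y^{(t)}]$, $A_t=\hat\tau_{\textrm{\tiny AIPW,t}}$ and $G_t=\hat\tau_{\textrm{\tiny G,t}}$, I would first record the algebraic identity
$$\hat{\tau}_{\textrm{\tiny RR-OS}} = \frac{G_1}{G_0} + \frac{A_1 - G_1}{G_0} - \frac{G_1(A_0 - G_0)}{G_0^2},$$
which exhibits $\hat{\tau}_{\textrm{\tiny RR-OS}}$ as the plug-in $G_1/G_0$ plus the empirical mean of the estimated efficient influence function (EIF). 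Applying the functional delta method to the influence functions of $m_1$ and $m_0$, each equal to the AIPW score $\varphi_t(O)=\mu_{(t)}(X)-m_t+\frac{\mathds{1}_{T=t}}{e_{(t)}(X)}(Y-\mu_{(t)}(X))$, the EIF of the ratio is $\varphi_{\textrm{\tiny RR}}=\frac{1}{m_0}\varphi_1-\frac{m_1}{m_0^2}\varphi_0$.

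Next I would insert the expansions $A_t=m_t+\alpha_t$ and $G_t=m_t+g_t$ into the identity above and expand $1/G_0$ and $1/G_0^2$ to second order around $1/m_0$. The crucial feature, which I would verify by direct bookkeeping, is that all first-order terms in the G-formula errors $g_t$ cancel: this cancellation is exactly what the one-step correction is designed to achieve, and it leaves
$$\hat{\tau}_{\textrm{\tiny RR-OS}} - \tau_{\textrm{\tiny RR}} = \frac{\alpha_1}{m_0} - \frac{m_1 \alpha_0}{m_0^2} + R_n,$$
where $R_n$ collects the second-order products $\alpha_t g_0$, $g_0^2$ and $g_1 g_0$. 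Using the standard AIPW expansion $\alpha_t=\frac{1}{n}\sum_i\varphi_t(O_i)+o_P(n^{-1/2})$, in which the product-bias remainder is $o_P(n^{-1/2})$ by the doubly-robust rate \eqref{double_robust} and the empirical-process term is negligible through cross-fitting, the leading part becomes $\frac{1}{n}\sum_i\varphi_{\textrm{\tiny RR}}(O_i)$.

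The main work, and the step I expect to be the real obstacle, is showing $R_n=o_P(n^{-1/2})$, because the ratio creates cross-terms absent in the linear RD analysis and each must be matched to one of the three rate assumptions. Decomposing $g_t$ into its $O_P(n^{-1/2})$ stochastic part and its deterministic bias $b_t=\E[\hat\mu_{(t)}(X)]-m_t$, the assumption $b_0=o(n^{-1/4})$ forces $g_0=o_P(n^{-1/4})$, so that $g_0^2$, $\alpha_1 g_0$ and $\alpha_0 g_0$ are all $o_P(n^{-1/2})$; the remaining $g_1 g_0$ term is controlled through $|b_1 b_0|\le\sqrt{\E[(\hat\mu_{(1)}-\mu_{(1)})^2]\,\E[(\hat\mu_{(0)}-\mu_{(0)})^2]}=o(n^{-1/2})$, which is precisely the cross-rate assumption. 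Notably, this accounting also explains why only the denominator needs the individual $o(n^{-1/4})$ bias control, since the numerator mean appears only linearly.

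Finally, with $\sqrt{n}(\hat{\tau}_{\textrm{\tiny RR-OS}}-\tau_{\textrm{\tiny RR}})=\frac{1}{\sqrt{n}}\sum_i\varphi_{\textrm{\tiny RR}}(O_i)+o_P(1)$, I would apply the i.i.d.\ central limit theorem, the variance being finite by \Cref{a:Outcome_positivity} and \hyperref[a:Overlap]{overlap}, to obtain the limiting law with $V_{\textrm{\tiny RR,OS}}=\Var(\varphi_{\textrm{\tiny RR}}(O))$. The variance computation then reduces to expanding $\varphi_{\textrm{\tiny RR}}/\tau_{\textrm{\tiny RR}}=\bigl(\frac{\mu_{(1)}(X)}{m_1}-\frac{\mu_{(0)}(X)}{m_0}\bigr)+U_1-U_0$, where $U_t$ are the reweighted residual terms: the regression part is orthogonal to each $U_t$ (residuals have conditional mean zero given $X$) and $U_1 U_0\equiv 0$ because $\mathds{1}_{T=1}\mathds{1}_{T=0}=0$, so only the three squared terms survive, yielding exactly the stated $V_{\textrm{\tiny RR,OS}}$ after using unconfoundedness to evaluate $\E[\mathds{1}_{T=t}(Y-\mu_{(t)}(X))^2\mid X]=e_{(t)}(X)\Var(Y^{(t)}\mid X)$.
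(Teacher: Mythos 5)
Your strategy is the paper's strategy in different clothing: the paper runs the von Mises expansion (CLT term, cross-fitted empirical-process term, second-order remainder $R_2$), while you expand the same one-step estimator directly in the errors $\alpha_t = A_t - m_t$ and $g_t = G_t - m_t$. Your starting identity, the efficient influence function, the first-order cancellation of the $g_t$'s, and the closing variance computation (conditional-mean-zero residuals, $\mathds{1}_{T=1}\mathds{1}_{T=0}=0$) all coincide with the paper's. The problem is that your second-order bookkeeping is incomplete, and the omitted terms are exactly the ones the three displayed rate conditions cannot kill. Expanding $G_1A_0/G_0^2$ also produces a term $-g_1\alpha_0/m_0^2$, which is absent from your list; combined with the $-g_1g_0/m_0^2$ term coming from $G_1/G_0$, it equals $-g_1(A_0-G_0)/m_0^2$, where $A_0 - G_0 = \frac{1}{n}\sum_i \mathds{1}_{T_i=0}\bigl(Y_i - \hat\mu_{(0)}(X_i)\bigr)/\bigl(1-\hat e(X_i)\bigr)$ is an essentially mean-zero quantity of exact order $O_P(n^{-1/2})$. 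Making this product $o_P(n^{-1/2})$ requires $g_1 = o_P(1)$, i.e.\ consistency of $\hat\mu_{(1)}$, and none of the stated conditions implies that: they are all satisfied by, say, $\hat\mu_{(0)} = \mu_{(0)}$, $\E[(\hat e - e)^2] = o(1/n)$, and $\hat\mu_{(1)} = \mu_{(1)} + c$ with a fixed $c \neq 0$, in which case the extra term contributes $-c\,(A_0-G_0)/m_0^2$ at the $\sqrt{n}$ scale and the limit variance is no longer $V_{\textrm{\tiny RR,OS}}$.

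The same hole recurs at two other points. In $g_1 g_0$ you control only the bias product $b_1 b_0$, but $g_1 g_0$ also contains $b_1 S_0$, where $S_0 = O_P(n^{-1/2})$ is the stochastic part of $G_0$; this again needs $b_1 \to 0$. And your black-box appeal to "the standard AIPW expansion" for $\alpha_t$ itself requires $L^2$-consistency of $(\hat\mu_{(t)},\hat e)$ for the cross-fitted empirical-process term to vanish (in the counterexample above, $\alpha_1$ acquires the additional mean-zero term $c\,\frac{1}{n}\sum_i(1 - T_i/e(X_i))$, so the expansion fails). To be fair, the paper's own proof leans on the identical hypothesis: its empirical-process term is dismissed "as $\varphi^{-k}$ is $L^2$ consistent," which is likewise not among the proposition's displayed conditions, and the paper's discussion after the RR-AIPW result concedes that RR-OS needs consistency of both outcome models or of $(e,\mu_{(0)})$. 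So your route is structurally parallel to the paper's, but as written it has a genuine gap: you must add the consistency hypotheses $\E[(\hat\mu_{(t)}-\mu_{(t)})^2] = o(1)$ and $\E[(\hat e - e)^2] = o(1)$, and then explicitly control the cross terms $g_1\alpha_0$ and $b_1 S_0$ — they do not cancel on their own.
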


This estimator is efficient: its asymptotic variance is minimal. 
The semi parametric theory develops efficient estimators by compensating for the first-order bias \citep{kennedy2022semiparametric}, this can be achieved either by estimating and subtracting the first-order bias, leading to the RR-OS estimator or by finding values for the target parameter and nuisance parameters that solve the estimating equation  \citep[see][for the RD case]{Schuler}, resulting in RR-AIPW presented below (calculations are detailed in Appendix~\ref{proof:aipw_normality_obs}).
\subsection{Risk Ratio Augmented Inverse Propensity Weighting (RR-AIPW)}
\begin{definition}[\textbf{Crossfitted RR-AIPW}]\label{RAIPW}
The Risk Ratio AIPW crossfitted is defined as
    \begin{align*}
        \hat\tau_{\textrm{\tiny RR,AIPW}} &:=\frac{\hat\tau_{\textrm{\tiny AIPW,1}}}{\hat\tau_{\textrm{\tiny AIPW,0}}},
    \end{align*}
        where \(\hat\tau_{\textrm{\tiny AIPW,0}}\), and \(\hat\tau_{\textrm{\tiny AIPW,1}}\) are defined in \ref{prop:one_step}.
\end{definition}
The \hyperref[RAIPW]{RR-AIPW} is simply the ratio of two one-step estimators, one for \(\mathbb{E}\left[Y^{(1)}\right]\) and one for \(\mathbb{E}\left[Y^{(0)}\right]\). This method may seem simplistic at first glance, since approximating both the numerator and denominator usually results in a non-zero asymptotic bias. However, RR-AIPW is derived via the estimating equation method using influence function theory, which results in an efficient (asymptotically unbiased) estimator. Note that in the case of the Risk Difference (RD), both approaches (One-step bias correction and estimating equation) yield the same AIPW estimator. However, because our statistical estimand for the Risk Ratio is nonlinear, the resulting estimators differ. It remains that they are both efficient, as shown below.

\begin{proposition}[\textbf{RR AIPW asymptotic normality}]
    \label{prop:aipw_normality_obs}
    Grant Assumptions \ref{a:Outcome_positivity} and \ref{a:observational_trial}. Assume that \eqref{double_robust} holds and that, for all \(1 \leq k \leq K\) and all $t \in \{0,1\}$,
\begin{align}
\scriptstyle
    \mathbb{E}\left[\left(\hat \mu_{(t)}^{\mathcal{I}_{-k}}\left(X\right)-\mu_{(t)}\left(X\right)\right)^2 \right] = o(1),\quad \mathbb{E}\left[ \left(\hat{e}^{\mathcal{I}_{-k}}\left(X\right)-e\left(X\right)\right)^2 \right] = o(1),
\end{align}  
with $\eta \leq \hat{e}^{\mathcal{I}_{-k}} (\cdot)\leq 1 -\eta$.
    Then, the crossfitted Risk Ratio AIPW estimator is asymptotically unbiased and satisfies  
    \[\sqrt{n}\left(\hat\tau_{\textrm{\tiny RR,AIPW}} - \tau_{\textrm{\tiny RR}} \right) \stackrel{d}{\rightarrow} \mathcal{N}\left(0, V_{\textrm{\tiny RR,OS}} \right),
    \]
 where \(V_{\textrm{\tiny RR,OS}}\) is defined in \Cref{prop:one_step2}.
\end{proposition}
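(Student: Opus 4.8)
The plan is to exploit the fact that $\hat\tau_{\textrm{\tiny RR,AIPW}} = \hat\tau_{\textrm{\tiny AIPW,1}}/\hat\tau_{\textrm{\tiny AIPW,0}}$ is a smooth function of two \emph{standard} cross-fitted AIPW (double machine learning) estimators, one targeting $\theta_1 := \mathbb{E}[Y^{(1)}]$ and one targeting $\theta_0 := \mathbb{E}[Y^{(0)}]$. I would therefore establish a joint central limit theorem for the pair $(\hat\tau_{\textrm{\tiny AIPW,1}}, \hat\tau_{\textrm{\tiny AIPW,0}})$ and then transfer it to the ratio by the delta method, finally checking that the resulting variance coincides with $V_{\textrm{\tiny RR,OS}}$. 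Writing $e_{(1)} = e$, $e_{(0)} = 1-e$, the relevant efficient influence function of $\theta_t$ under \Cref{a:observational_trial} is
\begin{align*}
\varphi_t(X,T,Y) = \mu_{(t)}(X) - \theta_t + \frac{\mathds{1}_{T=t}}{e_{(t)}(X)}\bigl(Y - \mu_{(t)}(X)\bigr).
\end{align*}

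The technical core is to show that each cross-fitted estimator is asymptotically linear, i.e. $\sqrt{n}(\hat\tau_{\textrm{\tiny AIPW,t}} - \theta_t) = n^{-1/2}\sum_{i=1}^n \varphi_t(X_i,T_i,Y_i) + o_P(1)$. For each fold the difference between the cross-fitted summand and $\varphi_t$ splits into a centered empirical-process term, whose conditional variance is controlled by the mere consistency conditions $\mathbb{E}[(\hat\mu_{(t)}^{\mathcal{I}_{-k}} - \mu_{(t)})^2] = o(1)$ and $\mathbb{E}[(\hat e^{\mathcal{I}_{-k}} - e)^2] = o(1)$, so that conditioning on $\mathcal{I}_{-k}$ and applying Chebyshev yields $o_P(n^{-1/2})$; and a bias term that, by Neyman orthogonality of the AIPW score, factorizes into the product of the $\mu$- and $e$-estimation errors, hence is $o(n^{-1/2})$ by \eqref{double_robust}. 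This is precisely why only consistency plus the product rate are required here, whereas RR-OS, whose single one-step correction couples numerator and denominator, additionally needs the $o(n^{-1/4})$ control on the denominator bias. The joint CLT then follows from the multivariate Lindeberg CLT (Cram\'er--Wold) applied to the i.i.d. vectors $(\varphi_1,\varphi_0)(X_i,T_i,Y_i)$, giving $\sqrt{n}\,((\hat\tau_{\textrm{\tiny AIPW,1}},\hat\tau_{\textrm{\tiny AIPW,0}}) - (\theta_1,\theta_0))^\top \stackrel{d}{\rightarrow} \mathcal{N}(0,\Sigma)$ with $\Sigma = \Cov((\varphi_1,\varphi_0))$.

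Next I would apply the delta method to $g(a,b) = a/b$, differentiable at $(\theta_1,\theta_0)$ since \Cref{a:Outcome_positivity} guarantees $\theta_0 > 0$, with gradient $\theta_0^{-1}(1, -\tau_{\textrm{\tiny RR}})$, yielding $\sqrt{n}(\hat\tau_{\textrm{\tiny RR,AIPW}} - \tau_{\textrm{\tiny RR}}) \stackrel{d}{\rightarrow} \mathcal{N}(0, \theta_0^{-2}\Var(\varphi_1 - \tau_{\textrm{\tiny RR}}\varphi_0))$. It then remains to identify this variance with $V_{\textrm{\tiny RR,OS}}$ by routine bookkeeping: the combination $\theta_0^{-1}(\varphi_1 - \tau_{\textrm{\tiny RR}}\varphi_0)$ decomposes into the $X$-measurable part $\tau_{\textrm{\tiny RR}}(\mu_{(1)}(X)/\theta_1 - \mu_{(0)}(X)/\theta_0)$ and two residual parts carrying $\mathds{1}_{T=1}$ and $\mathds{1}_{T=0}$. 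Under \hyperref[a:Unconfoundedness]{unconfoundedness} each residual has conditional mean zero given $X$, so its covariance with the $X$-measurable part vanishes by the tower rule, and the two residuals are mutually orthogonal because $\mathds{1}_{T=1}\mathds{1}_{T=0}=0$; computing the three surviving terms via $\mathbb{E}[\mathds{1}_{T=t}(Y-\mu_{(t)}(X))^2\mid X] = e_{(t)}(X)\Var(Y^{(t)}\mid X)$ reproduces exactly the three summands of $V_{\textrm{\tiny RR,OS}}$. The main obstacle is the second step — rigorously controlling the cross-fitted remainder under these weak assumptions — while the delta-method transfer and the conditional-variance computation are standard.
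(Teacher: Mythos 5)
Your proposal is correct and follows essentially the same route as the paper's proof: the paper likewise establishes asymptotic linearity of each cross-fitted AIPW component per fold (its terms $A_n^k$ and $B_n^k$ are your centered empirical-process part, handled by conditional-on-$\mathcal{I}_{-k}$ variance bounds under mere $L^2$-consistency, and $C_n^k$ is your orthogonality/bias part, handled by Cauchy--Schwarz together with the product-rate condition \eqref{double_robust}), and then transfers to the ratio via its M-estimation ratio CLT (Theorem \ref{Th1}) applied to the oracle scores plus Slutsky, which is exactly equivalent to your Cram\'er--Wold-plus-delta-method step, with the same variance identification yielding $V_{\textrm{\tiny RR,OS}}$. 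One minor technical point: to control the $B_n^k$/empirical-process contribution involving $\mathbb{E}\left[(Y-\mu_{(t)}(X))^2\left(\hat e^{\mathcal{I}_{-k}}(X)-e(X)\right)^2\right]$, the paper additionally assumes $\Var[Y\mid X]\le \sigma^2$, a condition your sketch would also need (or a fourth-moment/uniform-integrability substitute) since $L^2$-consistency of $\hat e$ and square-integrability of $Y$ alone do not suffice.
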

 
Assumptions in \Cref{prop:aipw_normality_obs} are the same as those used in the Risk Difference AIPW estimator \citep{wager2020stats} to achieve double robustness. Specifically, RR-AIPW benefits from \textit{weak} double robustness, meaning consistency is maintained as long as either the outcome model or the propensity score model is estimated consistently. This contrasts with RR-OS, which lacks this flexibility: it requires consistency of both outcome models $(\mu_0, \mu_1)$ or the joint consistency of $(e, \mu_0)$ to achieve reliable inference. Additionally, condition \ref{double_robust} often referred to as a risk decay condition ensures \textit{strong} double robustness for RR-AIPW. This property ensures asymptotic normality when both the propensity score and outcome regression converge at sufficiently fast rates. We recommend RR-AIPW over RR-OS, not only because both share identical efficiency in their asymptotic distributions, but also because RR-AIPW operates under weaker assumptions. RR-AIPW’s strong and weak double robustness properties provide greater resilience to model misspecification compared to RR-OS.

\section{Simulation}
\label{sec:simulations}
\begin{figure*}[h!]
  \includegraphics[width=\textwidth]{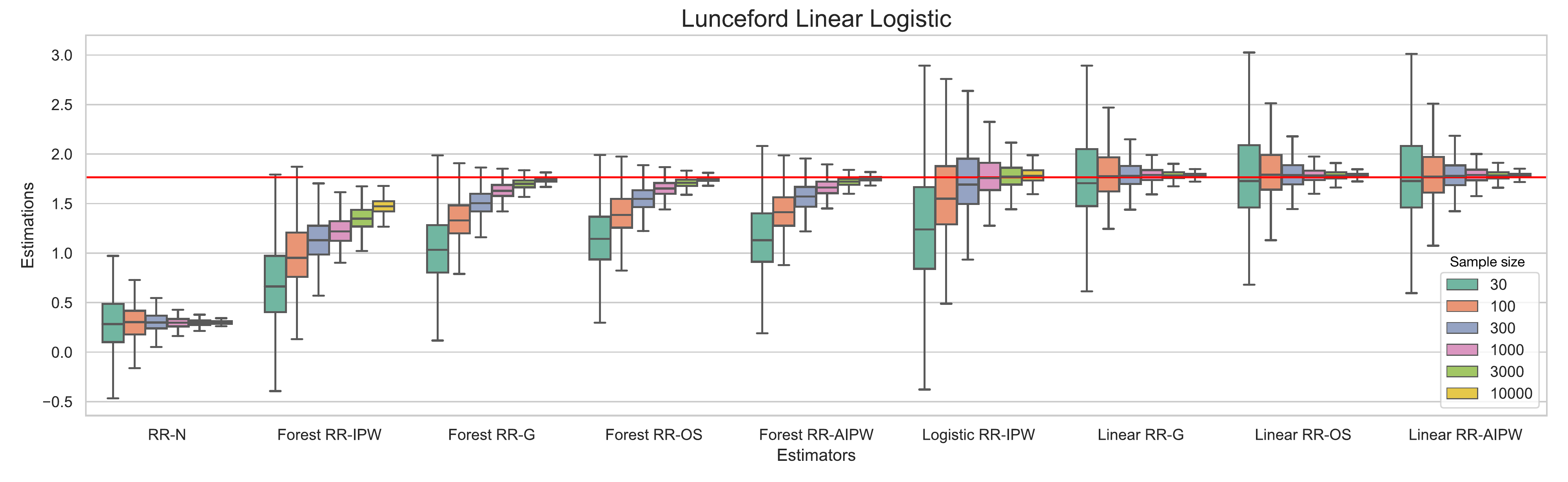}
  \caption{Risk Ratio estimators computed for a Linear/Logistic DGP, with $3000$ repetitions.}
\label{fig:Lunceford}
\end{figure*}
\begin{figure*}[h]
  \includegraphics[width=\textwidth]{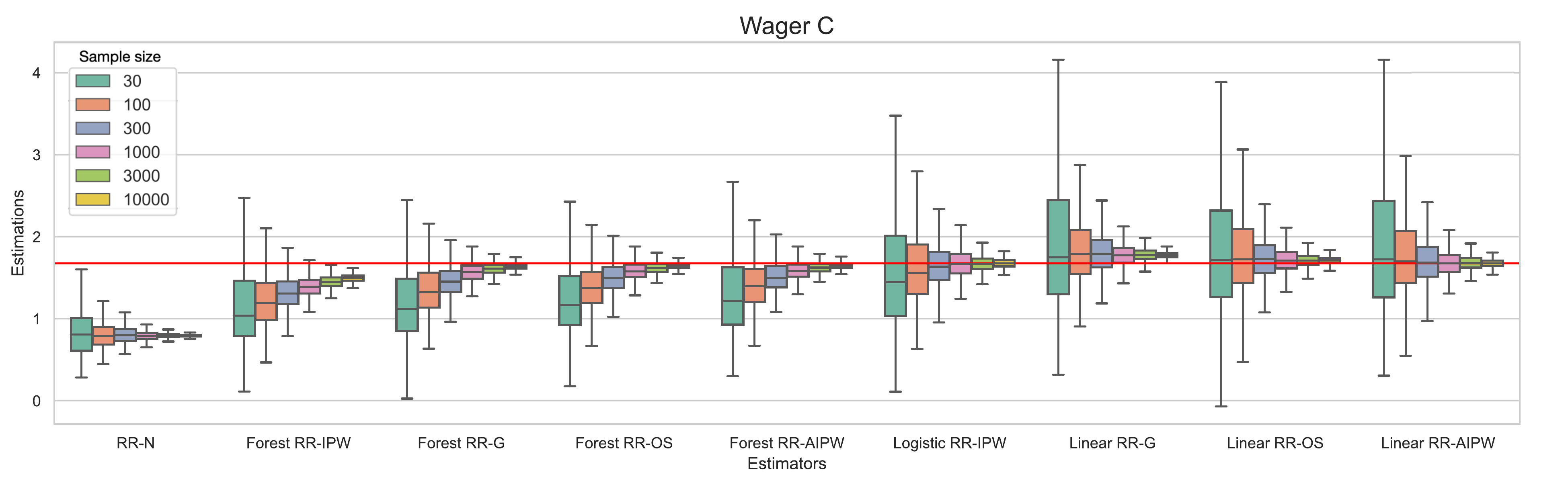}
    \caption{Risk Ratio estimators computed for a non-Linear-Logistic DGP, with $3000$ repetitions.}
    \label{fig:Wager_C}
\end{figure*}
Simulations for RCT are provided in Appendix \ref{Simulation-2}.
For observational studies, we generate datasets \((X, T, Y^{(0)}, Y^{(1)})\) according to the general model
\begin{align*}
    \begin{array}{ll}
    Y^{(1)} = m(X) + b(X) + \varepsilon_{(1)}  &  \mathbb{P}\left[T=1|X\right] = e(X),\\
    Y^{(0)} = b(X) + \varepsilon_{(0)}  & \text{with } \varepsilon_{(t)}\sim \mathcal{N}\left(0,\sigma^2\right).
 \end{array},
\end{align*}
 where $m(.)$, \(b(.)\), and \(e(.)\)  respectively correspond to the treatment effect, the baseline and the propensity score. The true Risk Ratio can be expressed as $\tau_{\textrm{RR}} = \mathbb{E}\left[Y^{(1)}\right]/\mathbb{E}\left[Y^{(0)}\right] = \mathbb{E}\left[m(X)\right]/\mathbb{E}\left[b(X)\right]+1$.
We compare the performances of all estimators defined in \Cref{risk_ratio_OBS} where nuisance components (regression surfaces and propensity score) are estimated via parametric (linear/logistic regression) or non-parametric methods (random forests).  
More details  are provided in Appendix~\ref{Simulation-2}.

\subsection{Linear and Logistic DGP}\label{sim:LL_DGP}
The first observational data-generating process (DGP), introduced in \citet{Lunceford}, uses linear outcome models (treatment effect and baseline) and a logistic propensity score, defined as \(m(X,V) = 2\) with:  
\begin{align*}
    \begin{array}{llll}
        b(X, V) = \beta_0^{\top} [X, V], \quad e(X) = (1 + \exp(-\beta_e^\top X))^{-1} \\
        \beta_0 = (-1, 1, -1, -1, 1, 1) \quad \beta_e = (-0.6, 0.6, -0.6).
    \end{array} 
\end{align*}
Covariates \(X = (X_1, X_2, X_3)^\top\) influence treatment and response, while \(V = (V_1, V_2, V_3)^\top\) only affect the response. \([X, V]\) are jointly distributed, with \(X_3 \sim \text{Bernoulli}(0.2)\) and \(V_3 \sim \text{Bernoulli}(P(V_3 = 1 \mid X_3) = 0.75X_3 + 0.25(1-X_3))\). Conditional on \(X_3\),
\((X_1, V_1, X_2, V_2)^\top \sim \mathcal{N}(\lambda_{X_3}, \Sigma)\), where:
\[
\lambda_1 = -\lambda_0 = \begin{bmatrix}
1 \\ 
1 \\ 
-1 \\ 
-1 
\end{bmatrix},
\Sigma = \begin{bmatrix}
1 & 0.5 & -0.5 & -0.5 \\
0.5 & 1 & -0.5 & -0.5 \\
-0.5 & -0.5 & 1 & 0.5 \\
-0.5 & -0.5 & 0.5 & 1
\end{bmatrix}
\]
Results are depicted in \Cref{fig:Lunceford}. Only confounding variables are used as inputs in the different estimators.
As expected, since the generative process is linear, methods that use  parametric estimators (logistic/linear regression) outperform those using non-parametric approaches (random forests) in finite-sample settings. While all methods (except RR-N) converge to the correct RR, methods based on parametric estimators exhibit a faster rate of convergence and are unbiased (except for Logistic RR-IPW) even for small sample sizes. Indeed, random forests are not suited for linear generative process and require here more than  10000 samples to estimate correctly the RR.  
All in all, when the outcome modelling and the propensity scores are linear, the two doubly-robust estimators (RR-AIPW and RR-OS) and the RR-G, all based on linear estimators, achieve the best performances: they are unbiased, even for small sample sizes, and converge quickly to the true RR. 
\begin{figure*}[h!]
  \includegraphics[width=\textwidth,height=4cm]{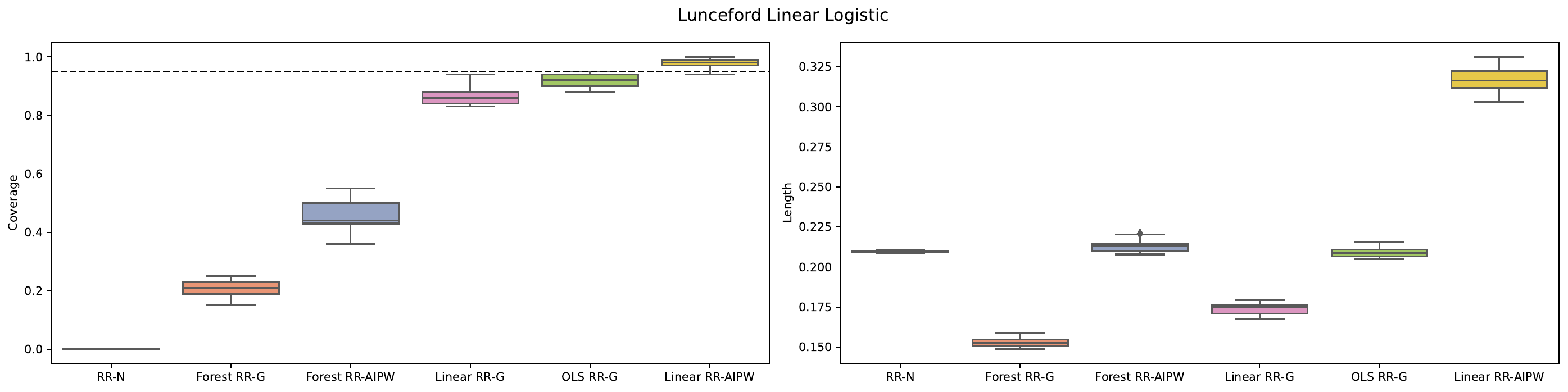}
  \caption{Coverage (left) and Length (right) of asymptotic CI derived from \Cref{risk_ratio_RCTs} and \Cref{risk_ratio_OBS} with $n=1000$ and $300$ repetitions.}
\label{fig:Lunceford_CI}
\end{figure*}

\begin{figure*}[h!]
  \includegraphics[width=\textwidth,height=4cm]{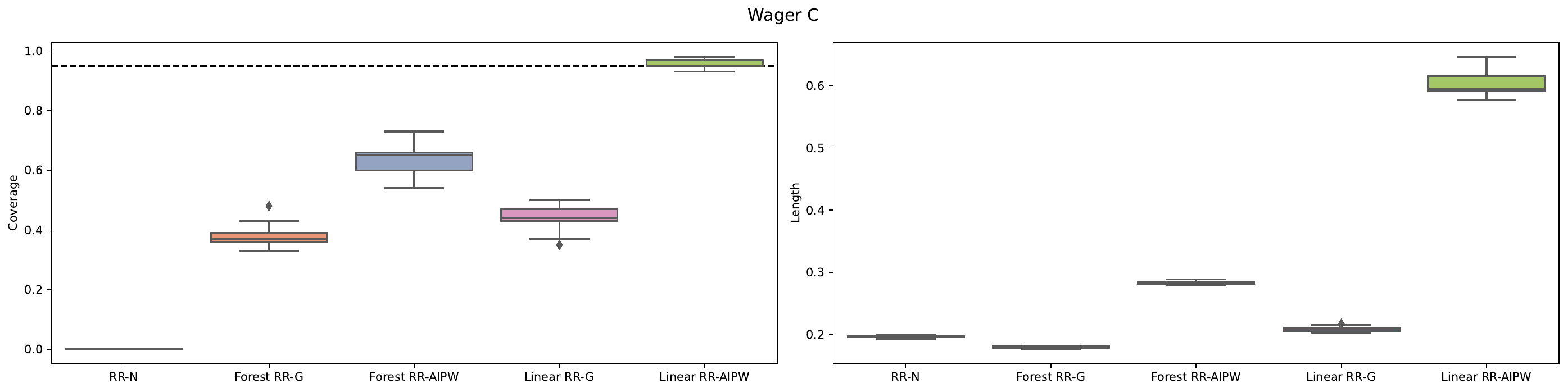}
    \caption{Average coverage (left) and average length (right) of asymptotic confidence interval derived from \Cref{risk_ratio_RCTs} and \Cref{risk_ratio_OBS} for different estimators with $n=1000$ and $300$ repetitions for a Non-Linear and Logistic DGP.}
    \label{fig:Wager_CI_C}
\end{figure*}

 \begin{figure*}[h!!]
\includegraphics[width=\textwidth]{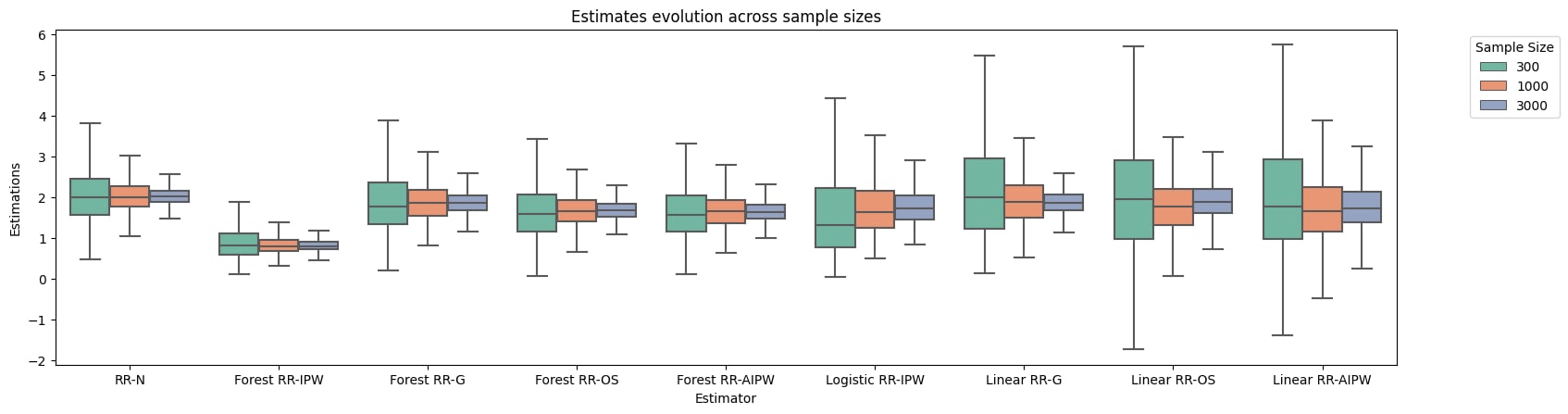}
    \caption{RR estimations with weighting, outcome based and augmented estimators as a function of the sample size for the Traumabase. Parametric (Linear) and non parametric (Forest) estimations of nuisance are displayed.}
    \label{fig:Traumabase}
\end{figure*}
Furthermore, both Linear RR-OS and RR-AIPW estimators give very similar results. More simulations on a Non-Linear and (Non-)Logistic DGPs are available in \ref{sim:CI}. As anticipated, estimators with non-parametric nuisance functions outperform their parametric counterparts when the DGP deviates from linear or logistic structures.

\subsubsection{Non-Linear and Logistic DGP}

We use a semi-parametric setup \citep[see][]{Wager} with non-linear baseline models, a constant treatment effect and a logistic propensity score:
\begin{align*}
        b(X) &= 2 \log \left(1+e^{X_{1}+X_{2}+X_{3}}\right),\\ 
        e(X) &= 1 /\left(1+e^{X_{2}+X_{3}}\right) \quad \text{and} \quad  m(X) = 1,
\end{align*}
where $X \sim \mathcal{N}\left(0, I_{d \times d}\right)$. Results are presented in \Cref{fig:Wager_C}. 
 The Forest IPW and Linear G-formula estimators yield poor RR estimates for the largest sample size. The forest IPW uses random forests to estimate a logistic model, which may still converge, but at a slower rate than other methods. The Linear G-formula employs linear regressions to estimate the response surfaces, potentially leading to an irreducible asymptotic bias.
The Forest RR-G, Forest RR-AIPW, and Forest RR-OS estimators converge slowly to the true RR. This simulation highlights the doubly robust properties of the Linear RR-AIPW and Linear RR-OS estimators: they target the true RR even at small sample sizes, as they have at least one well-specified model. 

\subsection{Confidence intervals (CI)}
\label{subsec:CI}

In both the Linear/Logistic and Non-Linear/Logistic data-generating processes (DGPs), we build asymptotic 95\% CI for the RR-AIPW, RR-G and RR-N estimators based on their asymptotic normality. Variances were estimated via equation \ref{V_RR_OS}, \ref{V_RR-HT}, \ref{V_RR-IPW}, \ref{V_RR,G} and \ref{V_R-N}. In \Cref{fig:Lunceford_CI}, we present the distribution of the length and coverage (probability that the CI contains the risk ratio) for each estimator (300 repetitions). IPW was excluded for its poor performance (excessively large CIs), caused by propensity scores close to 0 or 1 (see \Cref{fig:hist_e}). As expected, RR-N CI has nearly zero coverage due to non-RCT setting. The Forest RR-G and RR-AIPW confidence intervals also exhibit poor coverage, which is in agreement with the Linear/Logistic DGP.
In contrast, Linear RR-G and RR-AIPW demonstrate good coverage. Note that the CI for OLS RR-G, built on \Cref{OLS_RR_OBS}, has a better coverage than the Linear RR-G method, as it includes the uncertainty due to linear estimations. Although only the RR AIPW has coverage above 95\%, the OLS RR-G has a shorter average predicted length compared to the Linear RR AIPW. 

Turning to the Non-Linear/Logistic DGP, \Cref{fig:Wager_CI_C} shows the CI length and coverage for the same estimators. In this setting, only the Linear RR-AIPW estimator maintains acceptable coverage—likely because propensity scores are well-estimated, whereas modeling complex, non-linear response surfaces is difficult with either linear models or forest-based methods when $n = 1000$. As before, Forest RR-G and Forest RR-AIPW struggle with coverage, and the RR-N estimator again shows almost no coverage, reflecting its limitations in observational settings. More simulations on a Non-Linear/Non-Logistic DGP are available in Appendix~\ref{sim:CI}.

\section{Real-World Experiment}\label{sec:real_world_exp}

To better illustrate the practical application and behavior of our estimators, we include a real-world study from \citet{mayer2020doublyrobusttreatmenteffect} involving 8,270 patients with traumatic brain injury (TBI), using data extracted from the Traumabase. The Traumabase is a continuously updated database that collects comprehensive clinical data from the scene of the accident through to hospital discharge. 
The causal effect of tranexamic acid (TXA) on 28-day mortality
is estimated by adjusting for 17 confounding variables. These variables include key metrics for severe trauma cases, such as systolic and diastolic blood pressure, heart rate, oxygen saturation, and details of interventions.
We subsample the real data to obtain different sample sizes, with results averaged over 3,000 simulations for each sample size. Since this is a real dataset, the true value of \(\tau_{RR}\) is unknown. Semi-synthetic simulations, in which the treatment and potential outcomes are generated (allowing us to know the true RR value), are described in detail in \ref{an:semi-synthetic}. Results are displayed in \Cref{fig:Traumabase}.

As observed, most estimators yield values larger than one, suggesting a potential deleterious effect of TXA (increased mortality). This trend aligns with findings from previous studies (e.g., \cite{mayer2020doublyrobusttreatmenteffect}). Notably, Forest RR-IPW is the only estimator indicating a beneficial effect of the treatment. RR-N, which is not suited for use with observational data, produces a higher value compared to other estimators. Interestingly, estimators based on linear models (Linear RR-AIPW, Linear RR-G, Linear RR-OS) exhibit the largest variances, particularly for small sample sizes.

\section{Conclusion}

Quantifying treatment effects presents challenges, since different measures may lead to different understanding of the same phenomenon. In our study, we focus on one of these measures, the Risk Ratio and introduced several estimators, valid in RCT or observational studies. Using dedicated mathematical tools (influence function theory, M-estimation), we establish their asymptotic normality, limiting variance and derive asymptotic confidence intervals. Empirical evaluations show that RR-N and RR-IPW have poor performances.  Either Linear or Forest RR-AIPW (or RR-OS) show similar (good) behaviors to estimate the Risk Ratio, with the best theoretical guarantees among all studied estimators. Since RR-AIPW requires fewer assumption and is simpler to compute, we would recommend to use RR-AIPW. As for the doubly-robust approaches, G-formula is competitive, with performances that depend on the setting and the estimation method used for the nuisance components. Identifying guidelines establishing when linear nuisance components should be used instead of non-parametric ones still remains an open problem. In practice, observational studies may be used to generalize the treatment effect from a RCT population to the general population of interest. Our work is a first step toward proposing procedures to generalize the Risk Ratio to general populations.

\section*{Acknowledgments}
This work has been done in the frame of the PEPR SN SMATCH project and has benefited from a governmental grant managed by the Agence Nationale de la Recherche under the France 2030 programme, reference ANR-22-PESN-0003. 

\section*{Impact Statement}
This paper proposes and studies estimators for the Risk Ratio in observational studies, addressing a methodological gap in causal inference. The Risk Ratio is widely used in clinical and epidemiological research to assess relative treatment effects. However, it is important to consider the Risk Ratio alongside other effect measures, as each provides a different perspective on the data and may have different implications for practice. As these methods rely on standard causal assumptions such as ignorability—which can be difficult to satisfy in practice—caution is needed when interpreting the results. Violations of these assumptions may lead to biased estimates and incorrect conclusions about treatment effects.

\bibliography{example_paper}
\bibliographystyle{icml2025}
\newpage
\onecolumn

\section{Appendix}
\subsection{Preliminary results}

Since we are studying the asymptotic properties of the risk ratio, we cannot directly apply a central limit theorem as in \cite{wager2020stats}. We will therefore rely on \Cref{Th1} to prove most of our asymptotic results.

\begin{theorem}[\textbf{Asymptotic normality of the ratio of two estimators}]\label{Th1}
    Let \((Z_1, \hdots , Z_n)\) be \(n\) i.i.d. random variables, \(g_0\) and \(g_1\) two functions square integrable such that \(\mathbb{E}\left[g_0(Z_i)\right] = \tau_0\) and \(\mathbb{E}\left[g_1(Z_i)\right] = \tau_1\), where \(\tau_0 \neq 0\). Then, we have that
    \[\sqrt{n}\left(\frac{\sum_{i=1}^n g_1(Z_i)}{\sum_{i=1}^n g_0(Z_i)} - \frac{\tau_1}{\tau_0} \right) \stackrel{d}{\rightarrow} \mathcal{N}\left(0, V_{\textrm{\tiny RR}}^{\star} \right),\]
    where \[V_{\textrm{\tiny RR}}^{\star} = \left(\frac{\tau_1}{\tau_0}\right)^2\Var\left(\frac{g_1(Z)}{\tau_1} - \frac{g_0(Z)}{\tau_0}\right).\]
\end{theorem}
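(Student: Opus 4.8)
The plan is to reduce the statement to a classical central limit theorem for an i.i.d.\ sum by clearing the denominator, and then to conclude with Slutsky's lemma. Writing $\bar g_t = \frac1n\sum_{i=1}^n g_t(Z_i)$ for $t\in\{0,1\}$, the quantity of interest is $\bar g_1/\bar g_0$, and the first step is the algebraic identity
\[
\frac{\bar g_1}{\bar g_0} - \frac{\tau_1}{\tau_0} = \frac{\tau_0\bar g_1 - \tau_1 \bar g_0}{\tau_0\,\bar g_0} = \frac{\frac1n\sum_{i=1}^n W_i}{\tau_0\,\bar g_0}, \qquad W_i := \tau_0\,g_1(Z_i) - \tau_1\,g_0(Z_i).
\]
The key observation is that $\mathbb{E}[W_i] = \tau_0\tau_1 - \tau_1\tau_0 = 0$, so the numerator is a centered i.i.d.\ average, exactly the form to which the classical CLT applies.

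Next I would treat the numerator and denominator separately. Since $g_0,g_1$ are square integrable, $W_i$ has finite variance $\Var(W)=\Var(\tau_0 g_1(Z)-\tau_1 g_0(Z))$, and the CLT gives $\frac{1}{\sqrt n}\sum_{i=1}^n W_i \stackrel{d}{\rightarrow}\mathcal N(0,\Var(W))$. For the denominator, the law of large numbers yields $\bar g_0 \stackrel{p}{\rightarrow}\tau_0$, hence $\tau_0\bar g_0 \stackrel{p}{\rightarrow}\tau_0^2\neq 0$. Multiplying the displayed identity by $\sqrt n$ and applying Slutsky's lemma then delivers
\[
\sqrt n\left(\frac{\bar g_1}{\bar g_0}-\frac{\tau_1}{\tau_0}\right) = \frac{\frac{1}{\sqrt n}\sum_{i=1}^n W_i}{\tau_0\,\bar g_0} \stackrel{d}{\rightarrow}\mathcal N\!\left(0,\ \frac{\Var(W)}{\tau_0^4}\right).
\]
It then remains to check that $\Var(W)/\tau_0^4$ equals the stated $V_{\textrm{\tiny RR}}^{\star}$. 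Factoring $\tau_0^2$ out of $\Var(W)$ and recognizing $\frac{1}{\tau_0^2}(\tau_0 g_1 - \tau_1 g_0) = \frac{\tau_1}{\tau_0}\big(\frac{g_1}{\tau_1}-\frac{g_0}{\tau_0}\big)$ gives directly
\[
\frac{\Var(W)}{\tau_0^4} = \Var\!\left(\frac{g_1(Z)}{\tau_0}-\frac{\tau_1 g_0(Z)}{\tau_0^2}\right) = \left(\frac{\tau_1}{\tau_0}\right)^2\Var\!\left(\frac{g_1(Z)}{\tau_1}-\frac{g_0(Z)}{\tau_0}\right) = V_{\textrm{\tiny RR}}^{\star}.
\]

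Equivalently, one could obtain the same conclusion by applying the multivariate CLT to the pair $(\bar g_1,\bar g_0)$ and then the delta method to the smooth map $\phi(a,b)=a/b$, whose gradient at $(\tau_1,\tau_0)$ is $(1/\tau_0,-\tau_1/\tau_0^2)$; the sandwich $\nabla\phi^\top\Sigma\,\nabla\phi$ collapses to the same expression. There is no deep obstacle here, as the result is essentially a CLT combined with Slutsky; the crux is careful bookkeeping in the variance simplification. The only points demanding genuine care are \emph{(i)} that square integrability of $g_0,g_1$ is precisely what guarantees the finite variance needed for the CLT, and \emph{(ii)} that the hypothesis $\tau_0\neq 0$ is what forces the denominator to converge to a nonzero limit, so that Slutsky applies and the ratio is asymptotically well defined. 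In particular, since $\bar g_0\stackrel{p}{\rightarrow}\tau_0\neq 0$, the event $\{\bar g_0 = 0\}$ on which the estimator is set to $0$ has probability tending to $0$ and does not affect the limiting distribution.
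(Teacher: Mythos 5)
Your proof is correct, and it takes a genuinely different route from the paper. The paper proves this theorem via M-estimation theory: it stacks the three equations $g_0(Z)-\theta_0$, $g_1(Z)-\theta_1$, $\theta_1-\theta_2\theta_0$ into a vector $\psi(Z,\boldsymbol{\theta})$, verifies that $(\bar g_0,\bar g_1,\bar g_1/\bar g_0)$ is the associated M-estimator, checks the regularity conditions of Theorem~7.2 in Stefanski and Boos, and extracts the limiting variance as the bottom-right entry of the sandwich matrix $A^{-1}B(A^{-1})^{\top}$. Your argument — clearing the denominator to exhibit the centered i.i.d.\ average $\frac1n\sum_i W_i$ with $W_i=\tau_0 g_1(Z_i)-\tau_1 g_0(Z_i)$, applying the scalar CLT, and dividing by $\tau_0\bar g_0\stackrel{p}{\to}\tau_0^2\neq 0$ via Slutsky — reaches the same limit with only elementary tools, and your variance bookkeeping $\Var(W)/\tau_0^4=V_{\textrm{\tiny RR}}^{\star}$ is exact. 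What your route buys is brevity and self-containedness: no external regularity conditions (existence and integrability of second derivatives of $\psi$, nonsingularity of $A$) need to be verified, and your closing remark correctly disposes of the zero-denominator event. What the paper's heavier machinery buys is the full joint asymptotic covariance of $(\bar g_0,\bar g_1,\bar g_1/\bar g_0)$ and, more importantly, a template that the paper reuses verbatim in later proofs where nuisance parameters are estimated jointly with the ratio (the MLE-fitted propensity score in Proposition 3.4 and the OLS-fitted response surfaces in Proposition 3.7); there the estimating-equation formulation is doing real work, whereas the Slutsky argument alone would not suffice. One cosmetic caveat, shared with the statement of the theorem itself rather than introduced by you: rewriting $\Var(W)/\tau_0^4$ in the displayed form $V_{\textrm{\tiny RR}}^{\star}$ requires $\tau_1\neq 0$, since $g_1/\tau_1$ appears; your intermediate expression $\Var\bigl(g_1(Z)/\tau_0-\tau_1 g_0(Z)/\tau_0^2\bigr)$ is the one that remains valid when $\tau_1=0$.
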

    
\begin{proof}
    We rely on M-estimation theory to prove \Cref{Th1}. Let  
    \begin{align}
    \hat{\boldsymbol{\theta}}_n & =\left(\begin{array}{c}
    \frac{1}{n}\sum_{i=1}^n g_0(Z_i) \\
    \frac{1}{n}\sum_{i=1}^n g_1(Z_i) \\
    \frac{\sum_{i=1}^n g_1(Z_i)}{\sum_{i=1}^n g_0(Z_i)}
    \end{array}\right) \quad \textrm{and} \quad 
    \psi(Z, \boldsymbol{\theta}) =\left(\begin{array}{c}
    g_0(Z)-\theta_{0} \\
    g_1(Z)-\theta_{1} \\
    \theta_1 - \theta_2\theta_0
    \end{array}\right),
    \end{align}
    where \(\boldsymbol{\theta} = (\theta_0, \theta_1, \theta_2)\). 
   We have that
    \begin{align*}
        \sum_{i=1}^n \left(g_0(Z_i) - \frac{1}{n}\sum_{j=1}^n g_0(Z_j)\right) &= \sum_{i=1}^n g_0(Z_i) - \sum_{j=1}^n g_0(Z_j) = 0,
    \end{align*}
    and similarly 
    \begin{align*}
        \sum_{i=1}^n \left(g_1(Z_i) - \frac{1}{n}\sum_{j=1}^n g_1(Z_j)\right) &= \sum_{i=1}^n g_1(Z_i) - \sum_{j=1}^n g_1(Z_j) = 0.
    \end{align*}
    Besides,     
    \begin{align*}
    \sum_{i=1}^n \left(\frac{1}{n}\sum_{j=1}^n g_1(Z_j) - \frac{\sum_{j=1}^n g_1(Z_j)}{\sum_{j=1}^n g_0(Z_j)}\frac{1}{n}\sum_{j=1}^n g_0(Z_j)\right) = 0.
    \end{align*}
    Gathering the three previous equalities, we obtain
    \begin{align}
    \sum_{i=1}^n \psi(Z_i, \boldsymbol{\theta}_n) = 0,    
    \end{align}
    which proves that \( \hat{\boldsymbol{\theta}}_n\) is an M-estimator of type \(\psi\) \citep[see][]{Stefanski2002Mestimation}. 
    Furthermore, letting \(\boldsymbol{\theta}_{\infty} = (\tau_0, \tau_1, \tau_1/\tau_0)\), simple calculations show that 
    \begin{align}
        \mathbb{E}\left[\psi(Z, \boldsymbol{\theta}_\infty) \right] = 0. \label{eq_proof1}
    \end{align}
    Since the first two components of \(\psi\) are linear with respect to \(\theta_0\) and \(\theta_1\) and since the third component is linear with respect to \(\theta_2\), \(\boldsymbol{\theta}_{\infty}\) defined above is the only value satisfying \eqref{eq_proof1}. Define
    \begin{align}
      A\left(\theta_\infty \right) = \mathbb{E}\left[\frac{\partial \psi}{\partial \theta}|_{\theta=\theta_\infty}\right] \quad \textrm{and} \quad B(\theta_\infty) =  \mathbb{E}\left[\psi(Z, \theta_\infty) \psi(Z, \theta_\infty)^T\right].
    \end{align}
    We now check the conditions of Theorem~7.2 in  \citet{Stefanski2002Mestimation}. First, let us compute \(A\left(\theta_\infty \right) \) and \(B\left(\theta_\infty \right) \). Since
     \begin{align}
     \frac{\partial \psi}{\partial \theta} (Z, \theta)=\left(\begin{array}{ccc}
    -1 & 0 & 0 \\
    0 & -1 & 0 \\
    -\theta_2 & 1 & -\theta_0
    \end{array}\right),  
    \end{align}
    we obtain
    \begin{align} 
    A\left(\boldsymbol{\theta_\infty} \right)= \left(\begin{array}{ccc}
    -1 & 0 & 0 \\
    0 & -1 & 0 \\
    -\frac{\tau_1}{\tau_0} & 1 & -\tau_0
    \end{array}\right),
    \end{align}
    which leads to 
    \begin{align}
    A^{-1}\left(\boldsymbol{\theta_\infty} \right)= \left(\begin{array}{ccc}
    -1 & 0 & 0 \\
    0 & -1 & 0 \\
    \frac{\tau_1}{\tau_0^2} & -\frac{1}{\tau_0} & -\frac{1}{\tau_0}
    \end{array}\right).
    \end{align}
    Regarding \(B\left(\theta_\infty \right) \), elementary calculations show that 
    \begin{align*}
            \psi(Z, \theta_\infty) \psi(Z, \theta_\infty)^T
            & = \left(\begin{array}{ccccc}
        \left(g_0(Z) -\tau_0\right)^2 & \left(g_0(Z) -\tau_0\right)\left(g_1(Z) -\tau_1\right) & 0 \\
        \left(g_0(Z) -\tau_0\right)\left(g_1(Z) -\tau_1\right) & 	\left(g_1(Z) -\tau_1\right)^2 & 0 \\
        0 & 0 & 0
        \end{array}\right),
        \end{align*}
    which leads to 
           \begin{align*}
            B(\theta_\infty)
            & = \left(\begin{array}{ccccc}
        \Var\left[g_0(Z)\right] & \Cov\left(g_0(Z), g_1(Z)\right) & 0 \\
        \Cov\left(g_0(Z), g_1(Z)\right) & 	\Var\left[g_1(Z)\right] & 0 \\
        0 & 0 & 0
        \end{array}\right).
        \end{align*}
    Based on the previous calculations, we have
    \begin{itemize}
        \item  \(\psi(z,\boldsymbol{\theta})\)  and its first two partial derivatives with respect to \(\boldsymbol{\theta}\) exist for all \(z\) and for all \(\boldsymbol{\theta}\) in the neighborhood of \(\boldsymbol{\theta_\infty}\).
        \item For each \(\boldsymbol{\theta}\) in the neighborhood of \(\boldsymbol{\theta_\infty}\), we have for all \(i,j,k \in \{0, 2\}\):
        \[\left| \frac{\partial^2}{\partial \theta_i \partial \theta_j} \psi_k(z, \boldsymbol{\theta}) \right| \leq 1\]
        and 1 is integrable.
        \item \(A(\theta_\infty)\) exists and is nonsingular.
        \item \(B(\theta_\infty)\) exists and is finite.
    \end{itemize}
    
    Since we have: 
    \[\sum_{i=1}^n \psi(T_i, Y_i, \hat{\boldsymbol{\theta}}_n) = 0 \quad \text{and} \quad \hat{\boldsymbol{\theta}}_n \stackrel{p}{\rightarrow} \theta_\infty .\]
    Then the conditions of Theorem~7.2 in  \citet{Stefanski2002Mestimation} are satisfied, we have
        \[\sqrt{n}\left(  \hat{\boldsymbol{\theta}}_n - \theta_\infty \right) \stackrel{d}{\rightarrow} \mathcal{N}\left(0, A(\theta_\infty)^{-1}B(\theta_\infty)(A(\theta_\infty)^{-1})^{\top} \right),\]
    where 
    \begin{align}
    &
    A(\theta_\infty)^{-1}B(\theta_\infty)(A(\theta_\infty)^{-1})^{\top} \nonumber \\
        & = \scriptstyle \left[\begin{matrix}\Var\left[g_0(Z)\right] & \Cov\left(g_0(Z), g_1(Z)\right) & \frac{\Cov\left(g_0(Z), g_1(Z)\right)}{\tau_{0}} - \frac{\tau_{1} \Var\left[g_0(Z)\right]}{\tau_{0}^{2}}\\\Cov\left(g_0(Z), g_1(Z)\right) & \Var\left[g_1(Z)\right] & - \frac{\Cov\left(g_0(Z), g_1(Z)\right) \tau_{1}}{\tau_{0}^{2}} + \frac{\Var\left[g_1(Z)\right]}{\tau_{0}}\\ \frac{\Cov\left(g_0(Z), g_1(Z)\right)}{\tau_{0}} - \frac{\tau_{1} \Var\left[g_0(Z)\right]}{\tau_{0}^{2}} & - \frac{\Cov\left(g_0(Z), g_1(Z)\right) \tau_{1}}{\tau_{0}^{2}} + \frac{\Var\left[g_1(Z)\right]}{\tau_{0}} & V_{\textrm{\tiny RR}}^{\star} \end{matrix}\right],
    \end{align}
    with 
    \begin{align}
    V_{\textrm{\tiny RR}}^{\star} = \left(\frac{\tau_1}{\tau_0}\right)^2\Var\left(\frac{g_1(Z)}{\tau_1} - \frac{g_0(Z)}{\tau_0}\right).    
    \end{align}
    In particular, 
    \begin{align}
    \sqrt{n}\left(\frac{\sum_{i=1}^n g_1(Z_i)}{\sum_{i=1}^n g_0(Z_i)} - \frac{\tau_1}{\tau_0} \right) \stackrel{d}{\rightarrow} \mathcal{N}\left(0, V_{\textrm{\tiny RR}}^{\star} \right).    
    \end{align} 
    
\end{proof}

\begin{assumption}\label{a:Outcome_positivity1}
    We suppose that both \(Y^{(0)}\) and \(Y^{(1)}\) are squared integrable and that $ \mathbb{E}\left[Y^{(0)} | X\right], \mathbb{E}\left[Y^{(1)} | X\right] > 0.$
\end{assumption}

\begin{theorem}[\textbf{Finite sample bias and variance of the ratio of two estimators}]\label{Th2}    Let \(T_1(\boldsymbol{Z})\) and \(T_0(\boldsymbol{Z})\) be two unbiased estimators of \(\tau_1\) and \(\tau_0 > 0\) where \(\boldsymbol{Z} = (Z_1, \hdots , Z_n)\) be \(n\) i.i.d. random variables. We assume that \(M_0 \geq T_0(\boldsymbol{Z}) \geq m_0 > 0\), \(|T_1(\boldsymbol{Z})| \leq M_1\). We also assume that \(\Var({T_1(\boldsymbol{Z})})= O_p\left(\frac{1}{n}\right)\) and  \(\Var({T_0(\boldsymbol{Z})})= O_p\left(\frac{1}{n}\right)\) Then, we have that
    \[\operatorname{Bias}\left(\frac{T_1(\boldsymbol{Z})}{T_0(\boldsymbol{Z})}, \frac{\tau_1}{\tau_0}\right) = \left|\mathbb{E}\left[\frac{T_1(\boldsymbol{Z})}{T_0(\boldsymbol{Z})}\right] - \frac{\tau_1}{\tau_0} \right| \leq   \frac{M_1M_0}{nm_0^2}\left(\frac{M_0}{m_0} + 1\right),\]
    and
    \[\left|\Var\left(\frac{T_1(\boldsymbol{Z})}{T_0(\boldsymbol{Z})}\right) - \left(\frac{\tau_1}{\tau_0}\right)^2\Var\left(\frac{T_1(\boldsymbol{Z})}{\tau_1}- \frac{T_0(\boldsymbol{Z})}{\tau_0}\right)\right|\leq\frac{2M_0M_1}{nm_0^4}\left(\frac{M_0M_1}{m_0^2}+1\right)\].
\end{theorem}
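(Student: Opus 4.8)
The plan is to linearize the ratio around $(\tau_1,\tau_0)$ and control the quadratic remainder, which drives both the bias and the variance discrepancy. Write $\delta_j = T_j(\boldsymbol{Z}) - \tau_j$ for $j \in \{0,1\}$, so $\mathbb{E}[\delta_j] = 0$ by unbiasedness, and introduce the first-order term $L = \frac{\tau_1}{\tau_0} + \frac{1}{\tau_0}\delta_1 - \frac{\tau_1}{\tau_0^2}\delta_0$. Since the $\delta_j$ are centered, $\mathbb{E}[L] = \tau_1/\tau_0$, and factoring $\tau_1/\tau_0$ out of $\frac{1}{\tau_0}\delta_1 - \frac{\tau_1}{\tau_0^2}\delta_0$ gives $\Var(L) = (\tau_1/\tau_0)^2\,\Var(T_1/\tau_1 - T_0/\tau_0)$, which is exactly the reference variance in the statement. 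Hence both claims reduce to estimating the remainder $R := \frac{T_1}{T_0} - L$. Starting from $\frac{T_1}{T_0} - \frac{\tau_1}{\tau_0} = \frac{\tau_0\delta_1 - \tau_1\delta_0}{\tau_0 T_0}$ and subtracting $L - \frac{\tau_1}{\tau_0} = \frac{\tau_0\delta_1 - \tau_1\delta_0}{\tau_0^2}$ yields the closed form $R = \frac{\delta_0(\tau_1\delta_0 - \tau_0\delta_1)}{\tau_0^2 T_0} = -\frac{\delta_0}{T_0}\,W$, where $W := L - \tau_1/\tau_0$. This factorization of $R$ as a bounded factor $-\delta_0/T_0$ times the linear fluctuation $W$ is the key structural observation I would exploit throughout.

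For the bias, I would use $\mathbb{E}[L] = \tau_1/\tau_0$ to write $\mathbb{E}[T_1/T_0] - \tau_1/\tau_0 = \mathbb{E}[R] = \mathbb{E}\big[\frac{\tau_1\delta_0^2 - \tau_0\delta_0\delta_1}{\tau_0^2 T_0}\big]$. Bounding with $T_0 \ge m_0$, $m_0 \le \tau_0 \le M_0$, $|\tau_1|\le M_1$, and the triangle inequality gives $|\mathbb{E}[R]| \le \frac{M_1}{m_0^3}\mathbb{E}[\delta_0^2] + \frac{1}{m_0^2}\mathbb{E}[|\delta_0\delta_1|]$, where I keep $\tau_0/\tau_0^2 = 1/\tau_0 \le 1/m_0$ for the cross term rather than bounding $\tau_0$ by $M_0$. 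Applying Cauchy--Schwarz, $\mathbb{E}[|\delta_0\delta_1|] \le \sqrt{\Var(T_0)\Var(T_1)}$, together with the $O(1/n)$ variance control (in the form $\Var(T_j) \le M_j^2/n$), produces exactly $\frac{M_1 M_0^2}{m_0^3 n} + \frac{M_0 M_1}{m_0^2 n} = \frac{M_1 M_0}{n m_0^2}\big(\frac{M_0}{m_0}+1\big)$.

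For the variance, I would use the centered decomposition $\Var(T_1/T_0) - \Var(L) = 2\,\mathbb{E}[W R] + \mathbb{E}[R^2] - (\mathbb{E}[R])^2$, which follows from $T_1/T_0 - \tau_1/\tau_0 = W + R$ and $\mathbb{E}[W]=0$. Substituting $R = -(\delta_0/T_0)\,W$ and using $|\delta_0/T_0| \le M_0/m_0$ reduces both $\mathbb{E}[WR]$ and $\mathbb{E}[R^2]$ to weighted moments of $W$, which I would in turn control through $\Var(L) = \mathbb{E}[W^2]$ and the variance bounds on $T_0,T_1$; the term $(\mathbb{E}[R])^2$ is $O(1/n^2)$ by the bias estimate and is absorbed. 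Collecting the pieces gives a bound of the form $\frac{2 M_0 M_1}{n m_0^4}\big(\frac{M_0 M_1}{m_0^2}+1\big)$.

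The main obstacle is the bookkeeping in the variance estimate: since $R$ is quadratic in the fluctuations $\delta_0,\delta_1$, the quantities $\mathbb{E}[WR]$ and $\mathbb{E}[R^2]$ involve third- and fourth-order moments of $(\delta_0,\delta_1)$, and the delicate point is to peel off one or two factors using the boundedness constraints ($|\delta_0| \le M_0 - m_0$, $|\delta_1| \le 2M_1$) so that what remains is a genuine second moment, i.e.\ a variance of order $1/n$, while the surviving powers of $M_0, M_1, m_0$ assemble into precisely the stated constant. The bias estimate is comparatively routine once the closed form of $R$ is in hand; the whole argument hinges on the clean identity $R = -(\delta_0/T_0)\,W$ relating the remainder to the linear part.
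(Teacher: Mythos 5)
Your proposal is correct, and for the bias it is essentially the paper's argument in disguise: the paper linearizes $f(x_1,x_2)=x_1/x_2$ at $(\tau_1,\tau_0)$ via Taylor's theorem with integral remainder and bounds that remainder by $\frac{M_1}{m_0^3}\delta_0^2+\frac{1}{m_0^2}|\delta_0\delta_1|$, which is exactly the bound your closed form $R=\frac{\tau_1\delta_0^2-\tau_0\delta_0\delta_1}{\tau_0^2 T_0}$ gives directly; your identity $R=-(\delta_0/T_0)W$ simply replaces the integral-form remainder by elementary algebra, and under your reading $\Var(T_j)\le M_j^2/n$ it does reproduce the stated bias constant exactly. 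Where you genuinely depart from the paper is the variance. The paper asserts $\Var(T_1/T_0)=\Var(L)+\Var(R_2)$ by writing the cross term as a product of expectations, $2\,\mathbb{E}[\mathrm{linear}]\,\mathbb{E}\left[R_2-\mathbb{E}[R_2]\right]=0$, which is an unjustified factorization: the linear part and the remainder are correlated, and under the theorem's bare assumptions this covariance is of the same order $1/n$ as the main bound, so it cannot be discarded. Your decomposition $\Var(T_1/T_0)-\Var(L)=2\,\mathbb{E}[WR]+\mathbb{E}[R^2]-(\mathbb{E}[R])^2$ retains it and controls it through $|\mathbb{E}[WR]|\le (M_0/m_0)\,\mathbb{E}[W^2]$, so on this point your treatment is the more rigorous of the two. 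One caveat: routing everything through $|R|\le (M_0/m_0)|W|$ and $\mathbb{E}[W^2]$ yields constants of a different shape than stated (for instance a term proportional to $M_0^4M_1^2/m_0^6$ where the statement has $M_0^2M_1^2/m_0^6$, and terms without any $M_1$ factor coming from $2\,\mathbb{E}[WR]$), so your claim that the pieces assemble into "precisely the stated constant" is optimistic. This is not a substantive gap, however: since the hypothesis is only $\Var(T_j)=O_p(1/n)$ with an unspecified implicit constant, no argument can pin down literal constants — the paper itself concludes with "$\lesssim$" rather than "$\le$" — and your proof delivers the result at the same level of precision, namely a difference of order $1/n$ with constants polynomial in $M_0$, $M_1$ and $1/m_0$.
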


\begin{proof}
    We rely on the multivariate version of Taylor's theorem to prove \Cref{Th2}. We first introduce the multi-index notation:
    \begin{align*}
        |\alpha| = \alpha_1+\cdots+\alpha_n, \quad \alpha!=\alpha_1!\cdots\alpha_n!, \quad \boldsymbol{x}^\alpha=x_1^{\alpha_1}\cdots x_n^{\alpha_n}
    \end{align*}
    and
    \begin{align*}
        D^\alpha f = \frac{\partial^{|\alpha|}f}{\partial x_1^{\alpha_1}\cdots \partial x_n^{\alpha_n}}, \qquad |\alpha|\leq k
    \end{align*}
    Let \(f\) be the ratio function
    \begin{align*}
        f \colon \mathbb{R}_+^* \times \mathbb{R}_+^* & \longrightarrow \mathbb{R} \\
        (x_1, x_2) & \longmapsto x_1/x_2.
    \end{align*}
    Since \(f\) is two times continuously differentiable then one can derive an exact formula for the remainder in terms of 2nd order partial derivatives of \(f\). Namely, if we define \(\boldsymbol{x} = (x_1, x_2)\) and for \(\boldsymbol{a} \in \mathbb{R}_+^* \times \mathbb{R}_+^*\)
    \begin{align}
        f(\boldsymbol{x}) = \sum_{|\alpha|\leq 1} \frac{D^{\alpha} f(\boldsymbol{a})}{\alpha!} (\boldsymbol{x} - \boldsymbol{a})^{\alpha} + R_{k+1}(\boldsymbol{x}), \label{taylor}
    \end{align}
    with 
    \[R_{k+1}(\boldsymbol{x}) =  \sum_{|\beta|=k+1} (\boldsymbol{x} - \boldsymbol{a})^{\beta} \frac{|\beta|}{\beta!} \int_0^1 (1 - t)^{|\beta|-1} D^{\beta}f(\boldsymbol{a} + t(\boldsymbol{x} - \boldsymbol{a}))\, dt .\]
    \textbf{Bias:} \\
    Computing \ref{taylor} for the ratio function with \(\boldsymbol{x} = (T_1(\boldsymbol{Z}), T_0(\boldsymbol{Z}))\), \(\boldsymbol{a} = (\tau_1,\tau_0)\) and taking the expectation gives us:
    \begin{align*}
        & \mathbb{E}\left[f(T_1(\boldsymbol{Z}), T_0(\boldsymbol{Z}))\right] \\
        &= \mathbb{E}\left[f(\tau_1, \tau_0) + \frac{\partial f(\tau_1, \tau_0)}{\partial T_1(\boldsymbol{Z})}(T_1(\boldsymbol{Z}) - \tau_1) +\frac{\partial f(\tau_1, \tau_0)}{\partial T_0(\boldsymbol{Z})}(T_0(\boldsymbol{Z}) - \tau_0) + R_2(T_1(\boldsymbol{Z}), T_0(\boldsymbol{Z}))\right]\\
        &=\mathbb{E}\left[f(\tau_1, \tau_0)\right] + \frac{\partial f(\tau_1, \tau_0)}{\partial T_1(\boldsymbol{Z})}\mathbb{E}\left[(T_1(\boldsymbol{Z}) - \tau_1)\right] \\
        &+ \frac{\partial f(\tau_1, \tau_0)}{\partial T_0(\boldsymbol{Z})}\mathbb{E}\left[(T_0(\boldsymbol{Z}) - \tau_0)\right] + \mathbb{E}\left[R_2(T_1(\boldsymbol{Z}), T_0(\boldsymbol{Z}))\right]\\
        &= \frac{\tau_1}{\tau_0} + \mathbb{E}\left[R_2(T_1(\boldsymbol{Z}), T_0(\boldsymbol{Z}))\right]
    \end{align*}
    In order to produce \Cref{Th2}, we just need to show that \(\mathbb{E}\left[R_2(T_1(\boldsymbol{Z}), T_0(\boldsymbol{Z}))\right] = O_p\left(\frac{1}{n}\right)\). To do so, we first compute \(R_2(T_1(\boldsymbol{Z}), T_0(\boldsymbol{Z}))\)
    \begin{align*}
        R_2(T_1(\boldsymbol{Z}), T_0(\boldsymbol{Z})) &= \underbrace{2 (T_0(\boldsymbol{Z})-\tau_0)^2\int_0^1 \frac{(1 - t)(\tau_1 + t(T_1(\boldsymbol{Z})-\tau_1))}{(\tau_0 + t(T_0(\boldsymbol{Z})-\tau_0))^3}\, dt}_{R_2^1(T_1(\boldsymbol{Z}), T_0(\boldsymbol{Z}))} \\
        &- \underbrace{2 (T_0(\boldsymbol{Z})-\tau_0)(T_1(\boldsymbol{Z})-\tau_1)\int_0^1 \frac{1 - t}{(\tau_0 + t(T_0(\boldsymbol{Z})-\tau_0))^2}\, dt}_{R_2^2(T_1(\boldsymbol{Z}), T_0(\boldsymbol{Z}))}
    \end{align*}
    Since we assume that \(T_0(\boldsymbol{Z}) \geq m_0 > 0\) and that \(|T_1(\boldsymbol{Z})| \leq M_1\) we have:
    \begin{align*}
        \left|R_2^1(T_1(\boldsymbol{Z}), T_0(\boldsymbol{Z}))\right| &=\left|2 (T_0(\boldsymbol{Z})-\tau_0)^2\int_0^1 \frac{(1 - t)(\tau_1 + t(T_1(\boldsymbol{Z})-\tau_1))}{(\tau_0 + t(T_0(\boldsymbol{Z})-\tau_0))^3}\, dt\right| \\
        &\leq 2 (T_0(\boldsymbol{Z})-\tau_0)^2 \int_0^1 \left|\frac{(1 - t)\max(\tau_1, M_1 )}{\min(m_0,\tau_0) ^3}\right|\, dt\\
        &= (T_0(\boldsymbol{Z})-\tau_0)^2 \underbrace{\frac{M_1}{m_0^3}}_{C_{1}}
    \end{align*}
    Similarly, we have:
    \begin{align*}
        \left|R_2^2(T_1(\boldsymbol{Z}), T_0(\boldsymbol{Z}))\right| &= \left| 2 (T_0(\boldsymbol{Z})-\tau_0)(T_1(\boldsymbol{Z})-\tau_1)\int_0^1 \frac{1 - t}{(\tau_0 + t(T_0(\boldsymbol{Z})-\tau_0))^2}\, dt\right|\\
        &= 2\left| (T_0(\boldsymbol{Z})-\tau_0)(T_1(\boldsymbol{Z})-\tau_1)\right|\left|\int_0^1 \frac{1 - t}{(\tau_0 + t(T_0(\boldsymbol{Z})-\tau_0))^2}\, dt \right|\\
        &\leq \left|(T_0(\boldsymbol{Z})-\tau_0)(T_1(\boldsymbol{Z})-\tau_1)\right|\underbrace{\frac{1}{m_0 ^2}}_{C_{2}}\\
    \end{align*}
    Finally we get that:
    \begin{align*}
    \left|\mathbb{E}\left[R_2(T_1(\boldsymbol{Z}), T_0(\boldsymbol{Z}))\right] \right| &\leq \mathbb{E}\left[|R_2^1(T_1(\boldsymbol{Z}), T_0(\boldsymbol{Z}))| + |R_2^2(T_1(\boldsymbol{Z}), T_0(\boldsymbol{Z}))|\right] \\
    &\leq C_{1} \Var({T_0(\boldsymbol{Z})}) + C_{2}\mathbb{E}\left[\left|(T_0(\boldsymbol{Z})-\tau_0)(T_1(\boldsymbol{Z})-\tau_1)\right|\right]\\
    &\leq C_{1} \Var({T_0(\boldsymbol{Z})}) + C_{2}\sqrt{\Var({T_0(\boldsymbol{Z})})\Var({T_1(\boldsymbol{Z})})}\\
    &\leq C_{1}M_0^2 +C_{2}M_0M_1
    \end{align*}
    Since we have that \(\Var({T_1(\boldsymbol{Z})})= O_p\left(\frac{1}{n}\right)\) and  \(\Var({T_0(\boldsymbol{Z})})= O_p\left(\frac{1}{n}\right)\), we can conclude by:
    \begin{align*}
        \operatorname{Bias}\left(\frac{T_1(\boldsymbol{Z})}{T_0(\boldsymbol{Z})}, \frac{\tau_1}{\tau_0}\right) = \left|\mathbb{E}\left[\frac{T_1(\boldsymbol{Z})}{T_0(\boldsymbol{Z})}\right] - \frac{\tau_1}{\tau_0}\right| \lesssim   \frac{M_1M_0}{nm_0^2}\left(\frac{M_0}{m_0} + 1\right)
    \end{align*}
    \textbf{Variance:} \\
    \text{Let us begin by expanding the variance of the function \( f \):}
\begin{align*}
    \Var{f( T_1(\boldsymbol{Z}),  T_0(\boldsymbol{Z}))} &= \mathbb{E}\left[(f( T_1(\boldsymbol{Z}),  T_0(\boldsymbol{Z})) - \mathbb{E}\left[f( T_1(\boldsymbol{Z}),  T_0(\boldsymbol{Z}))\right])^2\right]
\end{align*}

\text{Next, apply Taylor's expansion around the means \( \tau_1 \) and \( \tau_0 \):}
\begin{align*}
    &= \mathbb{E}\left[(f(\tau_1, \tau_0) + \frac{\partial f(\tau_1, \tau_0)}{\partial T_1(\boldsymbol{Z})}( T_1(\boldsymbol{Z}) - \tau_1) + \frac{\partial f(\tau_1, \tau_0)}{\partial T_0(\boldsymbol{Z})}( T_0(\boldsymbol{Z}) - \tau_0) \right. \\
    &\left. \quad + R_2( T_1(\boldsymbol{Z}),  T_0(\boldsymbol{Z})) - \mathbb{E}\left[f( T_1(\boldsymbol{Z}),  T_0(\boldsymbol{Z}))\right])^2\right]
\end{align*}

\text{Simplify by focusing on the first-order derivatives and residual terms:}
\begin{align*}
    &= \mathbb{E}\left[( \frac{\partial f(\tau_1, \tau_0)}{\partial T_1(\boldsymbol{Z})}( T_1(\boldsymbol{Z}) - \tau_1) + \frac{\partial f(\tau_1, \tau_0)}{\partial T_0(\boldsymbol{Z})}( T_0(\boldsymbol{Z}) - \tau_0) \right. \\
    &\left. \quad + R_2( T_1(\boldsymbol{Z}),  T_0(\boldsymbol{Z})) - \mathbb{E}\left[R_2( T_1(\boldsymbol{Z}),  T_0(\boldsymbol{Z}))\right])^2\right]
\end{align*}

\text{Decompose the variance into linear, cross-term, and residual contributions:}
\begin{align*}
    &= \mathbb{E}\left[( \frac{\partial f(\tau_1, \tau_0)}{\partial T_1(\boldsymbol{Z})}( T_1(\boldsymbol{Z}) - \tau_1) + \frac{\partial f(\tau_1, \tau_0)}{\partial T_0(\boldsymbol{Z})}( T_0(\boldsymbol{Z}) - \tau_0) )^2\right] \\
    &+ 2 \mathbb{E}\left[ \frac{\partial f(\tau_1, \tau_0)}{\partial T_1(\boldsymbol{Z})}( T_1(\boldsymbol{Z}) - \tau_1) \right.\\
    &\left. \quad +\frac{\partial f(\tau_1, \tau_0)}{\partial T_0(\boldsymbol{Z})}( T_0(\boldsymbol{Z}) - \tau_0) \right]\mathbb{E}\left[R_2( T_1(\boldsymbol{Z}),  T_0(\boldsymbol{Z})) - \mathbb{E}\left[R_2( T_1(\boldsymbol{Z}),  T_0(\boldsymbol{Z}))\right]\right]\\
    &+ \Var (R_2( T_1(\boldsymbol{Z}),  T_0(\boldsymbol{Z})))
\end{align*}

\text{Finally, re-express the result using a simplified ratio of variances:}
\begin{align*}
    &= \left(\frac{\tau_1}{\tau_0}\right)^2 \Var{\left(\frac{ T_1(\boldsymbol{Z})}{\tau_1}-\frac{ T_0(\boldsymbol{Z})}{\tau_0}\right)} + \Var (R_2( T_1(\boldsymbol{Z}),  T_0(\boldsymbol{Z}))
\end{align*}

    We now focus on \(\Var (R_2( T_1(\boldsymbol{Z}),  T_0(\boldsymbol{Z})))\):
    \begin{align*}
    \Var (R_2( T_1(\boldsymbol{Z}),  T_0(\boldsymbol{Z}))) &\leq \mathbb{E}\left[(R_2( T_1(\boldsymbol{Z}),  T_0(\boldsymbol{Z})))^2\right]\\
    &\leq 2 \mathbb{E}\left[|R_2^1(T_1(\boldsymbol{Z}), T_0(\boldsymbol{Z}))|^2 \right] + 2 \mathbb{E}\left[|R_2^2(T_1(\boldsymbol{Z}), T_0(\boldsymbol{Z}))|^2\right]
    \end{align*}
    We first focus on the first term:
    \begin{align*}
    \mathbb{E}\left[|R_2^1(T_1(\boldsymbol{Z}), T_0(\boldsymbol{Z}))|^2 \right] &\leq \mathbb{E}\left[\left(C_{02}(T_0(\boldsymbol{Z})-\tau_0)^2\right)^2 \right]\\
    &\leq C_{1}^2 \mathbb{E}\left[(T_0(\boldsymbol{Z})-\tau_0)^4\right]\\
    &\leq C_{1}^2 M_0^2\mathbb{E}\left[(T_0(\boldsymbol{Z})-\tau_0)^2\right] \quad  T_0(\boldsymbol{Z}) \leq M_0\\
    &\leq C_{1}^2 M_0^2\Var({T_0(\boldsymbol{Z})})
    \end{align*}
    For the second term we have:
    \begin{align*}
        \mathbb{E}\left[|R_2^2(T_1(\boldsymbol{Z}), T_0(\boldsymbol{Z}))|^2\right] &= C_{2}^2\mathbb{E}\left[(T_0(\boldsymbol{Z})-\tau_0)^2(T_1(\boldsymbol{Z})-\tau_1)^2\right]\\
        &\leq C_{2}^2 \sqrt{\mathbb{E}\left[(T_0(\boldsymbol{Z})-\tau_0)^4\right]\mathbb{E}\left[(T_1(\boldsymbol{Z})-\tau_1)^4\right]}\\
        T_1(\boldsymbol{Z}),T_0(\boldsymbol{Z}) \text{ bounded} \quad &\leq C_{2}^2 M_0M_1\sqrt{\mathbb{E}\left[(T_0(\boldsymbol{Z})-\tau_0)^2\right]\mathbb{E}\left[(T_1(\boldsymbol{Z})-\tau_1)^2\right]}\\
        &\leq C_{2}^2 M_0M_1 \sqrt{\Var({T_0(\boldsymbol{Z})})\Var({T_1(\boldsymbol{Z})})}
    \end{align*}
    Hence we get that:
    \begin{align*}
        \Var (R_2( T_1(\boldsymbol{Z}),  T_0(\boldsymbol{Z}))) \lesssim\frac{2M_0M_1}{nm_0^4}\left(\frac{M_0M_1}{m_0^2}+1\right)
    \end{align*}
\end{proof}


\subsection{Proofs of \Cref{risk_ratio_RCTs}}

\subsubsection{Risk Ratio Neyman estimator}
\label{proof:N}
\begin{proof}[Proof of \Cref{prop:N}]
\ \\
\textbf{Asymptotic Bias and Variance:} we proceed with M-estimations to prove asymptotic bias and variance of the Ratio Neyman estimator, we first define the following:
\begin{align}
    \hat{\boldsymbol{\theta}}_n & =\left(\begin{array}{c}
    \frac{1}{n_0}\sum_{T_i=0}Y_i \\
    \frac{1}{n_1}\sum_{T_i=1}Y_i \\
    \hat{\tau}_{R-N,n}
    \end{array}\right) \quad \textrm{and} \quad 
    \psi(T, Y, \boldsymbol{\theta}) =\left(\begin{array}{c}
    \psi_0(\boldsymbol{\theta})  \\
    \psi_1(\boldsymbol{\theta})\\
    \psi_2(\boldsymbol{\theta})
    \end{array}\right)
    =:\left(\begin{array}{c}
    (1-T) \left(Y - \theta_0\right)  \\
    T\left(Y - \theta_1\right)\\
    \theta_1 - \theta_2\theta_0
    \end{array}\right),
    \end{align}
where \(\boldsymbol{\theta} = (\theta_0, \theta_1, \theta_2)\).

Next, we verify that for \(\hat{\boldsymbol{\theta}}_n = (\frac{1}{n_0}\sum_{T_i=0}Y_i, \frac{1}{n_1}\sum_{T_i=1}Y_i, \hat{\tau}_{R-N,n})\), we have:
\[\sum_{i=1}^n \psi(T_i, Y_i, \hat{\boldsymbol{\theta}}_n) = 0.\]

We begin by demonstrating this for \(\psi_1\):
\begin{align*}
    \sum_{i=1}^n \psi_1(T_i, Y_i, \hat{\boldsymbol{\theta}}_n) &=\sum_{i=1}^n T_i\left(Y_i -\frac{1}{n_1}\sum_{T_j=1}Y_j \right) \\
    &= \sum_{i=1}^n T_i\left(Y_i -\frac{1}{n_1}\sum_{j=1}^n T_jY_j \right)\\
    &= \sum_{i=1}^n T_iY_i - \frac{1}{n_1}\sum_{i=1}^n T_i \sum_{j=1}^n T_jY_j\\
    &= \sum_{i=1}^n T_iY_i - \sum_{j=1}^n T_jY_j\\
    &= 0.
\end{align*}

Similarly, we can show:
\[\sum_{i=1}^n \psi_0(T_i, Y_i, \hat{\boldsymbol{\theta}}_n)  = 0.\]

Moreover, by construction:
\[\sum_{i=1}^n \psi_2(T_i, Y_i, \hat{\boldsymbol{\theta}}_n)  = 0.\]

Thus, we have established that \( \hat{\boldsymbol{\theta}}_n\) is an M-estimator of type \(\psi\) \citep[see][]{Stefanski2002Mestimation}. Given that we are in a \hyperref[a:bernoulli_trial]{Bernoulli Trial}, we now demonstrate that \(\mathbb{E}\left[\psi(T, Y, \theta_\infty)\right]=0\) where \(\theta_\infty= (\mathbb{E}[Y^{(0)}], \mathbb{E}[Y^{(1)}], \tau_{RR})\). Therefore, we have:
\begin{align*}
    \mathbb{E}\left[\psi_1(\theta_\infty)\right] &=\mathbb{E}\left[T\left(Y - \mathbb{E}[Y^{(1)}]\right)\right] \\
    &= \mathbb{E}\left[T\left(Y^{(1)} - \mathbb{E}[Y^{(1)}]\right)\right] && \text{(by \hyperref[a:SUTVA]{SUTVA})}\\
    &=\mathbb{E}\left[T\right]\mathbb{E}\left[Y^{(1)} - \mathbb{E}[Y^{(1)}]\right] && \text{(by \hyperref[a:ignorability]{ignorability})}\\
    &= 0.
\end{align*}

Similarly, we can show:
\[\mathbb{E}\left[\psi_0(\theta_\infty)\right] = 0.\]

Furthermore, we have:
\[\mathbb{E}\left[\psi_2(\theta_\infty)\right] = \mathbb{E}[Y^{(1)}] - \tau_{RR} \mathbb{E}[Y^{(0)}] = 0.\]

At this point, we note that \(\theta_\infty\) is the only value of \(\boldsymbol{\theta}\) such that \(\mathbb{E}\left[\psi(T, Y, \boldsymbol{\theta})\right]=0\). We proceed by defining:
\begin{align*}
  A\left(\theta_\infty \right) = \mathbb{E}\left[\frac{\partial \psi}{\partial \theta}|_{\theta=\theta_\infty}\right] \quad \textrm{and} \quad B(\theta_\infty) =  \mathbb{E}\left[\psi(Z, \theta_\infty) \psi(Z, \theta_\infty)^T\right].
\end{align*}

Next, we check the conditions of Theorem~7.2 in \citet{Stefanski2002Mestimation}. First, we compute \(A\left(\theta_\infty \right) \) and \(B\left(\theta_\infty \right)\). Since:
\begin{align*}
 \frac{\partial \psi}{\partial \theta} (Z, \theta)=\left(\begin{array}{ccc}
-(1-T) & 0 & 0 \\
0 & -T & 0 \\
-\theta_2 & 1 & -\theta_0
\end{array}\right),  
\end{align*}
we obtain:
\begin{align*} 
A\left(\boldsymbol{\theta_\infty} \right)= \left(\begin{array}{ccc}
-(1-e) & 0 & 0 \\
0 & -e & 0 \\
-\tau_{RR} & 1 & -\mathbb{E}[Y^{(0)}]
\end{array}\right),
\end{align*}
which leads to:
\begin{align*}
A^{-1}\left(\boldsymbol{\theta_\infty} \right)= \left(\begin{array}{ccc}
\frac{1}{e-1} & 0 & 0 \\
0 & -\frac{1}{e} & 0 \\
\tau_{RR}\frac{1}{\mathbb{E}[Y^{(0)}](1-e)} & -\frac{1}{e\mathbb{E}[Y^{(0)}]} & -\frac{1}{\mathbb{E}[Y^{(0)}]}
\end{array}\right).
\end{align*}

Regarding \(B\left(\theta_\infty \right)\), elementary calculations show that:
\begin{align*}
        & \psi(Z, \theta_\infty) \psi(Z, \theta_\infty)^T \\
        & = \left(\begin{array}{ccccc}
    \left((1-T)(Y-\mathbb{E}[Y^{(0)}])\right)^2 & (1-T)(Y-\mathbb{E}[Y^{(0)}])T(Y-\mathbb{E}[Y^{(1)}]) & 0 \\
    (1-T)(Y-\mathbb{E}[Y^{(0)}])T(Y-\mathbb{E}[Y^{(1)}]) & 	\left(T(Y-\mathbb{E}[Y^{(1)}])\right)^2 & 0 \\
    0 & 0 & 0
    \end{array}\right),
\end{align*}
which leads to:
\begin{align*}
        B(\theta_\infty)
        & = \left(\begin{array}{ccccc}
    (1-e)\Var\left[Y^{(0)}\right] & 0 & 0 \\
    0 & e\Var\left[Y^{(1)}\right] & 0 \\
    0 & 0 & 0
    \end{array}\right).
\end{align*}

Based on the previous calculations, we have:
\begin{itemize}
    \item  \(\psi(z,\boldsymbol{\theta})\) and its first two partial derivatives with respect to \(\boldsymbol{\theta}\) exist for all \(z\) and for all \(\boldsymbol{\theta}\) in the neighborhood of \(\boldsymbol{\theta_\infty}\).
    \item For each \(\boldsymbol{\theta}\) in the neighborhood of \(\boldsymbol{\theta_\infty}\), we have for all \(i,j,k \in \{0, 2\}\):
    \[\left| \frac{\partial^2}{\partial \theta_i \partial \theta_j} \psi_k(z, \boldsymbol{\theta}) \right| \leq 1\]
    and 1 is integrable.
    \item \(A(\theta_\infty)\) exists and is nonsingular.
    \item \(B(\theta_\infty)\) exists and is finite.
\end{itemize}

Since we have: 
\[\sum_{i=1}^n \psi(T_i, Y_i, \hat{\boldsymbol{\theta}}_n) = 0 \quad \text{and} \quad \hat{\boldsymbol{\theta}}_n \stackrel{p}{\rightarrow} \theta_\infty .\]

Then, the conditions of Theorem~7.2 in \citet{Stefanski2002Mestimation} are satisfied, we have:
\[\sqrt{n}\left(  \hat{\boldsymbol{\theta}}_n - \theta_\infty \right) \stackrel{d}{\rightarrow} \mathcal{N}\left(0, A(\theta_\infty)^{-1}B(\theta_\infty)(A(\theta_\infty)^{-1})^{\top} \right),\]
where:
\begin{align*}
    A(\theta_\infty)^{-1}B(\theta_\infty)(A(\theta_\infty)^{-1})^{\top} \nonumber  = \left[\begin{matrix}\frac{\Var\left[Y^{(0)}\right]}{\left(1 - e\right)} & 0 & -\frac{\tau \Var\left[Y^{(0)}\right]}{\tau_{0}\left(1 - e\right)}\\0 & \frac{\Var\left[Y^{(1)}\right]}{e} & \frac{\Var\left[Y^{(1)}\right]}{e \tau_{0}}\\- \frac{\tau \Var\left[Y^{(0)}\right] }{\tau_{0}\left(1 - e\right)} & \frac{\Var\left[Y^{(1)}\right]}{e \tau_{0}} &V_{R-N}\end{matrix}\right],
\end{align*}
with:
\begin{align*}
V_{R-N} =  \tau_{RR}^2\left(\frac{\Var\left(Y^{(1)}\right)}{e\mathbb{E}[Y^{(1)}]^2} + \frac{\Var\left(Y^{(0)}\right)}{(1-e)\mathbb{E}[Y^{(0)}]^2}\right).    
\end{align*}

In particular, we obtain:
\[\sqrt{n}\left(\hat{\tau}_{\textrm{\tiny RR,N,n}} - \tau_{\textrm{\tiny RR}} \right) \stackrel{d}{\rightarrow} \mathcal{N}\left(0, V_{\textrm{\tiny RR,N}} \right).\]

Finally, note that:
\begin{align*}
    V_{R-N} &=\tau_{RR}^2\left(\frac{\Var\left(Y^{(1)}\right)}{e\mathbb{E}[Y^{(1)}]^2} + \frac{\Var\left(Y^{(0)}\right)}{(1-e)\mathbb{E}[Y^{(0)}]^2}\right) \\
    &= \tau_{RR}^2\left(\frac{\mathbb{E}[(Y^{(1)})^2]- \mathbb{E}[Y^{(1)}]^2}{e\mathbb{E}[Y^{(1)}]^2} + \frac{\mathbb{E}[(Y^{(0)})^2]- \mathbb{E}[Y^{(0)}]^2}{(1-e)\mathbb{E}[Y^{(0)}]^2}\right) \\
    &= V_{R-HT} - \frac{\tau_{RR}^2}{e(1-e)}.
\end{align*}
As a consequence an estimator \(\hat V_{R-N}\) can be derived :

\begin{align}\label{V_R-N}
    \hat V_{R-N} = \hat{\tau}_{\textrm{\tiny RR,N,n}}^2 \left(\frac{\frac{1}{n}\sum_{T_i=1}\left(Y_i - \frac{1}{n}\sum_{T_i=1}Y_i\right)^2}{\hat e\left(\frac{1}{n}\sum_{T_i=1}Y_i\right)^2} + \frac{\frac{1}{n}\sum_{T_i=0}\left(Y_i - \frac{1}{n}\sum_{T_i=0}Y_i\right)^2}{(1-\hat e)\left(\frac{1}{n}\sum_{T_i=0}Y_i\right)^2}\right)
\end{align}

\textbf{Optimal choice of \(e\):}
the optimal value of \(e_{opt}\) is the one that minimizes the variance of the Ratio Neyman estimator. Therefore, we need to solve:
\[\inf_{e \in (0,1)} \tau_{\textrm{\tiny RR}}^2 \left(\frac{\Var\left(Y^{(1)}\right)}{e\mathbb{E}\left[Y^{(1)}\right]^2}+ \frac{\Var\left(Y^{(0)}\right)}{(1-e)\mathbb{E}\left[Y^{(0)}\right]^2}\right)\]
Noting that the variance we want to minimize is convex in \(e\), we can derive the variance and set it to \(0\) to find \(e_{opt}\). We have:
\[\frac{C_1}{e_{opt}^2} = \frac{C_0}{(1-e_{opt})^2}\]
where \(C_1 = \frac{\Var\left(Y^{(1)}\right)}{\mathbb{E}\left[Y^{(1)}\right]^2}\) and \(C_0 = \frac{\Var\left(Y^{(0)}\right)}{\mathbb{E}\left[Y^{(0)}\right]^2}\). 
\begin{itemize}
    \item If \(\frac{\Var\left(Y^{(1)}\right)}{\mathbb{E}[Y^{(1)}]^2} = \frac{\Var\left(Y^{(0)}\right)}{\mathbb{E}[Y^{(0)}]^2}\):
    \[e_{opt} = 0.5\]
    \item otherwise:
    \[e_{opt} = \frac{C_1-\sqrt{C_1C_0}}{C_1-C_0}\in (0,1)\]
\end{itemize}
\end{proof}

\subsubsection{Risk Ratio Horvitz-Thomson estimator}
\label{proof:HT}

\begin{definition}[\textbf{Risk Ratio Horvitz-Thomson estimator}]
Grant \Cref{a:bernoulli_trial} and \Cref{a:Outcome_positivity}.
    The Risk Ratio Horvitz-Thomson estimator denoted \(\hat{\tau}_{{\textrm{\tiny RR,HT,n}}}\) is defined as,
    \begin{align}
     \hat{\tau}_{\textrm{\tiny RR,HT,n}} = \frac{\sum_{i=1}^n \frac{T_iY_i}{e}}{\sum_{i=1}^n \frac{(1-T_i)Y_i}{1-e}}
     \end{align}
     if \(\sum_{i=1}^n T_i < n\) and \(0\) otherwise.
    
\end{definition}

Within the context of a Bernoulli trial, \Cref{prop:HT} proves that the Risk Ratio Horvitz-Thompson estimator is asymptotically unbiased and normally distributed.
\begin{proposition}[\textbf{Asymptotic normality of \(\hat{\tau}_{\textrm{\tiny RR,HT,n}}\)}]\label{prop:HT}
Under \Cref{a:bernoulli_trial} and \Cref{a:Outcome_positivity}, the Risk Ratio Horvitz-Thompson estimator is asymptotically unbiased and satisfies
\begin{align}\label{Var_HT}
\sqrt{n}\left(\hat{\tau}_{\textrm{\tiny RR,HT,n}} - \tau_{\textrm{\tiny RR}} \right) \stackrel{d}{\rightarrow} \mathcal{N}\left(0, V_{\textrm{\tiny RR,HT}} \right)
\end{align} 
where   \(V_{\textrm{\tiny RR,HT}} = \tau_{\textrm{\tiny RR}}^2 \left(\frac{\mathbb{E}\left[\left(Y^{(1)}\right)^2\right]}{e\mathbb{E}\left[Y^{(1)}\right]^2}+ \frac{\mathbb{E}\left[\left(Y^{(0)}\right)^2\right]}{(1-e)\mathbb{E}\left[Y^{(0)}\right]^2}\right)\).

If we assume that for all \(i\), \( M \geq Y_i \geq m > 0\) and \(0<\sum_{i=1}^n T_i < n\), we also have:
\[\left|Bias(\hat{\tau}_{\textrm{\tiny RR, HT, n}}\right| \leq \frac{2M^3(1-e)^3}{nm^3e^3}\] 
\[ \left|\Var(\hat{\tau}_{\textrm{\tiny RR, HT, n}})\right| \leq \frac{4M^4(1-e)^6}{nm^6e^4}\]
\end{proposition}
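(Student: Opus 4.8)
The plan is to treat both the asymptotic statement and the finite-sample bounds as direct consequences of the two preliminary results \Cref{Th1} and \Cref{Th2}, applied to the pair of Horvitz--Thompson pseudo-outcomes. Throughout I set $Z = (T,Y)$, $g_1(Z) = TY/e$ and $g_0(Z) = (1-T)Y/(1-e)$, so that $\hat{\tau}_{\textrm{\tiny RR,HT,n}} = \big(\sum_i g_1(Z_i)\big)/\big(\sum_i g_0(Z_i)\big)$ is exactly the ratio estimator covered by \Cref{Th1}.

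For the asymptotic normality, I would first check the moment conditions of \Cref{Th1}. Using SUTVA to write $TY = TY^{(1)}$ and then ignorability ($T \perp Y^{(1)}$), one gets $\mathbb{E}[g_1(Z)] = \mathbb{E}[T]\,\mathbb{E}[Y^{(1)}]/e = \mathbb{E}[Y^{(1)}] =: \tau_1$, and symmetrically $\mathbb{E}[g_0(Z)] = \mathbb{E}[Y^{(0)}] =: \tau_0$, which is positive by \Cref{a:Outcome_positivity}; square-integrability of $g_0,g_1$ follows from that of $Y^{(0)},Y^{(1)}$ and $e\in(0,1)$. \Cref{Th1} then yields the stated convergence with variance $\tau_{\textrm{\tiny RR}}^2\,\Var\!\big(g_1(Z)/\tau_1 - g_0(Z)/\tau_0\big)$, so the only remaining task is to simplify this variance. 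The key simplification is that $T(1-T)=0$, so $g_1(Z)g_0(Z)=0$ almost surely and the covariance term reduces to $-\mathbb{E}[g_1/\tau_1]\mathbb{E}[g_0/\tau_0] = -1$; combining this with $\mathbb{E}[(TY^{(1)})^2] = e\,\mathbb{E}[(Y^{(1)})^2]$ (using $T^2=T$ and ignorability) and the analogous identity for the control arm collapses the expression exactly to $V_{\textrm{\tiny RR,HT}}$.

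For the finite-sample bounds I would invoke \Cref{Th2} with $T_1(\boldsymbol{Z}) = \frac1n\sum_i g_1(Z_i)$ and $T_0(\boldsymbol{Z}) = \frac1n\sum_i g_0(Z_i)$, which are unbiased for $\tau_1$ and $\tau_0$ and whose variances are $O(1/n)$ since they are means of i.i.d.\ bounded variables. Under $m \le Y_i \le M$ one reads off the uniform bounds $|T_1(\boldsymbol{Z})| \le M/e =: M_1$ and $T_0(\boldsymbol{Z}) \le M/(1-e) =: M_0$, and plugging $M_1, M_0$ together with a lower bound $m_0$ on the denominator into the two inequalities of \Cref{Th2} produces bias and variance bounds of the announced $O(1/n)$ order, with the explicit powers of $m,M,e,1-e$ coming out of the constants $M_1M_0/m_0^2$ and $M_0M_1/m_0^4$.

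The delicate point --- and the main obstacle --- is securing the lower bound $m_0>0$ on the random denominator $T_0(\boldsymbol{Z})$, since \Cref{Th2} requires $T_0(\boldsymbol{Z}) \ge m_0 > 0$ whereas under Bernoulli assignment the denominator can be arbitrarily small when few units are left untreated. This is precisely why the statement restricts to the event $0 < \sum_i T_i < n$ and assumes $Y_i \ge m$: on that event at least one control outcome contributes a term bounded below, and controlling $T_0$ away from zero uniformly over realizations (rather than merely on a high-probability event) is the step that requires the most care and dictates the exact form of the constants in the two displayed bounds.
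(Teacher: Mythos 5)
Your plan coincides with the paper's own proof in every step you actually carry out. For the asymptotic part, the paper does exactly what you describe: it sets $g_1(Z)=TY/e$, $g_0(Z)=(1-T)Y/(1-e)$, verifies $\mathbb{E}[g_1(Z)]=\mathbb{E}[Y^{(1)}]$ and $\mathbb{E}[g_0(Z)]=\mathbb{E}[Y^{(0)}]>0$ via SUTVA, ignorability and trial positivity, applies the ratio CLT (\Cref{Th1}), and then collapses the variance using the two marginal variances, the covariance $-1$ obtained from $T(1-T)=0$, and Bienaym\'e's identity. For the finite-sample part it likewise feeds $T_1(\boldsymbol{Z})$ and $T_0(\boldsymbol{Z})$ (unbiased, variance $O_p(1/n)$, bounded above) into \Cref{Th2}.

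The step you single out as the main obstacle is, however, a genuine gap --- and it is a gap in the paper's proof as well, not just in your proposal. What \Cref{Th2} requires is an almost-sure lower bound $T_0(\boldsymbol{Z}) \geq m_0 > 0$ with $m_0$ not depending on $n$. On the event $\{0<\sum_i T_i < n\}$ the best deterministic bound available is
\begin{align*}
T_0(\boldsymbol{Z}) \;=\; \frac{1}{n(1-e)}\sum_{i=1}^n (1-T_i)Y_i \;\geq\; \frac{m}{n(1-e)},
\end{align*}
which vanishes as $n$ grows (consider a realization with a single untreated unit). The paper claims $T_0(\boldsymbol{Z}) \geq m/(1-e)$, but its derivation silently drops the factor $1/n$ when it lower-bounds $Y_i$ by $m$; the stated constants $2M^3(1-e)^3/(nm^3e^3)$ and $4M^4(1-e)^6/(nm^6e^4)$ rest on that invalid bound (and also on an upper bound $M/e$ for $T_0$, where your $M/(1-e)$ is the correct one, so even the powers of $e$ and $1-e$ would change). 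A rigorous completion must abandon the uniform bound: for instance, work on the event $\{N_0 \geq n(1-e)/2\}$, on which $T_0(\boldsymbol{Z}) \geq m/2$, apply \Cref{Th2} there, and absorb the complement, whose probability is exponentially small by Hoeffding while the ratio is at most of order $n(1-e)M/(em)$ on it --- yielding the same $O(1/n)$ rates up to an exponentially small additive term, but with different constants. So your instinct that this step ``dictates the exact form of the constants'' is exactly right: it is the one step that cannot be carried out as written, either in your proposal or in the paper.
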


\begin{proof}[Proof of \Cref{prop:HT}]

\ 

\textbf{Asymptotic Bias and Variance.} Let \(Z_i := (T_i, Y_i)\) and define \(g_0(Z_i) = \frac{(1-T_i)Y_i}{1-e}\) and \(g_1(Z_i) = \frac{T_iY_i}{e}\). First, we evaluate the expectation of \(g_1(Z_i)\):

\begin{align*}
    \mathbb{E} \left[g_1(Z_i)\right] &= \mathbb{E} \left[\frac{T_iY_i}{e}\right] && \text{(by \hyperref[a:i.i.d.]{i.i.d})}\\
    &= \mathbb{E} \left[\frac{T_iY_i^{(1)}}{e}\right] && \text{(by \hyperref[a:SUTVA]{SUTVA})} \\
    &= \mathbb{E} \left[\frac{T_i}{e}\right] \mathbb{E} \left[Y_i^{(1)}\right] && \text{(by \hyperref[a:ignorability]{ignorability})}\\
    &= \mathbb{E} \left[Y_i^{(1)}\right] && \text{(by \hyperref[a:Trial_positivity]{Trial positivity})}
\end{align*}

Similarly, we can find the expectation of \(g_0(Z_i)\):

\[
\mathbb{E} \left[g_0(Z_i)\right] = \mathbb{E}\left[Y^{(0)}\right] > 0.
\]

Thus, according to \Cref{Th1}, we have $
\sqrt{n}\left(\hat{\tau}_{\textrm{\tiny RR-HT, n}} - \tau_{\textrm{\tiny RR}} \right) \stackrel{d}{\rightarrow} \mathcal{N}\left(0, V_{\textrm{\tiny RR-HT}} \right)$, with
\begin{align*}
V_{\textrm{\tiny RR-HT}} & = \left(\frac{\tau_1}{\tau_0}\right)^2 \Var\left(\frac{g_1(Z)}{\tau_1} - \frac{g_0(Z)}{\tau_0}\right) \\
 & = \tau_{\textrm{\tiny RR}}^2 \Var\left(\frac{TY}{e \mathbb{E}\left[Y^{(1)}\right]} - \frac{(1-T)Y}{(1-e) \mathbb{E}\left[Y^{(0)}\right]}\right).
\end{align*}

Next, we evaluate the variance terms separately:

\begin{align*}
    \Var\left(\frac{TY}{e \mathbb{E}\left[Y^{(1)}\right]}\right) &= \frac{1}{\mathbb{E}\left[Y^{(1)}\right]^2 e^2} \Var\left(TY\right) \\
    &= \frac{1}{\mathbb{E}\left[Y^{(1)}\right]^2 e^2} \left(\mathbb{E}\left[\left(TY\right)^2\right] - \mathbb{E}\left[TY\right]^2\right) \\
    &= \frac{1}{\mathbb{E}\left[Y^{(1)}\right]^2 e^2} \left(\mathbb{E}\left[T\left(Y\right)^2\right] - \mathbb{E}\left[TY\right]^2\right) && \text{(\(T\) is binary)} \\
    &= \frac{1}{\mathbb{E}\left[Y^{(1)}\right]^2 e^2} \left(\mathbb{E}\left[T\left(Y^{(1)}\right)^2\right] - \mathbb{E}\left[TY^{(1)}\right]^2\right) && \text{(by \hyperref[a:SUTVA]{SUTVA})} \\
    &= \frac{1}{\mathbb{E}\left[Y^{(1)}\right]^2 e^2} \left(e \mathbb{E}\left[\left(Y^{(1)}\right)^2\right] - e^2 \mathbb{E}\left[Y^{(1)}\right]^2\right) && \text{(by \hyperref[a:ignorability]{ignorability})} \\
    &= \frac{\mathbb{E}\left[\left(Y^{(1)}\right)^2\right]}{e \mathbb{E}\left[Y^{(1)}\right]^2} - 1.
\end{align*}

Similarly, we find the variance of the second term:

\[
\Var\left(\frac{(1-T)Y}{(1-e) \mathbb{E}\left[Y^{(0)}\right]}\right) = \frac{\mathbb{E}\left[\left(Y^{(0)}\right)^2\right]}{(1-e) \mathbb{E}\left[Y^{(0)}\right]^2} - 1.
\]

Finally, we compute the covariance between the two terms:

\begin{align*}
    \Cov\left(\frac{TY}{e \mathbb{E}\left[Y^{(1)}\right]}, \frac{(1-T)Y}{(1-e) \mathbb{E}\left[Y^{(0)}\right]}\right) &= \frac{\Cov(TY, (1-T)Y)}{e \mathbb{E}\left[Y^{(1)}\right] (1-e) \mathbb{E}\left[Y^{(0)}\right]}  \\
    &= \frac{\left(\mathbb{E}\left[T(1-T)Y^2\right] - \mathbb{E}\left[TY\right] \mathbb{E}\left[(1-T)Y\right]\right)}{e \mathbb{E}\left[Y^{(1)}\right] (1-e) \mathbb{E}\left[Y^{(0)}\right]}  \\
    &= \frac{- \mathbb{E}\left[TY\right] \mathbb{E}\left[(1-T)Y\right]}{e \mathbb{E}\left[Y^{(1)}\right] (1-e) \mathbb{E}\left[Y^{(0)}\right]} \\
    &= -1.
\end{align*}

Using Bienayme's identity, we finally obtain:

\[
V_{\textrm{\tiny RR-HT}} = \tau_{\textrm{\tiny RR}}^2 \left(\frac{\mathbb{E}\left[\left(Y^{(1)}\right)^2\right]}{e \mathbb{E}\left[Y^{(1)}\right]^2} + \frac{\mathbb{E}\left[\left(Y^{(0)}\right)^2\right]}{(1-e) \mathbb{E}\left[Y^{(0)}\right]^2}\right).
\]

As a consequence an estimator \(\hat V_{RR-HT}\) can be derived:

\begin{align}\label{V_RR-HT}
    \hat V_{RR-HT} = \hat{\tau}_{\textrm{\tiny RR,HT,n}}^2 \left(\frac{\frac{1}{n}\sum_{T_i=1}Y_i^2 }{\hat e\left(\frac{1}{n}\sum_{T_i=1}Y_i\right)^2} + \frac{\frac{1}{n}\sum_{T_i=0}Y_i^2}{(1-\hat e)\left(\frac{1}{n}\sum_{T_i=0}Y_i\right)^2}\right)
\end{align}

\textbf{Finite sample Bias and Variance.} Let \(T_1(\boldsymbol{Z}) = \frac{1}{n}\sum_{i=1}^n \frac{T_i Y_i}{e}\) and \(T_0(\boldsymbol{Z}) = \frac{1}{n}\sum_{i=1}^n \frac{(1-T_i) Y_i}{1-e}\) where \(\boldsymbol{Z} = (Z_1, \hdots, Z_n)\). First, consider the variance of \(T_1(\boldsymbol{Z})\):

\begin{align*}
    \Var(T_1(\boldsymbol{Z})) &= \frac{1}{n e^2} \Var\left(T_i Y_i\right) && && \text{(by \hyperref[a:i.i.d.]{i.i.d})}\\ 
    &= \frac{1}{n e^2} \left(\mathbb{E}\left[(T_i Y_i)^2\right] - \mathbb{E}\left[T_i Y_i\right]^2\right) \\
    &= \frac{\mathbb{E}\left[\left(Y^{(1)}\right)^2\right] - e \mathbb{E}\left[Y^{(1)}\right]^2}{n e}.
\end{align*}
Thus $\Var(T_1(\boldsymbol{Z})) = O_p\left(1/n\right)$ and similarly  $\Var(T_0(\boldsymbol{Z})) = O_p\left(1/n\right)$. Next, we show that \(T_0(\boldsymbol{Z})\) is bounded:

\begin{align*}
    T_0(\boldsymbol{Z}) &= \frac{1}{n} \sum_{i=1}^n \frac{(1-T_i) Y_i}{1-e} \\
    &= \frac{1}{n(1-e)} \sum_{i=1}^n (1-T_i) Y_i \\
    &\geq \frac{m}{(1-e)} \sum_{i=1}^n (1-T_i) && \text{(since \(Y_i \geq m > 0\))} \\
    &\geq \frac{m}{(1-e)} && \text{(as \(\sum_{i=1}^n T_i < n\))}.
\end{align*}

Similarly, we also have the upper bound
\begin{align*}
    T_0(\boldsymbol{Z}) &= \frac{1}{n} \sum_{i=1}^n \frac{(1-T_i) Y_i}{1-e} \\
    &= \frac{1}{n e} \sum_{i=1}^n (1-T_i) Y_i  \\
    &\leq \frac{1}{n e} \sum_{i=1}^n Y_i && \text{(since \(T\) is binary)} \\
    &\leq \frac{M}{e} && \text{(since \(Y_i \leq M\))}.
\end{align*}

Similarly,  we have $T_1(\boldsymbol{Z})\leq \frac{M}{e}$. Therefore, we have shown that \(T_1(\boldsymbol{Z})\) and \(T_0(\boldsymbol{Z})\) are unbiased estimators of \(\mathbb{E}\left[Y^{(1)}\right]\) and \(\mathbb{E}\left[Y^{(0)}\right] > 0\), respectively. We also established that \(M/e \geq T_0(\boldsymbol{Z}) \geq m/(1-e) > 0\) and \(|T_1(\boldsymbol{Z})| \leq M/e\).
Furthermore, we pointed out that \(\Var(T_1(\boldsymbol{Z})) = O_p\left(\frac{1}{n}\right)\) and \(\Var(T_0(\boldsymbol{Z})) = O_p\left(\frac{1}{n}\right)\). Applying \Cref{Th2}, we obtain:
\[\left|\mathbb{E}\left[\hat{\tau}_{\textrm{\tiny RR, HT, n}}\right] - \tau_{\textrm{\tiny RR}}\right| \leq \frac{M^2(1-e)^2}{ne^2m^2}\left(\frac{M(1-e)}{me} + 1\right) \leq \frac{2M^3(1-e)^3}{nm^3e^3},\]
and
\begin{align*}
\left|\Var(\hat{\tau}_{\textrm{\tiny RR, HT, n}}) - V_{\textrm{\tiny RR, HT}}\right| \leq \frac{2M^2(1-e)^4}{nm^4e^2}\left(\frac{M^2(1-e)^2}{m^2e^2}+1\right)\leq \frac{4M^4(1-e)^6}{nm^6e^4}.
\end{align*}

\textbf{Optimal choice of \(e\)}
The optimal value of \(e_{opt}\) is the one that minimizes the variance of the Ratio Horvitz-Thomson estimator. Therefore, we need to solve:
\[\inf_{e \in (0,1)} \tau_{\textrm{\tiny RR}}^2 \left(\frac{\mathbb{E}\left[\left(Y^{(1)}\right)^2\right]}{e\mathbb{E}\left[Y^{(1)}\right]^2}+ \frac{\mathbb{E}\left[\left(Y^{(0)}\right)^2\right]}{(1-e)\mathbb{E}\left[Y^{(0)}\right]^2}\right)\]
Noting that the variance we want to minimize is convex in \(e\), we can derive the variance and set it to \(0\) to find \(e_{opt}\). We have:
\[\frac{C_1}{e_{opt}^2} = \frac{C_0}{(1-e_{opt})^2}\]
where \(C_1 = \frac{\mathbb{E}\left[\left(Y^{(1)}\right)^2\right]}{\mathbb{E}\left[Y^{(1)}\right]^2}\) and \(C_0 = \frac{\mathbb{E}\left[\left(Y^{(0)}\right)^2\right]}{\mathbb{E}\left[Y^{(0)}\right]^2}\). 
\begin{itemize}
    \item If \(\frac{\Var\left(Y^{(1)}\right)}{\mathbb{E}[Y^{(1)}]^2} = \frac{\Var\left(Y^{(0)}\right)}{\mathbb{E}[Y^{(0)}]^2}\):
    \[e_{opt} = 0.5\]
    \item otherwise:
    \[e_{opt} = \frac{\mathbb{E}\left[\left(Y^{(1)}\right)^2\right]\mathbb{E}\left[Y^{(0)}\right]^2-\sqrt{\mathbb{E}\left[\left(Y^{(1)}\right)^2\right]\mathbb{E}\left[\left(Y^{(0)}\right)^2\right]}\mathbb{E}\left[Y^{(1)}\right]\mathbb{E}\left[Y^{(0)}\right]}{\mathbb{E}\left[\left(Y^{(1)}\right)^2\right]\mathbb{E}\left[Y^{(0)}\right]^2-\mathbb{E}\left[\left(Y^{(0)}\right)^2\right]\mathbb{E}\left[Y^{(1)}\right]^2} \in (0,1)\]
\end{itemize}
\end{proof}

\subsubsection{Link with existing asymptotic confidence intervals}
\label{app_link_existing_CI}

According to \Cref{prop:N}, a $(1-\alpha)$ asymptotic confidence interval for $\tau_{\textrm{\tiny RR}}$ is given by 
\begin{align}
\left[ \hat{\tau}_{\textrm{\tiny RR,N,n}} \pm \frac{\sqrt{\widehat{V_{\textrm{\tiny RR,N}}}}z_{1-\alpha/2}}{n} \right]
\end{align}
with $\widehat{V_{\textrm{\tiny RR,N}}}$ an estimator of 
\begin{align*}
V_{\textrm{\tiny RR,N}} &=\tau_{\textrm{\tiny RR}}^2\left(\frac{\Var\left(Y^{(1)}\right)}{e\mathbb{E}[Y^{(1)}]^2} + \frac{\Var\left(Y^{(0)}\right)}{(1-e)\mathbb{E}[Y^{(0)}]^2}\right).
 \end{align*}
Now, assume that $Y^{(0)}, Y^{(1)} \in \{0,1\}$ with associated probabilities $\mathbb{P}[Y^{(0)} = 1]= p_0$ and $\mathbb{P}[Y^{(1)} = 1]= p_1$. In this setting, the variance $V_{\textrm{\tiny RR,N}}$ takes the form 
\begin{align*}
\frac{V_{\textrm{\tiny RR,N}}}{n} &=\frac{\tau_{\textrm{\tiny RR}}^2}{n}\left(\frac{\Var\left(Y^{(1)}\right)}{e\mathbb{E}[Y^{(1)}]^2} + \frac{\Var\left(Y^{(0)}\right)}{(1-e)\mathbb{E}[Y^{(0)}]^2}\right)\\
& = \tau_{\textrm{\tiny RR}}^2\left(\frac{p_1 (1-p_1)}{N_1 p_1^2} + \frac{p_0 (1-p_0)}{N_0 p_0^2}\right)\\
& = \tau_{\textrm{\tiny RR}}^2\left(\frac{1-p_1}{N_1 p_1} + \frac{1-p_0}{N_0 p_0}\right)\\
& = \tau_{\textrm{\tiny RR}}^2\left(\frac{1}{N_1 p_1} -\frac{1}{N_1} + \frac{1}{N_0 p_0} -\frac{1}{N_0}\right)\\
& = \tau_{\textrm{\tiny RR}}^2\left(\frac{1}{N_1 p_1} -\frac{1}{N_1} + \frac{1}{N_0 p_0} -\frac{1}{N_0}\right).
 \end{align*}
An estimation of such a quantity can be constructed by replacing $p_1$ (resp. $p_0$) by $(1/N_1)\sum_{i=1}^n T_i Y_i$ (resp. $(1/N_0)\sum_{i=1}^n (1 - T_i) Y_i$), which leads to 
\begin{align}
\frac{\widehat{V_{\textrm{\tiny RR,N}}}}{n} =  \hat{\tau}_{\textrm{\tiny RR}}^2\left(\frac{1}{\sum_{i=1}^n T_i Y_i} -\frac{1}{N_1} + \frac{1}{\sum_{i=1}^n (1 - T_i) Y_i} -\frac{1}{N_0}\right).   
\end{align}
Thus, a $(1-\alpha)$ asymptotic confidence interval for $\tau_{\textrm{\tiny RR}}$ is given by 
\begin{align}
& \left[ \hat{\tau}_{\textrm{\tiny RR,N,n}} \pm z_{1-\alpha/2} \hat{\tau}_{\textrm{\tiny RR,N,n}} \sqrt{\left(\frac{1}{\sum_{i=1}^n T_i Y_i} -\frac{1}{N_1} + \frac{1}{\sum_{i=1}^n (1 - T_i) Y_i} -\frac{1}{N_0}\right)} \right] \\
& = \left[ \hat{\tau}_{\textrm{\tiny RR,N,n}} \left( 1  \pm z_{1-\alpha/2}  \sqrt{\left(\frac{1}{\sum_{i=1}^n T_i Y_i} -\frac{1}{N_1} + \frac{1}{\sum_{i=1}^n (1 - T_i) Y_i} -\frac{1}{N_0}\right)}\right)  \right].
\end{align}
Finally, since $e^{x}$ is equivalent to $1+x$ near $x=0$, the above interval is equivalent to  that given by \eqref{eq_CI_existing_literature}, which concludes the proof. 

\subsubsection{Delta method with $\log$ function}
\label{app_delta_method_log}

According to \Cref{prop:N}, we know that 
\begin{align}
 \sqrt{n}\left(\hat{\tau}_{\textrm{\tiny RR,N,n}} - \tau_{\textrm{\tiny RR}} \right) \stackrel{d}{\rightarrow} \mathcal{N}\left(0, V_{\textrm{\tiny RR,N}} \right),
 \end{align}
where 
\begin{align*}
V_{\textrm{\tiny RR,N}} &=\tau_{\textrm{\tiny RR}}^2\left(\frac{\Var\left(Y^{(1)}\right)}{e\mathbb{E}[Y^{(1)}]^2} + \frac{\Var\left(Y^{(0)}\right)}{(1-e)\mathbb{E}[Y^{(0)}]^2}\right).
 \end{align*}
Using the Delta method, with the function $\theta \mapsto \log(\theta)$, we obtain
\begin{align}
\sqrt{n}\left(\log(\hat{\tau}_{\textrm{\tiny RR,N,n}}) - \log(\tau_{\textrm{\tiny RR}}) \right) \stackrel{d}{\rightarrow} \mathcal{N}\left(0, (1/\tau_{\textrm{\tiny RR}})^2 V_{\textrm{\tiny RR,N}} \right).
\end{align} 
Thus, a $(1-\alpha)$ asymptotic confidence interval for $\log(\tau_{\textrm{\tiny RR}})$ is given by 
\begin{align}
    \left[ \log(\hat{\tau}_{\textrm{\tiny RR,N,n}}) \pm z_{1-\alpha/2} \sqrt{\frac{V_{\textrm{\tiny RR,N}}}{n \tau_{\textrm{\tiny RR}}^2 }}\right].
\end{align}
Letting $V_{\textrm{\tiny log RR,N}} = V_{\textrm{\tiny RR,N}}/\tau_{\textrm{\tiny RR}}^2 $, a $(1-\alpha)$ asymptotic confidence interval for $\tau_{\textrm{\tiny RR}}$ is 
\begin{align}
    \left[ \hat{\tau}_{\textrm{\tiny RR,N,n}} 
    \exp\left( \pm z_{1-\alpha/2} \sqrt{\frac{V_{\textrm{\tiny log RR,N}}}{n  }} \right) \right].\label{eq_proof_log_CI1}
\end{align}
Now, note that, if $Y^{(0)}, Y^{(1)} \in \{0,1\}$ with $\mathbb{P}[Y^{(t)} = 1] = p_t$, we have
\begin{align}
V_{\textrm{\tiny log RR,N}} & = \frac{\Var\left(Y^{(1)}\right)}{e\mathbb{E}[Y^{(1)}]^2} + \frac{\Var\left(Y^{(0)}\right)}{(1-e)\mathbb{E}[Y^{(0)}]^2} \\
& = \frac{p_1 ( 1 - p_1)}{e p_1^2} + \frac{p_0 ( 1 - p_0)}{(1-e) p_0^2} \\
& = \frac{1}{ep_1} - \frac{1}{e} + \frac{1}{ep_0} -  \frac{1}{1-e}.
\end{align}
Hence, 
\begin{align}
\frac{V_{\textrm{\tiny log RR,N}}}{n}
& = \frac{1}{e n p_1} - \frac{1}{e n } + \frac{1}{e n p_0} -  \frac{1}{n(1-e)},
\end{align}
which can be estimated replacing $ne$ (resp. $n(1-e)$) by $N_1 = \sum_{i=1}^n T_i$ (resp. $N_0 = n-N_1$ and $enp_1$ (resp. $enp_0$) by $\sum_{i=1}^n Y_i T_i$ (resp. $\sum_{i=1}^n Y_i (1-T_i)$). Replacing $V_{\textrm{\tiny log RR,N}}/n$ by such an estimate in the asymptotic confidence interval \eqref{eq_proof_log_CI1} leads to the well-known formula presented in Equation~\eqref{eq_CI_existing_literature}.

\subsection{Proofs of \Cref{risk_ratio_OBS}}

\subsubsection{Risk Ratio Inverse Propensity Weighting}

\begin{proof}[Proof of \Cref{prop:ipw_normality_obs}]\label{proof:ipw_normality_obs}
\ \\
\textbf{Asymptotic bias and variance of the oracle Risk Ratio IPW estimator} Recall that the oracle Risk Ratio IPW is defined as
\begin{align*}
\tau_{\textrm{\tiny RR,IPW}}^\star = \left( \sum_{i=1}^n  \frac{T_iY_i}{e(X_i)} \right) \big/ \left( \sum_{i=1}^n 
\frac{(1-T_i)Y_i}{1-e(X_i)} \right),
\end{align*}
where the propensity score \(e\) is assumed to be known. Let us define \(g_1(Z) = TY/e(X)\) and \(g_0(Z) = (1-T)Y/(1-e(X))\) with \(Z= (X,T,Y)\). Since 
\begin{align*}
    \frac{m}{1-\eta}\leq g_1(Z) \leq \frac{M}{\eta} \quad \text{and} \quad g_0(Z) \leq \frac{M}{\eta}, 
\end{align*}
the function $g_0$ and $g_1$ are bounded from above and below and thus square integrable. Besides, \(\mathbb{E}\left[g_0(Z_i)\right] = \mathbb{E}\left[Y^{(0)} \right]\) and \(\mathbb{E}\left[g_1(Z_i)\right] = \mathbb{E}\left[Y^{(1)} \right]\). We can therefore apply \Cref{Th2} and conclude that
\begin{align*}
    \sqrt{n} (\tau_{\textrm{\tiny RR,IPW}}^\star - \tau_{\textrm{\tiny RR}}) \to \mathcal{N}(0, V_{\textrm{\tiny RR,IPW}}), 
\end{align*}
where 
\begin{align}
V_{\textrm{\tiny RR,IPW}} =\tau_{\textrm{\tiny RR}}^2\Var\left(\frac{\frac{T_iY_i}{e(X_i)}}{\mathbb{E}\left[Y^{(1)}\right]} - \frac{\frac{(1-T_i)Y_i}{1-e(X_i)}}{\mathbb{E}\left[Y^{(1)}\right]}\right).    
\end{align}   
Moreover, 
\begin{align*}
    \Var\left(\frac{TY}{e(X)}\right) &= \mathbb{E}\left[\left(\frac{TY}{e(X)}\right)^2\right] - \mathbb{E}\left[\frac{TY}{e(X)}\right]^2 \\
    &= \mathbb{E}\left[\frac{TY^2}{e(X)^2}\right] - \mathbb{E}\left[Y^{(1)} \right]^2 \\
    &= \mathbb{E}\left[\frac{1}{e(X)^2}\mathbb{E}\left[T(Y^{(1)})^2 | X\right]\right] - \mathbb{E}\left[Y^{(1)} \right]^2 \\
    &= \mathbb{E}\left[\frac{1}{e(X)}\mathbb{E}\left[(Y^{(1)})^2 | X\right]\right] - \mathbb{E}\left[Y^{(1)} \right]^2 \\
    &= \mathbb{E}\left[\frac{1}{e(X)}\mathbb{E}\left[(Y^{(1)})^2 | X\right]\right] - \mathbb{E}\left[Y^{(1)} \right]^2 \\
    &= \mathbb{E}\left[\frac{(Y^{(1)})^2}{e(X)}\right] - \mathbb{E}\left[Y^{(1)}\right]^2. 
\end{align*}
Similarly
\begin{align*}
    \Var\left(\frac{(1-T)Y}{1-e(X)}\right) &=  \mathbb{E}\left[\frac{(Y^{(0)})^2}{1-e(X)}\right] - \mathbb{E}\left[Y^{(0)}\right]^2. 
\end{align*}
Additionally, the covariance satisfies 
\begin{align*}
    \Cov\left(\frac{TY}{e(X)}, \frac{(1-T)Y}{1-e(X)}\right) &= \mathbb{E}\left[\left(\frac{TY}{e(X)} -\mathbb{E}\left[Y^{(1)} \right]\right)\left(\frac{(1-T)Y}{1-e(X)} -\mathbb{E}\left[Y^{(0)} \right]\right)\right] \\
    &= \mathbb{E}\left[\frac{TY}{e(X)}\frac{(1-T)Y}{1-e(X)}\right] - \mathbb{E}\left[Y^{(1)} \right] \mathbb{E}\left[\frac{(1-T)Y}{1-e(X)} \right]\\
    & \quad - \mathbb{E}\left[Y^{(0)} \right] \mathbb{E}\left[\frac{TY}{e(X)} \right] + \mathbb{E}\left[Y^{(1)} \right]  \mathbb{E}\left[Y^{(0)} \right] \\
    &= -\mathbb{E}\left[Y^{(1)} \right]  \mathbb{E}\left[Y^{(0)} \right].
\end{align*}
Therefore, we get that
\begin{align*}
    V_{\textrm{\tiny RR,IPW}} = \tau_{\textrm{\tiny RR}}^2\left(\frac{\mathbb{E}\left[\frac{(Y^{(1)})^2}{e(X)}\right]}{\mathbb{E}\left[Y^{(1)}\right]^2} + \frac{\mathbb{E}\left[\frac{(Y^{(0)})^2}{1-e(X)}\right]}{\mathbb{E}\left[Y^{(0)}\right]^2}\right).
\end{align*}

As a consequence an estimator \(\hat V_{RR-IPW}\) can be derived:

\begin{align}\label{V_RR-IPW}
    \hat V_{RR-IPW} = \hat{\tau}_{\textrm{\tiny RR,IPW,n}}^2 \left(\frac{\frac{1}{n}\sum_{T_i=1}\left(\frac{Y_i}{\hat e(x_i)}\right)^2}{\left(\frac{1}{n}\sum_{T_i=1}Y_i\right)^2} + \frac{\frac{1}{n}\sum_{T_i=0}\left(\frac{Y_i}{1- \hat e(x_i)}\right)^2}{\left(\frac{1}{n}\sum_{T_i=0}Y_i\right)^2}\right)
\end{align}

Since we have \(\mathbb{E}\left[\left(\frac{TY}{e(X)}\right)^2\right] = \mathbb{E}\left[\frac{(Y^{(1)})^2}{e(X)}\right]\).

\textbf{Finite sample bias and variance of the oracle Risk Ratio IPW estimator} Let \(T_1(\boldsymbol{Z}) = \frac{1}{n}\sum_{i=1}^n  \frac{T_iY_i}{e(X_i)}\) and \(T_0(\boldsymbol{Z}) = \frac{1}{n}\sum_{i=1}^n  \frac{(1-T_i)Y_i}{1-e(X_i)}\) where \(\boldsymbol{Z} = (Z_1, \hdots, Z_n)\). We first show that \(\Var(T_1(\boldsymbol{Z})) = O_p\left(\frac{1}{n}\right)\) and \(\Var(T_0(\boldsymbol{Z})) = O_p\left(\frac{1}{n}\right)\):

\begin{align*}
    \Var(T_1(\boldsymbol{Z})) &= \frac{1}{n^2}\Var\left(\sum_{i=1}^n \frac{T_iY_i}{e(X_i)}\right) \\
    &= \frac{1}{n^2} \sum_{i=1}^n \Var\left(\frac{T_iY_i}{e(X_i)}\right) && \text{(by \hyperref[a:i.i.d.]{i.i.d.})}\\
    &= \frac{1}{n} \left(\mathbb{E}\left[\left(\frac{T_iY_i}{e(X_i)}\right)^2\right] - \mathbb{E}\left[\frac{T_iY_i}{e(X_i)} \right]^2\right) && \text{(by law of total expectation)} \\
    &= \frac{\mathbb{E}\left[\frac{(Y^{(1)})^2}{e(X_i)}\right]-\mathbb{E}\left[Y^{(1)} \right]^2}{n} \\
    &= O_p\left(\frac{1}{n}\right)
\end{align*}
Similarly, $\Var(T_0(\boldsymbol{Z})) = O_p\left(\frac{1}{n}\right)$. And we also have:

\begin{align*}
    \mathbb{E}\left[T_1(\boldsymbol{Z})\right] = \mathbb{E}\left[\frac{T_iY_i}{e(X_i)}\right] = \mathbb{E}\left[Y^{(1)}\right]
\end{align*}

\begin{align*}
    \mathbb{E}\left[T_0(\boldsymbol{Z})\right] = \mathbb{E}\left[\frac{(1-T_i)Y_i}{1-e(X_i)}\right] = \mathbb{E}\left[Y^{(0)}\right]
\end{align*}

Therefore, we showed that \(T_1(\boldsymbol{Z})\) and \(T_0(\boldsymbol{Z})\) are respectively unbiased estimators of \(\mathbb{E}\left[Y^{(1)}\right]\) and \(\mathbb{E}\left[Y^{(0)}\right] > 0\) such that $\Var(T_1(\boldsymbol{Z})) = O_p\left(\frac{1}{n}\right)$ and $\Var(T_0(\boldsymbol{Z})) = O_p\left(\frac{1}{n}\right)$. 
By assumption, 
\begin{align*}
    \frac{m}{1-\eta}\leq T_0(\boldsymbol{Z}) \leq \frac{M}{\eta} \quad \text{and} \quad T_1(\boldsymbol{Z}) \leq \frac{M}{\eta},
\end{align*}
thus \(T_0(\boldsymbol{Z})\) and \(T_1(\boldsymbol{Z})\) are bounded. Applying \Cref{Th2}, we obtain
\[\left|\mathbb{E}\left[\hat{\tau}_{\textrm{\tiny RR, IPW, n}}\right] - \tau_{\textrm{\tiny RR}}\right| \leq \frac{2M^3(1-\eta)^3}{nm^3\eta^3},\]
and
\begin{align*}
\left|\Var(\hat{\tau}_{\textrm{\tiny RR, HT, n}}) - V_{\textrm{\tiny RR, HT}}\right| \leq  \frac{4M^4(1-\eta)^6}{nm^6\eta^4}.
\end{align*}
\end{proof}

\subsubsection{Risk Ratio Inverse Propensity Weighting in logistic models}

\begin{proof}[Proof of \Cref{prop:IPW_MLE}]
The likelihood function $L\left(\boldsymbol{\beta}\right)$ is
$$
L\left(\boldsymbol{\beta}\right) = \prod_{i=1}^{n} \operatorname{P}\left(T=t_{i} \mid X=x_{i}\right)=\prod_{i=1}^{n} e\left(x_{i} ; \boldsymbol{\beta}\right)^{t_{i}}\left(1-e\left(x_{i} ;\boldsymbol{\boldsymbol{\beta}}\right)\right)^{1-t_{i}}
$$
where we define $e\left(X ;\boldsymbol{\boldsymbol{\beta}}\right) = \{1+\exp(-X^\top\beta_1 -\beta_0)\}^{-1}$. Now, taking minus the logarithm of this expression, the log likelihood function, denoted $\ln L\left(\boldsymbol{\beta}\right)$, we obtain,

$$
- \ln L\left(\boldsymbol{\beta}\right) = - \sum_{i=1}^{n} t_i \log(e(X_i ; \boldsymbol{\beta})) + (1-t_i)\log(1-e(X_i ; \boldsymbol{\beta})).
$$

The minimization of this quantity can be obtained when looking for the root of the derivative, so that we obtain the following expression,

\begin{equation*}
  - \frac{\partial}{\partial \boldsymbol{\beta_0}} \ln L\left(\boldsymbol{\beta}\right) =  - \sum_{i=1}^{n} \frac{T_{i}-e\left(X_i; \boldsymbol{\beta}\right)}{e\left(X_i; \boldsymbol{\beta}\right)\left(1-e\left(X_i;\boldsymbol{\beta}\right)\right)}\frac{\partial}{\partial \boldsymbol{\beta_0}}e\left(X_i; \boldsymbol{\beta}\right).
\end{equation*}

\begin{equation*}
  - \frac{\partial}{\partial \boldsymbol{\beta_1}} \ln L\left(\boldsymbol{\beta}\right) =  - \sum_{i=1}^{n} \frac{T_{i}-e\left(X_i; \boldsymbol{\beta}\right)}{e\left(X_i; \boldsymbol{\beta}\right)\left(1-e\left(X_i;\boldsymbol{\beta}\right)\right)}\frac{\partial}{\partial \boldsymbol{\beta_1}}e\left(X_i; \boldsymbol{\beta}\right).
\end{equation*}

Therefore, 

\begin{equation}\label{likelihood_cond}
   \frac{\partial}{\partial \boldsymbol{\beta_0}} \ln L\left(\boldsymbol{\beta}\right)  = - \sum_{i=1}^{n} \left(T_{i}-e\left(X_i; \boldsymbol{\beta}\right)\right) \quad \text{and} \quad \frac{\partial}{\partial \boldsymbol{\beta_1}} \ln L\left(\boldsymbol{\beta}\right)  = - \sum_{i=1}^{n} X_i \left(T_{i}-e\left(X_i; \boldsymbol{\beta}\right)\right) 
\end{equation}

In particular, if we apply \ref{likelihood_cond} for the maximum likelihood estimator $\boldsymbol{\hat\beta}_n$, and if we define $\Tilde{X} := (1, X)$, we have:

\begin{align*} 
\frac{\partial}{\partial \boldsymbol{\beta}} \ln L\left(\boldsymbol{\beta}\right)\bigg|_{\boldsymbol{\beta} = \boldsymbol{\hat\beta}_n} = 0 \quad \Longleftrightarrow \quad 
\sum\limits_{i=1}^{n} \Tilde{X_i} \left(T_{i} - e\left(X_i; \boldsymbol{\hat \beta}_n\right)\right) = 0.
\end{align*}

Let $Z = (X,T,Y)$ and $\boldsymbol{\theta} = (\boldsymbol{\beta}, \theta_{0}, \theta_{1}, \theta_{2})$, we define $\psi$ and $\hat{\boldsymbol{\theta}}_n$ as:

\begin{align*}
    \psi(Z, \boldsymbol{\theta}) =\left(\begin{array}{c}
    \Tilde{X}(T-e(X, \boldsymbol{\beta}))\\
    \frac{(1-T)Y}{1-e(X, \boldsymbol{\beta})}-\theta_{0}\\
    \frac{TY}{e(X, \boldsymbol{\beta})}-\theta_{1} \\
    \theta_1 - \theta_2 \theta_0
    \end{array}\right) \quad \textrm{and} \quad 
    \hat{\boldsymbol{\theta}}_n & =\left(\begin{array}{c}
    \boldsymbol{\hat \beta}_n \\
    \hat \tau_{\textrm{IPW},0} \\
    \hat \tau_{\textrm{IPW},1} \\
    \hat \tau_{\textrm{IPW},1}/\hat \tau_{\textrm{IPW},0}
    \end{array}\right) 
\end{align*}
where $\hat \tau_{\textrm{IPW},1} =  \frac{1}{n}\sum_{i=1}^n \frac{TY}{e(X, \boldsymbol{\hat \beta}_n)}$ and $\hat \tau_{\textrm{IPW},0} =  \frac{1}{n}\sum_{i=1}^n \frac{(1-T)Y}{1-e(X, \boldsymbol{\hat \beta}_n)}$. One can note that:

\begin{align*}
    \sum_{i=1}^n \psi_1(Z_i, \hat{\boldsymbol{\theta}}_n) = \sum_{i=1}^{n} \Tilde{X_i} \left(T_{i}-e\left(X_i; \boldsymbol{\hat \beta}_n\right)\right) = 0.
\end{align*}

\begin{align*}
    \sum_{i=1}^n \psi_2(Z_i, \hat{\boldsymbol{\theta}}_n) = \sum_{i=1}^{n} \left(\frac{(1-T_i)Y_i}{1-e(X_i, \boldsymbol{\hat \beta}_n)} - \frac{1}{n}\sum_{j=1}^n \frac{(1-T_j)Y_j}{1-e(X_j, \boldsymbol{\hat \beta}_n)}\right) = 0.
\end{align*}

\begin{align*}
    \sum_{i=1}^n \psi_3(Z_i, \hat{\boldsymbol{\theta}}_n) = \sum_{i=1}^{n} \left(\frac{T_iY_i}{e(X_i, \boldsymbol{\hat \beta}_n)} - \frac{1}{n}\sum_{j=1}^n \frac{T_jY_j}{e(X_j, \boldsymbol{\hat \beta}_n)}\right) = 0.
\end{align*}

\begin{align*}
    \sum_{i=1}^n \psi_4(Z_i, \hat{\boldsymbol{\theta}}_n) = \sum_{i=1}^{n} \hat \tau_{\textrm{IPW},1} - \underbrace{\frac{\hat \tau_{\textrm{IPW},1}}{\hat \tau_{\textrm{IPW},0}}\hat \tau_{\textrm{IPW},0}}_{ \hat \tau_{\textrm{IPW},1}} = 0.
\end{align*}

Gathering the three previous equalities, we obtain
\begin{align}
\sum_{i=1}^n \psi(Z_i, \boldsymbol{ \hat \theta}_n) = 0,
\end{align}

which proves that \( \hat{\boldsymbol{\theta}}_n\) is an M-estimator of type \(\psi\) \citep[see][]{Stefanski2002Mestimation}.  Furthermore, letting \(\boldsymbol{\theta}_{\infty} = (\boldsymbol{\beta}_\infty, \mathbb{E}\left[Y^{(0)}\right], \mathbb{E}\left[Y^{(1)}\right], \mathbb{E}\left[Y^{(1)}\right]/ \mathbb{E}\left[Y^{(0)}\right])\), we can compute the following quantities:

\[
\begin{aligned}
    \mathbb{E}\left[\psi_1(Z, \boldsymbol{\theta}_\infty) \right] &= \mathbb{E}\left[\Tilde{X}(T - e(X))\right] \\
    &= \mathbb{E}\left[\Tilde{X} \cdot \mathbb{E}\left[T - e(X) \mid X\right]\right] && \text{(Law of Total Probability)} \\
    &= \mathbb{E}\left[\Tilde{X} \cdot \left(\mathbb{E}\left[T \mid X\right] - e(X)\right)\right] && \text{($e(X)$ is a function of $X$)} \\
    &= 0 && \text{(Definition of $e(X)$)} \\
\end{aligned}
\]

Furthermore, note that:

\[
\mathbb{E}\left[\psi_2(Z, \boldsymbol{\theta}_\infty) \right] = \mathbb{E}\left[\frac{1}{n} \sum_{i=1}^{n}\left(\frac{T_{i} Y_{i}}{e(X_i)}\right)\right] - \mathbb{E}[Y^{(1)}],
\]

and the following holds:

\[
\begin{aligned}
    \mathbb{E}\left[\frac{1}{n} \sum_{i=1}^{n}\left(\frac{T_{i} Y_{i}}{e(X_i)}\right)\right] 
    &= \frac{1}{n} \sum_{i=1}^{n} \mathbb{E}\left[\frac{T_{i} Y_{i}}{e(X_i)}\right] && \text{(Linearity of expectation)} \\
    &= \mathbb{E}\left[\frac{T Y^{(1)}}{e(X)}\right] && \text{(Independence and consistency)} \\
    &= \mathbb{E}\left[\mathbb{E}\left[\frac{T Y^{(1)}}{e(X)} \mid X\right]\right] && \text{(Law of Total Probability)} \\
    &= \mathbb{E}\left[\frac{1}{e(X)} \mathbb{E}\left[T Y^{(1)} \mid X\right]\right] && \text{($e(X)$ depends on $X$)} \\
    &= \mathbb{E}\left[\frac{1}{e(X)} \mathbb{E}\left[Y^{(1)} \mid X\right] \mathbb{E}\left[T \mid X\right]\right] && \text{(No confounding assumption)} \\
    &= \mathbb{E}\left[\mathbb{E}\left[Y^{(1)} \mid X\right]\right] && \text{(Definition of $e(X)$)} \\
    &= \mathbb{E}[Y^{(1)}].
\end{aligned}
\]

This shows that \(\mathbb{E}\left[\psi_2(Z, \boldsymbol{\theta}_\infty) \right] = 0\). Similarly, one can show that \(\mathbb{E}\left[\psi_3(Z, \boldsymbol{\theta}_\infty) \right] = 0\). Finally, we also have \(\psi_4(Z, \boldsymbol{\theta}_\infty) = 0\). Therefore, 
\begin{align}\label{M_esti}
    \mathbb{E}\left[\psi(Z, \boldsymbol{\theta}_\infty) \right] = 0. 
\end{align}

We now show that $\boldsymbol{\theta}_\infty$ defined above is the unique value that satisfies \ref{M_esti}. Let 
\[
L(\beta)=-\,\mathbb{E}\Bigl[T \,\ln\bigl(e(X,\beta)\bigr) \;+\;\bigl(1 - T\bigr)\,\ln\bigl(1 - e(X,\beta)\bigr)\Bigr].
\]
A direct calculation shows that 
\[
\nabla_{\beta} L(\beta)=\mathbb{E}\Bigl[\Tilde{X} \,\bigl(e(X,\beta) \;-\; T \bigr)\Bigr] \quad \text{and} \quad \nabla^2_{\beta} L(\beta)=\mathbb{E}\Bigl[\Tilde{X}\,\Tilde{X}^\top \, e(X,\beta)\,\bigl(1 - e(X,\beta)\bigr)\Bigr]
\]
Since $\mathbb{E}[T \mid X] = e(X,\beta_\infty)$, so at $\beta=\beta_\infty$,
\[
\nabla_{\beta} L(\beta_\infty)=\mathbb{E}\Bigl[\Tilde{X} \,\bigl(e(X,\beta_\infty) - T\bigr)\Bigr] = 0 
\]
making $\beta_\infty$ a stationary point. Furthermore, using overlap we have $e(X,\beta)\bigl(1 - e(X,\beta)\bigr) \geq  \eta^2$ therefore $\forall v \in \mathbb{R}^{p+1}$:
\begin{align*}
    v^\top \nabla^2_{\beta} L(\beta)v&=\mathbb{E}\Bigl[||\Tilde{X}^\top v||_2^2 \, e(X,\beta)\,\bigl(1 - e(X,\beta)\bigr)\Bigr]\\
    &\geq \eta^2 \mathbb{E}\Bigl[||\Tilde{X}^\top v||_2^2\Bigr]\\
     &\geq \eta^2 v^\top\mathbb{E}\Bigl[ \Tilde{X}\,\Tilde{X}^\top \,\Bigr]v.
\end{align*}
Since we assumed that $\mathbb{E}\Bigl[
X\,X^\top\Bigr]$ is positive definite, the Hessian $\nabla^2_{\beta} L(\beta)$ is positive definite, so $L(\beta)$ is strictly convex. Hence there is a unique global minimizer of $L(\beta)$; since $\beta_\infty$ is a critical point, it must be that unique minimizer. Consequently, any solution to 
\[
\mathbb{E}\Bigl[\Tilde{X}\,\bigl(e(X,\beta) - T\bigr)\Bigr] = 0
\]
must equal $\beta_\infty$. Since the second and third two components of \(\psi\) are linear with respect to \(\theta_0\) and \(\theta_1\) and since the forth component is linear with respect to \(\theta_2\), \(\boldsymbol{\theta}_{\infty}\) is the only value satisfying \eqref{M_esti}.

We want to show that for every $\boldsymbol{\theta}$ in a neighborhood of $\boldsymbol{\theta_\infty}$, all the components of the second derivatives
\[
\left| \frac{\partial^2}{\partial^2 \boldsymbol{\theta}} \psi_k(z, \boldsymbol{\theta}) \right|
\]
are integrable for all $k \in \{1,4\}$. Since $\boldsymbol{\theta} = (\boldsymbol{\beta}, \theta_{0}, \theta_{1}, \theta_{2})$, we need to show that for $k \in \{1,4\}$ and $i,j \in \{0,2\}$ the following quantities are integrable
\[
\left| \frac{\partial^2}{\partial \theta_i \partial \theta_j} \psi_k(z, \boldsymbol{\theta}) \right| \qquad \left| \frac{\partial^2}{\partial \theta_i \partial \boldsymbol{\beta}} \psi_k(z, \boldsymbol{\theta}) \right| \qquad \left| \frac{\partial^2}{\partial \boldsymbol{\beta} \partial \theta_i} \psi_k(z, \boldsymbol{\theta}) \right| \qquad \left| \frac{\partial^2}{\partial \boldsymbol{\beta} \partial \boldsymbol{\beta}} \psi_k(z, \boldsymbol{\theta}) \right|
\]

One can note that the first three quantities are bounded by 1 and therefore integrable. Hence, it suffices to consider
\[
\left| \frac{\partial^2}{\partial^2 \boldsymbol{\beta}} \psi_k(z, \boldsymbol{\theta}) \right|,
\]
where $k \in \{1,2,3\}$, since $\psi_4$ does not depend on $\theta_1$. For $k=1$, a direct calculation yields
\[
\left|\frac{\partial^2}{\partial^2 \boldsymbol{\beta}} \psi_1(z, \boldsymbol{\theta})\right|
= \bigl|- \Tilde{X_k} \,\Tilde{X_l}\,\Tilde{X_m} \,e(X, \boldsymbol{\beta})\bigl(1-e(X, \boldsymbol{\beta})\bigr)\bigl(1-2e(X, \boldsymbol{\beta})\bigr)\bigr|
\leq \bigl|\Tilde{X_k} \,\Tilde{X_l}\,\Tilde{X_m}\bigr|.
\]
By Cauchy--Schwarz or Hölder’s inequality,
\[
\mathbb{E}\bigl[\bigl|\Tilde{X_k}\,\Tilde{X_l}\,\Tilde{X_m}\bigr|\bigr] 
\;\le\; \mathbb{E}\bigl[(\Tilde{X_k})^2\bigr]^{1/2}\,\mathbb{E}\bigl[(\Tilde{X_l}\,\Tilde{X_m})^2\bigr]^{1/2}
\;\le\; \mathbb{E}\bigl[(\Tilde{X_k})^2\bigr]^{1/2}\,\mathbb{E}\bigl[(\Tilde{X_l})^4\bigr]^{1/4}\,\mathbb{E}\bigl[(\Tilde{X_m})^4\bigr]^{1/4}.
\]
Since $\Tilde{X}$ is sub-Gaussian, it has finite moments of all orders, implying integrability of $\bigl|\Tilde{X_k}\,\Tilde{X_l}\,\Tilde{X_m}\bigr|$.

For $k=2$ or $k=3$, we similarly get
\[
\left|\frac{\partial^2}{\partial^2 \boldsymbol{\beta}} \psi_2(z, \boldsymbol{\theta})\right|
\leq
\bigl|Y\,\exp(2\,\Tilde{X}^\top \boldsymbol{\beta})\,\Tilde{X}\,\Tilde{X}^\top\bigr|,
\]
and
\[
\left|\frac{\partial^2}{\partial^2 \boldsymbol{\theta}} \psi_3(z, \boldsymbol{\theta})\right|
\leq
\bigl|Y\,\exp(2\,\Tilde{X}^\top \boldsymbol{\beta})\,\Tilde{X}\,\Tilde{X}^\top\bigr|.
\]
Hence, it remains to show that
\[
\mathbb{E}\Bigl[\bigl|Y\,\exp(2\,\Tilde{X}^\top \boldsymbol{\beta})\,\Tilde{X}_k\,\Tilde{X}_l\bigr|\Bigr]
\]
is finite. By Cauchy--Schwarz,
\[
\mathbb{E}\Bigl[\bigl|Y\,\exp(2\,\Tilde{X}^\top \boldsymbol{\beta})\,\Tilde{X}_k\,\Tilde{X}_l\bigr|\Bigr]
\;\le\;
\sqrt{\mathbb{E}\bigl[Y^2\bigr]}\,\sqrt{\mathbb{E}\Bigl[\exp\bigl(4\,\Tilde{X}^\top \boldsymbol{\beta}\bigr)\,(\Tilde{X}_k\,\Tilde{X}_l)^2\Bigr]}.
\]
Since $\Tilde{X}$ is sub-Gaussian, its exponential moments are finite. Specifically, for some $\sigma>0$,
\[
\mathbb{E}\bigl[\exp\bigl(\lambda\,\boldsymbol{v}^\top \Tilde{X}\bigr)\bigr] \le \exp\Bigl(\tfrac{\lambda^2 \|\boldsymbol{v}\|_2^2 \sigma^2}{2}\Bigr)\quad \forall \lambda\in\mathbb{R},\ \boldsymbol{v}\in\mathbb{R}^d,
\]
which guarantees $\mathbb{E}[\exp(8\,\Tilde{X}^\top \boldsymbol{\beta})]$ is finite. Moreover, sub-Gaussian random variables have finite polynomial moments, so $\mathbb{E}[(\Tilde{X}_k\,\Tilde{X}_l)^4]$ is also finite. Therefore,
\[
\bigl|Y\,\exp(2\,\Tilde{X}^\top \boldsymbol{\beta})\,\Tilde{X}_k\,\Tilde{X}_l\bigr|
\]
is integrable.

Collecting these results, we conclude that every second derivative
\(\left| \frac{\partial^2}{\partial^2 \boldsymbol{\theta}} \psi_k(z, \boldsymbol{\theta}) \right|\)
is integrable for all $k \in \{1,4\}$ in a neighborhood of $\boldsymbol{\theta_\infty}$.

Define
\[
\begin{aligned}
    A\left(\theta_\infty \right) &= \mathbb{E}\left[\frac{\partial \psi}{\partial \theta}\bigg|_{\theta=\theta_\infty}\right] \quad \text{and} \quad
    B(\theta_\infty) = \mathbb{E}\left[\psi(Z, \theta_\infty) \psi(Z, \theta_\infty)^T\right].
\end{aligned}
\]

Next, we verify the conditions of Theorem~7.2 in \citet{Stefanski2002Mestimation}. To do so, we compute \(A\left(\theta_\infty \right)\) and \(B\left(\theta_\infty \right)\). Since

\begin{align}
     \frac{\partial \psi}{\partial \theta} (Z, \theta)=\left(\begin{array}{cccc}
    -e(X, \boldsymbol{\beta})\left(1-e(X, \boldsymbol{\beta})\right)\Tilde{X}\Tilde{X}^\top & 0 & 0 & 0 \\
    \frac{(1-T)Ye(X, \boldsymbol{\beta})}{1-e(X, \boldsymbol{\beta})}\Tilde{X}^\top &-1 & 0 & 0 \\
    -\frac{TY(1-e(X, \boldsymbol{\beta}))}{e(X, \boldsymbol{\beta})}\Tilde{X}^\top & 0 & -1 & 0 \\
    0 & -\theta_2 & 1 & -\theta_0
    \end{array}\right),  
\end{align}

We obtain

\[
A\left(\boldsymbol{\theta_\infty} \right) = 
\begin{pmatrix}
    -Q & 0 & 0 & 0 \\
    c_{10}^\top & -1 & 0 & 0 \\
    -c_{01}^\top & 0 & -1 & 0 \\
    0 & -\frac{\mathbb{E}\left[Y^{(1)}\right]}{\mathbb{E}\left[Y^{(0)}\right]} & 1 & -\mathbb{E}\left[Y^{(0)}\right]
\end{pmatrix},
\]

where:  
\begin{itemize}
    \item \( Q = \mathbb{E}\left[e(X)(1 - e(X))\Tilde{X}\Tilde{X}^\top\right] \),
    \item \( c_{10} = \mathbb{E}[\Tilde{X}\,e(X)Y^{(0)}] \) and \( c_{01} = \mathbb{E}[\Tilde{X}(1 - e(X))Y^{(1)}] \),
\end{itemize}

which using Schur complement leads to:
\begin{align*}
A^{-1}\left(\boldsymbol{\theta_\infty} \right)= \left(\begin{array}{cccc}
-Q^{-1} & 0 & 0 & 0 \\
-c_{10}^\top Q^{-1} & -1 & 0 & 0\\
c_{01}^\top Q^{-1} & 0 & -1 & 0\\
\left(\frac{\mathbb{E}[Y(1)]}{\mathbb{E}[Y(0)]^2}c_{10}^\top + \frac{1}{\mathbb{E}\left[Y^{(0)}\right]}c_{01}^\top\right)Q^{-1} &  \frac{\mathbb{E}[Y(1)]}{\mathbb{E}[Y(0)]^2} & \frac{-1}{\mathbb{E}\left[Y^{(0)}\right]} &  \frac{-1}{\mathbb{E}\left[Y^{(0)}\right]}\\
\end{array}\right).
\end{align*}

Regarding \(B\left(\theta_\infty \right) \), elementary calculations show that 
       \begin{align*}
        B(\theta_\infty)
        & = \left(\begin{array}{ccccc}
        Q & - c_{10} & c_{01} & 0\\
    - c_{10}^\top& \Var\left(\frac{(1-T)Y}{1-e(X)}\right) & -\mathbb{E}\left[Y^{(1)} \right]  \mathbb{E}\left[Y^{(0)} \right] & 0\\
    c_{01}^\top & 	-\mathbb{E}\left[Y^{(1)} \right]  \mathbb{E}\left[Y^{(0)} \right] & \Var\left(\frac{TY}{e(X)}\right) & 0\\
    0 & 0 & 0 & 0
    \end{array}\right),
    \end{align*}

    Based on the previous calculations, we have
    \begin{itemize}
        \item  \(\psi(z,\boldsymbol{\theta})\)  and its first two partial derivatives with respect to \(\boldsymbol{\theta}\) exist for all \(z\) and for all \(\boldsymbol{\theta}\) in the neighborhood of \(\boldsymbol{\theta_\infty}\).
        \item For each \(\boldsymbol{\theta}\) in the neighborhood of \(\boldsymbol{\theta_\infty}\), we have for all \(k \in \{1, 4\}\)\( \left| \frac{\partial^2}{\partial^2 \boldsymbol{\theta}} \psi_k(z, \boldsymbol{\theta}) \right| \)is integrable.
        \item \(A(\theta_\infty)\) exists and is nonsingular.
        \item \(B(\theta_\infty)\) exists and is finite.
    \end{itemize}
    
    Since we have: 
    \[\sum_{i=1}^n \psi(T_i, Y_i, \hat{\boldsymbol{\theta}}_n) = 0 \quad \text{and} \quad \hat{\boldsymbol{\theta}}_n \stackrel{p}{\rightarrow} \theta_\infty .\]
    Then the conditions of Theorem~7.2 in  \citet{Stefanski2002Mestimation} are satisfied, we have
        \[\sqrt{n}\left(  \hat{\boldsymbol{\theta}}_n - \theta_\infty \right) \stackrel{d}{\rightarrow} \mathcal{N}\left(0, A(\theta_\infty)^{-1}B(\theta_\infty)(A(\theta_\infty)^{-1})^{\top} \right),\]
Since we are only interested in the bottom right term of the sandwich term , we only need to compute \(u_{d+3}^TA(\theta_\infty)^{-1}B(\theta_\infty)(A(\theta_\infty)^{-1})^{\top} u_{d+3}\) where \(u_{d+3}\) is the last vector canonical basis vector of \(\mathbb{R}^{d+3}\). Hence,

\begin{align*}
    \left[A(\theta_\infty)^{-1}B(\theta_\infty)(A(\theta_\infty)^{-1})^{\top}\right]_{d+3, d+3} &= u_{d+3}^TA(\theta_\infty)^{-1}B(\theta_\infty)(A(\theta_\infty)^{-1})^{\top} u_{d+3}\\
    &= u_{d+3}^TA(\theta_\infty)^{-1} B(\theta_\infty)(u_{d+3}^TA(\theta_\infty)^{-1})^{\top}\\
\end{align*}

Noting that \( u_{d+3}^\top A(\theta_\infty)^{-1} = \left(
\left(\frac{\mathbb{E}[Y(1)]}{\mathbb{E}[Y(0)]^2}c_{10}^\top + \frac{1}{\mathbb{E}\left[Y^{(0)}\right]}c_{01}^\top\right)Q^{-1},  \frac{\mathbb{E}[Y(1)]}{\mathbb{E}[Y(0)]^2}, \frac{-1}{\mathbb{E}[Y(0)]},  \frac{-1}{\mathbb{E}[Y(0)]}\right) \) where we used that \((Q^{-1})^\top = Q^{-1}\) since \(Q\) is symmetric. We defining 

\[u_{d+3}^\top A(\theta_\infty)^{-1}B(\theta_\infty) := W = (W_1, W_2, W_3, W_4)\]

where:
\[
   W_1 = \left[\frac{\mathbb{E}[Y(1)]}{\mathbb{E}[Y(0)]^2}c_{10}^\top + \frac{1}{\mathbb{E}[Y(0)]}c_{01}^\top\right]\underbrace{Q^{-1}Q}_{I_d} \;-\; c_{10}^\top \frac{\mathbb{E}[Y(1)]}{\mathbb{E}[Y(0)]^2} \;-\; c_{01}^\top\,\frac{1}{\mathbb{E}[Y(0)]}= 0,
\]

\[
   W_2 = -\left[\frac{\mathbb{E}[Y(1)]}{\mathbb{E}[Y(0)]^2}c_{10}^\top + \frac{1}{\mathbb{E}[Y(0)]}c_{01}^\top\right]Q^{-1}c_{10}  \;+\; \Var\left(\frac{(1-T)Y}{1-e(X)}\right)\frac{\mathbb{E}[Y(1)]}{\mathbb{E}[Y(0)]^2} \;+\; \,\mathbb{E}\left[Y^{(1)} \right]  ,
\]

\[
   W_3 = \left[\frac{\mathbb{E}[Y(1)]}{\mathbb{E}[Y(0)]^2}c_{10}^\top + \frac{1}{\mathbb{E}[Y(0)]}c_{01}^\top\right]Q^{-1}c_{01} \;-\;   \frac{\mathbb{E}[Y(1)]^2}{\mathbb{E}[Y(0)]} \;-\; \,\frac{\Var\left(\frac{TY}{e(X)}\right)}{\mathbb{E}[Y(0)]},
\]

and \(W_4 = 0\). Finally, gathering all the terms we have:

\begin{align*}
    \left[A(\theta_\infty)^{-1}B(\theta_\infty)(A(\theta_\infty)^{-1})^{\top}\right]_{d+3, d+3} &= W_2\frac{\mathbb{E}[Y(1)]}{\mathbb{E}[Y(0)]^2} - W_3\frac{1}{\mathbb{E}[Y(0)]}\\
    &= -\left[\frac{\mathbb{E}[Y(1)]}{\mathbb{E}[Y(0)]^2}c_{10}^\top + \frac{1}{\mathbb{E}[Y(0)]}c_{01}^\top\right]Q^{-1}c_{10}\frac{\mathbb{E}[Y(1)]}{\mathbb{E}[Y(0)]^2}\\
    &-\left[\frac{\mathbb{E}[Y(1)]}{\mathbb{E}[Y(0)]^2}c_{10}^\top + \frac{1}{\mathbb{E}[Y(0)]}c_{01}^\top\right]Q^{-1}c_{01}\frac{1}{\mathbb{E}[Y(0)]}\\
    &+ \Var\left(\frac{(1-T)Y}{1-e(X)}\right)\left(\frac{\mathbb{E}[Y(1)]}{\mathbb{E}[Y(0)]^2}\right)^2 + \frac{\Var\left(\frac{TY}{e(X)}\right)}{\left(\mathbb{E}[Y(0)]\right)^2}\\
    &+2\frac{\mathbb{E}[Y(1)]^2}{\mathbb{E}[Y(0)]^2}.
\end{align*}

One can note that:
\begin{align*}
 \Var\left(\frac{(1-T)Y}{1-e(X)}\right)\left(\frac{\mathbb{E}[Y(1)]}{\mathbb{E}[Y(0)]^2}\right)^2 + \frac{\Var\left(\frac{TY}{e(X)}\right)}{\left(\mathbb{E}[Y(0)]\right)^2} +2\frac{\mathbb{E}[Y(1)]^2}{\mathbb{E}[Y(0)]^2} &= \underbrace{\left(\frac{\mathbb{E}[Y(1)]}{\mathbb{E}[Y(0)]}\right)^2}_{\tau_{\textrm{\tiny RR}}^2}\left(\frac{\Var\left(\frac{(1-T)Y}{1-e(X)}\right)}{\mathbb{E}[Y(0)]^2} + \frac{\Var\left(\frac{TY}{e(X)}\right)}{\mathbb{E}[Y(1)]^2} + 2\right)\\
 &= \tau_{\textrm{\tiny RR}}^2 \left(\frac{\mathbb{E}\left[\frac{(Y^{(1)})^2}{e(X)}\right]}{\mathbb{E}\left[Y^{(1)}\right]^2} + \frac{\mathbb{E}\left[\frac{(Y^{(0)})^2}{1-e(X)}\right]}{\mathbb{E}\left[Y^{(0)}\right]^2}\right)
\end{align*}

where for the last equality we used that:

\begin{align*}
    \Var\left(\frac{TY}{e(X)}\right) = \mathbb{E}\left[\frac{(Y^{(1)})^2}{e(X)}\right] - \mathbb{E}\left[Y^{(1)}\right]^2 \quad \text{and} \quad \Var\left(\frac{(1-T)Y}{1-e(X)}\right) =  \mathbb{E}\left[\frac{(Y^{(0)})^2}{1-e(X)}\right] - \mathbb{E}\left[Y^{(0)}\right]^2
\end{align*}
Finally using calculation we did previously we get that:

\[V_{\textrm{RR-MLE}} = \underbrace{\tau_{\textrm{\tiny RR}}^2\left(\frac{\mathbb{E}\left[\frac{(Y^{(1)})^2}{e(X)}\right]}{\mathbb{E}\left[Y^{(1)}\right]^2} + \frac{\mathbb{E}\left[\frac{(Y^{(0)})^2}{1-e(X)}\right]}{\mathbb{E}\left[Y^{(0)}\right]^2}\right)}_{V_{\textrm{RR-IPW}}} - \tau_{\textrm{\tiny RR}}^2\bigg\| \frac{c_{10}}{\mathbb{E}[Y(0)]} + \frac{c_{01}}{\mathbb{E}[Y(1)]}\bigg\| _{Q^{-1}}^2\]

\end{proof} 

\subsubsection{Risk Ratio G formula estimator}\label{proof:G}

\begin{proof}[Proof of \Cref{prop:g_formula}]
\ \\
\textbf{Asymptotic bias and variance of the oracle risk ratio G formula estimator} Recall that the oracle risk ratio G formula is defined as
\begin{align*}
\tau_{\textrm{\tiny RR,G,n}}^{\star} = \frac{\sum_{i=1}^n  \mu_{(1)}(X_i)}{\sum_{i=1}^n 
\mu_{(0)}(X_i)},
\end{align*}
where the response surfaces \(\mu_{(0)}\) and \(\mu_{(1)}\) are assumed to be known. Let us define \(g_1(Z) = \mu_{(1)}(X_i)\) and \(g_0(Z) = \mu_{(0)}(X_i)\) with \(Z= X\). Since  \(g_1(Z)\) and \(g_0(Z)\) are bounded, they are square integrable. We also have that \(\mathbb{E}\left[g_0(Z_i)\right] = \mathbb{E}\left[Y^{(0)} \right]\) and \(\mathbb{E}\left[g_1(Z_i)\right] = \mathbb{E}\left[Y^{(1)} \right]\). We can therefore apply \Cref{Th2} and conclude that
\begin{align*}
    \sqrt{n} (\tau_{\textrm{\tiny RR,G,n}}^{\star} - \tau_{\textrm{\tiny RR}}) \to \mathcal{N}(0, V_{\textrm{\tiny RR,G}}),
\end{align*}
where \(V_{\textrm{\tiny RR,G}} = \tau_{\textrm{\tiny RR}}^2\Var\left(\frac{\mu_1^{\star}(X)}{\mathbb{E}\left[Y^{(1)} \right]}- \frac{\mu_0^{\star}(X)}{\mathbb{E}\left[Y^{(0)} \right]}\right).\) 
As a consequence an estimator \(\hat V_{RR,G}\) can be derived:

\begin{align}\label{V_RR,G}
    \hat V_{RR,G} =  \frac{\hat{\tau}_{\textrm{\tiny RR,G,n}}^2}{n} \sum_{i=1}^n \left(\frac{\hat \mu_1(X_i)}{\frac{1}{n}\sum_{T_i=1}Y_i} - \frac{\hat \mu_0(X_i)}{\frac{1}{n}\sum_{T_i=0}Y_i } - \frac{1}{n}\sum_{i=1}^n\frac{\hat \mu_1(X_i)}{\frac{1}{n}\sum_{T_i=1}Y_i} - \frac{\hat \mu_0(X_i)}{\frac{1}{n}\sum_{T_i=0}Y_i }\right)^2
\end{align}

\textbf{Finite sample bias and variance of the oracle ratio G formula estimator} Let \(T_1(\boldsymbol{Z}) = \frac{1}{n}\sum_{i=1}^n  \mu_{(1)}(X_i)\) and \(T_0(\boldsymbol{Z}) = \frac{1}{n}\sum_{i=1}^n  \mu_{(0)}(X_i)\) where \(\boldsymbol{Z} = (X_1, \hdots, X_n)\). We first show that \(\Var(T_1(\boldsymbol{Z})) = O_p\left(\frac{1}{n}\right)\) and \(\Var(T_0(\boldsymbol{Z})) = O_p\left(\frac{1}{n}\right)\):
\begin{align*}
    \Var(T_1(\boldsymbol{Z})) &= \frac{1}{n^2}\Var \left(\sum_{i=1}^n \mu_{(1)}(X_i) \right) \\
    &= \frac{1}{n^2} \sum_{i=1}^n \Var (\mu_{(1)}(X_i)) && \text{(by \hyperref[a:i.i.d.]{i.i.d.})}\\
    &= \frac{1}{n} \left(\mathbb{E}\left[(\mu_{(1)}(X_i))^2\right] - \mathbb{E}\left[Y^{(1)} \right]^2\right) && \text{(by law of total expectation)} \\
    &\leq \frac{M^2-\mathbb{E}\left[Y^{(1)} \right]^2}{n} && (\mu_{(1)}(X_i))\leq M)\\
    &= O_p\left(\frac{1}{n}\right).
\end{align*}
Similarly, $\Var(T_0(\boldsymbol{Z}))= O_p\left(1/n\right)$. Since we also have that

\[\mathbb{E}\left[T_1(\boldsymbol{Z})\right]= \mathbb{E}\left[Y^{(1)}\right] \quad \mathbb{E}\left[T_0(\boldsymbol{Z})\right]= \mathbb{E}\left[Y^{(0)}\right] \]

Therefore, we showed that \(T_1(\boldsymbol{Z})\) and \(T_0(\boldsymbol{Z})\) are unbiased estimators of \(\mathbb{E}\left[Y^{(1)}\right]\) and \(\mathbb{E}\left[Y^{(0)}\right] > 0\) such that $\Var(T_1(\boldsymbol{Z}))= O_p\left(1/n\right)$ and $\Var(T_0(\boldsymbol{Z}))= O_p\left(1/n\right)$. We also have that \(T_0(\boldsymbol{Z})\) and \(T_1(\boldsymbol{Z})\) are bounded:
\[m_0\leq T_0(\boldsymbol{Z}) \leq M_0  \quad \text{and} \quad T_1(\boldsymbol{Z}) \leq M_1\]
Applying \Cref{Th2}, under \Cref{a:Outcome_positivity1} we obtain:
\[\left|\mathbb{E}\left[\hat{\tau}_{\textrm{\tiny RR, HT, n}}\right] - \tau_{\textrm{\tiny RR}}\right| \leq \frac{2M_1M_0^2}{nm_0^3} \quad \text{and} \quad \left|\Var(\hat{\tau}_{\textrm{\tiny RR, HT, n}}) - V_{\textrm{\tiny RR, HT}}\right| \leq \frac{2M_0^2M_1(M_1+M_0)}{m_0^6}\]
\end{proof}

\subsubsection{Risk Ratio G-formula in linear models}
\label{app:sec_linear_comparison}

\begin{lemma}[see, e.g. \cite{seber2012linear}]
Grant  \Cref{a:linear_model}
{linear model}. Let \(\gamma_{(t)} = (c_{(t)},\beta_{(t)}) \in \mathbb{R}^{d+1}\) and \(Z = (1, X)\). We rearrange the \(Y_i\) and \(Z_i\) so that the first \(n_1\) observations correspond to \(T = 1\). We then define \(\mathbf{Y}_1 = (Y_1, \ldots, Y_{n_1})^\top\) and \(\mathbf{Y}_0 = (Y_{n_1+1}, \ldots, Y_n)^\top\), as well as \(\mathbf{Z}_1 = (Z_1, \ldots, Z_{n_1})^\top\) and \(\mathbf{Z}_0 = (Z_{n_1+1}, \ldots, Z_n)^\top\). Then for \(t \in \{0,1\}\), the linear model can be formulated as:
\[Y^{(t)} = Z^{\top} \gamma_{(t)}  + \varepsilon_{(t)}, \quad \mathbb{E}[\varepsilon_{(t)} | Z] = 0, \quad \text{Var}[\varepsilon_{(t)} | Z] = \sigma^2,\]
and the least square estimator is given as
\begin{align*}
    \hat \gamma_{(t)} =  \left(\frac{1}{n_t} \mathbf{Z}_t^{\top}\mathbf{Z}_t\right)^{-1} \frac{1}{n_t} \mathbf{Z}_t^{\top}\mathbf{Y}_t
\end{align*}
\end{lemma}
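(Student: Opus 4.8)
The plan is to verify that the model in \Cref{a:linear_model} can be rewritten in the standard augmented-regressor form and then to recall the ordinary least squares solution, computed separately within each treatment arm. First I would absorb the intercept into the coefficient vector: setting $Z = (1, X)$ and $\gamma_{(t)} = (c_{(t)}, \beta_{(t)})$, the identity $Z^\top \gamma_{(t)} = c_{(t)} + X^\top \beta_{(t)}$ turns the model $Y^{(t)} = c_{(t)} + X^\top \beta_{(t)} + \varepsilon_{(t)}$ of \Cref{a:linear_model} into $Y^{(t)} = Z^\top \gamma_{(t)} + \varepsilon_{(t)}$. Because $Z$ is a deterministic measurable function of $X$, the $\sigma$-algebras generated by $Z$ and $X$ coincide, so the moment conditions $\mathbb{E}[\varepsilon_{(t)} \mid Z] = 0$ and $\Var[\varepsilon_{(t)} \mid Z] = \sigma^2$ carry over verbatim from the conditions stated in terms of $X$.

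Next I would pass to the stacked matrix form inside each treatment group. After reordering so that the first $n_1$ units receive treatment, the rows of $\mathbf{Z}_t$ are the vectors $Z_i^\top$ for the units with $T_i = t$ and the entries of $\mathbf{Y}_t$ are the corresponding outcomes, which yields the matrix identity $\mathbf{Y}_t = \mathbf{Z}_t \gamma_{(t)} + \boldsymbol{\varepsilon}_{(t)}$. The least squares estimator $\hat\gamma_{(t)}$ minimises $\|\mathbf{Y}_t - \mathbf{Z}_t \gamma\|_2^2$ over $\gamma \in \mathbb{R}^{d+1}$; differentiating this quadratic and setting its gradient to zero produces the normal equations $\mathbf{Z}_t^\top \mathbf{Z}_t\, \hat\gamma_{(t)} = \mathbf{Z}_t^\top \mathbf{Y}_t$, and inverting the Gram matrix gives $\hat\gamma_{(t)} = (\mathbf{Z}_t^\top \mathbf{Z}_t)^{-1} \mathbf{Z}_t^\top \mathbf{Y}_t$. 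Since $(\tfrac{1}{n_t}\mathbf{Z}_t^\top \mathbf{Z}_t)^{-1} = n_t(\mathbf{Z}_t^\top \mathbf{Z}_t)^{-1}$, the two factors of $n_t^{-1}$ cancel and this coincides with the displayed expression.

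The only point requiring care is the invertibility of $\mathbf{Z}_t^\top \mathbf{Z}_t$, equivalently the requirement that the $n_t \times (d+1)$ design matrix $\mathbf{Z}_t$ have full column rank. This is a non-degeneracy condition on the covariate distribution rather than a genuinely hard step: the within-arm empirical Gram matrix $\tfrac{1}{n_t}\mathbf{Z}_t^\top \mathbf{Z}_t$ converges to the conditional second-moment matrix $\mathbb{E}[ZZ^\top \mid T = t]$, which is positive definite whenever $X$ has a non-degenerate conditional law, and this same non-degeneracy is already implicitly used in \Cref{OLS_RR_OBS} where the conditional covariances $\Sigma_t = \Var(X \mid T=t)$ are inverted; together with overlap (\Cref{a:observational_trial}) guaranteeing $n_t \to \infty$, the inverse exists almost surely for $n$ large. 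With this in place, the statement is simply the textbook OLS identity specialised to each treatment arm, as recorded in \cite{seber2012linear}, so I expect no substantive obstacle beyond this bookkeeping.
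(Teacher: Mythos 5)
Your proof is correct: the paper itself gives no proof of this lemma, simply citing \cite{seber2012linear} for the standard OLS identity, and your derivation (absorbing the intercept into the augmented regressor $Z=(1,X)$, writing the per-arm normal equations, and cancelling the $1/n_t$ factors) is exactly the textbook argument that citation points to. Your closing caveat on the full column rank of $\mathbf{Z}_t$ is the right one and matches the paper's implicit assumption that $Q_t = \mathbb{E}[Z Z^\top \mid T=t]$ is invertible, as used in the subsequent propositions.
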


\begin{proposition}
\label{prop:psi_gaus_ber}
 Grant \Cref{a:linear_model}. Let $\hat e  = (\sum_{i=1}^n T_i)/n$ and for all $t \in \{0,1\}$,  
 \begin{align}
   \Bar{Z}_{(t)} & = \frac{1}{\sum_{i=1}^n \mathds{1}_{T_i = t}} \sum_{i=1}^n \mathds{1}_{T_i=t} Z_i.
 \end{align}
Defining \(\nu_t = \mathbb{E}[X|T = t]\) and \(\Sigma_t = \text{Var}(X|T = t)\), we have
\[
\sqrt{n} (\hat{\boldsymbol{\theta}}_n - \boldsymbol{\theta}_\infty)\stackrel{d}{\rightarrow} \mathcal{N}
\left(0, \Sigma\right),\] 
where
\[\boldsymbol{\theta}_n = \begin{pmatrix}
\Bar{Z}_{(0)}\\
\Bar{Z}_{(1)}  \\
\hat \gamma_{(0)}\\
\hat \gamma_{(1)} \\
\hat e
\end{pmatrix} ,  \quad \boldsymbol{\theta}_\infty = \begin{pmatrix}
E[Z|T=0]\\
E[Z|T=1]\\
\gamma_{(0)}\\
\gamma_{(1)}\\
e
\end{pmatrix},  \quad \Sigma = \left(\begin{matrix}\frac{\Var\left[Z|T=0\right]}{\left(1 - e\right)} & 0 & 0 & 0 & 0\\
0 & \frac{\Var\left[Z|T=1\right]}{e} & 0 & 0 & 0\\
0 & 0 & \frac{\sigma^2Q^{-1}_0}{1-e}& 0 & 0\\
0 & 0 & 0 & \frac{\sigma^2Q^{-1}_1}{e} & 0\\
0 & 0 & 0 & 0 & e(1-e)\end{matrix}\right),
\]
with \(Q_t^{-1} =\begin{bmatrix}
1 + \nu_t^T \Sigma_t^{-1} \nu_t & - \nu_t^T \Sigma_t^{-1} \\
- \Sigma_t^{-1} \nu_t& \Sigma_t^{-1}
\end{bmatrix}.
\) 
\end{proposition}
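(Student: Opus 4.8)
The plan is to treat $\hat{\boldsymbol{\theta}}_n$ as a single stacked M-estimator and invoke the sandwich theorem (Theorem~7.2 in \citet{Stefanski2002Mestimation}), exactly as in the proofs of \Cref{prop:N} and \Cref{prop:IPW_MLE}. Writing $Z = (1,X)$, I would introduce the estimating function
\[
\psi(X,T,Y;\boldsymbol{\theta}) = \begin{pmatrix} \mathds{1}_{T=0}(Z - z_0) \\ \mathds{1}_{T=1}(Z - z_1) \\ \mathds{1}_{T=0}\,Z(Y - Z^\top\gamma_0) \\ \mathds{1}_{T=1}\,Z(Y - Z^\top\gamma_1) \\ T - e \end{pmatrix},
\]
with $\boldsymbol{\theta} = (z_0, z_1, \gamma_0, \gamma_1, e)$. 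The first two coordinates recover the group means $\bar Z_{(t)}$, the middle two are precisely the per-arm OLS normal equations $\sum_{i:T_i=t} Z_i(Y_i - Z_i^\top\hat\gamma_{(t)})=0$, and the last recovers $\hat e$; hence $\sum_i \psi(X_i,T_i,Y_i;\hat{\boldsymbol{\theta}}_n)=0$, certifying that $\hat{\boldsymbol{\theta}}_n$ is an M-estimator of type $\psi$.

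Next I would verify $\mathbb{E}[\psi(\boldsymbol{\theta}_\infty)]=0$ at $\boldsymbol{\theta}_\infty$. The mean and $e$ coordinates are immediate from the definitions. For the OLS coordinates, on $\{T=t\}$ one has $Y = Z^\top\gamma_{(t)} + \varepsilon_{(t)}$, and since $\varepsilon_{(t)}$ is a function of $(X,Y^{(t)})$, \hyperref[a:Unconfoundedness]{unconfoundedness} gives $\varepsilon_{(t)}\indep T\mid X$; combined with $\mathbb{E}[\varepsilon_{(t)}\mid X]=0$ this yields $\mathbb{E}[\mathds{1}_{T=t}Z\varepsilon_{(t)}\mid X] = \mathbb{P}(T=t\mid X)\,Z\,\mathbb{E}[\varepsilon_{(t)}\mid X]=0$. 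This conditioning argument is the workhorse of the whole proof.

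Then I would compute the two sandwich ingredients. The Jacobian $A(\boldsymbol{\theta}_\infty)=\mathbb{E}[\partial\psi/\partial\boldsymbol{\theta}]$ is block diagonal because each coordinate of $\psi$ depends on a single block of $\boldsymbol{\theta}$; its blocks are $-(1-e)I$, $-eI$, $-(1-e)Q_0$, $-eQ_1$, and $-1$, where $Q_t := \mathbb{E}[ZZ^\top\mid T=t]$. The crucial step is showing the meat $B(\boldsymbol{\theta}_\infty)=\mathbb{E}[\psi\psi^\top]$ is \emph{also} block diagonal: the indicators $\mathds{1}_{T=0}$ and $\mathds{1}_{T=1}$ are orthogonal, so the arm-$0$ and arm-$1$ pieces never interact; the mean-vs-score and score-vs-$e$ cross blocks vanish by the same $\varepsilon_{(t)}\indep T\mid X$ plus zero-conditional-mean argument; and the mean-vs-$e$ blocks vanish because $z_t=\mathbb{E}[Z\mid T=t]$ centers them. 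The surviving diagonal blocks are $(1-e)\Var(Z\mid T=0)$, $e\Var(Z\mid T=1)$, $\sigma^2(1-e)Q_0$, $\sigma^2 e Q_1$ and $e(1-e)$, where the two OLS blocks use $\Var(\varepsilon_{(t)}\mid X)=\sigma^2$. Forming $A^{-1}B(A^{-1})^\top$ block-by-block collapses to the stated $\Sigma$, and a final Schur-complement inversion of $Q_t=\left[\begin{smallmatrix}1 & \nu_t^\top\\ \nu_t & \Sigma_t+\nu_t\nu_t^\top\end{smallmatrix}\right]$ (whose Schur complement of the top-left entry is exactly $\Sigma_t$) produces the claimed closed form for $Q_t^{-1}$.

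The main obstacle is establishing the block-diagonality of $B$, not the bookkeeping: it is what makes $\Sigma$ decouple into five independent pieces, and it hinges entirely on pairing \hyperref[a:Unconfoundedness]{unconfoundedness} with the linear-model moment restrictions $\mathbb{E}[\varepsilon_{(t)}\mid X]=0$ and $\Var(\varepsilon_{(t)}\mid X)=\sigma^2$. I would also discharge the regularity hypotheses of the sandwich theorem — consistency of $\hat{\boldsymbol{\theta}}_n$, nonsingularity of $A$ (equivalently invertibility of $Q_0,Q_1$, which requires $\Sigma_t\succ 0$), finiteness of $B$ from square-integrability of $Y^{(t)}$, and integrability of the second derivatives of $\psi$ in a neighborhood of $\boldsymbol{\theta}_\infty$ — but these are routine consequences of \Cref{a:linear_model}.
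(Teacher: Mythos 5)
Your proposal is correct and takes essentially the same route as the paper's proof: the same stacked M-estimator with identical estimating equations (the paper writes the OLS scores as $\mathds{1}_{T=t}\bigl(Z\varepsilon_{(t)} - ZZ^\top(\theta-\gamma_{(t)})\bigr)$, which coincides with your $\mathds{1}_{T=t}Z(Y-Z^\top\gamma_{(t)})$ under the linear model), the same block-diagonal $A$ and $B$ computed via the same conditioning arguments, and the same appeal to Theorem~7.2 of Stefanski and Boos followed by the Schur-complement inversion of $Q_t$. The only cosmetic difference is that you explicitly invoke unconfoundedness to make the cross blocks vanish, whereas the paper manipulates conditional expectations directly; both reduce to the same moment condition $\mathbb{E}[\varepsilon_{(t)}\mid Z, T=t]=0$.
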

\begin{proof}
Using M-estimation theory to prove asymptotic normality of the \(\theta_n\), we first define the following:
\[
\psi(T, Z, \boldsymbol{\theta}) =\left(\begin{array}{c}
    \psi_0(T, Z, \boldsymbol{\theta})  \\
    \psi_1(T, Z, \boldsymbol{\theta})\\
    \psi_2(T, Z, \boldsymbol{\theta}) \\
    \psi_3(T, Z, \boldsymbol{\theta}) \\
    \psi_4(T, Z, \boldsymbol{\theta})
    \end{array}\right) :=
\begin{pmatrix}
(1-T)(Z-\theta_0) \\
T(Z-\theta_1)  \\
(1-T)\left(Z\epsilon(0) - ZZ^\top\left(\theta_2 -\gamma_{(0)}\right)\right)\\
T\left(Z\epsilon(1) - ZZ^\top\left(\theta_3 -\gamma(1)\right)\right)\\
T-\theta_4
\end{pmatrix}
\]
where \(\boldsymbol{\theta} = (\theta_0, \theta_1, \theta_2, \theta_3, \theta_4)\).
We still have that \( \hat{\boldsymbol{\theta}}_n = (\Bar{Z}_{(0)}, \Bar{Z}_{(1)}, \hat \gamma_{(0)}, \hat \gamma_{(1)}, \hat e)\) is an M-estimator of type \(\psi\) \citep[see][]{Stefanski2002Mestimation} since
\[\sum_{i=1}^n \psi(T_i, Z_i, \hat{\boldsymbol{\theta}}_n) = 0.\]

We now demonstrate that \(\mathbb{E}\left[\psi(T, Y, \boldsymbol{\theta}_\infty)\right]=0\). We directly have that \(\mathbb{E}\left[\psi_4(T, Y, \boldsymbol{\theta}_\infty)\right]=0\). For the other terms we have:
\begin{align*}
    \mathbb{E}\left[\psi_1(T, Z, \boldsymbol{\theta}_\infty)\right] &=\mathbb{E}\left[T\left(Z - \mathbb{E}[Z|T=1]\right)\right] \\
    &=\mathbb{E}\left[\mathbb{E}\left[T\left(Z - \mathbb{E}[Z|T=1]\right)| T\right]\right] \\
    &=\mathbb{E}\left[T\left(\mathbb{E}\left[Z| T\right] - \mathbb{E}[Z|T=1]\right)\right] \\
    &=\mathbb{E}\left[T\left(\mathbb{E}\left[Z| T\right] - \mathbb{E}[Z|T=1]\right)\right] \\
    &= \mathbb{P}\left[T=1\right]\left(\mathbb{E}\left[Z| T=1\right] - \mathbb{E}[Z|T=1]\right) \\
    &= 0
\end{align*}
We also have that:
\begin{align*}
    \mathbb{E}\left[\psi_3(T, Z, \boldsymbol{\theta}_\infty)\right] 
    &=\mathbb{E}\left[TZ\epsilon_{(1)}\right] \\
    &=\mathbb{E}\left[Z\mathbb{E}\left[T\epsilon_{(1)}|Z\right]\right] \\
    &=\mathbb{E}\left[Z\mathbb{E}\left[\epsilon_{(1)}|Z, T=1\right]\right] \\
    &= 0.
\end{align*}
Similarly, we can show:
\[\mathbb{E}\left[\psi_0(T, Z, \boldsymbol{\theta}_\infty)\right] = 0 \quad \text{and} \quad \mathbb{E}\left[\psi_2(T, Z, \boldsymbol{\theta}_\infty)\right] = 0.\]
At this point, we note that since \(\psi(T, Z, \boldsymbol{\theta})\) is a linear function of \(\boldsymbol{\theta}\), \(\theta_\infty\) is the only value of \(\boldsymbol{\theta}\) such that \(\mathbb{E}\left[\psi(T, Z, \boldsymbol{\theta})\right]=0\) We proceed by defining:
\begin{align*}
  A\left(\theta_\infty \right) = \mathbb{E}\left[\frac{\partial \psi}{\partial \theta}\Big\vert_{\theta=\theta_\infty}\right] \quad \textrm{and} \quad B(\theta_\infty) =  \mathbb{E}\left[\psi(T, Z, \theta_\infty) \psi(T, Z, \theta_\infty)^T\right].
\end{align*}

Next, we check the conditions of Theorem~7.2 in \citet{Stefanski2002Mestimation}. First, we compute \(A\left(\theta_\infty \right) \) and \(B\left(\theta_\infty \right)\). Since:
\begin{align*}
 \frac{\partial \psi}{\partial \theta} (T, Z, \theta)=\left(\begin{array}{ccccc}
-(1-T) & 0 & 0 & 0 & 0\\
0 & -T & 0 & 0 & 0\\
0 & 0 & -(1-T)ZZ^\top & 0 & 0\\
0 & 0 & 0 & -TZZ^\top & 0\\
0 & 0 & 0 & 0 & -1
\end{array}\right),  
\end{align*}
we obtain:
\begin{align*} 
A\left(\boldsymbol{\theta_\infty} \right)= \left(\begin{array}{ccccc}
-(1-e) & 0 & 0 & 0 & 0 \\
0 & -e & 0 & 0 & 0 \\
0 & 0 & -(1-e)Q_0 & 0 & 0 \\
0 & 0 & 0 & -eQ_1 & 0 \\
0 & 0 & 0 & 0 & -1
\end{array}\right), \quad \text{where}  \quad Q_t = \mathbb{E}\left[ZZ^\top|T=t\right].
\end{align*}
which leads to:
\begin{align*}
A^{-1}\left(\boldsymbol{\theta_\infty} \right)= \left(\begin{array}{ccccc}
-\frac{1}{1-e} & 0 & 0 & 0 & 0\\
0 & -\frac{1}{e} & 0 & 0 & 0 \\
0 & 0 & -\frac{Q^{-1}_0}{1-e} & 0 & 0\\
0 & 0 & 0 & -\frac{Q^{-1}_1}{e} & 0\\
0 & 0 & 0 & 0 & -1\\
\end{array}\right).
\end{align*}

Regarding \(B(\boldsymbol{\theta_\infty})\), since we have \(T(1-T) = 0\), elementary calculations show that:

\[\begin{array}{cc}B(\boldsymbol{\theta_\infty})_{1,2} = B(\boldsymbol{\theta_\infty})_{2,1} = 0\\
B(\boldsymbol{\theta_\infty})_{3,4} = B(\boldsymbol{\theta_\infty})_{4,3} = 0\end{array}\quad \text{and} \quad \begin{array}{cc}B(\boldsymbol{\theta_\infty})_{1,4} = B(\boldsymbol{\theta_\infty})_{4,1} = 0 \\ B(\boldsymbol{\theta_\infty})_{2,3} = B(\boldsymbol{\theta_\infty})_{3,2} = 0.\end{array}\]

Besides
\begin{align*}
B(\boldsymbol{\theta_\infty})_{2,2} &= E\left[T^2(Z-E\left[Z|T=1\right])(Z-E\left[Z|T=1\right])^\top\right] \\
&= E\left[T(Z-E\left[Z|T=1\right])(Z-E\left[Z|T=1\right])^\top\right] \\
&= E\left[TE\left[(Z-E\left[Z|T=1\right])(Z-E\left[Z|T=1\right])^\top|T\right]\right] \\
&=\mathbb{P}\left[T=1\right]E\left[(Z-E\left[Z|T=1\right])(Z-E\left[Z|T=1\right])^\top|T=1\right] \\
&= e\Var\left[Z|T=1\right],
\end{align*}
 and similarly,
 \[B(\boldsymbol{\theta_\infty})_{1,1} = (1-e)\Var\left[Z|T=0\right].\]
We can also note that:
\begin{align*}
B(\boldsymbol{\theta_\infty})_{4,4} = E\left[T^2ZZ^\top\epsilon_{(1)}^2\right] &= E\left[TZZ^\top\epsilon_{(1)}^2\right] \\
&= E\left[TE\left[ZZ^\top\epsilon_{(1)}^2|T\right]\right] \\
&= \mathbb{P}\left[T=1\right]E\left[ZZ^\top\epsilon_{(1)}^2|T=1\right] \\
&= eE\left[ZZ^\top E\left[\epsilon_{(1)}^2|T=1,Z\right]|T=1\right] \\
&= e\sigma^2E\left[ZZ^\top|T=1\right] :=e\sigma^2Q_1,
\end{align*}
and similarly,
\[B(\boldsymbol{\theta_\infty})_{3,3} = (1-e)\sigma^2Q_0.\]
Finally, 
\begin{align*}
B(\boldsymbol{\theta_\infty})_{2,4} = B(\boldsymbol{\theta_\infty})_{4,2} &= E\left[T^2(Z-E\left[Z|T=1\right])Z^\top\epsilon_{(1)}\right]\\
&= E\left[T(Z-E\left[Z|T=1\right])Z^\top\epsilon_{(1)}\right] \\
&= \mathbb{P}\left[T=1\right]E\left[(Z-E\left[Z|T=1\right])Z^\top\epsilon_{(1)}|T=1\right] \\
&= eE\left[(Z-E\left[Z|T=1\right])Z^\top E\left[\epsilon_{(1)}|T=1, Z\right]|T=1\right] \\
&= 0,
\end{align*}
and similarly,
\[B(\boldsymbol{\theta_\infty})_{1,3} = B(\boldsymbol{\theta_\infty})_{3,1} = 0.\]
We also have that:
\begin{align*}
B(\boldsymbol{\theta_\infty})_{2,5} = B(\boldsymbol{\theta_\infty})_{5,2} &= E\left[T(Z-E\left[Z|T=1\right])(T-e)\right]\\
&= E\left[T^2Z - T^2E\left[Z|T=1\right] -eTZ + eTE\left[Z|T=1\right]\right]\\
&= E\left[T^2Z - T^2E\left[Z|T=1\right] -eTZ + eTE\left[Z|T=1\right]\right]\\
&= eE\left[Z|T=1\right] - eE\left[Z|T=1\right] - e^2E\left[Z|T=1\right] + e^2E\left[Z|T=1\right]\\
&=0
\end{align*}
and similarly,
\[B(\boldsymbol{\theta_\infty})_{1,5} = B(\boldsymbol{\theta_\infty})_{5,1} = 0.\]
We also have that :
\begin{align*}
    B(\boldsymbol{\theta_\infty})_{4,5} = B(\boldsymbol{\theta_\infty})_{5,4} &= E\left[(T-e)TZ\epsilon_{(0)}\right]\\
    &= E\left[(TZ\epsilon_{(0)}\right] - eE\left[(TZ\epsilon_{(0)}\right]\\
    &= (1-e)E\left[TZ\epsilon_{(0)}\right]\\
    &=(1-e)\mathbb{E}\left[Z\mathbb{E}\left[T\epsilon_{(0)}|Z\right]\right] \\
    &=(1-e)\mathbb{E}\left[Z\mathbb{E}\left[\epsilon_{(0)}|Z, T=1\right]\right] \\
    &= 0
\end{align*}
and similarly,
\[B(\boldsymbol{\theta_\infty})_{3,5} = B(\boldsymbol{\theta_\infty})_{5,3} = 0.\]
Gathering all calculations, and since  $B(\boldsymbol{\theta_\infty})_{5,5} =e(1-e)$, we have
\begin{align*}
B(\boldsymbol{\theta_\infty}) = \left(\begin{array}{ccccc}
    (1-e)\Var\left[Z|T=0\right] & 0 & 0 & 0 & 0\\
    0 & e\Var\left[Z|T=1\right] & 0 & 0 & 0 \\
    0  & 0 & (1-e)\sigma^2Q_0 & 0 & 0 \\
    0  & 0 & 0 & e\sigma^2Q_1 & 0\\
    0 & 0 & 0 & 0 & e(1-e)
    \end{array}\right),
\end{align*}

Based on the previous calculations, we have:
\begin{itemize}
    \item  \(\psi(z,\boldsymbol{\theta})\) and its first two partial derivatives with respect to \(\boldsymbol{\theta}\) exist for all \(z\) and for all \(\boldsymbol{\theta}\) in the neighborhood of \(\boldsymbol{\theta_\infty}\).
    \item For each \(\boldsymbol{\theta}\) in the neighborhood of \(\boldsymbol{\theta_\infty}\), we have for all \(i,j,k \in \{0, 2\}\):
    \[\left| \frac{\partial^2}{\partial \theta_i \partial \theta_j} \psi_k(z, \boldsymbol{\theta}) \right| \leq 1\]
    and 1 is integrable.
    \item \(A(\theta_\infty)\) exists and is nonsingular.
    \item \(B(\theta_\infty)\) exists and is finite.
\end{itemize}

Since we have: 
\[\sum_{i=1}^n \psi(T_i, Z_i, \hat{\boldsymbol{\theta}}_n) = 0 \quad \text{and} \quad \hat{\boldsymbol{\theta}}_n \stackrel{p}{\rightarrow} \theta_\infty .\]

Then, the conditions of Theorem~7.2 in \citet{Stefanski2002Mestimation} are satisfied, we have:
\[\sqrt{n}\left(  \hat{\boldsymbol{\theta}}_n - \theta_\infty \right) \stackrel{d}{\rightarrow} \mathcal{N}\left(0, A(\theta_\infty)^{-1}B(\theta_\infty)(A(\theta_\infty)^{-1})^{\top} \right),\]
where:
\begin{align*}
    A(\theta_\infty)^{-1}B(\theta_\infty)(A(\theta_\infty)^{-1})^{\top} \nonumber  = \left(\begin{matrix}\frac{\Var\left[Z|T=0\right]}{\left(1 - e\right)} & 0 & 0 & 0 & 0\\
    0 & \frac{\Var\left[Z|T=1\right]}{e} & 0 & 0 & 0\\
    0 & 0 & \frac{\sigma^2Q^{-1}_0}{1-e}& 0 & 0\\
    0 & 0 & 0 & \frac{\sigma^2Q^{-1}_1}{e} & 0\\
    0 & 0 & 0 & 0 & e(1-e)
    \end{matrix}\right),
\end{align*}
\end{proof}

\begin{proposition}[asymptotical normality of \(\hat{\tau}_{\textrm{RR,OLS}}\)]
 Assume we have \hyperref[a:linear_model]{linear model} then we have:
\[\sqrt{n}(\hat{\tau}_{\textrm{RR,OLS}}-\tau_{\textrm{RR}}) \stackrel{d}{\rightarrow} \mathcal{N}\left(0, V_{\textrm{RR-OLS}}\right)\]
with 
\[\frac{V_{\textrm{RR-OLS}}}{\tau_{\textrm{\tiny RR}}^2} =  \left\Vert\frac{\beta_{(1)}}{\mathbb{E}\left[Y^{(1)}\right]} - \frac{\beta_{(0)}}{\mathbb{E}\left[Y^{(0)}\right]}\right\Vert_{\Sigma}^2 + \sigma^2 \left(\frac{1+(1-e)^2\|\nu_1 - \nu_0\|^2_{\Sigma_1^{-1}}}{e\mathbb{E}\left[Y^{(1)}\right]^2} + \frac{1+e^2\|\nu_1 - \nu_0\|^2_{\Sigma_0^{-1}}}{(1-e)\mathbb{E}\left[Y^{(0)}\right]^2}\right).
\]
\end{proposition}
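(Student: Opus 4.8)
The plan is to recognize that $\hat{\tau}_{\textrm{RR,OLS}}$ is a smooth function of the joint M-estimator $\hat{\boldsymbol{\theta}}_n = (\bar{Z}_{(0)}, \bar{Z}_{(1)}, \hat{\gamma}_{(0)}, \hat{\gamma}_{(1)}, \hat{e})$ whose limiting law is supplied by \Cref{prop:psi_gaus_ber}, and then to apply the multivariate Delta method. The crucial algebraic identity is that, writing $Z_i = (1, X_i)$ and $\hat{\mu}_{(t)}(X_i) = Z_i^\top \hat{\gamma}_{(t)}$, the plug-in average satisfies $\frac{1}{n}\sum_{i=1}^n \hat{\mu}_{(t)}(X_i) = \bar{Z}^\top \hat{\gamma}_{(t)}$ with $\bar{Z} = \frac{1}{n}\sum_i Z_i = \hat{e}\,\bar{Z}_{(1)} + (1-\hat{e})\,\bar{Z}_{(0)}$. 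Hence $\hat{\tau}_{\textrm{RR,OLS}} = g(\hat{\boldsymbol{\theta}}_n)$ where
\[
g(\theta_0,\theta_1,\theta_2,\theta_3,\theta_4) = \frac{[\theta_4\theta_1 + (1-\theta_4)\theta_0]^\top \theta_3}{[\theta_4\theta_1 + (1-\theta_4)\theta_0]^\top \theta_2}.
\]
At $\boldsymbol{\theta}_\infty$ the inner vector reduces to $\mathbb{E}[Z] = (1,\mu)$ by the law of total expectation, so $g(\boldsymbol{\theta}_\infty) = \mathbb{E}[Y^{(1)}]/\mathbb{E}[Y^{(0)}] = \tau_{\textrm{RR}}$, which confirms consistency and licenses the Delta method.

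Next I would compute $\nabla g(\boldsymbol{\theta}_\infty)$. Writing $D = \mathbb{E}[Y^{(0)}]$, $\bar{Z}_\infty = (1,\mu)$ and $\delta = \gamma_{(1)} - \tau_{\textrm{RR}}\gamma_{(0)}$ (whose $X$-block is $\tilde{\beta} = \beta_{(1)} - \tau_{\textrm{RR}}\beta_{(0)}$), the blocks of the gradient are $\partial_{\gamma_{(1)}}g = \bar{Z}_\infty/D$, $\partial_{\gamma_{(0)}}g = -\tau_{\textrm{RR}}\bar{Z}_\infty/D$, $\partial_{\bar{Z}_{(1)}}g = e\,\delta/D$, $\partial_{\bar{Z}_{(0)}}g = (1-e)\,\delta/D$, and $\partial_{e}g = (\nu_1-\nu_0)^\top\tilde{\beta}/D$. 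Because the limiting covariance $\Sigma$ of \Cref{prop:psi_gaus_ber} is block diagonal, the Delta-method variance $\nabla g^\top \Sigma \nabla g$ splits into a sum of five independent block contributions with no cross terms; this is the structural simplification that makes the computation tractable.

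The two blocks coming from the regression coefficients produce the $\sigma^2$ part of $V_{\textrm{RR-OLS}}$. Using the explicit form of $Q_t^{-1}$ from \Cref{prop:psi_gaus_ber}, one computes $(1,\mu) Q_t^{-1} (1,\mu)^\top = 1 + \|\mu - \nu_t\|_{\Sigma_t^{-1}}^2$; substituting $\mu = e\nu_1 + (1-e)\nu_0$ gives $\mu - \nu_1 = -(1-e)(\nu_1-\nu_0)$ and $\mu - \nu_0 = e(\nu_1-\nu_0)$, which turn the $\gamma_{(1)}$ and $\gamma_{(0)}$ contributions into exactly $\tau_{\textrm{RR}}^2\sigma^2(1+(1-e)^2\|\nu_1-\nu_0\|_{\Sigma_1^{-1}}^2)/(e\mathbb{E}[Y^{(1)}]^2)$ and $\tau_{\textrm{RR}}^2\sigma^2(1+e^2\|\nu_1-\nu_0\|_{\Sigma_0^{-1}}^2)/((1-e)\mathbb{E}[Y^{(0)}]^2)$, after using $\tau_{\textrm{RR}}^2/\mathbb{E}[Y^{(1)}]^2 = 1/\mathbb{E}[Y^{(0)}]^2$.

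Finally, the three remaining blocks ($\bar{Z}_{(0)}, \bar{Z}_{(1)}, e$) recombine into the oracle term. Since the first coordinate of $Z$ is deterministic, $\delta^\top \Var(Z\mid T=t)\,\delta = \|\tilde{\beta}\|_{\Sigma_t}^2$ and $(\mathbb{E}[Z\mid T=1]-\mathbb{E}[Z\mid T=0])^\top\delta = (\nu_1-\nu_0)^\top\tilde{\beta}$; collecting the contributions yields $D^{-2}\big[\,e\|\tilde{\beta}\|_{\Sigma_1}^2 + (1-e)\|\tilde{\beta}\|_{\Sigma_0}^2 + e(1-e)\big((\nu_1-\nu_0)^\top\tilde{\beta}\big)^2\,\big]$. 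The key step is to recognize this bracket as $\tilde{\beta}^\top \Sigma \tilde{\beta}$ through the law of total variance $\Sigma = e\Sigma_1 + (1-e)\Sigma_0 + e(1-e)(\nu_1-\nu_0)(\nu_1-\nu_0)^\top$, and then to use $\tilde{\beta}/\mathbb{E}[Y^{(1)}] = \beta_{(1)}/\mathbb{E}[Y^{(1)}] - \beta_{(0)}/\mathbb{E}[Y^{(0)}]$ so that $D^{-2}\|\tilde{\beta}\|_\Sigma^2 = \tau_{\textrm{RR}}^2\|\beta_{(1)}/\mathbb{E}[Y^{(1)}] - \beta_{(0)}/\mathbb{E}[Y^{(0)}]\|_\Sigma^2$, matching the first term of $V_{\textrm{RR-OLS}}$. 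I expect the main obstacle to be precisely this bookkeeping—carrying the quadratic form through the $Q_t^{-1}$ inversions and spotting the total-variance decomposition—rather than any conceptual difficulty, since the joint central limit theorem is already established in \Cref{prop:psi_gaus_ber}.
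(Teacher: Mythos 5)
Your proposal is correct and takes essentially the same route as the paper: the paper likewise builds on \Cref{prop:psi_gaus_ber}, linearizes $\hat{\tau}_{\textrm{RR,OLS}} = \bar{Z}^\top\hat{\gamma}_{(1)}/\bar{Z}^\top\hat{\gamma}_{(0)}$ around $\boldsymbol{\theta}_\infty$ (an exact expansion with stochastic coefficients followed by Slutsky, i.e.\ the delta method carried out by hand), and its limiting coefficient vector $\alpha_\infty$ coincides with your gradient $\nabla g(\boldsymbol{\theta}_\infty)$. The variance algebra is also identical: the quadratic forms $\mathbb{E}[Z]^\top Q_t^{-1}\mathbb{E}[Z]$ evaluated via $\mu - \nu_1 = -(1-e)(\nu_1-\nu_0)$ and $\mu - \nu_0 = e(\nu_1-\nu_0)$, and the law-of-total-variance identity $\Sigma = e\Sigma_1 + (1-e)\Sigma_0 + e(1-e)(\nu_1-\nu_0)(\nu_1-\nu_0)^\top$ recombining the $\bar{Z}_{(0)}$, $\bar{Z}_{(1)}$, $\hat e$ blocks into the oracle term.
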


\begin{proof}
    Let $\hat{\beta}_{(1)}$ and $\hat{c}_{(1)}$ be the parameters obtained via fitting an ordinary least square method on the treated individuals only, that is 
    \begin{align}
        (\hat{\beta}_{(1)}, \hat{c}_{(1)}) \in \arg \min_{c_{(1)}, \beta_{(1)}} \sum_{i=1}^n (Y_i^{(1)} - c_{(1)} - \beta_{(1)} X_i)^2 \mathds{1}_{T_i=1}.
    \end{align}
    Similarly, let $\hat{\beta}_{(0)}$ and $\hat{c}_{(0)}$ be the parameters obtained via fitting an ordinary least square method on the control individuals only, that is 
    \begin{align}
        (\hat{\beta}_{(0)}, \hat{c}_{(0)}) \in \arg \min_{c_{(0)}, \beta_{(0)}} \sum_{i=1}^n (Y_i^{(0)} - c_{(0)} - \beta_{(0)} X_i)^2 \mathds{1}_{T_i=0}.
    \end{align}
    An estimator of the RR using the G-formula approach is thus given by 
    \begin{align}
    \hat{\tau}_{\textrm{RR,OLS}}   &= \frac{\sum_{i=1}^n \left( \hat{c}_{(1)} + X_i^\top \hat{\beta}_{(1)} \right)}{\sum_{i=1}^n \left( \hat{c}_{(0)} + X_i^\top \hat{\beta}_{(0)} \right)} \\
     &= \frac{  \hat{c}_{(1)} + \bar{X}^\top \hat{\beta}_{(1)}   }{  \hat{c}_{(0)} + \bar{X}^\top \hat{\beta}_{(0)} }.
    \end{align}
    Besides, note that assuming a linear model implies that 
    \begin{align}
    \hat{\tau}_{\textrm{RR,OLS}}   &=     \frac{  c_{(1)} + \mathds{E}[X] ^\top \beta_{(1)}   }{  c_{(0)} + \mathds{E}[X]^\top  \beta_{(0)} }.
    \end{align}
    Let, for all $i$, $Z_i = (1, X_i)$ and $\gamma_{(j)} = (c_{(j)}, \beta_{(j)})$ for all $j \in \{0,1\}$. Expanding the following difference, we have:
\begin{align}
    \sqrt{n}(\tau_{\textrm{RR,OLS}}-\tau_{\textrm{RR}}) &= \sqrt{n}\left(\frac{\hat c_{(1)} + \Bar{X}^{\top} \hat \beta_{(1)}}{\hat c_{(0)} + \Bar{X}^{\top} \hat \beta_{(0)}} - \frac{c_{(1)} + \mathbb{E}[X]^{\top} \beta_{(1)}}{ c_{(0)} + \mathbb{E}[X]^{\top} \beta_{(0)}}\right)\\
    &= \sqrt{n}\left(\hat c_{(1)} + \Bar{X}^{\top} \hat \beta_{(1)}\right)\left(\frac{1}{\hat c_{(0)} + \Bar{X}^{\top} \hat \beta_{(0)}} - \frac{1}{c_{(0)} + \mathbb{E}[X]^{\top} \beta_{(0)}}\right)\\
    & \quad + \frac{\sqrt{n}}{ c_{(0)} + \mathbb{E}[X]^{\top}  \beta_{(0)}}\left(\hat c_{(1)} + \Bar{X}^{\top} \hat \beta_{(1)} - c_{(1)} - \mathbb{E}[X]^{\top} \beta_{(1)}\right)\\
    &= \sqrt{n}\left(\Bar{Z}^{\top} \hat \gamma_{(1)}\right)\left(\frac{1}{\Bar{Z}^{\top} \hat \gamma_{(0)}} - \frac{1}{\mathbb{E}[Z]^{\top} \gamma_{(0)}}\right)\\
    & \quad + \frac{\sqrt{n}}{\mathbb{E}[Z]^{\top}  \gamma_{(0)}}\left(\Bar{Z}^{\top} \hat \gamma_{(1)} - \mathbb{E}[Z]^{\top} \gamma_{(1)}\right)\\
    &= \sqrt{n} \frac{\Bar{Z}^{\top} \hat \gamma_{(1)}}{\Bar{Z}^{\top} \hat \gamma_{(0)}\mathbb{E}[Z]^{\top} \gamma_{(0)}}\left(\mathbb{E}[Z]^{\top} \gamma_{(0)} -\Bar{Z}^{\top} \hat \gamma_{(0)}\right)\\
    & \quad + \frac{\sqrt{n}}{\mathbb{E}[Z]^{\top}  \gamma_{(0)}}\left(\Bar{Z}^{\top} \hat \gamma_{(1)} - \mathbb{E}[Z]^{\top} \gamma_{(1)}\right).
\end{align}

Besides, we have
\begin{align*}
    \bar{Z} - \mathbb{E}[Z] &= \hat e \bar{Z}_{(1)} + (1 - \hat e)\bar{Z}_{(0)} - e \mathbb{E}[Z | T = 1] - (1 - e) \mathbb{E}[Z | T = 0] \\
    &= (1 - e)\left(\bar{Z}_{(0)} - \mathbb{E}[Z | T = 0]\right) + e\left(\bar{Z}_{(1)} - \mathbb{E}[Z | T = 1]\right)  + \left( \bar{Z}_{(1)} - \bar{Z}_{(0)}\right) (\hat e -  e)\\
    &= \zeta (\theta_n-\theta_{\infty}), 
\end{align*}
where $\zeta = \left[ (1 - e)I_{d+1}, \; eI_{d+1}, \; 0_{d+1}, \; 0_{d+1}, \; (\bar{Z}_{(1)} - \bar{Z}_{(0)}) \right] \in \mathds{R}^{(d+1)\times 4(d+1)+1}$ and 
\[\boldsymbol{\theta}_n = \begin{pmatrix}
\Bar{Z}_{(0)}\\
\Bar{Z}_{(1)}  \\
\hat \gamma_{(0)}\\
\hat \gamma_{(1)} \\
\hat e
\end{pmatrix} ,  \quad \boldsymbol{\theta}_\infty = \begin{pmatrix}
E[Z|T=0]\\
E[Z|T=1]\\
\gamma_{(0)}\\
\gamma_{(1)}\\
e
\end{pmatrix}.
\]

Note that for all \(t \in \{0,1\}\), 
\begin{align*}
    \Bar{Z}^{\top} \hat \gamma_{(t)} - \mathbb{E}[Z]^{\top} \gamma_{(t)} &= \hat \gamma_{(t)}^{\top}\left(\Bar{Z}- \mathbb{E}[Z]\right)  + \mathbb{E}[Z]^{\top}\left(\hat \gamma_{(t)} - \gamma_{(t)}\right)\\
    &= \hat \gamma_{(t)}^{\top}\zeta (\theta_n-\theta_{\infty}) + \mathbb{E}[Z]^{\top}\left(\hat \gamma_{(t)} - \gamma_{(t)}\right)\\
    &= \hat{\alpha}_{(t)}^{\top} (\theta_n-\theta_{\infty}),
\end{align*}
with 
\[\hat{\alpha}_{(t)} = 
\begin{pmatrix}
(1-e)\hat \gamma_{(t)}\\
e \hat \gamma_{(t)}\\
\mathds{1}_{t=0}\mathbb{E}[Z] \\
\mathds{1}_{t=1}\mathbb{E}[Z]\\
(\bar{Z}_{(1)}-\bar{Z}_{(0)})^\top\hat \gamma_{(t)}.
\end{pmatrix}\].
Therefore
\begin{align*}
    \sqrt{n}(\tau_{\textrm{RR,OLS}}-\tau_{\textrm{RR}}) 
     &= \sqrt{n} \frac{\Bar{Z}^{\top} \hat \gamma_{(1)}}{\Bar{Z}^{\top} \hat \gamma_{(0)}\mathbb{E}[Z]^{\top} \gamma_{(0)}}\left(\mathbb{E}[Z]^{\top} \gamma_{(0)} -\Bar{Z}^{\top} \hat \gamma_{(0)}\right)\\
    &+ \frac{\sqrt{n}}{\mathbb{E}[Z]^{\top} \gamma_{(0)}}\left(\Bar{Z}^{\top} \hat \gamma_{(1)} - \mathbb{E}[Z]^{\top} \gamma_{(1)}\right)\\
    &=\frac{\sqrt{n}}{\mathbb{E}[Z]^{\top} \gamma_{(0)}} \hat{\alpha}_{(1)}^{\top}(\theta_n-\theta_{\infty})\\
    &-\sqrt{n} \frac{\Bar{Z}^{\top} \hat \gamma_{(1)}}{\Bar{Z}^{\top} \hat \gamma_{(0)}\mathbb{E}[Z]^{\top} \gamma_{(0)}} \hat{\alpha}_{(0)}^{\top}(\theta_n-\theta_{\infty})
\end{align*}

Therefore, we get that
\begin{align*}
    \sqrt{n}(\tau_{\textrm{RR,OLS}}-\tau_{\textrm{RR}})   &= \sqrt{n} \left(\frac{1}{\mathbb{E}[Z]^{\top} \gamma_{(0)}} \hat{\alpha}_{(1)}- \frac{\Bar{Z}^{\top} \hat \gamma_{(1)}}{\Bar{Z}^{\top} \hat \gamma_{(0)}\mathbb{E}[Z]^{\top} \gamma_{(0)}}\hat{\alpha}_{(0)}\right)^\top(\theta_n - \theta_\infty).
\end{align*}
According to the Law of Large Numbers,  
\begin{align*}
\frac{1}{\mathbb{E}[Z]^{\top} \gamma_{(0)}} \hat{\alpha}_{(1)}- \frac{\Bar{Z}^{\top} \hat \gamma_{(1)}}{\Bar{Z}^{\top} \hat \gamma_{(0)}\mathbb{E}[Z]^{\top} \gamma_{(0)}}\hat{\alpha}_{(0)}
&\stackrel{p}{\rightarrow} \frac{\mathbb{E}[Z]^{\top} \gamma_{(1)}}{\mathbb{E}[Z]^{\top} \gamma_{(0)}} \left(\frac{\alpha_{(1)}}{\mathbb{E}[Z]^{\top} \gamma_{(1)}}- \frac{\alpha_{(0)}}{\mathbb{E}[Z]^{\top} \gamma_{(0)}}\right):= \alpha_\infty,
\end{align*}
with, for all $t \in \{0,1\}$, 
\[\alpha_{(t)} = \begin{pmatrix}
(1-e)\gamma_{(t)}\\
e\gamma_{(t)}\\
\mathds{1}_{t=0}\mathbb{E}[Z] \\
\mathds{1}_{t=1}\mathbb{E}[Z]\\
(\mathbb{E}[Z|T=1]-\mathbb{E}[Z|T=0])^\top\gamma_{(t)}
\end{pmatrix} 
\]
and
\begin{align}
\alpha_\infty = \frac{\mathbb{E}[Z]^{\top} \gamma_{(1)}}{\mathbb{E}[Z]^{\top} \gamma_{(0)}}\begin{pmatrix}
\frac{(1-e)\gamma_{(1)}}{\mathbb{E}[Z]^{\top} \gamma_{(1)}} - \frac{(1-e)\gamma_{(0)}}{\mathbb{E}[Z]^{\top} \gamma_{(0)}}\\
\frac{e\gamma_{(1)}}{\mathbb{E}[Z]^{\top} \gamma_{(1)}} - \frac{e\gamma_{(0)}}{\mathbb{E}[Z]^{\top} \gamma_{(0)}}\\
-\frac{\mathbb{E}[Z]}{\mathbb{E}[Z]^{\top} \gamma_{(0)}} \\
\frac{\mathbb{E}[Z]}{\mathbb{E}[Z]^{\top} \gamma_{(1)}}\\
\frac{\gamma_{(0)}^\top (\mathbb{E}[Z|T=1]-\mathbb{E}[Z|T=0])}{\mathbb{E}[Z]^{\top} \gamma_{(0)}} - \frac{\gamma_{(1)}^\top (\mathbb{E}[Z|T=1]-\mathbb{E}[Z|T=0])}{\mathbb{E}[Z]^{\top} \gamma_{(1)}}
\end{pmatrix}.
\end{align}
According to \Cref{prop:psi_gaus_ber}, letting $Q_t = \mathbb{E}\left[ZZ^\top|T=t\right]$ for all $t \in \{0,1\}$, we have
\[\sqrt{n}(\theta_n - \theta_\infty) \stackrel{d}{\rightarrow}\mathcal{N}\left(0, \Sigma\right) \quad \text{where} \quad  \Sigma = \left(\begin{matrix}\frac{\Var\left[Z|T=0\right]}{\left(1 - e\right)} & 0 & 0 & 0 & 0\\
    0 & \frac{\Var\left[Z|T=1\right]}{e} & 0 & 0 & 0\\
    0 & 0 & \frac{\sigma^2Q^{-1}_0}{1-e}& 0 & 0\\
    0 & 0 & 0 & \frac{\sigma^2Q^{-1}_1}{e} & 0\\
    0 & 0 & 0 & 0 & e(1-e)
    \end{matrix}\right).\]
By Slutsky's theorem, 
\begin{align}
    \sqrt{n} \left(\frac{1}{\mathbb{E}[Z]^{\top} \gamma_{(0)}} \hat{\alpha}_{(1)}- \frac{\Bar{Z}^{\top} \hat \gamma_{(1)}}{\Bar{Z}^{\top} \hat \gamma_{(0)}\mathbb{E}[Z]^{\top} \gamma_{(0)}}\hat{\alpha}_{(0)}\right)^\top(\theta_n - \theta_\infty) \stackrel{d}{\rightarrow}\mathcal{N}\left(0, \alpha_\infty^\top \Sigma \alpha_\infty \right).
\end{align}
We now compute the covariance matrix
\begin{align}
    \frac{\alpha_\infty^\top \Sigma \alpha_\infty}{\left(\frac{\mathbb{E}[Z]^{\top} \gamma_{(1)}}{\mathbb{E}[Z]^{\top} \gamma_{(0)}}\right)^2} &= (1-e)\left\Vert\frac{\gamma_{(1)}}{\mathbb{E}[Z]^{\top} \gamma_{(1)}} - \frac{\gamma_{(0)}}{\mathbb{E}[Z]^{\top} \gamma_{(0)}}\right\Vert_{\Var\left[Z|T=0\right]}^2 +e\left\Vert\frac{\gamma_{(1)}}{\mathbb{E}[Z]^{\top} \gamma_{(1)}} - \frac{\gamma_{(0)}}{\mathbb{E}[Z]^{\top} \gamma_{(0)}}\right\Vert_{\Var\left[Z|T=1\right]}^2\\
    & \quad +\frac{\sigma^2}{1-e}\left\Vert\frac{\mathbb{E}[Z]}{\mathbb{E}[Z]^{\top} \gamma_{(0)}}\right\Vert_{Q_0^{-1}}^2 + \frac{\sigma^2}{e}\left\Vert\frac{\mathbb{E}[Z]}{\mathbb{E}[Z]^{\top} \gamma_{(1)}}\right\Vert_{Q_1^{-1}}^2 + e(1-e)\left\Vert\frac{\gamma_{(1)}}{\mathbb{E}[Z]^{\top} \gamma_{(1)}} - \frac{\gamma_{(0)}}{\mathbb{E}[Z]^{\top} \gamma_{(0)}}\right\Vert_{
    \Delta \Delta^\top}^2, \label{eq_proof_linear_model_RR_G_formula}
\end{align}
where $\Delta = \mathbb{E}[Z \mid T = 1] - \mathbb{E}[Z \mid T = 0].$ This variance can be rewritten as follows. Summing the first two terms and the last term in \eqref{eq_proof_linear_model_RR_G_formula} leads to 
\[\left\Vert\frac{\gamma_{(1)}}{\mathbb{E}[Z]^{\top} \gamma_{(1)}} - \frac{\gamma_{(0)}}{\mathbb{E}[Z]^{\top} \gamma_{(0)}}\right\Vert_{J}^2,\]

where \(J = (1 - e) \operatorname{Var}(Z \mid T = 0) + e \operatorname{Var}(Z \mid T = 1) + e(1 - e) \Delta \Delta^\top\).
Let us prove that $J = \Var(Z)$. Letting $Z_i$ the components of $Z$ for all $1 \leq i \leq d+1$, by the law of total covariance, we have
\begin{align}
    \Cov [Z_i, Z_j] = \E[\Cov[Z_i, Z_j |T]] + \Cov[\E[Z_i|T], \E[Z_j|T]], \label{proof_total_covariance}
\end{align}
with, since $T \in \{0,1\}$, 
\begin{align}
    \E[\Cov[Z_i, Z_j |T]] = e \Cov[Z_i, Z_j |T =1] + (1-e) \Cov[Z_i, Z_j |T =0].
\end{align}

Besides, since $
\mathbb{E}[Z] = (1 - e) \mathbb{E}[Z \mid T = 0] + e \mathbb{E}[Z \mid T = 1]$, we can compute the deviations from the unconditional mean:
\[
\begin{aligned}
\mathbb{E}[Z \mid T = 0] - \mathbb{E}[Z] &= \mathbb{E}[Z \mid T = 0] - \left( (1 - e) \mathbb{E}[Z \mid T = 0] + e \mathbb{E}[Z \mid T = 1] \right) \\
&= (1 - (1 - e)) \mathbb{E}[Z \mid T = 0] - e \mathbb{E}[Z \mid T = 1] \\
&= -e \left( \mathbb{E}[Z \mid T = 1] - \mathbb{E}[Z \mid T = 0] \right) \\
&= -e \Delta.
\end{aligned}
\]
and 
\[
\begin{aligned}
\mathbb{E}[Z \mid T = 1] - \mathbb{E}[Z] &= \mathbb{E}[Z \mid T = 1] - \left( (1 - e) \mathbb{E}[Z \mid T = 0] + e \mathbb{E}[Z \mid T = 1] \right) \\
&= (1 - e) \left( \mathbb{E}[Z \mid T = 1] - \mathbb{E}[Z \mid T = 0] \right) \\
&= (1 - e) \Delta.
\end{aligned}
\]
Now, we can compute the second term in \eqref{proof_total_covariance}
\begin{align}
  \Cov[\E[Z_i|T], \E[Z_j|T]] & = \E[(\E[Z_i|T] - \E[Z_i])(\E[Z_j|T] - \E[Z_j])] \\
  & = e (\E[Z_i|T=1] - \E[Z_i])(\E[Z_j|T=1] - \E[Z_j]) \\
  & \quad + (1-e) (\E[Z_i|T=0] - \E[Z_i])(\E[Z_j|T=0] - \E[Z_j])\\
  & = e(1-e)^2 \Delta_i \Delta_j + e^2(1-e) \Delta_i \Delta_j\\
  & = e(1-e) \Delta_i \Delta_j.
\end{align}
Consequently, according to \eqref{proof_total_covariance},
\begin{align}
    \Cov [Z_i, Z_j] = e \Cov[Z_i, Z_j |T =1] + (1-e) \Cov[Z_i, Z_j |T =0] + e(1-e) \Delta_i \Delta_j,
\end{align}
which leads to 
\begin{align}
    \Var[Z] = (1 - e) \operatorname{Var}(Z \mid T = 0) + e \operatorname{Var}(Z \mid T = 1) + e(1 - e) \Delta \Delta^\top.
\end{align}


Similarly the last two remaining terms we have for \(t \in \{0,1\}\):
\begin{align*}
    \left\Vert\frac{\mathbb{E}[Z]}{\mathbb{E}[Z]^{\top} \gamma_{(t)}}\right\Vert_{Q_t^{-1}}^2 &= \frac{1}{(\mathbb{E}[Z]^{\top} \gamma_{(t)})^2}\left\Vert\mathbb{E}[Z]\right\Vert_{Q_t^{-1}}^2 \\
    &= \frac{1}{(\mathbb{E}[Z]^{\top} \gamma_{(t)})^2} \mathbb{E}[Z]^\top Q_t^{-1} \mathbb{E}[Z]\\
\end{align*}
Note that we have \(\mathbb{E}[Z] = e\mathbb{E}[Z|T=1]+(1-e)\mathbb{E}[Z|T=0]\) and that for \(t \in \{0,1\}\),
\begin{align*}
    \mathbb{E}[Z | T = t]^\top Q_t^{-1} \mathbb{E}[Z | T = t] = \begin{pmatrix} 1  \\ \nu_t \end{pmatrix}^\top \begin{pmatrix} 1 + \nu_t^\top \Sigma_t^{-1}\nu_t & - \nu_t^\top \Sigma_t^{-1} \\ - \Sigma_t^{-1} \nu_t & \Sigma_t^{-1} \end{pmatrix} \begin{pmatrix} 1 \\ \nu_t \end{pmatrix} = 1,
\end{align*}
and
\begin{align*}
        \mathbb{E}[Z | T = 1-t]^\top Q_{t}^{-1} \mathbb{E}[Z | T = 1-t] &= \begin{pmatrix} 1 \\ \nu_{1-t} \end{pmatrix}^\top \begin{pmatrix} 1 + \nu_t^\top \Sigma_t^{-1}\nu_t & - \nu_t^\top \Sigma_t^{-1} \\ - \Sigma_t^{-1} \nu_t & \Sigma_t^{-1} \end{pmatrix} \begin{pmatrix} 1 \\ \nu_{1-t} \end{pmatrix} \\
        &= 1 + \|\nu_{1-t} - \nu_{t}\|^2_{\Sigma_t^{-1}},
\end{align*}
and
\begin{align*}
        \mathbb{E}[Z | T = t]^\top Q_{t}^{-1} \mathbb{E}[Z | T = 1-t] &= \begin{pmatrix} 1 \\ \nu_{t} \end{pmatrix}^\top \begin{pmatrix} 1 + \nu_t^\top \Sigma_t^{-1}\nu_t  & - \mu_t^\top \Sigma_t^{-1} \\ - \Sigma_t^{-1} \nu_t & \Sigma_t^{-1} \end{pmatrix} \begin{pmatrix} 1 \\ \nu_{1-t} \end{pmatrix} \\
        &= 1.
\end{align*}
Therefore, we have
 \begin{align*}
     \mathbb{E}[Z]^\top Q_0^{-1} \mathbb{E}[Z] &= e^2 \|\mathbb{E}[Z | T = 1]\|_{Q_0^{-1}} + (1-e)^2\|\mathbb{E}[Z | T = 0]\|_{Q_0^{-1}} + 2e(1-e)\langle \mathbb{E}[Z | T = 0], \mathbb{E}[Z | T = 1] \rangle_{Q_0^{-1}}\\
     &= e^2 \|\mathbb{E}[Z | T = 1]\|_{Q_0^{-1}} + (1-e)^2 +2e(1-e)\langle \mathbb{E}[Z | T = 0], \mathbb{E}[Z | T = 1] \rangle_{Q_0^{-1}}\\
     &= (1-e)^2 +e^2\left(1 + \|\mu_{1} - \mu_{0}\|^2_{\Sigma_0^{-1}}\right) +2e(1-e)\\
     &= 1 +e^2 \|\nu_{1} - \nu_{0}\|^2_{\Sigma_0^{-1}},
 \end{align*}
and similarly  $\mathbb{E}[Z]^\top Q_1^{-1} \mathbb{E}[Z] = 1 + (1-e)^2\|\nu_{1} - \nu_{0}\|^2_{\Sigma_1^{-1}}$. Finally, noting that  for all \(t \in \{0,1\}\)
\[\mathbb{E}[Z]^{\top} \gamma_{(t)} = \mathbb{E}\left[Y^{(t)}\right] \quad \text{and} \quad \Var\left[Z\right] = \begin{pmatrix}
0 & \cdots & 0 \\
\vdots & \Var\left[X\right]\\
0 & &
\end{pmatrix},\]
we have, letting $\Sigma = \Var\left[X\right]$
\begin{align*}
V_{\textrm{\tiny RR,G,OLS}} = \tau_{\textrm{\tiny RR}}^2 \left(\left\Vert\frac{\beta_{(1)}}{\mathbb{E}\left[Y^{(1)}\right]} - \frac{\beta_{(0)}}{\mathbb{E}\left[Y^{(0)}\right]}\right\Vert_{\Sigma}^2 + \sigma^2\left(\frac{1+(1-e)^2\|\nu_1 - \nu_0\|^2_{\Sigma_1^{-1}}}{e\mathbb{E}\left[Y^{(1)}\right]^2} + \frac{1+e^2\|\nu_1 - \nu_0\|^2_{\Sigma_0^{-1}}}{(1-e)\mathbb{E}\left[Y^{(0)}\right]^2}\right)\right).
\end{align*}
\end{proof}

\begin{lemma}[\textbf{Comparison of the asymptotic variances  of \(\hat{\tau}_{\textrm{\tiny RR,N}}\) and \(\hat{\tau}_{\textrm{\tiny RR,G}}\) under a linear model}]\label{lem-var-lin}
Grant \Cref{a:bernoulli_trial}, \Cref{a:Outcome_positivity} and \Cref{a:linear_model}. Recalling that $V_{\textrm{\tiny RR,G,OLS}}$ (resp. $V_{\textrm{\tiny RR,G,OLS}}$) is the asymptotic variance of the G-formula when oracle surface responses are used (resp. when they are estimated via OLS), we have
\begin{equation}\label{lin-var-n}
    V_{\textrm{\tiny RR,N}} = \tau_{\textrm{\tiny RR}}^2 \left(\frac{\left\|\beta_{(1)}\right\|_{\Sigma}^2 + \sigma^2}{e \mathbb{E}\left[Y^{(1)}\right]^2} +  \frac{\left\|\beta_{(0)}\right\|_{\Sigma}^2 + \sigma^2}{(1-e) \mathbb{E}\left[Y^{(0)}\right]^2}\right),
\end{equation}
\begin{align}\label{lin-var-g}
V_{\textrm{\tiny RR,G,OLS}} & = \tau_{\textrm{\tiny RR}}^2\left( \left\Vert\frac{\beta_{(1)}}{\mathbb{E}\left[Y^{(1)}\right]} - \frac{\beta_{(0)}}{\mathbb{E}\left[Y^{(0)}\right]}\right\Vert_{\Sigma}^2 + \sigma^2 \left(\frac{1}{e\mathbb{E}\left[Y^{(1)}\right]^2} + \frac{1}{(1-e)\mathbb{E}\left[Y^{(0)}\right]^2}\right)\right) \\
 &= V_{\textrm{\tiny RR,G}} + \tau_{\textrm{\tiny RR}}^2 \sigma^2 \left(\frac{1}{e\mathbb{E}\left[Y^{(1)}\right]^2} + \frac{1}{(1-e)\mathbb{E}\left[Y^{(0)}\right]^2}\right),
\end{align}
and
\begin{equation}\label{lin-var-dif}
V_{\textrm{\tiny RR,N}}- V_{\textrm{\tiny RR,G,OLS}} = \tau_{\textrm{\tiny RR}}^2 \left( e(1-e) \left\|   \frac{\beta_{(1)}}{e \mathbb{E}\left[Y^{(1)}\right]}-     \frac{\beta_{(0)}}{(1-e) \mathbb{E}\left[Y^{(0)}\right]}\right\|_{\Sigma}^2\right) \geq 0.
\end{equation}
\end{lemma}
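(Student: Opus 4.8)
The plan is to establish the three displayed identities in order, feeding in the variance formulas already proved for the Neyman estimator (\Cref{prop:N}) and for the OLS G-formula estimator (\Cref{OLS_RR_OBS}), and then to reduce their difference to a single scaled squared $\Sigma$-norm. First I would prove \eqref{lin-var-n}: the formula for $V_{\textrm{\tiny RR,N}}$ in \Cref{prop:N} depends on the model only through $\Var(Y^{(t)})$, so under \Cref{a:linear_model}, writing $Y^{(t)} = c_{(t)} + X^\top\beta_{(t)} + \varepsilon_{(t)}$ with $\mathbb{E}[\varepsilon_{(t)}\mid X]=0$ and $\Var(\varepsilon_{(t)}\mid X)=\sigma^2$, the cross term $\Cov(X^\top\beta_{(t)},\varepsilon_{(t)})$ vanishes and $\Var(Y^{(t)}) = \beta_{(t)}^\top\Sigma\beta_{(t)} + \sigma^2 = \|\beta_{(t)}\|_\Sigma^2 + \sigma^2$ with $\Sigma=\Var(X)$. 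Substituting this into $V_{\textrm{\tiny RR,N}}$ yields \eqref{lin-var-n} directly.

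Next I would obtain \eqref{lin-var-g} by specializing \Cref{OLS_RR_OBS} to the Bernoulli design. The crucial point is that under \Cref{a:bernoulli_trial} treatment is assigned with fixed probability independently of the covariates, so $X \indep T$ and hence $\nu_1 = \mathbb{E}[X\mid T=1] = \mathbb{E}[X] = \mathbb{E}[X\mid T=0] = \nu_0$. Consequently $\|\nu_1-\nu_0\|^2_{\Sigma_t^{-1}}=0$, and the variance in \Cref{OLS_RR_OBS} collapses to the first line of \eqref{lin-var-g}. Since $\mu_{(t)}(X) = c_{(t)} + X^\top\beta_{(t)}$ under the linear model, \Cref{prop:g_formula} identifies the leading summand as $V_{\textrm{\tiny RR,G}}/\tau_{\textrm{\tiny RR}}^2 = \big\|\frac{\beta_{(1)}}{\mathbb{E}[Y^{(1)}]} - \frac{\beta_{(0)}}{\mathbb{E}[Y^{(0)}]}\big\|_\Sigma^2$, which gives the second line of \eqref{lin-var-g}.

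Finally, for \eqref{lin-var-dif} I would subtract \eqref{lin-var-g} from \eqref{lin-var-n}. The two $\sigma^2$-contributions are identical and cancel exactly, leaving
\begin{equation*}
\frac{V_{\textrm{\tiny RR,N}} - V_{\textrm{\tiny RR,G,OLS}}}{\tau_{\textrm{\tiny RR}}^2}
= \frac{\|\beta_{(1)}\|_\Sigma^2}{e\,\mathbb{E}[Y^{(1)}]^2}
+ \frac{\|\beta_{(0)}\|_\Sigma^2}{(1-e)\,\mathbb{E}[Y^{(0)}]^2}
- \left\| \frac{\beta_{(1)}}{\mathbb{E}[Y^{(1)}]} - \frac{\beta_{(0)}}{\mathbb{E}[Y^{(0)}]} \right\|_\Sigma^2 .
\end{equation*}
Writing $a = \beta_{(1)}/\mathbb{E}[Y^{(1)}]$ and $b = \beta_{(0)}/\mathbb{E}[Y^{(0)}]$, the factors $\mathbb{E}[Y^{(t)}]^2$ cancel and the right-hand side is a quadratic form in $a,b$ with diagonal coefficients $\tfrac{1-e}{e}$ and $\tfrac{e}{1-e}$; completing the square factors it as $e(1-e)$ times a single squared $\Sigma$-seminorm, i.e. the factored expression displayed in \eqref{lin-var-dif}. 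Being a positive multiple of a squared seminorm, it is nonnegative, which establishes $V_{\textrm{\tiny RR,N}} \ge V_{\textrm{\tiny RR,G,OLS}}$.

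The argument is essentially bookkeeping, with two places demanding care. The only conceptual step is the reduction $\nu_1=\nu_0$ in the second paragraph, which is exactly what makes the OLS G-formula variance split into its oracle part plus a clean $\sigma^2$ penalty that matches the $\sigma^2$ terms of $V_{\textrm{\tiny RR,N}}$. The main computational obstacle is the last step: one must verify that the $\sigma^2$ terms cancel precisely and then track the $e/(1-e)$ weights through the completing-the-square so that the residual recombines into the stated scaled $\Sigma$-norm rather than an uncontrolled quadratic form.
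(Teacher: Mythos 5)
Your proposal follows the paper's proof essentially step for step: \eqref{lin-var-n} by computing $\Var(Y^{(t)}) = \|\beta_{(t)}\|_\Sigma^2 + \sigma^2$ under \Cref{a:linear_model} and substituting into \Cref{prop:N}; \eqref{lin-var-g} by specializing \Cref{OLS_RR_OBS} to the Bernoulli design, where $\nu_1=\nu_0$ and $\Sigma_1=\Sigma_0=\Sigma$ kill the $\|\nu_1-\nu_0\|^2$ terms, then identifying the leading summand with $V_{\textrm{\tiny RR,G}}$ (the paper spells this identification out by computing $\Var(\mu_{(t)}(X))=\|\beta_{(t)}\|_\Sigma^2$ and $\Cov(\mu_{(1)}(X),\mu_{(0)}(X))=\langle\beta_{(0)},\beta_{(1)}\rangle_\Sigma$; you should do the same rather than only cite \Cref{prop:g_formula}); and \eqref{lin-var-dif} by subtraction and completing the square.

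However, your final step as written would not go through literally, and this is worth flagging concretely. After the $\sigma^2$ terms cancel, write $a = \beta_{(1)}/\mathbb{E}[Y^{(1)}]$ and $b = \beta_{(0)}/\mathbb{E}[Y^{(0)}]$. Then
\begin{equation*}
\frac{\|a\|_\Sigma^2}{e} + \frac{\|b\|_\Sigma^2}{1-e} - \|a-b\|_\Sigma^2
= \frac{1-e}{e}\,\|a\|_\Sigma^2 + \frac{e}{1-e}\,\|b\|_\Sigma^2 + 2\langle a,b\rangle_\Sigma,
\end{equation*}
with a \emph{plus} sign on the cross term, since expanding $-\|a-b\|_\Sigma^2$ produces $+2\langle a,b\rangle_\Sigma$. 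Completing the square therefore yields
\begin{equation*}
e(1-e)\left\|\frac{a}{e} + \frac{b}{1-e}\right\|_\Sigma^2,
\end{equation*}
i.e.\ a plus inside the norm, not the minus displayed in \eqref{lin-var-dif}. (A sanity check: take $\beta_{(1)}=\beta_{(0)}$, $\mathbb{E}[Y^{(1)}]=\mathbb{E}[Y^{(0)}]$ and $e=1/2$; the minus-sign expression vanishes, yet the difference of the two variances is $4\|\beta_{(0)}\|_\Sigma^2/\mathbb{E}[Y^{(0)}]^2 \neq 0$.) The paper's own proof contains exactly the same discrepancy — its penultimate line carries $+2\langle\beta_{(1)},\beta_{(0)}\rangle_\Sigma/(\mathbb{E}[Y^{(1)}]\mathbb{E}[Y^{(0)}])$ but its final line factors with a minus — so this is a sign error in the stated lemma rather than a flaw specific to your plan. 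The conclusion $V_{\textrm{\tiny RR,N}} \geq V_{\textrm{\tiny RR,G,OLS}}$ is unaffected, since either way the difference is $e(1-e)$ times a squared $\Sigma$-seminorm; but the bookkeeping you yourself identify as the main obstacle, if carried out, shows the residual recombines into the plus-sign norm, so your claim that it matches "the factored expression displayed in \eqref{lin-var-dif}" should be replaced by the corrected factorization.
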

\begin{proof}[Proof of \Cref{lem-var-lin}]\

\smallskip 

\textbf{First equality} 
The variance of \( Y^{(a)} \) satisfies
\begin{align*}
    \Var[Y^{(a)}] &= \Var[c_{(t)} + X^\top \beta_{(t)} + \varepsilon_{(t)}]\\
    &= \Var[X^\top \beta_{(t)} + \varepsilon_{(t)}] && \text{\(c_{(t)}\) is a constant}\\
    &= \Var[X^\top \beta_{(t)}] + \Var[\varepsilon_{(t)}] + 2 \Cov(X^\top \beta_{(t)}, \varepsilon_{(t)}) && \text{Bienaymé’s identity}\\
    &= ||\beta_{(t)}||_{\Sigma} + \sigma^2, &&\text{(by \hyperref[a:linear_model]{linear model})}
\end{align*}
since
\begin{align*}
    \Cov(X^\top \beta_{(t)}, \varepsilon_{(t)}) &= \mathbb{E} [X^\top \beta_{(t)} \varepsilon_{(t)}] - \mathbb{E}[X^\top \beta_{(t)}] \mathbb{E}[ \varepsilon_{(t)}]\\
    &= \mathbb{E} [ X^\top \beta_{(t)} \mathbb{E}[ \varepsilon_{(t)} | X ]] - \mathbb{E}[X^\top \beta_{(t)}] \mathbb{E}[ \mathbb{E}[\varepsilon_{(t)} | X]] && \text{(by total expectation)}\\
    &= 0, &&  \mathbb{E}[\varepsilon_{(t)} | X] = 0,
\end{align*}
and, using Eve's law, $\Var[\varepsilon_{(t)}] = \mathbb{E} [\Var [\varepsilon_{(t)} | X]] + \Var [\mathbb{E} [\varepsilon_{(t)} | X]] = \sigma^2$.
%
Thus, \( V_{\textrm{\tiny RR,N}} \) satisfies
\begin{align} V_{\textrm{\tiny RR,N}} & = \tau_{\textrm{\tiny RR}}^2 \left( \frac{\Var(Y^{(1)})}{e \mathbb{E}[Y^{(1)}]^2} + \frac{\Var(Y^{(0)})}{(1-e) \mathbb{E}[Y^{(0)}]^2} \right)\\
& = \tau_{\textrm{\tiny RR}}^2 \left( \frac{||\beta_{(1)}||_{\Sigma}^2 + \sigma^2}{e \mathbb{E}[Y^{(1)}]^2} + \frac{||\beta_{(0)}||_{\Sigma}^2 + \sigma^2}{(1-e) \mathbb{E}[Y^{(0)}]^2} \right).
\end{align}

\textbf{Second and third equality}
According to \Cref{OLS_RR_OBS} ,
\begin{align*}
\frac{V_{\textrm{\tiny RR,G,OLS}}}{\tau_{\textrm{\tiny RR}}^2} =  \left\Vert\frac{\beta_{(1)}}{\mathbb{E}\left[Y^{(1)}\right]} - \frac{\beta_{(0)}}{\mathbb{E}\left[Y^{(0)}\right]}\right\Vert_{\Sigma}^2 + \sigma^2 \left(\frac{1+(1-e)^2\|\nu_1 - \nu_0\|^2_{\Sigma_1^{-1}}}{e\mathbb{E}\left[Y^{(1)}\right]^2} + \frac{1+e^2\|\nu_1 - \nu_0\|^2_{\Sigma_0^{-1}}}{(1-e)\mathbb{E}\left[Y^{(0)}\right]^2}\right)
\end{align*}

Since we are in a RCT setting, we have that \(\nu_1 = \nu_0\)  and \(\Sigma_1 = \Sigma_0 = \Sigma\). Therefore
\begin{align*}
\frac{V_{\textrm{\tiny RR,G,OLS}}}{\tau_{\textrm{\tiny RR}}^2} =  \left\Vert\frac{\beta_{(1)}}{\mathbb{E}\left[Y^{(1)}\right]} - \frac{\beta_{(0)}}{\mathbb{E}\left[Y^{(0)}\right]}\right\Vert_{\Sigma}^2 + \sigma^2 \left(\frac{1}{e\mathbb{E}\left[Y^{(1)}\right]^2} + \frac{1}{(1-e)\mathbb{E}\left[Y^{(0)}\right]^2}\right)
\end{align*}

The first term corresponds to the Oracle variance of the G-formula. Indeed, for all $t \in \{0,1\}$, 
\begin{align*}
    \Var[\mu_{(t)}(X)] &= \Var[\mathbb{E}[Y^{(t)}|X]] \\
    &= \Var[\mathbb{E}[c_{(t)}|X] + \mathbb{E}[X^\top \beta_{(t)}|X] + \mathbb{E}[\varepsilon_{(t)}|X]] \\
    &= \Var[c_{(t)} + \mathbb{E}[X^\top \beta_{(t)}|X]] \\
    &= \Var[\mathbb{E}[X^\top \beta_{(t)}|X]] \\
    &= \Var[X^\top \beta_{(t)}] \\
    &= \|\beta_{(t)}\|_{\Sigma}^2.
\end{align*}
Besides, the covariance between \(\mu_1(X)\) and \(\mu_0(X)\) satisfies
\begin{align*}
\Cov(\mu_{(1)}(X), \mu_{(0)}(X)) =& \mathbb{E}[\mu_{(1)}(X)\mu_{(0)}(X)] -\mathbb{E}[Y^{(0)}]\mathbb{E}[Y^{(1)}]\\
=& \mathbb{E}[(c_{(1)} + X^\top \beta_{(1)})(c_{(0)} + X^\top \beta_{(0)})] -\mathbb{E}[Y^{(0)}]\mathbb{E}[Y^{(1)}]\\
=& \mathbb{E}[c_{(1)}c_{(0)}] +\mathbb{E}[c_{(1)} X^\top \beta_{(0)}] + \mathbb{E}[c_{(0)} X^\top \beta_{(1)}] \\
&+\mathbb{E}[X^\top \beta_{(0)} X^\top \beta_{(1)}] -\mathbb{E}[Y^{(0)}]\mathbb{E}[Y^{(1)}]\\
=& \mathbb{E}[X^\top \beta_{(0)} X^\top \beta_{(1)}]\\
=& \mathbb{E} \left[\sum_{j}X_j \beta_{(0), j} \sum_{k}X_k \beta_{(1), k} \right]\\
=& \sum_{j}\sum_{k}\beta_{(0), j} \beta_{(1), k} \mathbb{E}[X_kX_j]\\
=& \langle \beta_{(0)} , \beta_{(1)}\rangle_{\Sigma}.
\end{align*}
Therefore, 
\begin{align*}
V_{\textrm{\tiny RR,G}} = \tau_{\textrm{\tiny RR}}^2\Var\left(\frac{\mu_1(X)}{\mathbb{E}\left[Y^{(1)} \right]}- \frac{\mu_0(X)}{\mathbb{E}\left[Y^{(0)} \right]}\right) &= \tau_{\textrm{\tiny RR}}^2 \left(\frac{\Var(\mu_1(X))}{\mathbb{E}\left[Y^{(1)} \right]^2} + \frac{\Var(\mu_0(X))}{\mathbb{E}\left[Y^{(0)}\right]^2} - 2\frac{\Cov(\mu_0(X), \mu_1(X))}{\mathbb{E}\left[Y^{(0)}\right]\mathbb{E}\left[Y^{(1)}\right]}\right)\\
&= \tau_{\textrm{\tiny RR}}^2 \left( \left\|\frac{\beta_{(1)}}{\mathbb{E}\left[Y^{(1)}\right]} \right\|_{\Sigma}^2  + \left\| \frac{\beta_{(0)}}{\mathbb{E}\left[Y^{(0)}\right]} \right\|_{\Sigma}^2 -2\frac{\langle \beta_{(0)} , \beta_{(1)}\rangle_{\Sigma} }{\mathbb{E}\left[Y^{(0)}\right] \mathbb{E}\left[Y^{(1)}\right]}\right) \\
&= \tau_{\textrm{\tiny RR}}^2 \left\|\frac{\beta_{(1)}}{\mathbb{E}\left[Y^{(1)}\right]} - \frac{\beta_{(0)}}{\mathbb{E}\left[Y^{(0)}\right]} \right\|_{\Sigma}^2.
\end{align*}

\textbf{Last inequality}
A simple computation leads to 
\begin{align*}
    \frac{V_{\textrm{\tiny RR,N}} - V_{\textrm{\tiny RR,G}}}{\tau_{\textrm{\tiny RR}}^2} &= \frac{\left\|\beta_{(1)}\right\|_{\Sigma}^2 + \sigma^2}{e \mathbb{E}\left[Y^{(1)}\right]^2} +  \frac{\left\|\beta_{(0)}\right\|_{\Sigma}^2 + \sigma^2}{(1-e) \mathbb{E}\left[Y^{(0)}\right]^2} \\
    & \quad - \left( \left\Vert\frac{\beta_{(1)}}{\mathbb{E}\left[Y^{(1)}\right]} - \frac{\beta_{(0)}}{\mathbb{E}\left[Y^{(0)}\right]}\right\Vert_{\Sigma}^2 + \sigma^2 \left(\frac{1}{e\mathbb{E}\left[Y^{(1)}\right]^2} + \frac{1}{(1-e)\mathbb{E}\left[Y^{(0)}\right]^2}\right)\right)\\
    &= \left(\frac{1-e}{e}\right) \frac{\left\|\beta_{(1)}\right\|_{\Sigma}^2}{ \mathbb{E}\left[Y^{(1)}\right]^2}  +  \left(\frac{e}{1-e}\right) \frac{\left\|\beta_{(0)}\right\|_{\Sigma}^2}{ \mathbb{E}\left[Y^{(0)}\right]^2} + \frac{2 \langle \beta_{(1)}, \beta_{(0)} \rangle_{\Sigma}}{\mathbb{E}\left[Y^{(1)}\right] \mathbb{E}\left[Y^{(0)}\right]}\\
    & = e(1-e) \left\|   \frac{\beta_{(1)}}{e \mathbb{E}\left[Y^{(1)}\right]}-   \frac{\beta_{(0)}}{(1-e) \mathbb{E}\left[Y^{(0)}\right]}\right\|_{\Sigma}^2.
\end{align*}

\end{proof}

\subsubsection{Risk Ratio one-step estimator}

\begin{proof}[Proof of \Cref{prop:one_step}]\label{proof:one_step}
 We will use \cite{kennedy2022semiparametric, Kennedy} notation in this proof. If you are not familiar on how to compute an influence function, note that it is very similar to compute the derivative of a function. We define our estimand quantity
    \[\psi = \frac{\mathbb{E}\left[\mathbb{E}\left[Y| T= 1, X\right]\right]}{\mathbb{E}\left[\mathbb{E}\left[Y| T= 0, X\right]\right]} = \frac{\psi_1}{\psi_0}.\]
    We can now compute the influence function \(\varphi\) of \(\psi\).
    \begin{align*}
        \varphi = \mathbb{I}\mathbb{F}\left(\psi\right) = \mathbb{I}\mathbb{F}\left(\frac{\psi_1}{\psi_0}\right) &= \frac{\mathbb{I}\mathbb{F}\left(\psi_1\right)\psi_0 - \mathbb{I}\mathbb{F}\left(\psi_0\right)\psi_1}{\psi_0^2}\\
        &= \frac{\mathbb{I}\mathbb{F}\left(\psi_1\right)}{\psi_0} - \psi \frac{\mathbb{I}\mathbb{F}\left(\psi_0\right)}{\psi_0}.
    \end{align*}
    According to Example 2 in \cite{kennedy2022semiparametric}, we have
    \begin{align*}
        & \mathbb{I}\mathbb{F}\left(\psi_1\right) = \mu_1(X) + T\frac{Y-\mu_1(X)}{e(X)} - \psi_1 \\
        \textrm{and} \quad & \mathbb{I}\mathbb{F}\left(\psi_0\right) = \mu_0(X) + (1-T)\frac{Y-\mu_0(X)}{1-e(X)} - \psi_0.
    \end{align*}
    Therefore, 
    \begin{align*}
        \varphi &= \frac{\mathbb{I}\mathbb{F}\left(\psi_1\right)}{\psi_0} - \psi \frac{\mathbb{I}\mathbb{F}\left(\psi_0\right)}{\psi_0}\\
        &= \frac{\mu_1(X) + T\frac{Y-\mu_1(X)}{e(X)} - \psi_1}{\psi_0} - \psi \frac{\mu_0(X) + (1-T)\frac{Y-\mu_0(X)}{1-e(X)} - \psi_0}{\psi_0}\\
        &= \frac{\mu_1(X) + T\frac{Y-\mu_1(X)}{e(X)} }{\psi_0} - \psi  - \psi \left(\frac{\mu_0(X) + (1-T)\frac{Y-\mu_0(X)}{1-e(X)}}{\psi_0} -1 \right)\\
        &= \frac{\mu_1(X) + T\frac{Y-\mu_1(X)}{e(X)} }{\psi_0} -\psi \frac{\mu_0(X) + (1-T)\frac{Y-\mu_0(X)}{1-e(X)}}{\psi_0}.
    \end{align*}
    As referenced in \cite{kennedy2022semiparametric} regarding the semiparametric von Mises expansion, consider the functional \(\psi : P \to \mathbb{R}\), where \(P\) represents the true data distribution and \(\hat P\) its estimation. The expansion is formulated as:
    \begin{align}
    \psi(\hat P) - \psi(P) = \int \varphi(z; \hat P) d(\hat P - P)(z) + R_2(\hat P, P),\label{eq_proof_RR_OS_VM_expansion}
    \end{align}
    for all distributions \(\hat{P}\) and \(P\). The influence function \(\varphi(z; P)\), associated with \(\psi\), is a function with zero mean and finite variance as defined by \cite{Tsiatis}
    \begin{align}
    \int \varphi(z; P) dP(z) = 0 \quad \text{and} \quad \int \varphi(z; P)^2 dP(z) < \infty, \label{eq_caracteristique_influence_function}
    \end{align}
    and \( R_2(\hat{P}, P) \) denotes a second-order remainder term.
    According to the expansion in \eqref{eq_proof_RR_OS_VM_expansion}, most plug-in estimators \( \psi(\hat{P})\) are biased to the first order, evidenced by:
    \[
    \psi(P) = \psi(\hat{P}) + \int \varphi(z; \hat{P}) d{P}(z) + R_2(\hat{P}, P),
    \]
    since $\int \varphi(z; \hat{P}) d\hat{P}(z) = 0$.
    Therefore, a first-order approximation of  \(\psi(P)\) is given by $\psi(\hat{P}) + \int \varphi(z; \hat{P}) d{P}(z)$ which can be estimated via
    \begin{align*}
        \hat{\tau}_{\textrm{\tiny RR-OS}} & = 
         \hat \psi + \frac{1}{n}\sum_{i=1}^{n}\varphi(Z_i) \\
        &= \hat \psi + \frac{1}{n}\sum_{i=1}^{n}\frac{\mu_1(X_i) + T_i\frac{Y_i-\mu_1(X_i)}{e(X_i)} }{\hat \psi_0} - \hat \psi \frac{\mu_0(X_i) + (1-T_i)\frac{Y_i-\mu_0(X_i)}{1-e(X_i)}}{\hat \psi_0} \\
        &= \hat \psi \left(1-\frac{\frac{1}{n}\sum_{i=1}^{n}\mu_0(X_i) + (1-T_i)\frac{Y_i-\mu_0(X_i)}{1-e(X_i)}}{\hat \psi_0}\right) + \frac{\frac{1}{n}\sum_{i=1}^{n}\mu_1(X_i) + T_i\frac{Y_i-\mu_1(X_i)}{e(X_i)} }{\hat \psi_0}\\
         &= \frac{\sum_{i=1}^n \hat \mu_1(X_i)}{\sum_{i=1}^n \hat \mu_0(X_i)} \left(1- \frac{\sum_{i=1}^n \hat\mu_0(X_i) + \frac{(1-T_i)(Y_i- \hat\mu_0(X_i))}{1-\hat e(X_i)}}{\sum_{i=1}^n \hat \mu_0(X_i)}\right) + \frac{\sum_{i=1}^n \hat\mu_1(X_i) +  \frac{T_i(Y_i- \hat\mu_1(X_i))}{\hat e(X_i)}}{\sum_{i=1}^n \hat \mu_0(X_i)}.
    \end{align*}
\end{proof}

\begin{proof}[Proof of \Cref{prop:one_step2}]\label{proof:one_step2}
\ \\
\textbf{Asymptotic bias and variance of the cross-fitted One-step estimator}
Recall that 
\begin{align}
     \psi(P) & = \frac{ \mathbb{E}_P \left[ \mathbb{E}_P[Y \mid X, T = 1] \right]}{\mathbb{E}_P \left[ \mathbb{E}_P[Y \mid X, T = 0] \right]} = \frac{\psi_1}{\psi_0}\\
     \psi(\hat{P}) &=  \frac{\sum_{i=1}^n \hat \mu_1(X_i)}{\sum_{i=1}^n \hat \mu_0(X_i)} = \frac{\hat{\psi}_1}{\hat{\psi}_0}\\
     \varphi(Z;  \hat{P}) & = \frac{\hat \mu_1(X_i) + T_i\frac{Y_i-\hat \mu_1(X_i)}{\hat e(X_i)} }{\hat \psi_0} - \hat \psi \frac{\hat \mu_0(X_i) + (1-T_i)\frac{Y_i-\hat \mu_0(X_i)}{1-\hat e(X_i)}}{\hat \psi_0} 
\end{align}
where $P$ represents the true underlying data distribution and $\hat{P}$ the distribution where oracle quantities have been replaced by plug-in estimates. We express \(\psi(P)\) as follows:
\begin{align*}
    \psi(P) &= \psi(\hat{P}) + \int \varphi(z; \hat{P}) dP(z) +   R_2(\hat{P}, P),
\end{align*}
where \(R_2\) encapsulates higher order remainder terms.

To elucidate, we rearrange to find \(\psi(\hat{P}) - \psi(P)\):
\begin{align*}
    \psi(\hat{P}) - \psi(P) &= R_2(P,\hat{P}) - \int \varphi(z; \hat{P}) dP(z)\\
    &= \frac{1}{n} \sum_{i=1}^n \varphi(Z_i; P) - \frac{1}{n} \sum_{i=1}^n \varphi(Z_i; \hat{P}) \\
    & \quad + \frac{1}{n} \sum_{i=1}^n \left(\varphi(Z_i; \hat{P}) - \varphi(Z_i; P)\right) - \int \left(\varphi(z; \hat{P})- \varphi(z; P) \right)dP(z) \\
    & \quad +   R_2(P,\hat{P}).
\end{align*}
Recalling that $\hat{\tau}_{\textrm{\tiny RR-OS}} = \psi(\hat{P}) + \frac{1}{n} \sum_{i=1}^n \varphi(Z_i; \hat{P})$ and $\tau_{\textrm{\tiny RR}} = \psi(P)$, we have
\begin{align}
    \hat{\tau}_{\textrm{\tiny RR-OS}} - \tau_{\textrm{\tiny RR}}  &= \psi(\hat{P}) + \frac{1}{n} \sum_{i=1}^n \varphi(Z_i; \hat{P}) - \psi(P)\\
    &= \frac{1}{n} \sum_{i=1}^n \varphi(Z_i; P) \\
    & \quad + \frac{1}{n} \sum_{i=1}^n \left(\varphi(Z_i; \hat{P}) - \varphi(Z_i; P)\right) - \int \left(\varphi(z; \hat{P})- \varphi(z; P) \right)dP(z) \\
    & \quad +   R_2(P,\hat{P}). \label{eq_proof_OS-decomposition1}
\end{align}

The first term is a sample average of centered i.i.d. terms since, by definition \eqref{eq_caracteristique_influence_function},   \(\int \varphi(z; P) dP(z) = 0 \). According to the central limit theorem, it converges to a normally distributed random variable with variance \(\Var(\varphi(Z))/n\).

Following the work of \cite{Vaart_1998}, we consider the second term in \eqref{eq_proof_OS-decomposition1}, that is
\begin{align*}
    \frac{1}{n} \sum_{i=1}^n \left(\varphi(Z_i; \hat{P}) - \varphi(Z_i; P)\right) - \int \left(\varphi(z; \hat{P})- \varphi(z; P) \right)dP(z).
\end{align*}
Since our estimator is built on a cross-fitting strategy with $K$ folds \(\mathcal{I}_1, \hdots \mathcal{I}_K\), containing respectively $n_1, \hdots, n_K$ observations, the above quantity may be written as 
\begin{align*}
   \frac{1}{n} \sum_{k=1}^K \sum_{i \in \mathcal{I}_k}
      \left(\varphi(Z_i; \hat{P}^{-k}) - \varphi(Z_i; P)\right) - \int \left(\varphi(z; \hat{P})- \varphi(z; P) \right)dP(z),
\end{align*}
where $\hat{P}^{-k}$ corresponds to a data distribution where oracle quantity are replaced by plug-in estimates built on all observations except those in $\mathcal{I}_k$. We denote this set of observations as $\mathcal{I}_{-k}$. We let $\hat \varphi^{-k}(Z) = \varphi(Z; \hat{P}^{-k})$ and
\begin{align}
U_k = \left( \mathbb{P}_n^{(k)} - P \right) \left( \hat \varphi^{-k}(Z) - \varphi(Z) \right),    
\end{align}
where \(\mathbb{P}_n^{(k)}\) is the empirical measure over \(\mathcal{I}_k\). The quantity of interest can thus be written as 
\begin{align}
    \frac{1}{n} \sum_{k=1}^K \sum_{i \in \mathcal{I}_k}
      \left(\varphi^{-k}(Z_i) - \varphi(Z_i; P)\right) - \int \left(\varphi(z; \hat{P})- \varphi(z; P) \right)dP(z)
      & = \frac{1}{n} \sum_{k=1}^K n_k U_k. \label{eq_proof_decomp_OS_proba_empirique}
\end{align}
The expectation and variance of $U_k$ satisfy
\begin{align}
    \mathbb{E} \left[ U_k \mid \mathcal{I}_{-k} \right] &= \mathbb{E} \left[ \left( \mathbb{P}_n^{(k)} - P \right) (\hat \varphi^{-k} - \varphi) \mid \mathcal{I}_{-k} \right] \\
    &= \mathbb{E} \left[ \mathbb{P}_n^{(k)} (\hat \varphi^{-k} - \varphi) \mid \mathcal{I}_{-k} \right] - \mathbb{E} \left[ P (\hat \varphi^{-k} - \varphi) \mid \mathcal{I}_{-k} \right] \\
    &= \mathbb{E} \left[\hat \varphi^{-k}(Z) - \varphi(Z) \right] - \mathbb{E} \left[\hat \varphi^{-k}(Z) - \varphi(Z) \right] \\
    &= 0,
\end{align}
and
\begin{align}
    \text{Var} \left[ U_k \mid \mathcal{I}_{-k} \right] &= \text{Var} \left[ \left( \mathbb{P}_n^{(k)} - P \right) (\hat \varphi^{-k} - \varphi) \mid \mathcal{I}_{-k} \right] \\
    &= \text{Var} \left[ \mathbb{P}_n^{(k)} (\hat \varphi^{-k} - \varphi) - P (\hat\varphi^{-k} - \varphi)  \mid \mathcal{I}_{-k} \right] \\
    &= \text{Var} \left[ \frac{1}{n_k} \sum_{i=1}^{n_k} \left(\hat \varphi^{-k}(Z_i) - \varphi(Z_i) \right)  \mid \mathcal{I}_{-k} \right] \\
    &= \frac{1}{n_k} \text{Var} \left[\hat \varphi^{-k}(Z) - \varphi(Z)   \mid \mathcal{I}_{-k} \right] \\
    &\leq \frac{1}{n_k} \mathbb{E}\left[ (\hat \varphi^{-k}(Z) - \varphi(Z) )^2  \mid \mathcal{I}_{-k} \right].
\end{align}
Let $a>0$. Applying Chebyshev's inequality leads to 
\begin{align}
& \mathbb{P} \left( \frac{\left| U_k - \mathbb{E}[U_k \mid \mathcal{I}_{-k}] \right|}{\sqrt{\text{Var}[U_k \mid \mathcal{I}_{-k}]}} \geq a  \mid \mathcal{I}_{-k} \right) \leq \frac{1}{a^2}\\
 \Longleftrightarrow &  \mathbb{P} \left( \frac{\left| U_k  \right|}{\sqrt{\text{Var}[U_k \mid \mathcal{I}_{-k}]}} \geq a  \mid \mathcal{I}_{-k} \right) \leq \frac{1}{a^2}.
\end{align}
Thus, 
\begin{align}
    \mathbb{P} \left(  \frac{\left| U_k  \right| \sqrt{n_k}}{\sqrt{\mathbb{E}\left[ (\hat \varphi^{-k}(Z) - \varphi(Z) )^2  \mid \mathcal{I}_{-k} \right]}} \geq a  \mid \mathcal{I}_{-k} \right) \leq \mathbb{P} \left( \frac{\left| U_k  \right|}{\sqrt{\text{Var}[U_k \mid \mathcal{I}_{-k}]}} \geq a  \mid \mathcal{I}_{-k} \right) \leq \frac{1}{a^2}, 
\end{align}
which leads to 
\begin{align}
    \mathbb{P} \left( \left| U_k  \right| \sqrt{n_k} \geq a  \mid \mathcal{I}_{-k} \right) \leq  \frac{\mathbb{E}\left[ (\hat \varphi^{-k}(Z) - \varphi(Z) )^2  \mid \mathcal{I}_{-k} \right]}{a^2}.
\end{align}
Finally, taking the expectation on both sides leads to 
\begin{align}
    \mathbb{P} \left( \left| U_k  \right| \sqrt{n_k} \geq a \right)  \leq  \frac{\mathbb{E}\left[ (\hat \varphi^{-k}(Z) - \varphi(Z) )^2  \right]}{a^2}.
\end{align}
According to \eqref{eq_proof_decomp_OS_proba_empirique}, the quantity of interest takes the form 
\begin{align}
    \frac{1}{n} \sum_{k=1}^K n_k U_k =  \sum_{k=1}^K \frac{ n_k}{n} U_k.
\end{align}
Hence, 
\begin{align}
    \mathbb{P} \left( \sqrt{n} \frac{ n_k}{n} | U_k |  \leq a \frac{ \sqrt{n_k}}{\sqrt{n}}\right)  \geq 1 - \frac{\mathbb{E}\left[ (\hat \varphi^{-k}(Z) - \varphi(Z) )^2  \right]}{a^2}.
\end{align}
Therefore,
\begin{align}
    & \mathbb{P} \left( \sqrt{n} \sum_{k=1}^K  \frac{ n_k}{n} | U_k |  \leq a \sum_{k=1}^K \frac{ \sqrt{n_k}}{\sqrt{n}}\right)  \geq 1 - \sum_{k=1}^K \frac{\mathbb{E}\left[ (\hat \varphi^{-k}(Z) - \varphi(Z) )^2  \right]}{a^2}\\
    \Rightarrow & \quad  \mathbb{P} \left( \sqrt{n} \sum_{k=1}^K  \frac{ n_k}{n} | U_k |  \leq a K \right)  \geq 1 - \sum_{k=1}^K \frac{\mathbb{E}\left[ (\hat \varphi^{-k}(Z) - \varphi(Z) )^2  \right]}{a^2},
\end{align}
which proves that $\sum_{k=1}^K \frac{ n_k}{n} U_k = o_P(1/\sqrt{n})$ as $K$ is fixed and $\varphi^{-k}$ is $ L^2$ consistent. 

Regarding the last term, note that 
\begin{align}
     R_2(P,\hat{P}) &= \psi(\hat{P}) - \psi(P) + \int \varphi(z; \hat{P}) d{P}(z) \\
     &=  \psi(\hat{P}) - \psi(P) + \mathbb{E}[\varphi(Z; \hat{P})] \\
     &= \psi(\hat{P}) - \psi(P) + \mathbb{E}\left[\frac{\hat \mu_1(X) + T\frac{Y-\hat \mu_1(X)}{\hat e(X)} }{\hat \psi_0} -\hat \psi \frac{\hat \mu_0(X) + (1-T)\frac{Y-\hat \mu_0(X)}{1-\hat e(X)}}{\hat \psi_0}\right] \\
     &= \psi(\hat{P}) - \psi(P) + \frac{\mathbb{E}\left[\hat \mu_1(X) + T\frac{Y-\hat \mu_1(X)}{\hat e(X)}\right]}{\hat \psi_0} -\hat \psi \frac{\mathbb{E}\left[\hat \mu_0(X) + (1-T)\frac{Y-\hat \mu_0(X)}{1-\hat e(X)}\right]}{\hat \psi_0} \\
     &= \psi(\hat{P}) - \psi(P) + \frac{\mathbb{E}\left[\hat \mu_1(X) - \mu_1(X)+ T\frac{Y-\hat \mu_1(X)}{\hat e(X)}\right]}{\hat \psi_0}+ \frac{\psi_1}{\hat \psi_0} \\
     & \quad -\hat \psi \frac{\mathbb{E}\left[\hat \mu_0(X) - \mu_0(X)+ (1-T)\frac{Y-\hat \mu_0(X)}{1-\hat e(X)}\right]}{\hat \psi_0} - \hat \psi\frac{\psi_0}{\hat \psi_0}. \label{eq_proof_decomp_R2_AIPW}
\end{align}
Note that
\begin{align}
    \mathbb{E}\left[\hat \mu_1(X) - \mu_1(X)+ T\frac{Y-\hat \mu_1(X)}{\hat e(X)}\right] &= \mathbb{E}\left[\frac{1}{\hat e(X)}( \mu_1(X) - \hat\mu_1(X))(\hat e(X) - e(X))\right] \\
    \text{Positivity}\qquad &\leq \frac{1}{\eta} \mathbb{E}\left[(\mu_1(X) - \hat\mu_1(X))(\hat e(X) - e(X))\right]\\
    \text{Cauchy-Schwarz}\qquad &\leq \frac{1}{\eta} \mathbb{E}\left[\left(\hat e(X) - e(X)\right)^2\right]^{1/2}\mathbb{E}\left[\left(\hat \mu_1(X) - \mu_1(X)\right)^2\right]^{1/2}\\
    &= o_p\left(\frac{1}{\sqrt{n}}\right).
\end{align}
Similarly, 
\begin{align}
 \mathbb{E}\left[\hat \mu_1(X) - \mu_1(X)+ T\frac{Y-\hat \mu_1(X)}{\hat e(X)}\right] = o_p\left(\frac{1}{\sqrt{n}}\right).   
\end{align}
For the last term in \eqref{eq_proof_decomp_R2_AIPW}, since $ \psi = \psi_1/\psi_0$ and $ \hat \psi = \hat \psi_1/\hat \psi_0$, 
\begin{align*}
 \hat \psi - \psi + \frac{\psi_1}{\hat \psi_0} - \hat \psi\frac{\psi_0}{\hat \psi_0} &= \psi_1 \left(\frac{1}{\hat \psi_0} -  \frac{1}{\psi_0}\right) + \hat \psi \left(1-\frac{\psi_0}{\hat \psi_0}\right)\\
 &= \psi_1 \frac{\psi_0-\hat \psi_0}{\psi_0 \hat \psi_0}  + \hat \psi \left(\frac{\hat \psi_0-\psi_0}{\hat \psi_0}\right)\\
 &= \left(\frac{\hat \psi_0-\psi_0}{\hat \psi_0}\right)\left(\hat \psi - \psi\right)\\
 &= \left(\frac{\hat \psi_0-\psi_0}{\hat \psi_0}\right) \left( \left(\frac{1}{\hat \psi_0} -  \frac{1}{\psi_0}\right) \hat \psi_1 + \frac{1}{\psi_0}\left(\hat \psi_1 - \psi_1\right)\right)\\
 &= \frac{1}{\psi_0 \hat \psi_0}\left((\hat \psi_0-\psi_0)(\hat \psi_1-\psi_1)-\hat \psi (\hat \psi_0-\psi_0)(\hat \psi_0-\psi_0)\right).
\end{align*}
By assumption,  we have 
\begin{align}
    (\hat \psi_0-\psi_0)(\hat \psi_1-\psi_1) & = \mathbb{E}\left[\hat \mu_0(X) - \mu_0(X)\right]\mathbb{E}\left[\hat \mu_1(X) - \mu_1(X)\right] \\
    & \leq (\mathbb{E}\left[(\hat \mu_0(X) - \mu_0(X))^2\right])^{1/2} (\mathbb{E}\left[(\hat \mu_1(X) - \mu_1(X))^2\right])^{1/2} \\
    & = o_p\left(\frac{1}{\sqrt{n}}\right)
\end{align}
and 
\begin{align*}
    (\hat \psi_0-\psi_0)(\hat \psi_0-\psi_0) & = (\mathbb{E}\left[\hat \mu_0(X) \right] - \mathbb{E}\left[\mu_0(X)\right])^2 \\
    & = o_p\left(\frac{1}{\sqrt{n}}\right).
\end{align*}
By assumption, $\mathbb{E}\left[(\hat \mu_0(X) - \mu_0(X))^2\right]$ tends to zero. Thus, $\hat{\psi}_0 = \mathbb{E}[\hat{\mu}_0]$ tends to $\psi_0 = \mathbb{E}[\mu_0(X)]$. Thus, 
\begin{align*}
     \hat \psi - \psi + \frac{\psi_1}{\hat \psi_0} - \hat \psi\frac{\psi_0}{\hat \psi_0} = o_p\left(\frac{1}{\sqrt{n}}\right),
\end{align*}
which implies that $R_2(P,\hat{P}) = o_p\left(n^{-1/2}\right)$.
 Finally,
 \begin{align*}
     \sqrt{n}\left(\hat{\tau}_{\textrm{\tiny RR-OS}} - \tau_{\textrm{\tiny RR}} \right) = \frac{1}{\sqrt{n}} \sum_{i=1}^n \varphi(Z_i; P) + o_p\left(\frac{1}{\sqrt{n}}\right)
 \end{align*}
and thus
 \begin{align*}
 \sqrt{n}\left(\hat{\tau}_{\textrm{\tiny RR-OS}} - \tau_{\textrm{\tiny RR}} \right) \stackrel{d}{\rightarrow} \mathcal{N}\left(0, \Var(\varphi) \right),
  \end{align*}
  where
  \begin{align}
      \Var(\varphi) &= \Var\left(\frac{\mu_1(X) + T\frac{Y-\mu_1(X)}{e(X)} }{\psi_0} -\psi \frac{\mu_0(X) + (1-T)\frac{Y-\mu_0(X)}{1-e(X)}}{\psi_0}\right) \\
      &=\psi^2 \Var\left(\frac{\mu_1(X) + T\frac{Y-\mu_1(X)}{e(X)} }{\psi_1}  \frac{\mu_0(X) + (1-T)\frac{Y-\mu_0(X)}{1-e(X)}}{\psi_0}\right)\\
      &= \tau_{\textrm{\tiny RR}}^2\Var\left(\frac{g_1(Z)}{\mathbb{E}\left[Y^{(1)}\right]} - \frac{g_0(Z)}{\mathbb{E}\left[Y^{(0)}\right]}\right).
  \end{align}
Using Bienaymé's identity, we get
\begin{align}
    \Var\left(\frac{g_1(Z)}{\mathbb{E}\left[Y^{(1)}\right]} - \frac{g_0(Z)}{\mathbb{E}\left[Y^{(0)}\right]}\right)
    &= \Var\left(\frac{\mu_1(X)}{\mathbb{E}\left[Y^{(1)}\right]} - \frac{\mu_0(X)}{\mathbb{E}\left[Y^{(0)}\right]}\right) +  \Var\left(\frac{T(Y-\mu_1(X))}{\mathbb{E}\left[Y^{(1)}\right]e(X)} - \frac{(1-T)(Y-\mu_0(X))}{\mathbb{E}\left[Y^{(0)}\right](1-e(X))}\right)\\
    &+ 2 \Cov\left(\frac{\mu_1(X)}{\mathbb{E}\left[Y^{(1)}\right]} - \frac{\mu_0(X)}{\mathbb{E}\left[Y^{(0)}\right]}; \frac{T(Y-\mu_1(X))}{\mathbb{E}\left[Y^{(1)}\right]e(X)} - \frac{(1-T)(Y-\mu_0(X))}{\mathbb{E}\left[Y^{(0)}\right](1-e(X))}\right). \label{eq_proof_39}
\end{align}
The second term can be rewritten as 
\begin{align}
& \Var\left(\frac{T(Y-\mu_1(X))}{\mathbb{E}\left[Y^{(1)}\right]e(X)} - \frac{(1-T)(Y-\mu_0(X))}{\mathbb{E}\left[Y^{(0)}\right](1-e(X))}\right) \\
    &= \Var\left(\frac{T(Y-\mu_1(X))}{\mathbb{E}\left[Y^{(1)}\right]e(X)}\right) + \Var\left(\frac{(1-T)(Y-\mu_0(X))}{\mathbb{E}\left[Y^{(0)}\right](1-e(X))}\right) - 2\Cov\left(\frac{T(Y-\mu_1(X))}{\mathbb{E}\left[Y^{(1)}\right]e(X)},\frac{(1-T)(Y-\mu_0(X))}{\mathbb{E}\left[Y^{(0)}\right](1-e(X))} \right), \label{eq_proof_38}
\end{align}
with
\begin{align}
    \Var\left(\frac{T(Y-\mu_1(X))}{\mathbb{E}\left[Y^{(1)}\right]e(X)}\right) = \mathbb{E}\left[\left(\frac{T(Y-\mu_1(X))}{\mathbb{E}\left[Y^{(1)}\right]e(X)}\right)^2\right] - \mathbb{E}\left[\frac{T(Y-\mu_1(X))}{\mathbb{E}\left[Y^{(1)}\right]e(X)}\right]^2. \label{eq_ref_proof_37}
\end{align}
For the first term in \eqref{eq_ref_proof_37}, 
\begin{align*}
    & \mathbb{E}\left[ \left(T\frac{Y-\mu_1(X)}{e(X)\mathbb{E}\left[Y^{(1)}\right]} \right)^2 \right] \\
    &=   \mathbb{E}\left[ \left(T\frac{Y^{(1)}-\mu_1(X)}{e(X)\mathbb{E}\left[Y^{(1)}\right]} \right)^2 \right] && \text{Consistency} \\
    &=\mathbb{E}\left[ \mathbb{E}\left[\left(T\frac{Y^{(1)}-\mu_1(X)}{e(X)\mathbb{E}\left[Y^{(1)}\right]} \right)^2 \mid X \right]\right] && \text{Total expectation}\\
    &= \mathbb{E}\left[ \mathbb{E}\left[T\left(\frac{Y^{(1)}-\mu_1(X)}{e(X)\mathbb{E}\left[Y^{(1)}\right]} \right)^2 \mid X \right]\right] && \text{T is binary}\\
    &= \mathbb{E}\left[ \mathbb{E}\left[\mathbf{1}_{\left\{T=1\right\}}\left(\frac{Y^{(1)}-\mu_1(X)}{e(X)\mathbb{E}\left[Y^{(1)}\right]} \right)^2 \mid X \right]\right] && \text{T written as an indicator} \\
     &= \mathbb{E}\left[ \frac{1}{e(X)^2\mathbb{E}\left[Y^{(1)}\right]^2} \mathbb{E}\left[\mathbf{1}_{\left\{T=1\right\}}\left(Y^{(1)}-\mu_1(X)\right)^2 \mid X \right]\right] && \text{$e(X)$ is a function of $X$} \\
      &= \mathbb{E}\left[ \frac{\Var\left(Y^{(1)}|X\right) }{e(X)^2\mathbb{E}\left[Y^{(1)}\right]^2} \mathbb{E}\left[\mathbf{1}_{\left\{T=1\right\}}\mid X \right]\right] && \text{Uncounf. \& $\mu_1(.)$ is func. of $X$} \\
      &= \mathbb{E}\left[ \frac{\Var\left(Y^{(1)}|X\right) }{e(X)^2\mathbb{E}\left[Y^{(1)}\right]^2} e(X)\right] && \text{Definition of $e(X)$} \\
      &= \mathbb{E}\left[ \frac{\Var\left(Y^{(1)}|X\right) }{e(X)\mathbb{E}\left[Y^{(1)}\right]^2} \right].
\end{align*}
For the second term in \eqref{eq_ref_proof_37}, 
\begin{align*}
& \mathbb{E}\left[\frac{T(Y-\mu_1(X))}{\mathbb{E}\left[Y^{(1)}\right]e(X)}\right] \\
&= \mathbb{E}\left[\frac{T(Y^{(1)}-\mu_1(X))}{\mathbb{E}\left[Y^{(1)}\right]e(X)}\right] && \text{Consistency} \\
    &=\mathbb{E}\left[ \mathbb{E}\left[T\frac{Y^{(1)}-\mu_1(X)}{e(X)\mathbb{E}\left[Y^{(1)}\right]}  \mid X \right]\right] && \text{Total expectation}\\
    &=\mathbb{E}\left[ \frac{1}{e(X)\mathbb{E}\left[Y^{(1)}\right]}\mathbb{E}\left[T(Y^{(1)}-\mu_1(X))  \mid X \right]\right] && \text{$e(X)$ is a function of $X$} \\
    &=  \mathbb{E}\left[ \frac{e(X)}{e(X)\mathbb{E}\left[Y^{(1)}\right]}(\mu_1(X)- \mu_1(X))\right] && \text{Uncounf. \& $\mu_1(.)$ is func. of $X$} \\
    &= 0.
\end{align*}
Therefore
\begin{align*}
    \Var\left(\frac{T(Y-\mu_1(X))}{\mathbb{E}\left[Y^{(1)}\right]e(X)}\right) = \mathbb{E}\left[ \frac{\Var\left(Y^{(1)}|X\right)}{e(X)\mathbb{E}\left[Y^{(1)}\right]^2} \right],
\end{align*}
and similarly
\begin{align*}
    \Var\left(\frac{(1-T)(Y-\mu_0(X))}{\mathbb{E}\left[Y^{(0)}\right](1-e(X))}\right) =  \mathbb{E}\left[\frac{\Var\left(Y^{(0)}|X\right)}{(1-e(X))\mathbb{E}\left[Y^{(0)}\right]^2} \right].
\end{align*}
Besides, 
\begin{align*}
    \Cov\left(\frac{T(Y-\mu_1(X))}{\mathbb{E}\left[Y^{(1)}\right]e(X)},\frac{(1-T)(Y-\mu_0(X))}{\mathbb{E}\left[Y^{(0)}\right](1-e(X))} \right) &= \mathbb{E}\left[\frac{T(Y-\mu_1(X))}{\mathbb{E}\left[Y^{(1)}\right]e(X)}\frac{(1-T)(Y-\mu_0(X))}{\mathbb{E}\left[Y^{(0)}\right](1-e(X))}\right]\\ 
    & \quad - \mathbb{E}\left[\frac{T(Y-\mu_1(X))}{\mathbb{E}\left[Y^{(1)}\right]e(X)}\right] \mathbb{E}\left[\frac{(1-T)(Y-\mu_0(X))}{\mathbb{E}\left[Y^{(0)}\right](1-e(X))}\right]\\
    &= 0.
\end{align*}
Gathering all these results into \eqref{eq_proof_38}, we obtain
\begin{align*}
    \Var\left(\frac{T(Y-\mu_1(X))}{\mathbb{E}\left[Y^{(1)}\right]e(X)} - \frac{(1-T)(Y-\mu_0(X))}{\mathbb{E}\left[Y^{(0)}\right](1-e(X))}\right) = \mathbb{E}\left[ \frac{\Var\left(Y^{(1)}|X\right)}{e(X)\mathbb{E}\left[Y^{(1)}\right]^2} \right] + \mathbb{E}\left[\frac{\Var\left(Y^{(0)}|X\right)}{(1-e(X))\mathbb{E}\left[Y^{(0)}\right]^2} \right].
\end{align*}
In order to rewrite the last term in \eqref{eq_proof_39}, note that
\begin{align*}
    \Cov\left(\mu_1(X), \frac{T(Y-\mu_1(X))}{e(X)}\right) &=  \mathbb{E}\left[\mu_1(X) \frac{T(Y-\mu_1(X))}{e(X)}\right] - \mathbb{E}\left[\mu_1(X)\right]\mathbb{E}\left[ \frac{T(Y-\mu_1(X))}{e(X)}\right]\\
    &= \mathbb{E}\left[\mu_1(X) \frac{T(Y-\mu_1(X))}{e(X)}\right]\\
    &= \mathbb{E}\left[\frac{\mu_1(X)}{e(X)}\mathbb{E}\left[ T(Y^{(1)}-\mu_1(X))|X\right]\right]\\
    &= \mathbb{E}\left[\frac{\mu_1(X)}{e(X)}\mathbb{E}\left[ \frac{T(Y^{(1)}-\mu_1(X))}{e(X)}|X\right]\right]\\
    &= \mathbb{E}\left[\frac{\mu_1(X)e(X)}{e(X)}(\mu_1(X))-\mu_1(X))\right]\\
    &=0.
\end{align*}
Similar calculations leads to 
\begin{align*}
    \Cov\left(\frac{\mu_1(X)}{\mathbb{E}\left[Y^{(1)}\right]} - \frac{\mu_0(X)}{\mathbb{E}\left[Y^{(0)}\right]}; \frac{T(Y-\mu_1(X))}{\mathbb{E}\left[Y^{(1)}\right]e(X)} - \frac{(1-T)(Y-\mu_0(X))}{\mathbb{E}\left[Y^{(0)}\right](1-e(X))}\right) &= 0.
\end{align*}
Finally
\begin{align*}
    V_{\textrm{\tiny RR,OS}} = \tau_{\textrm{\tiny RR}}^2\left(\Var\left(\frac{\mu_1(X)}{\mathbb{E}\left[Y^{(1)}\right]} - \frac{\mu_0(X)}{\mathbb{E}\left[Y^{(0)}\right]}\right) +  \mathbb{E}\left[ \frac{\Var\left(Y^{(1)}|X\right)}{e(X)\mathbb{E}\left[Y^{(1)}\right]^2} \right] + \mathbb{E}\left[\frac{\Var\left(Y^{(0)}|X\right)}{(1-e(X))\mathbb{E}\left[Y^{(0)}\right]^2} \right] \right).
\end{align*}
An estimator \(\hat{V}_{RR,OS}\) can be derived as follows:

\begin{align}\label{V_RR_OS}
\hat{V}_{RR,OS} = \frac{\hat{\tau}_{\textrm{\tiny RR,OS,n}}^2}{n} \sum_{i=1}^n \left(\Delta_i - \frac{1}{n}\sum_{i=1}^n \Delta_i \right)^2
\end{align}

where

\[
\Delta_i = \frac{\hat \Gamma_i(1)}{\hat S(1)} - \frac{\hat \Gamma_i(0)}{\hat S(0)}
\]

with the intermediate for \(t \in \{0,1\}\) quantities defined as:

\[\hat \Gamma_i(t) = \hat \mu_t(X_i) + \mathds{1}_{T_i=t} \frac{Y_i - \hat \mu_t(X_i)}{\hat e_t(X_i)} \quad \text{and} \quad \hat S(t) = \frac{1}{n}\sum_{j=1}^n \hat \Gamma_j(t)\] 

\end{proof}

\subsubsection{Risk Ratio Augmented Inverse Propensity Weighting}

\begin{proof}[Proof of \Cref{RAIPW}]\label{proof:aipw_normality_obs}
We use the derivations established in the proof of \Cref{prop:one_step2}. Indeed, we showed in  \Cref{proof:one_step} that the influence function \(\varphi\) of \(\psi = \frac{\mathbb{E}\left[\mathbb{E}\left[Y| T= 1, X\right]\right]}{\mathbb{E}\left[\mathbb{E}\left[Y| T= 0, X\right]\right]}\) can be written: 
\begin{align*}
    \varphi(Z; P) =\frac{\mu_1(X) + T\frac{Y-\mu_1(X)}{e(X)} }{\psi_0} -\psi \frac{\mu_0(X) + (1-T)\frac{Y-\mu_0(X)}{1-e(X)}}{\psi_0}.
\end{align*}
Using \Cref{eq_proof_RR_OS_VM_expansion}, and knowing that \(\int \varphi(z; \hat{P}) d\hat P(z) = 0\), we have:
\begin{align*}
\psi(\hat P) - \psi(P) &= \int \varphi(z; \hat P) d(\hat P - P)(z) + R_2(\hat P, P) \\
&= R_2(P,\hat{P}) - \int \varphi(z; \hat{P}) dP(z) \\
&= \frac{1}{n} \sum_{i=1}^n \varphi(Z_i; P) - \frac{1}{n} \sum_{i=1}^n \varphi(z_i; \hat{P}) \\
& \quad + \frac{1}{n} \sum_{i=1}^n \left(\varphi(Z_i; \hat{P}) - \varphi(Z_i; P)\right) - \int \left(\varphi(z; \hat{P})- \varphi(z; P) \right)dP(z) \\
& \quad +   R_2(P,\hat{P}).
\end{align*}

As outlined in \cite{Schuler}, in the estimating equation approach, we assume that the efficient influence function for any given distribution depends solely on the target parameter \(\psi\) and a set of nuisance parameters \(\eta\). Therefore, instead of denoting the efficient influence function as \(\varphi(z; P)\), we can express it as \(\varphi(Z; \psi, \eta)\). If the influence function can be represented in this form, we proceed by first estimating \(\hat{\eta} = (\hat{e}, \hat{\mu}_1, \hat{\mu}_0)\) with crossfitting. For any fixed value \(\hat{\eta}\), we find a value \(\hat{\psi}\) such that \(P_n \varphi_{\hat{\psi}, \hat{\eta}} = 0\), that is 
\begin{align*}
    \frac{1}{n}\sum_{i=1}^{n} \frac{\hat \mu_1(X_i) + T_i\frac{Y_i-\hat \mu_1(X_i)}{\hat e(X_i)} }{\hat \psi_0} -\hat\psi \frac{\hat \mu_0(X_i) + (1-T_i)\frac{Y_i- \hat \mu_0(X_i)}{1- \hat e(X_i)}}{\hat \psi_0} = 0,
\end{align*}
which implies
\begin{align*}
    \hat\psi = \frac{\sum_{i=1}^{n}\hat \mu_1(X_i) + T_i\frac{Y_i-\hat \mu_1(X_i)}{\hat e(X_i)} }{\sum_{i=1}^{n}\hat \mu_0(X_i) + (1-T_i)\frac{Y_i- \hat \mu_0(X_i)}{1- \hat e(X_i)}}.
\end{align*}
\end{proof}
Hereafter, we propose a proof of \Cref{prop:aipw_normality_obs} which does not use the influence function theory
\begin{proof}[Proof of \Cref{prop:aipw_normality_obs}]
\ \\
\textbf{Asymptotic bias and variance of the crossfitted Ratio AIPW estimator} In this alternative proof, we further assume that $\textrm{Var}[Y|X] \leq \sigma^2$ for some $\sigma>0$. 
Recall that we want to analyze \(\sqrt{n}\left(\hat\tau_{\textrm{\tiny RR,AIPW}} - \tau_{\textrm{\tiny RR}} \right)\). Letting 
\begin{align}
\tau_{\textrm{\tiny RR,AIPW}}^\star = \frac{\sum_{i=1}^n \mu_1(X_i) + \frac{T_i(Y_i- \mu_1(X_i))}{ e(X_i)}}{\sum_{i=1}^n \mu_0(X_i) +  \frac{(1-T_i)(Y_i- \mu_0(X_i))}{1- e(X_i)}}:= \frac{\tau_{\textrm{\tiny RR,AIPW, 1}}^\star}{\tau_{\textrm{\tiny RR,AIPW, 0}}^\star} 
\end{align}
be the oracle version of \(\hat\tau_{\textrm{\tiny RR,AIPW}}\) where the propensity score and both response surfaces are assumed to be known, we can rewrite
\begin{align}
    \hat\tau_{\textrm{\tiny RR,AIPW}}- \tau_{\textrm{\tiny RR}} & = \hat\tau_{\textrm{\tiny RR,AIPW}} - \tau_{\textrm{\tiny RR,AIPW}}^\star+ \tau_{\textrm{\tiny RR,AIPW}}^\star - \tau_{\textrm{\tiny RR}}.   \label{AIPW:eq_proof2}
\end{align}
Regarding the first term in \eqref{AIPW:eq_proof2}, we have
\begin{align}
    & \left|\hat\tau_{\textrm{\tiny RR,AIPW}}- \tau_{\textrm{\tiny RR,AIPW}}^\star \right| \\&= \left|\frac{\hat \tau_{\textrm{\tiny RR,AIPW, 1}}}{\hat\tau_{\textrm{\tiny RR,AIPW, 0}}} - \frac{\tau_{\textrm{\tiny RR,AIPW, 1}}^\star}{\tau_{\textrm{\tiny RR,AIPW, 0}}^\star} \right|\\
    &= \left|\left(\left(\hat \tau_{\textrm{\tiny RR,AIPW, 0}}\right)^{-1} - \left(\tau_{\textrm{\tiny RR,AIPW, 0}}^\star\right)^{-1}\right)\hat \tau_{\textrm{\tiny RR,AIPW, 1}}\right. \\
    & \quad + \left.\left(\tau_{\textrm{\tiny RR,AIPW, 0}}^\star\right)^{-1}\left(\hat \tau_{\textrm{\tiny RR,AIPW, 1}} - \tau_{\textrm{\tiny RR,AIPW, 1}}^\star\right)\right|\\
    &\leq \left|\left(\left(\hat \tau_{\textrm{\tiny RR,AIPW, 0}}\right)^{-1} - \left(\tau_{\textrm{\tiny RR,AIPW, 0}}^\star\right)^{-1}\right)\hat \tau_{\textrm{\tiny RR,AIPW, 1}}\right| \\
    & \quad + \left|\left(\tau_{\textrm{\tiny RR,AIPW, 0}}^\star\right)^{-1}\left(\hat \tau_{\textrm{\tiny RR,AIPW, 1}} - \tau_{\textrm{\tiny RR,AIPW, 1}}^\star\right)\right|. \label{eq_proof_41}
\end{align}
We now show that
\[\left|\hat \tau_{\textrm{\tiny RR,AIPW, 1}} - \tau_{\textrm{\tiny RR,AIPW, 1}}^\star\right| = o_p\left(\frac{1}{\sqrt{n}}\right) \quad \text{and} \quad \left|\hat \tau_{\textrm{\tiny RR,AIPW, 0}} - \tau_{\textrm{\tiny RR,AIPW, 0}}^\star\right| = o_p\left(\frac{1}{\sqrt{n}}\right).\]
The following decomposition holds
\begin{align*}
    \sqrt{n}\left|\hat \tau_{\textrm{\tiny RR,AIPW, 1}} - \tau_{\textrm{\tiny RR,AIPW, 1}}^\star\right| =& \frac{1}{\sqrt{n}} \sum_{i \in \mathcal{I}_k}\left(\hat \mu_1^{\mathcal{I}_{-k}}\left(X_{i}\right)+T_{i} \frac{Y_{i}-\hat \mu_1^{\mathcal{I}_{-k}}\left(X_{i}\right)}{\hat{e}\left(X_{i}\right)}-\mu_{1}\left(X_{i}\right)-T_{i} \frac{Y_{i}-\mu_{1}\left(X_{i}\right)}{e\left(X_{i}\right)}\right) \\
\text{Further denoted $A_n^k$} \qquad=& \frac{1}{\sqrt{n}}  \sum_{i \in \mathcal{I}_k}\left(\left(\hat \mu_1^{\mathcal{I}_{-k}}\left(X_{i}\right)-\mu_{1}\left(X_{i}\right)\right)\left(1-\frac{T_{i}}{e\left(X_{i}\right)}\right)\right) \\
\text{Further denoted $B_n^k$} \qquad &+\frac{1}{\sqrt{n}}  \sum_{i \in \mathcal{I}_k} T_{i}\left(\left(Y_{i}-\mu_{1}\left(X_{i}\right)\right)\left(\frac{1}{\hat{e}\left(X_{i}\right)}-\frac{1}{e\left(X_{i}\right)}\right)\right) \\
\text{Further denoted $C_n^k$} \qquad &-\frac{1}{\sqrt{n}}  \sum_{i \in \mathcal{I}_k} T_{i}\left(\left(\hat \mu_1^{\mathcal{I}_{-k}}\left(X_{i}\right)-\mu_{1}\left(X_{i}\right)\right)\left(\frac{1}{\hat{e}\left(X_{i}\right)}-\frac{1}{e\left(X_{i}\right)}\right)\right).
\end{align*}

In the following, we prove that the first two terms tend to zero in $L^2$. 

\textbf{Regarding $A_n^k$} One can show that the expectation of $A_n^k/\sqrt{n}$ is null:
\begin{align*}
     \mathbb{E}\left[\frac{A_n^k}{\sqrt{n}} \mid \mathcal{I}_{-k}\right] 
     &= \frac{1}{n} \sum_{i \in \mathcal{I}_k}\mathbb{E}\left[\left(\hat \mu_1^{\mathcal{I}_{-k}}\left(X_{i}\right)-\mu_{1}\left(X_{i}\right)\right)\left(1-\frac{T_{i}}{e\left(X_{i}\right)}\right) \mid \mathcal{I}_{-k} \right] \\
     &= \frac{|\mathcal{I}_k|}{n} \mathbb{E}\left[\left(\hat \mu_1^{\mathcal{I}_{-k}}\left(X\right)-\mu_{1}\left(X\right)\right)\left(1-\frac{T}{e\left(X\right)}\right) \mid \mathcal{I}_{-k} \right] && \text{i.i.d.} \\
     &= \frac{|\mathcal{I}_k|}{n} \mathbb{E}\left[ \mathbb{E}\left[ \left(\hat \mu_1^{\mathcal{I}_{-k}}\left(X\right)-\mu_{1}\left(X\right)\right)\left(1-\frac{T}{e\left(X\right)}\right)\mid X, \mathcal{I}_{-k}\right] \mid \mathcal{I}_{-k} \right] \\
     &= \frac{|\mathcal{I}_k|}{n} \mathbb{E}\left[  \left(\hat \mu_1^{\mathcal{I}_{-k}}\left(X\right)-\mu_{1}\left(X\right)\right) \mathbb{E}\left[ \left(1-\frac{T}{e\left(X\right)}\right)\mid X, \mathcal{I}_{-k}\right] \mid \mathcal{I}_{-k} \right] \\
     &= \frac{|\mathcal{I}_k|}{n}  \mathbb{E}\left[ \left(\hat \mu_1^{\mathcal{I}_{-k}}\left(X\right)-\mu_{1}\left(X\right)\right)\left(1-\frac{e\left(X\right) }{e\left(X\right)}\right) \mid \mathcal{I}_{-k} \right] \\ 
     &= 0.
\end{align*}
We will make use of this results in several calculations. Now, 
\begin{align*}
    \mathbb{E}\left[ \left(\frac{A_n^k}{\sqrt{n}}\right)^2 \mid \mathcal{I}_{-k}\right] 
    &= \operatorname{Var}\left[ \frac{1}{n}  \sum_{i \in \mathcal{I}_k}\left(\left(\hat \mu_1^{\mathcal{I}_{-k}}\left(X_{i}\right)-\mu_{1}\left(X_{i}\right)\right)\left(1-\frac{T_{i}}{e\left(X_{i}\right)}\right)\right) \mid \mathcal{I}_{-k}\right] \\
    &= \frac{1}{n^2} \operatorname{Var}\left[  \sum_{i \in \mathcal{I}_k}\left(\left(\hat \mu_1^{\mathcal{I}_{-k}}\left(X_{i}\right)-\mu_{1}\left(X_{i}\right)\right)\left(1-\frac{T_{i}}{e\left(X_{i}\right)}\right)\right) \mid \mathcal{I}_{-k}\right]\\
    &= \frac{1}{n^2}  \sum_{i \in \mathcal{I}_k} \operatorname{Var}\left[ \left(\hat \mu_1^{\mathcal{I}_{-k}}\left(X_{i}\right)-\mu_{1}\left(X_{i}\right)\right)\left(1-\frac{T_{i}}{e\left(X_{i}\right)}\right)\mid \mathcal{I}_{-k} \right] &&\text{iid} \\
    &= \frac{|\mathcal{I}_k|}{n^2} \mathbb{E}\left[\left( \left(\hat \mu_1^{\mathcal{I}_{-k}}\left(X\right)-\mu_{1}\left(X\right)\right)\left(1-\frac{T}{e\left(X\right)}\right)\right)^2\mid \mathcal{I}_{-k} \right]  \\
    &=  \frac{|\mathcal{I}_k|}{n^2} \mathbb{E}\left[\mathbb{E}\left[\left( \left(\hat \mu_1^{\mathcal{I}_{-k}}\left(X\right)-\mu_{1}\left(X\right)\right)\left(1-\frac{T}{e\left(X\right)}\right)\right)^2 |X, \mathcal{I}_{-k} \right] \mid \mathcal{I}_{-k}\right]  \\
    &=  \frac{|\mathcal{I}_k|}{n^2} \mathbb{E}\left[ \left(\hat \mu_1^{\mathcal{I}_{-k}}\left(X\right)-\mu_{1}\left(X\right)\right)^2\mathbb{E}\left[\left(1-\frac{T}{e\left(X\right)}\right)^2 |X, \mathcal{I}_{-k} \right] \mid \mathcal{I}_{-k}\right]  \\
    &=  \frac{|\mathcal{I}_k|}{n^2} \mathbb{E}\left[\left(\hat \mu_1^{\mathcal{I}_{-k}}\left(X\right)-\mu_{1}\left(X\right)\right)^2\frac{1}{e\left(X\right)^2}\mathbb{E}\left[ \left(e\left(X\right)-T\right)^2 |X, \mathcal{I}_{-k} \right] \mid \mathcal{I}_{-k}\right]  \\
    &=  \frac{|\mathcal{I}_k|}{n^2} \mathbb{E}\left[\left(\hat \mu_1^{\mathcal{I}_{-k}}\left(X\right)-\mu_{1}\left(X\right)\right)^2\frac{e\left(X\right)(1-e\left(X\right))}{e\left(X\right)^2} \mid \mathcal{I}_{-k}\right]  \\
    &=  \frac{|\mathcal{I}_k|}{n^2} \mathbb{E}\left[\left(\hat \mu_1^{\mathcal{I}_{-k}}\left(X\right)-\mu_{1}\left(X\right)\right)^2\left(\frac{1}{e\left(X\right)} -1\right) \mid \mathcal{I}_{-k}\right]  \\
    &\leq  \frac{|\mathcal{I}_k|}{\eta n^2} \mathbb{E}\left[\left(\hat \mu_1^{\mathcal{I}_{-k}}\left(X\right)-\mu_{1}\left(X\right)\right)^2\mid \mathcal{I}_{-k} \right]  && \text{Overlap}.
\end{align*}
Taking the expectation, we obtain
\begin{align}
\mathbb{E}\left[ \left(\frac{A_n^k}{\sqrt{n}}\right)^2 \right] \leq \frac{|\mathcal{I}_k|}{\eta n^2} \mathbb{E}\left[\left(\hat \mu_1^{\mathcal{I}_{-k}}\left(X\right)-\mu_{1}\left(X\right)\right)^2  \right],  
\end{align}
that is 
\begin{align}
\mathbb{E}\left[ \left(A_n^k \right)^2 \right] \leq \frac{1}{\eta} \mathbb{E}\left[\left(\hat \mu_1^{\mathcal{I}_{-k}}\left(X\right)-\mu_{1}\left(X\right)\right)^2  \right].    
\end{align}
Thus $A_n^k$ converges to zero in $L^2$ and thus in probability. 

\textbf{Regarding $B_n^k$} The second term $B_n^k$ can also be controlled using similar arguments. By assumption, 
$$ \frac{\eta}{2} \le \hat e(X) \le 1- \frac{\eta}{2}.$$
Thus, 
\begin{align*}
    \frac{1}{\hat e(X)} - \frac{1}{e(X)} &= \frac{e(X) - \hat e(X)}{\hat e(X) e(X)}  \le 2 \left( \frac{e(X) - \hat e(X)}{\eta^2} \right).
\end{align*}

Derivations are very close to the ones for the first term, noting that, $$\mathbb{E}\left[ \mathbb{E}\left[\frac{1}{n} \sum_{i \in \mathcal{I}_k} T_i\left(\left(Y_{i}-\mu_{1}\left(X_{i}\right)\right)\left(\frac{1}{\hat{e}^{\mathcal{I}_{-k}}\left(X_{i}\right)}-\frac{1}{e\left(X_{i}\right)}\right)\right)\mid X_i, \mathcal{I}_{-k} \right]\mid \mathcal{I}_{-k}\right] =0,$$ so that,

\begin{align*}
\mathbb{E}\left[ \left( \frac{B_n^k}{\sqrt{n}} \right)^2 \mid \mathcal{I}_{-k}\right] 
&= \operatorname{Var}\left[ \frac{1}{n} \sum_{i \in \mathcal{I}_k} T_i\left(Y_{i}-\mu_{1}\left(X_{i}\right)\right)\left(\frac{1}{\hat{e}^{\mathcal{I}_{-k}}\left(X_{i}\right)}-\frac{1}{e\left(X_{i}\right)}\right) \mid \mathcal{I}_{-k}\right]\\
&= \frac{1}{n^2} \sum_{i \in \mathcal{I}_k} \operatorname{Var}\left[ T_i\left(Y_{i}-\mu_{1}\left(X_{i}\right)\right)\left(\frac{1}{\hat{e}^{\mathcal{I}_{-k}}\left(X_{i}\right)}-\frac{1}{e\left(X_{i}\right)}\right) \mid \mathcal{I}_{-k}\right] && \text{iid}\\
&= \frac{|\mathcal{I}_k|}{n^2} \mathbb{E}\left[ T\left(Y-\mu_{1}\left(X\right)\right)^2\left(\frac{1}{\hat{e}^{\mathcal{I}_{-k}}\left(X\right)}-\frac{1}{e\left(X\right)}\right)^2 \mid \mathcal{I}_{-k}\right] \\
&\leq \frac{4|\mathcal{I}_k|}{\eta^4 n^2} \mathbb{E}\left[ T\left(Y-\mu_{1}\left(X\right)\right)^2\left(\hat{e}^{\mathcal{I}_{-k}}\left(X\right)-e\left(X\right)\right)^2 \mid \mathcal{I}_{-k}\right]\\
&\leq \frac{4|\mathcal{I}_k|}{\eta^4 n^2} \mathbb{E}\left[ \left(Y-\mu_{1}\left(X\right)\right)^2\left(\hat{e}^{\mathcal{I}_{-k}}\left(X\right)-e\left(X\right)\right)^2 \mid \mathcal{I}_{-k}\right]&& \text{Sicne $T\leq 1$}\\
&\leq \frac{4|\mathcal{I}_k|}{\eta^4 n^2} \mathbb{E}\left[\mathbb{E}\left[ \left(Y-\mu_{1}\left(X\right)\right)^2\left(\hat{e}^{\mathcal{I}_{-k}}\left(X\right)-e\left(X\right)\right)^2 | X, \mathcal{I}_{-k}\right]\mid \mathcal{I}_{-k}\right]\\
&\leq \frac{4|\mathcal{I}_k|}{\eta^4 n^2} \mathbb{E}\left[ \mathbb{E}\left[\left(Y-\mu_{1}\left(X\right)\right)^2| X, \mathcal{I}_{-k} \right]\left(\hat{e}^{\mathcal{I}_{-k}}\left(X\right)-e\left(X\right)\right)^2 \mid \mathcal{I}_{-k}\right]\\
&\leq \frac{4|\mathcal{I}_k|}{\eta^4 n^2} \mathbb{E}\left[ \operatorname{Var}\left[Y| X \right]\left(\hat{e}^{\mathcal{I}_{-k}}\left(X\right)-e\left(X\right)\right)^2 \mid \mathcal{I}_{-k}\right]\\
&\leq \frac{4\operatorname{Var}\left[Y| X \right]|\mathcal{I}_k|}{\eta^4 n^2} \mathbb{E}\left[ \left(\hat{e}^{\mathcal{I}_{-k}}\left(X\right)-e\left(X\right)\right)^2 \mid \mathcal{I}_{-k}\right].
\end{align*}
Taking the expectation on both sides, since $\operatorname{Var}\left[Y| X \right] \leq \sigma^2$, we get 
\begin{align}
\mathbb{E}\left[ \left( \frac{B_n^k}{\sqrt{n}} \right)^2 \right]  \leq \frac{4 \sigma^2|\mathcal{I}_k|}{\eta^4 n^2} \mathbb{E}\left[ \left(\hat{e}^{\mathcal{I}_{-k}}\left(X\right)-e\left(X\right)\right)^2 \mid \mathcal{I}_{-k}\right],
\end{align}
which leads to 
\begin{align}
\mathbb{E}\left[ \left( B_n^k \right)^2 \right]  \leq \frac{4 \sigma^2}{\eta^4} \mathbb{E}\left[ \left(\hat{e}^{\mathcal{I}_{-k}}\left(X\right)-e\left(X\right)\right)^2  \mid \mathcal{I}_{-k}\right].  
\end{align}
Since, by assumption, the right-hand side term converges to zero, $B_n^k$ converges to zero in $L^2$. 

\textbf{Regarding $C_n^k$} Regarding the last term, the approach is different and will involve another assumption on the product of residuals. More precisely, 
\begin{align*}
    \mathbb{E}[|C_n^k|] &= \sqrt{n}\frac{1}{n} \sum_{i \in \mathcal{I}_k} \mathbb{E}\left[ \left| T_i\left(\hat \mu_1^{\mathcal{I}_{-k}}\left(X_{i}\right)-\mu_{1}\left(X_{i}\right)\right)\left(\frac{1}{\hat{e}^{\mathcal{I}_{-k}}\left(X_{i}\right)}-\frac{1}{e\left(X_{i}\right)}\right) \right| \right]\\
    & = \frac{\sqrt{n}}{\eta^2}\frac{1}{n} \sum_{i \in \mathcal{I}_k} \mathbb{E}\left[ \left| T_i\left(\hat \mu_1^{\mathcal{I}_{-k}}\left(X_{i}\right)-\mu_{1}\left(X_{i}\right)\right)
    \left(e(X_i) - \hat e^{\mathcal{I}_{-k}}(X_i)\right) \right|  
    \right]\\
    & = \frac{\sqrt{n} |\mathcal{I}_k|}{\eta^2}\frac{1}{n}  \mathbb{E}\left[ \left| T\left(\hat \mu_1^{\mathcal{I}_{-k}}\left(X\right)-\mu_{1}\left(X\right)\right)
    \left(e(X) - \hat e^{\mathcal{I}_{-k}}(X)\right)  \right| 
    \right]\\
    & \leq \frac{\sqrt{n} }{\eta^2} \mathbb{E}\left[ \left| \left(\hat \mu_1^{\mathcal{I}_{-k}}\left(X\right)-\mu_{1}\left(X\right)\right)
    \left(e(X) - \hat e^{\mathcal{I}_{-k}}(X)\right)   \right|
    \right]\\
    & \leq \frac{\sqrt{n} }{\eta^2} \sqrt{\mathbb{E}\left[  \left(\hat \mu_1^{\mathcal{I}_{-k}}\left(X\right)-\mu_{1}\left(X\right)\right)^2 \right] 
    \mathbb{E}\left[ \left(e(X) - \hat e^{\mathcal{I}_{-k}}(X)\right)^2   
    \right]},
\end{align*}
which tends to zero by assumption. 
%
Each term $A_n^k$, $B_n^k$, and $C_n^k$ has been shown to be bounded by a term in $o_{\mathbb{P}}(1)$. Thus, 
\begin{align}
    \sqrt{n}\left|\hat \tau_{\textrm{\tiny RR,AIPW, 1}} - \tau_{\textrm{\tiny RR,AIPW, 1}}^\star\right| &= \sum_{k=1}^K A_n^k + B_n^k + C_n^k
\end{align}
tends to zero in probability. Similarly, one can show that 
\begin{align}
    \sqrt{n}\left|\hat \tau_{\textrm{\tiny RR,AIPW, 1}} - \tau_{\textrm{\tiny RR,AIPW, 1}}^\star\right|  \stackrel{p}{\longrightarrow} 0.
\end{align}

According to \eqref{eq_proof_41}, since for all $t \in \{0,1\}$,  $|\hat \tau_{\textrm{\tiny RR,AIPW,t}}|$ tends to $\tau_{\textrm{\tiny RR,AIPW,t}}^\star$ which is lower and upper bounded, we have
\begin{align*}
    \sqrt{n} \left|\hat \tau_{\textrm{\tiny RR,AIPW}} - \tau_{\textrm{\tiny RR,AIPW}}^\star \right| &\leq \sqrt{n} \left| \frac{\hat \tau_{\textrm{\tiny RR,AIPW, 1}}}{\hat \tau_{\textrm{\tiny RR,AIPW, 0}} \tau_{\textrm{\tiny RR,AIPW, 0}}^\star } \right|\left| \hat \tau_{\textrm{\tiny RR,AIPW, 0}} - \tau_{\textrm{\tiny RR,AIPW, 0}}^\star \right| \\
    & \quad + \sqrt{n} \left| \frac{1}{\tau_{\textrm{\tiny RR,AIPW, 0}}^\star}\right| \left| \hat \tau_{\textrm{\tiny RR,AIPW, 1}} - \tau_{\textrm{\tiny RR,AIPW, 1}}^\star\right|
\end{align*}
which tends to zero. 

Regarding the second term in \eqref{AIPW:eq_proof2}, we can use \Cref{Th1} with \(g_1(Z) = \mu_1(X) + \frac{T(Y- \mu_1(X))}{ e(X)}\) and \(g_0(Z) = \mu_0(X) + \frac{(1-T)(Y- \mu_0(X))}{ (1-e(X))}\) where \(Z= (T,X,Y)\). Hence, we have that  \(g_1(Z)\) is square integrable:
\begin{align*}
    \mathbb{E}\left[g_1(Z)^2\right] & \leq 
     2 \mathbb{E}\left[(\mu_1(X)^2\right] + 2 \mathbb{E}\left[\left(\frac{T(Y- \mu_1(X))}{ e(X)}\right)^2\right],   
\end{align*}
where \(\mathbb{E}\left[(\mu_1(X)^2\right] = \Var(Y^{(1)}) + \mathbb{E}\left[Y^{(1)}\right]^2\)  is finite.
Using Consistency, Unconfoundedness, and definition or $\mu_1(X) = \mathbb{E}[Y \mid X, T=1]$, simple calculations show that
\begin{align*}
    \mathbb{E}\left[ \left(T\frac{Y-\mu_1(X)}{e(X)} \right)^2 \right] &=   \mathbb{E}\left[ \left(T\frac{Y^{(1)}-\mu_1(X)}{e(X)} \right)^2 \right] && \text{Consistency} \\
    &=\mathbb{E}\left[ \mathbb{E}\left[\left(T\frac{Y^{(1)}-\mu_1(X)}{e(X)} \right)^2 \mid X \right]\right] && \text{Total expectation}\\
      &= \mathbb{E}\left[ \frac{\left(Y^{(1)}-\mu_1(X)\right)^2 }{e(X)} \right] \\
      &\leq \frac{\Var(\mu_1(X))}{\eta}.
\end{align*}
Similarly, we can show that \(g_0(Z)\) is square integrable. 
Since  \(\mathbb{E}\left[g_0(Z)\right] = \mathbb{E}\left[Y^{(0)} \right]\) and \(\mathbb{E}\left[g_1(Z)\right] = \mathbb{E}\left[Y^{(1)} \right]\), we can  apply \Cref{Th1} and conclude that
\begin{align}
    \sqrt{n} (\tau_{\textrm{\tiny RR,AIPW}}^\star - \tau_{\textrm{\tiny RR}}) \to \mathcal{N}(0, V_{\textrm{\tiny RR,OS}}). \label{eq_proof4}
\end{align}
Finally, 
\begin{equation*}
    \sqrt{n} (\hat \tau_{\text{AIPW}} - \tau_{\textrm{\tiny RR}}) = \underbrace{\sqrt{n} (\hat \tau_{\textrm{\tiny RR,AIPW}} - \tau_{\textrm{\tiny RR,AIPW}}^\star)}_\textrm{$\stackrel{p}{\longrightarrow} 0$}  +  \underbrace{\sqrt{n}(\tau_{\textrm{\tiny RR,AIPW}}^\star - \tau_{\textrm{\tiny RR}})}_\textrm{$\stackrel{d}{\rightarrow} \mathcal{N}\left(0, V_{\textrm{\tiny RR,OS}} \right)$},
\end{equation*}
where 
\begin{align*}
    V_{\textrm{\tiny RR,OS}} = \tau_{\textrm{\tiny RR}}^2\left(\Var\left(\frac{\mu_1(X)}{\mathbb{E}\left[Y^{(1)}\right]} - \frac{\mu_0(X)}{\mathbb{E}\left[Y^{(0)}\right]}\right) +  \mathbb{E}\left[ \frac{\Var\left(Y^{(1)}|X\right)}{e(X)\mathbb{E}\left[Y^{(1)}\right]^2} \right] + \mathbb{E}\left[\frac{\Var\left(Y^{(0)}|X\right)}{(1-e(X))\mathbb{E}\left[Y^{(0)}\right]^2} \right] \right)
\end{align*}
\end{proof}

\newpage

\section{Simulation} \label{Simulation-2}
For the simulations we have implemented all estimators in Python using Scikit-Learn for our regression and classification models. All our experiments were run on a 8GB M1 Mac. The propensity scores is estimated based on the provided training data and covariate names. Depending on the chosen method, it either uses logistic regression with a high regularization parameter (parametric) or a random forest classifier with parameters determined by the training data size (non-parametric). The response surface is estimated based on the training data, covariate names, the method (parametric or non-parametric), and whether the response is binary or continuous. For parametric methods, it uses a stochastic gradient descent classifier for binary responses and a linear regression model for continuous responses. For non-parametric methods, it employs a random forest classifier for binary responses and a random forest regressor for continuous responses. Both methods fit the model using the training data to estimate the respective scores and surfaces, enabling flexible handling of various datasets and assumptions for causal inference analysis.

\subsection{Randomized Controlled Trials}

In this part we will simulate Randomized Controlled Trials (RCT) and test the following Ratio estimators: Ratio Neyman, Ratio Horvitz Thomson and the Ratio G-formula. Since we are in a Randomized Controlled Trials, the propensity score \(e(.)\)  is constant.

\subsubsection{Linear RCT}
The first DGP has linear outcome models (linear treatment effect and the baseline). The data is generated using:

\[
    \begin{array}{lll}
        m(X) = (c_1 - c_0) + (\beta_1 - \beta_0)^{\top}X \\
        b(X) = c_0 + \beta_0^{\top}X  \\
        e(X) = 0.5
    \end{array} \qquad \begin{array}{lll}
    c_0 &= 6, \qquad c_1 = 12 \\
    \beta_1 &= (2, -5, 2, 8, -2, 8) \\
    \beta_0 &= (3, -7, 1, 4, -2, 2) 
\end{array}
\] 

\begin{figure}[h!]
    \centering
    \includegraphics[width=0.55\textwidth]{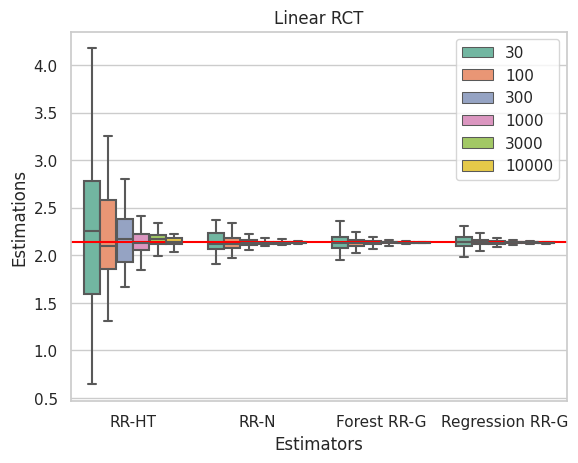}
    \caption{Comparison of RCT estimators in a Linear RCT}
    \label{fig:Linear_RCT}
\end{figure}

Given that \(X\) has a zero mean, it follows that \(\tau_{\textrm{\tiny RR}} = c_1/c_0= 2\). This scenario aligns with the linear setting outlined in \Cref{a:linear_model}. Referring to Figure \ref{fig:Linear_RCT}, as proved in the previous sections all estimators converge to the true Risk Ratio as \(n\) increases. Additionally, within this linear framework as per \Cref{lem-var-lin}, the variance of the Neyman estimator exceeds the one of the G-formula. In such a linear environment, the parametric G-formula performs better than its non-parametric counterpart. Additionally, the Ratio Neyman estimator demonstrates lower variance compared to the Horvitz-Thomson estimator as indicated in \Cref{HT_VS_N}.

\subsubsection{Non-Linear RCT}
This DGP is also a Randomized Controlled Trials however, the outcomes are not linear this time:

\begin{align*}
    m(X) =  \sin(X_1) \cdot X_2^2 + \frac{X_3}{X_4 + 1} - \log(X_5 + 1) + X_6^3 + 1 \\
    b(X) = 4*\max(X_1+X_2+X_3,0) -\min(X_4+X_6,0) \quad \text{and} \quad e(X) = 0.5
\end{align*}

\begin{figure}[h!]
    \centering
    \includegraphics[width=0.55\textwidth]{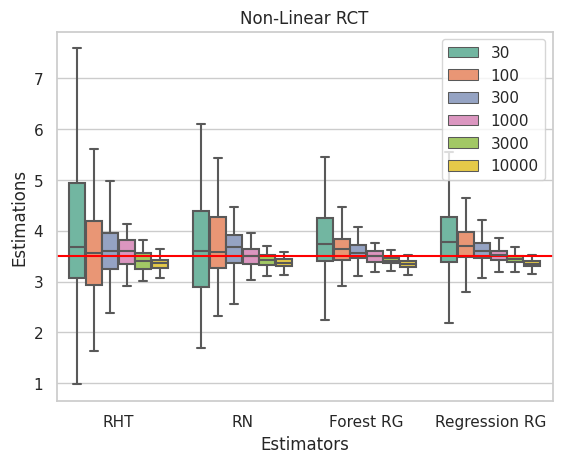}
    \caption{Comparison of RCT estimators in a Non-Linear RCT}
    \label{fig:non_Linear_RCT}
\end{figure}

The presence of trigonometric, exponential, logarithmic, and polynomial terms makes this setting non-linear. It's important to note that since we are in a Randomized Controlled Trial (RCT), the propensity function remains constant. As the sample size (\(n\)) increases, all proposed estimators converge. A bias can be seen in \ref{fig:non_Linear_RCT} but decreases to \(0\) as (\(n\)) increases as predicted in previous sections. Linear regression struggles with small \(n\) values, failing to capture the intricate relationships between features and non-linearities. On the other hand, Random Forest, a non-parametric method, excels in capturing these complexities by segmenting the feature space and predicting based on response averages within those segments. However, predicting the complex function can be challenging, the Neyman estimator might outperform the G-formula, particularly when both parametric and non-parametric responses may lack consistency. Although we do not fall in assumptions of \Cref{HT_VS_N} the Ratio Neyman estimator demonstrates lower variance compared to the Horvitz-Thomson estimator.

\subsection{Observational Studies}\label{sim:CI}
\begin{figure*}[h]
    \centering
    \includegraphics[width=0.8\textwidth]{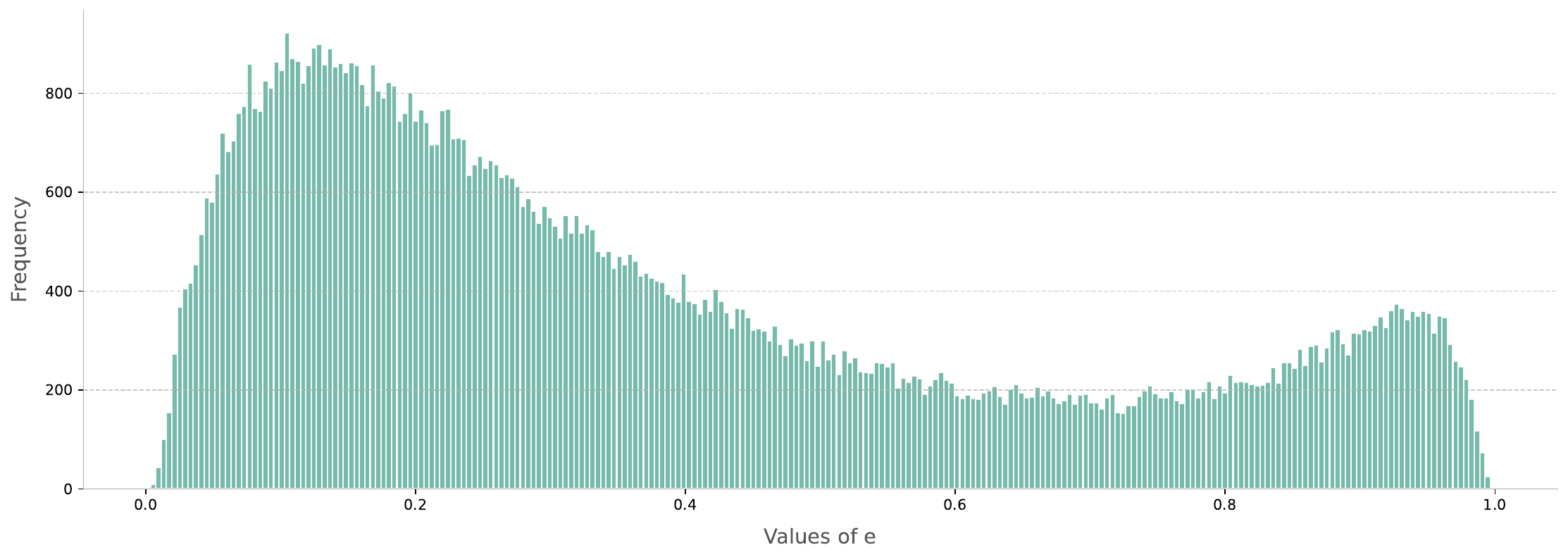}
    \caption{Histogram of the propensity score of Logistic DGPs}
    \label{fig:hist_e}
\end{figure*}

\subsubsection{Non-Linear and Non-Logistic DGP}
We use the same simulations as in \cite{Wager} using nonlinear models for every quantity, as detailed below, with $X \sim \text{Unif}(0,1)^6$
\begin{align*}
        m(X) & = \sin \left(\pi X_{1}X_{2}\right)+2\left(X_{3}-0.5\right)^{2}+X_{4}+0.5 X_{5}  - \left(X_{1}+X_{2}\right) / 4\\
        b(X) & = \left(X_{1}+X_{2}\right) / 2  \\
        e(X) & = \max \{0.1, \min (\sin \left(\pi X_{1}\right), 0.9)\}.
\end{align*}
\begin{figure*}[h!]
  \includegraphics[width=\textwidth,height=4cm]{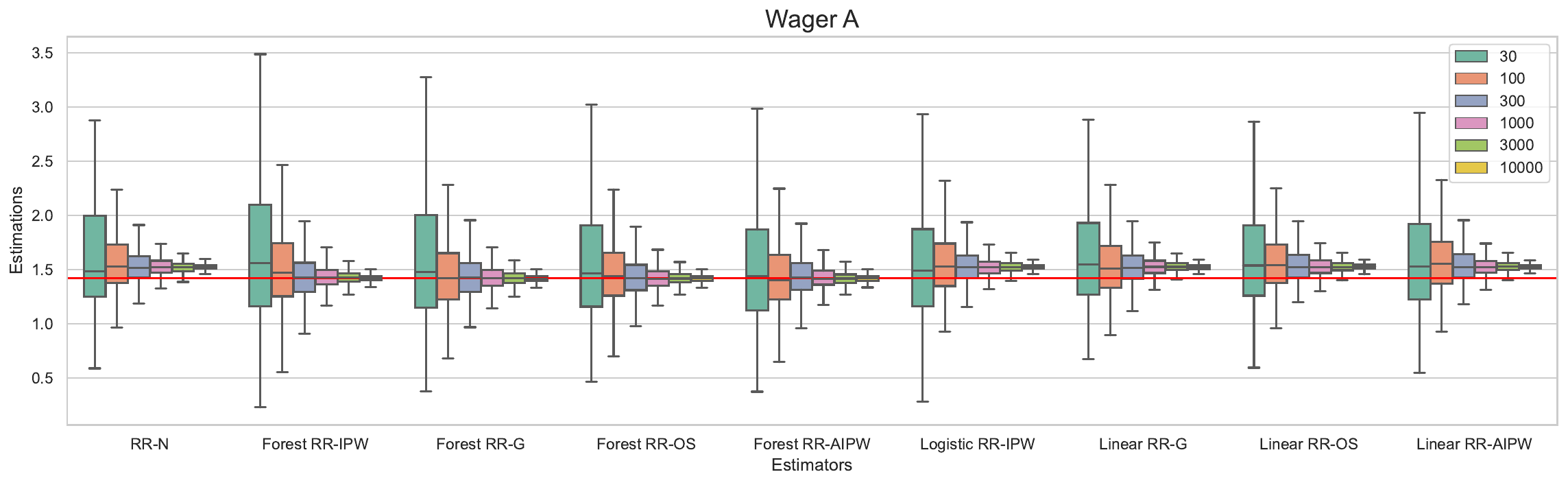}
    \caption{Estimations of the Risk Ratio with weighting, outcome based and augmented estimators as a function of the sample size for the non-Linear-non-Logistic DGP. Parametric (Regression) and non parametric (Forest) estimations of nuisance are displayed.}
    \label{fig:Wager_A}
\end{figure*}

\begin{figure*}[h!]
  \includegraphics[width=\textwidth,height=4cm]{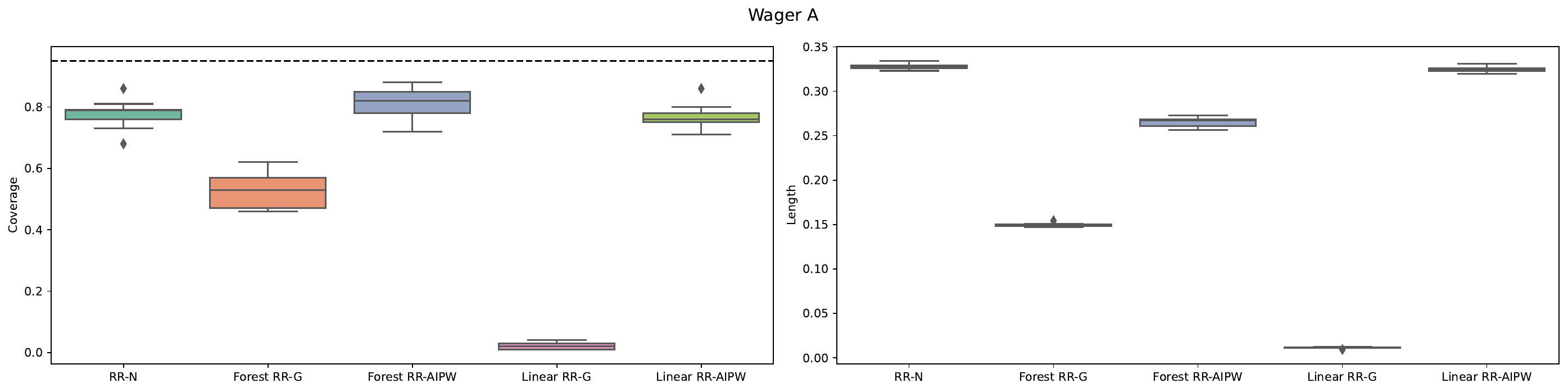}
    \caption{Average coverage (left) and average length (right) of asymptotic confidence interval derived from \Cref{risk_ratio_RCTs} and \Cref{risk_ratio_OBS} for different estimators with $n=1000$ and $300$ repetitions for a Non-Linear and Non-Logistic DGP.}
    \label{fig:Wager_CI_A}
\end{figure*}

 Results are presented in \Cref{fig:Wager_A}. At first glance, all methods seem to have similar performances. However, estimators based on parametric estimators (last four) fail to converge to the correct quantity. They present an intrinsic bias, which does not vanish as the sample size increases. This was expected as linear methods are unable to model the complex non-linear generative process of this simulation. On the other hand, methods that employ random forests estimators achieve good performances: they are consistent and unbias even for small sample sizes. Note that RIPW has a larger variance than the other methods, with a small bias for very small sample sizes. Therefore, the G-formula and the two doubly-robust estimators that use random forests are competitive in this setting. Here again, both double robust estimators give similar performances. 
 No estimator achieves 95\% coverage, which is expected given the non-linear, non-logistic DGP. Linear estimators, such as Linear RR-G and RR-AIPW, struggle to accurately estimate the nuisance functions in this context. Additionally, the limited number of observations prevents the Forest estimators from converging effectively.

\subsection{Semi-synthetic simulations}\label{an:semi-synthetic}

In our semi-synthetic experiment, we retain the original co-variates, \(X\), and first regenerate the treatment assignment, \(T\), using a logistic regression (see \Cref{an:semi-synthetic} for random forests classifier) trained on the original 17 co-variates to predict the observed treatment. Based on the estimated probabilities from this model, we sample a binary treatment from a Bernoulli distribution. Next, we simulate potential outcomes using two separate logistic regressions: one for the treated group and another for the control group, incorporating all co-variates. This process yields the potential outcomes, \(Y^{(0)}\) and \(Y^{(1)}\), for each individual, enabling us to compute the true relative risk, \(\tau_{RR} = 1.2\). Finally, we estimate the relative risk and its variance using bootstrap resampling across different sample sizes (300, 1000, and 3000). Results are displayed in \Cref{fig:semi_synthetic_linear}. All estimators except Neyman appear to converge to the true value of the Risk Ratio. As anticipated, doubly-robust estimators are consistent and efficient (smallest asymptotic variance).

 \begin{figure*}[!h]
  \includegraphics[width=\textwidth,height=4cm]{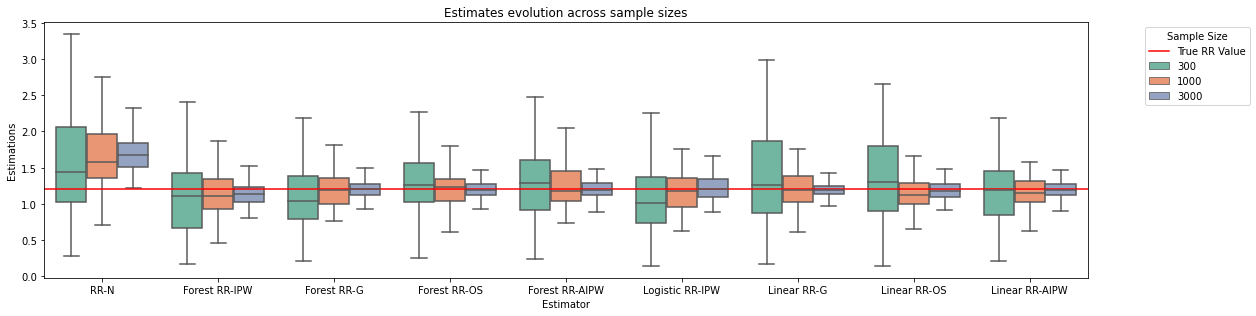}
    \caption{RR estimations with weighting, outcome based and augmented estimators as a function of the sample size for the logistic Semi-synthetic data. Parametric (Linear) and non parametric (Forest) estimations of nuisance are displayed.}
    \label{fig:semi_synthetic_linear}
\end{figure*}

In this approach, we replace logistic regression with random forest classifiers to model the treatment \( T \) and the potential outcomes \( Y^{(1)} \) and \( Y^{(0)} \). As observed, all estimators, except for RR-N, converge toward the true value of the Risk Ratio. Moreover, since both the treatment assignment and potential outcomes are generated using Random Forest, estimators that also leverage Random Forest demonstrate reduced bias and lower variance. Conversely, linear estimators exhibit higher bias, likely due to model misspecification.

 \begin{figure*}[!h]
  \includegraphics[width=\textwidth,height=4cm]{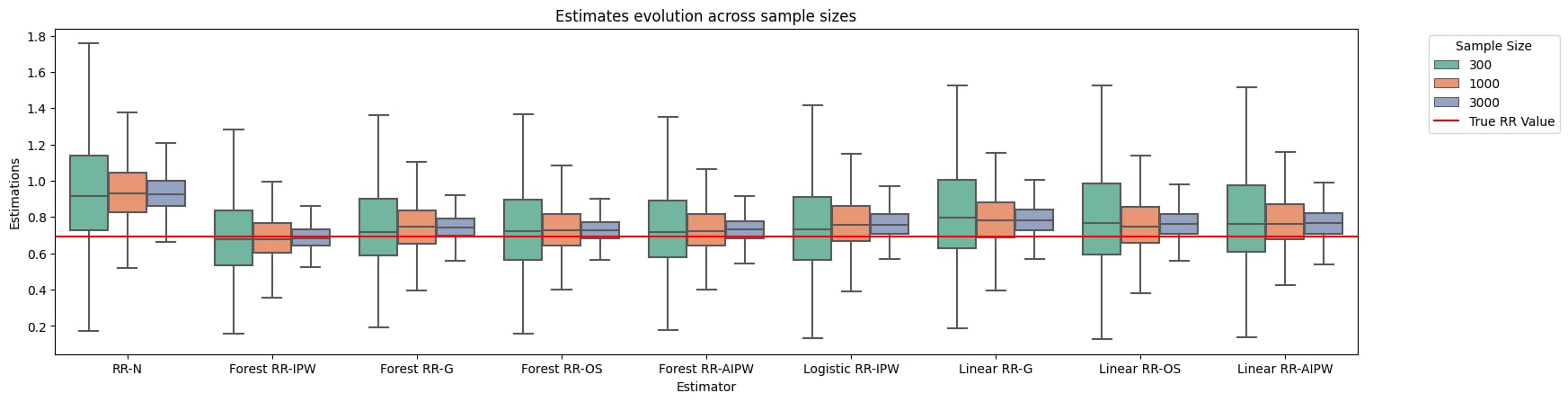}
    \caption{RR estimations with weighting, outcome based and augmented estimators as a function of the sample size for the Semi-synthetic data. Parametric (Linear) and non parametric (Forest) estimations of nuisance are displayed.}
    \label{fig:semi_synthetic}
\end{figure*}

\end{document}